\documentclass[prb,superscriptaddress,twocolumn,aps,nofootinbib, amssymb,floatfix,a4paper]{revtex4-2}
\usepackage{amsmath,amsfonts,enumerate,hyperref,tikz,pgfplots,thmtools,amsthm, adjustbox}
\usepackage{mathtools, bbm, graphicx}
\usepackage{qcircuit}
\usepackage{comment}
\usepackage[most]{tcolorbox}   
\usepackage{enumitem}          
\usepackage{overpic}
\usepackage[normalem]{ulem} 
\usepackage{booktabs, multirow}

\newtcolorbox{summarybox}[1][]{
  enhanced,
  breakable,
  colback=gray!3, colframe=gray!50,
  boxrule=0.4pt, arc=2pt,
  left=6pt,right=6pt,top=6pt,bottom=6pt,
  fonttitle=\bfseries,
  fontupper=\small,
  title=Workflow summary,
  #1
}

\hypersetup{colorlinks=true,linkcolor=blue,citecolor=blue,urlcolor=blue}

\makeatletter
\newif\ifappendixonlytoc
\appendixonlytocfalse

\let\appendixonlytoc@addcontentsline\addcontentsline

\renewcommand{\addcontentsline}[3]{%
  \def\appendixonlytoc@file{#1}%
  \def\appendixonlytoc@toc{toc}%
  \ifx\appendixonlytoc@file\appendixonlytoc@toc
    \ifappendixonlytoc
      \appendixonlytoc@addcontentsline{#1}{#2}{#3}%
    \fi
  \else
    \appendixonlytoc@addcontentsline{#1}{#2}{#3}%
  \fi
}
\makeatother

\def\C{ {\cal C} }

\def\R{ \mathbb{R} }

\def\Ti{T_{\mathrm{ion}}}

\newcommand{\bra}[1]{\langle {#1} |}
\newcommand{\ket}[1]{| {#1} \rangle}
 
\newcommand{\ketbra}[2]{\ensuremath{\left|#1\right\rangle\!\!\left\langle#2\right|}}
\newcommand{\braket}[2]{\ensuremath{\!\left\langle#1|#2\right\rangle}\!}

\renewcommand{\v}[1]{\ensuremath{\boldsymbol #1}}

\newcommand{\m}[1]{\mathsf{#1}}
\newcommand{\mc}[1]{\mathcal{#1}}
\newcommand{\kb}{k_{\mathrm{B}}}

\newcommand{\ve}{\v{e}}
\newcommand{\vj}{\v{j}}

\newcommand{\vn}{\v{n}}
\newcommand{\vx}{\v{x}}
\newcommand{\vv}{\v{v}}

\newcommand{\val}{\v{\alpha}}

\newcommand{\sof}{s_{\mathrm{F}}}
\newcommand{\soh}{s_{\mathrm{H}}}

\newcommand{\dr}{d_{r}}
\newcommand{\dc}{d_{c}}

\newcommand{\NH}{N_{\mathrm{H}}}
\newcommand{\NF}{N_{\mathrm{F}}}
\newcommand{\NC}{N_{\mathrm{C}}}

\newcommand{\epsC}{\epsilon_{\mathrm{C}}}

\newcommand{\lde}{\lambda_{\mathrm{D},e}}
\newcommand{\ldion}{\lambda_{\mathrm{D}}}

\newcommand{\vth}{v_{\mathrm{th}}}

\newcommand{\TL}{\mathbb{T}_L}

\newcommand{\ee}{\mathrm{e}}

\theoremstyle{plain}
\newtheorem{thm}{Theorem}
\newtheorem{theorem}{Theorem}

\newtheorem{lemma}[thm]{Lemma}
\newtheorem{prop}[thm]{Proposition}
\newtheorem{proposition}[thm]{Proposition}

\theoremstyle{definition}

\newtheorem{rmk}[thm]{Remark}

\newtheorem{ass}{Assumption}
\newtheorem{problem}{Problem}

\definecolor{pptgreen}{RGB}{215,241,206}
\definecolor{pptyellow}{RGB}{255,255,122}
\definecolor{pptred}{RGB}{255,122,122}

\usetikzlibrary{arrows.meta}

\begin{document}

\title{An end-to-end quantum algorithm for weakly nonlinear plasma physics with superquadratic speedup}

        \author{Bjorn K. Berntson}
        \affiliation{PsiQuantum, 700 Hansen Way, Palo Alto, CA 94304, USA}
	\author{David Jennings}
        \affiliation{PsiQuantum, 700 Hansen Way, Palo Alto, CA 94304, USA}
	\author{Matteo Lostaglio}
        \affiliation{PsiQuantum, 700 Hansen Way, Palo Alto, CA 94304, USA}
        	\author{Scott Parker}
        \affiliation{Renewable and Sustainable Energy Institute, University of Colorado, Boulder, Colorado 80309, USA}

\date{\today}
\begin{abstract}
Nonlinear kinetic plasma simulation is high-dimensional and classically demanding, while quantum algorithms face different bottlenecks: embedding nonlinear dynamics into a linear computation, loading potentially dense field-interaction data, and efficiently extracting information. Here we present an end-to-end quantum algorithm, with rigorous convergence guarantees, for a weakly nonlinear kinetic plasma model. The system describes a three-dimensional electron--ion plasma with adiabatic electrons, kinetic ions, Debye screening, and Krook relaxation. After Fourier--Hermite truncation, the dynamics reduces to a high-dimensional quadratic ordinary differential equation. To tackle quantum bottlenecks we combine three key ingredients. First, we use a plasma free energy to identify a Lyapunov transform under which a Carleman linear embedding converges exponentially in the truncation order within a certified weakly nonlinear regime. Second, we develop a hierarchical block-encoding protocol for dense matrices, exploiting the spatial decay of the screened field to avoid polynomial overhead from direct sparse access encodings of dense matrices. Third, we introduce a subroutine for information extraction that exploits the nonlinear components encoded in the full Carleman history state to improve the estimation of linear observables. Combining these ingredients, we construct a quantum algorithm to estimate the spacetime-averaged kinetic energy using
$\widetilde{O}\!\left(
N_{\mathrm{F}} N_{\mathrm{H}}^{1/2}
\operatorname{polylog}\!\left(\frac{T}{\epsilon}\right)
\frac{1}{\epsilon}
\right)$ gates and $
\widetilde{O}\!\left(
\log\!\left(N_{\mathrm{F}} N_{\mathrm{H}}^{1/2}T\right)\log\!\left(\frac{1}{\epsilon}\right)
\right)$ qubits, where \(N_{\mathrm{F}} \) and \(N_{\mathrm{H}}\) are the Fourier and Hermite cutoffs respectively. Relative to a standard Fourier--Hermite spectral solver, this yields exponential memory savings and superquadratic improvements in time. Together, these results establish a controlled nonlinear plasma benchmark for quantum simulation.
\end{abstract}

	\maketitle

\renewcommand{\thefootnote}{\arabic{footnote}}
	\setcounter{footnote}{0}

    \section{Introduction and overview}
    
    Plasma simulation is a central computational challenge in physics. Kinetic plasma models describe the evolution of a particle distribution function on phase space, and are capable of capturing effects such as Landau damping, phase mixing, wave--particle resonance, kinetic instabilities, and velocity-space filamentation. These effects are essential in many regimes of interest, but they also make first-principles simulation expensive: even before discretization error is considered, the state variable is a function of three spatial and three velocity coordinates. A natural question is whether fault-tolerant quantum computers can mitigate this challenge~\cite{joseph2023quantum}.

    Early quantum algorithms for \emph{linear} plasma dynamics showed that such problems can sometimes be mapped to Hamiltonian simulation or quantum ODE solving~\cite{engel2019quantum}. Subsequent work~\cite{ameri2023quantum} revisited the same physical problem and showed that the choice of representation is critical: a Fourier--Hermite system is exponentially more compact than the velocity-grid formulation for the linear Landau damping benchmark,
    so a classical Hermite solver can match the performance of the earlier grid-based quantum approach. At the same time, they showed that a quantum algorithm for the Fourier--Hermite formulation still gives a quadratic speedup in the Hermite system size over classical algorithms for the same truncated linear system~\cite{ameri2023quantum}.

Linearized plasma models provide a starting point for understanding waves, instabilities, and phase mixing, but many physically relevant regimes are intrinsically nonlinear. In such settings, kinetic simulations must additionally resolve mode coupling, nonlocal field interactions, saturation mechanisms, and the transfer of free energy across phase-space scales~\cite{Mouhot_2011}. 
   
   In this work, we construct an end-to-end quantum algorithm with a priori convergence guarantees for a 3+3D \emph{nonlinear} kinetic plasma problem. Specifically, our work focuses on the nonlinear Vlasov-Poisson equations with collisions where we expand around a background. This  simplified model captures two essential challenges faced by quantum algorithms for plasma: the presence of nonlinearities and the problem of data-loading for dense electric field terms. Prior work~\cite{vaszary2025solving} constructed a quantum algorithm for the one-dimensional nonlinear Vlasov-Poisson equation under a direct Carleman embedding approach that outputs a coherent state. However,  
   it was found that convergence could only be certified in physically unrealistic regimes.
   Furthermore, the quantum algorithm displayed worse-than-classical complexity scaling, even before accounting for the question of data extraction from a quantum state.

   To overcome these limitations, we introduce a perturbative Carleman expansion, use a Fourier--Hermite representation, and incorporate a Lyapunov stability analysis. Using a Lyapunov $R$-number criterion~\cite{jennings2025quantum}, we obtain provable convergence guarantees for small, but non-negligible nonlinearities. We also address the dense field-interaction terms using hierarchical block-encoding techniques. Our work provides a rigorous end-to-end quantum algorithm for a nonlinear kinetic plasma model in a controlled perturbative regime. We prove superquadratic quantum speedup over standard Fourier--Hermite spectral methods. The a priori analytical guarantees only apply to nonlinearities that are at least two orders of magnitude smaller than what is typically desired in practice (see Fig.~\ref{fig:schematicconvergence}). However, we expect that the algorithm will converge beyond the regime where strict analytical arguments are available, and Koopman-von Neumann linearization and related approaches~\cite{novikau2025quantum, joseph2020koopman, bravyi2025quantum, bravyi2026quantum, jemcov2026unitary, may2025second, jennings2026quantum} may offer a more robust route to strongly nonlinear or turbulent plasma dynamics. For the ease of reading, please refer to Appendix~\ref{app:notation} for a table of notation.

   \begin{center}
\begin{figure*}
\resizebox{0.8\textwidth}{!}{
\begin{tikzpicture}[x=1in,y=1in]
  \tikzset{
    pptline/.style={draw=black,line width=1.5pt,line cap=butt},
    pptarrow/.style={pptline,-{Triangle[length=5pt,width=6pt]}},
    pptdoublearrow/.style={pptline,{Triangle[length=4.5pt,width=5.5pt]}-{Triangle[length=4.5pt,width=5.5pt]}},
    regionlabel/.style={align=center,inner sep=0pt,text=black,font=\sffamily\fontsize{10}{12}\selectfont},
    ticklabel/.style={align=center,inner sep=0pt,text=black,font=\sffamily\fontsize{14}{16}\selectfont},
    axislabel/.style={align=left,inner sep=0pt,text=black,font=\sffamily\fontsize{12}{14}\selectfont}
  }

  \def\xzero{0.0000}
  \def\xone{0.6958}
  \def\xtwo{1.3916}
  \def\xthree{2.0874}
  \def\xfour{2.7831}
  \def\xfive{3.4789}
  \def\xsix{4.1747}
  \def\xseven{4.8705}
  \def\xredstart{4.5355}
  \def\xredend{5.7645}

  \def\xgreenend{1.6129}

  \fill[pptgreen]  (0.0084,0.0113) rectangle (\xgreenend,1.2838);
  \fill[pptyellow] (\xgreenend,0.0113) rectangle (\xredstart,1.2838);
  \fill[pptred]    (\xredstart,0.0113) rectangle (\xredend,1.2838);

  \draw[pptarrow] (0,0) -- (6.2080,0);
  \foreach \x in {\xzero,\xone,\xtwo,\xthree,\xfour,\xfive,\xsix,\xseven}
    \draw[pptline] (\x,-0.1299) -- (\x,0.1299);

  \draw[pptline] (0.1559,-0.0627) -- (0.2652,0.0672);
  \draw[pptline] (0.2187,-0.0605) -- (0.3280,0.0694);

  \node[ticklabel,anchor=north] at (\xzero,-0.2050) {$0$};
  \node[ticklabel,anchor=north] at (\xone,-0.2050) {$10^{-6}$};
  \node[ticklabel,anchor=north] at (\xtwo,-0.2050) {$10^{-5}$};
  \node[ticklabel,anchor=north] at (\xthree,-0.2050) {$10^{-4}$};
  \node[ticklabel,anchor=north] at (\xfour,-0.2050) {$10^{-3}$};
  \node[ticklabel,anchor=north] at (\xfive,-0.2050) {$10^{-2}$};
  \node[ticklabel,anchor=north] at (\xsix,-0.2050) {$10^{-1}$};
  \node[ticklabel,anchor=north] at (\xseven,-0.1910) {$1$};

  \node[regionlabel,text width=1.50in] at (0.8100,0.6476)
    {\textbf{Very weak}\\[-1pt]\textbf{nonlinearities}\\[-1pt]Analytically proved convergence};

  \node[regionlabel,text width=2.55in] at (3.0500,0.6476)
    {\textbf{Weak nonlinearities}\\[-1pt]Convergence threshold still to be determined};

  \node[regionlabel,text width=1.02in] at (5.1500,0.6476)
    {\textbf{Model breaks}\\[-1pt]\textbf{down}};

  \draw[pptdoublearrow] (\xfour,0.2235) -- (\xfive,0.2235);
  \node[inner sep=0pt,align=center,font=\sffamily\fontsize{8}{9}\selectfont] at (3.1310,0.3310) {tokamak core};

  \node[axislabel,anchor=north west,text width=1.37in] at (5.1600,-0.0850)
    {Nonlinearity $\varphi_{\max}$};
\end{tikzpicture}
}
\caption{\textbf{Narrowing down the convergence regime.} Schematic illustration of convergence regimes for the weakly nonlinear Vlasov-Poisson model with collisions used to benchmark our quantum algorithm. As the nonlinearity strength $\varphi_{\max}$ increases,
the problem of simulating the plasma model defined by Eqs.~(\ref{eq:gnonlinearintro}--\ref{eq:phi_g_single_speciesintro}) passes from an analytically certified Carleman-convergence regime,
through a weakly nonlinear regime whose practical convergence threshold
remains to be determined numerically or by alternative methods. Finally we pass to the limits of the model \(\varphi_{\max}=O(1)\), where the weakly
nonlinear plasma model itself ceases to apply. The indicated values are computed for the Gaussian example in Fig.~\ref{fig:Gaussian_plot}. The certified boundary near \(10^{-5}\) is shown for the representative case
\(\bar{\nu}=10^{-2}\) at Debye-scale resolution \(\Delta x=\pi\);
at fixed discretization and initial data parameters, the corresponding certified
threshold value of \(\varphi_{\max}\) scales linearly
with \(\bar{\nu}\); see \eqref{eq:Gaussian_RP}.}
\label{fig:schematicconvergence}
\end{figure*}
\end{center}
  
    \subsection{Overview of the plasma problem}
    
  We take a weakly nonlinear, electrostatic, unmagnetized, near-Maxwellian, collisional ion-electron plasma in three dimensions, with one thermal species and one kinetic species. The electrons are thermalized, according to the adiabatic response. The ions are kinetic (i.e., treated dynamically), and we perform an expansion about a Maxwell-Boltzmann equilibrium distribution. This idealized model retains the self-consistent electric-field nonlinearity characteristic of Vlasov--Poisson dynamics, whose nonlinear evolution, phase-space filamentation, and nonlocal field solve are central sources of computational difficulty in deterministic kinetic plasma simulation~\cite{ChengKnorr1976,FilbetSonnendruckerBertrand2001,ParkerDellar2015,AdkinsSchekochihin2018}. Related kinetic-ion/adiabatic-electron reductions are standard in treatments of ion-acoustic dynamics and in electrostatic gyrokinetic models of ion-temperature-gradient turbulence~\cite{FriedGould1961,LiuEtAl2023,ParkerLeeSantoro1993,DimitsEtAl2000}. As such, this model is an ideal testbed to investigate whether or not quantum computers can be utilized for nonlinear plasma simulation. 

A derivation of the plasma model is given in Sec.~\ref{sec:detailed}, but here is the upshot. The electrons are in local thermal equilibrium determined by the (time-varying) electrostatic potential $\phi$. 
A core quantity is the ratio of electrostatic potential energy to electron
thermal energy: 
\begin{align}
\label{eq:phibar}
\varphi(\v{x},t) \coloneqq {\lvert q_e\rvert \phi(\vx,t) \over k_B T_e},
\end{align}
where $q_e$ is the electron charge and $k_{\mathrm{B}}$ is Boltzmann's constant. This dimensionless potential is a physically meaningful measure of nonlinearity. For weak turbulence regimes, such as core tokamak turbulence and transport modeled using the gyrokinetic ordering, the normalized electrostatic potential \(\varphi(\v{x},t)\) is treated as a small parameter~\cite{BrizardHahm2007, Krommes2012}. For tokamak core parameters $\varphi(\vx,t)$
typically ranges between \(10^{-3}\) and \(10^{-2}\). For this reason we linearize the adiabatic
electron response.

The ions are taken to have charge equal to $|q_e|$. Their distribution function $f(\vx,\vv,t)$ is dynamic, and it satisfies the Vlasov-Poisson system. We also include a linear BGK-type collisional model given by the Krook operator. We assume a Maxwellian equilibrium for the ions 
\begin{align}
\label{eq:Maxwellian}
f_{0}(\v{v})\coloneqq n_{0} \bigg(\frac{m}{2\pi k_{\mathrm{B}}\Ti}\bigg)^{3/2}\exp\bigg(-\frac{m\lvert\v{v}\rvert^2}{2 k_{\mathrm{B}}\Ti} \bigg),
\end{align}
where $m$, $T_{\mathrm{ion}}$, $n_{0}$ are the ion mass, equilibrium temperature, and equilibrium density. We then derive evolution equations for the relative perturbation of the ion phase space distribution about the Maxwellian background
\begin{align}
\label{eq:perturbationg}
    g(\v{x}, \v{v}, t) \coloneqq \frac{f(\v{x}, \v{v}, t) - f_0(\v{v})}{f_0(\v{v})}.
\end{align} 
As detailed in Sec.~\ref{sec:detailed}, in natural plasma units we end up with the following equations of motion for  $g$:
\small
\begin{equation}\label{eq:gnonlinearintro}
\partial_t g=-\vv\cdot\nabla_x g-\vv\cdot\nabla_x\phi-(\vv\cdot \nabla_x \phi)g+\nabla_x\phi\cdot\nabla_v g-\bar{\nu} g,
\end{equation}
\normalsize
with $\bar{\nu}$ the (rescaled) collisional rate. Here $g$ is assumed to be square integrable with respect to the weight $f_0$, as expected from the fact that the perturbation should have finite free energy relative to equilibrium.

Due to the thermal background, we have \emph{electrostatic screening}, whereby the Coulomb electric field is deformed essentially to Yukawa form via the Helmholtz equation
\begin{equation}
\label{eq:phi_g_single_speciesintro}
\big(\nabla_x^2-\tau\big)\phi
=
-\frac{1}{(2\pi)^{3/2}}\int_{\R^3}g(\v{x},\v v,t)\ee^{-\lvert\vv\rvert^2/2}\,\mathrm d^3v,
\end{equation}
where  $\tau \coloneqq \Ti/T_e$ is the ratio between the ion and electron temperatures. 

Our benchmark problem is end-to-end, and therefore we must choose an observable of interest, which we choose to be the out of equilibrium kinetic energy. For the ion-acoustic damping problem considered here, the field energy is
transferred to particle degrees of freedom through phase mixing and nonlinear
wave--particle dynamics, while Krook collisions eventually damp the
perturbation. The spacetime-averaged kinetic-energy moment provides a simple
diagnostic of this transfer. In future applications to large-scale direct numerical simulation of turbulent transport, one could extract functionals of the local particle and energy fluxes, which are all quadratic in perturbation strength.

To assess prospects for quantum computers to provide speedups over classical methods we consider a spectral Galerkin approximation~\cite{canuto2006spectral} to the plasma system defined by Eqs.~\eqref{eq:gnonlinearintro} and~\eqref{eq:phi_g_single_speciesintro}, where we project onto $(2\NF+1)^3$ Fourier modes for the spatial degrees of freedom, and $(\NH+1)^3$ Hermite modes for the velocity degrees of freedom. For the spatial degrees of freedom we consider a torus $\mathbb{T}_L^3$, represented by a fundamental domain $[0,L]^3$ with periodic boundary conditions imposed on opposite faces. The integers $(\NF,\NH)$ are resolution parameters for the Galerkin approximation of the plasma system. Within this context, we shall consider the following problem.
\begin{problem}
\label{problem}
    Given $L>0$ and $\epsilon>0$, consider the Galerkin approximation of the plasma system \eqref{eq:gnonlinearintro} and~\eqref{eq:phi_g_single_speciesintro} on $\mathbb{T}_L^3 \times \mathbb{R}^3$ with resolution parameters $(\NF,\NH)$, and fix initial data $g(\v{x},\v{v},0)$ that is square-integrable with respect to the measure $f_0(\v{v})$, and has zero total mass:
    \begin{align}
         \frac{1}{{(2\pi)^{3/2}}} \int_{\mathbb T_L^3\times\mathbb{R}^3}
     g(\vx,\vv,0)\ee^{-|\v{v}|^2/2}\,\mathrm{d}^3v\,\mathrm{d}^3x =0.
    \end{align}
    For a given time $T>0$, output the time-averaged kinetic energy moment 
    \small
\begin{align}
\label{eq:kineticperturbation}
    \langle \mathcal{K} \rangle = \frac{1}{L^3 T} \int_0^T \mathrm{d}t
    \int_{\mathbb T_L^3\times\mathbb{R}^3}
   \! \! \! \!  \frac{|\vv|^2}{2}\,
    \frac{\ee^{-|\v{v}|^2/2}}{(2\pi)^{3/2}}\,
    g(\vx,\vv,t)\,\mathrm{d}^3v\,\mathrm{d}^3x ,
\end{align}
\normalsize
to $\epsilon$-additive error and constant success probability.
\end{problem}

\subsection{Overview of contributions}

We outline the challenges in simulating nonlinear plasma physics, describe our results in this context, and elaborate on the significance of these results. 

\subsubsection{The challenges}

If we were to set the nonlinear term in Problem~\ref{problem} to zero, and consider the $1$-dimensional rather than the $3$-dimensional problem, we would recover the linear Vlasov-Poisson equation with collisions. In this context, Ref.~\cite{ameri2023quantum} established the existence of a quadratic speedup in the number of Hermite modes $\NH$. However, handling nonlinearities in 3 dimensions is considerably more difficult. This is because here nonlinearities introduce two crucial new challenges:

\begin{enumerate}
    \item \emph{Faithful simulation:} Since quantum mechanics is fundamentally linear, the nonlinear plasma problem needs to be mapped onto a (large) linear problem before one can deploy quantum algorithms. Achieving this mapping with controlled errors is a central roadblock to a faithful quantum simulation~\cite{forets2017explicit,chen2024carleman,liu2021efficient,jennings2025quantum,jennings2025end,bravyi2025quantum}.

    \item \emph{Dense encodings:} The nonlinear term, which encapsulates the (screened) electrostatic interaction, is represented by dense matrices. This means that we need to contend with a quantum data loading problem, which can prohibit quantum speedup, as in Ref.~\cite{vaszary2025solving}.

\end{enumerate}

\subsubsection{Main result}

Our main contribution is to overcome these two core challenges in a simplified but representative nonlinear plasma system. The end-to-end Problem~\ref{problem} can be efficiently solved for a variety of initial conditions with sufficiently weak nonlinearity parameter~\eqref{eq:phibar} that can be efficiently prepared on a quantum computer. As an illustration:

\begin{theorem}
\label{theorem}
   Take Problem~\ref{problem} with single Fourier mode initial data: 
\begin{align}
\label{eq:initialconditiontheorem}
g(\vx,\vv,t=0)
=&\;
\frac{\varphi_{\max} }{\tau}\kappa_{\v{n}_0}\cos\bigg(\frac{2\pi}{L} \v{n}_0 \cdot \vx\bigg), \\
\phi(\vx,t=0)
=&\;
\frac{\varphi_{\max}}{\tau} \cos\bigg(\frac{2\pi}{L} \v{n}_0 \cdot \vx\bigg),
\end{align}
where $ \kappa_{\v{n}_0} =\frac{4\pi^2}{L^2}|\v{n}_0|^2 +\tau$ with $\v{n}_0$ a nonzero vector of integers with magnitudes smaller than $\NF$ and
\begin{align}
\label{eq:epsmax_Fourier_mode_explicitintro}
\varphi_{\max}
<
&\;
\frac{\sqrt{2}\tau\bar{\nu}}{\sqrt{3\kappa_{\vn_0}(1+\kappa_{\vn_0})}}
\nonumber \\
&\; \times\bigg(
\frac{1}{1+\tau}+\frac{13L^2}{2\pi^2}\NF
\bigg)^{-1/2} \!\!\!
\big(1+\sqrt{3\NH}\big)^{-1}.
\end{align}
Then Problem~\ref{problem} can be solved by a quantum algorithm involving
\begin{align}
   \mathcal{Q}_{\mathrm{mem}} =\tilde{O}\bigg(\log(\NF \NH^{1/2} T) \log\bigg(\frac{1}{\epsilon}\bigg)\bigg)
\end{align}
qubits and
\begin{align}
\label{eq:overallcostintro}
\mathcal{Q}_{\mathrm{op}} = \tilde{O}\left(\NF\NH^{1/2}\operatorname{polylog} \left(\frac{T}{ \epsilon}\right) \frac{1}{\epsilon}\right),
\end{align}
one and two-qubit gates and classical operations. 
\end{theorem}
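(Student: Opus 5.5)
The proof will assemble the three ingredients announced in the abstract, instantiated for the single-mode initial condition. \emph{First, the classical reduction.} Project \eqref{eq:gnonlinearintro}--\eqref{eq:phi_g_single_speciesintro} onto the $(2\NF+1)^3$ Fourier and $(\NH+1)^3$ Hermite modes. In the normalized Hermite basis the streaming term $\vv\cdot\nabla_x$ becomes a tridiagonal, anti-Hermitian coupling in each velocity direction; the Helmholtz solve is diagonal in Fourier space with symbol $\kappa_{\vn}^{-1}$; and the Krook term is $-\bar\nu$ times the identity. This gives a quadratic ODE $\dot u = F_1 u + F_2 (u\otimes u)$. The nonlinearity $(\vv\cdot\nabla\phi)g - \nabla\phi\cdot\nabla_v g$ couples Fourier modes through a convolution with $\hat\phi$, so $F_2$ is dense in Fourier space. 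In the Hermite index it only couples $n$ to $n\pm1$, with weights $\sqrt{n}\le\sqrt{\NH}$.

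\emph{Second, certified Carleman convergence.} Take the plasma free energy $\tfrac12\int f_0 g^2 + \tfrac12\int(|\nabla\phi|^2+\tau\phi^2)$ and translate it into a weighted norm $\|u\|_P^2=u^\dagger P u$, with $P$ block-diagonal in Fourier modes and weights built from $\kappa_{\vn}$. Along the linear flow, streaming and field exchange are skew in this norm, so $F_1^\dagger P+PF_1\le -2\bar\nu P$. I then bound $\|F_2\|_{P}$. The Fourier convolution sum, with factors $|\vn|\kappa_{\vn}^{-1}$, is controlled by $\sum_{\vn}|\vn|^2\kappa_{\vn}^{-2}$ over the truncation; this is where the factor $\big(\tfrac{1}{1+\tau}+\tfrac{13L^2}{2\pi^2}\NF\big)^{1/2}$ comes from. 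The Hermite ladder contributes $1+\sqrt{3\NH}$. Combined with $\|u(0)\|_P$, computed explicitly for the single-mode data as $\propto\varphi_{\max}\sqrt{\kappa_{\vn_0}(1+\kappa_{\vn_0})}/\tau$, the Lyapunov $R$-number criterion of~\cite{jennings2025quantum} requires $R=\|F_2\|\|u(0)\|/\bar\nu<1$. That is exactly condition \eqref{eq:epsmax_Fourier_mode_explicitintro}. The cited criterion then gives a Carleman error $O(R^{k})$ uniformly in $T$, so truncation order $k=O(\log(1/\epsilon))$ suffices.

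\emph{Third, quantum implementation and readout.} Block-encode the Carleman matrix. $F_1$ is sparse. For the dense $F_2$ I would use the hierarchical construction: decompose the Yukawa kernel $\kappa_{\vn}^{-1}$ into dyadic shells, each block-encoded via arithmetic on Fourier indices, and use the screened decay to keep the subnormalization at $\tilde O(\NF)$ rather than $\NF^{3}$. Together with the Hermite factor $\sqrt{\NH}$, this gives a normalization $\alpha=\tilde O(\NF\NH^{1/2})$. Feed the block-encoding into a history-state linear-system ODE solver. Because the $P$-norm decays uniformly, the condition number is $\tilde O(\alpha T)$ up to $\operatorname{polylog}$, with a $1/\bar\nu$ scale that is absorbed by the threshold. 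The single-mode initial state, and its Carleman tensor powers, is prepared with polylogarithmic cost.

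The kinetic energy is a linear functional of the Hermite coefficients with $|\v n|=2$ at zero Fourier mode. I estimate it by amplitude estimation of the overlap with the history state, using the higher Carleman components to boost the overlap so that no extra $T$ or norm-ratio penalty appears. This costs $1/\epsilon$ repetitions. The qubit count is logarithmic in the dimension of the history state, $\log(\NF\NH^{1/2}T)$, times the Carleman order $\log(1/\epsilon)$.

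The main obstacle I expect is the $\|F_2\|_P$ bound with sharp constants. It needs a careful Schur-test estimate of the convolution in the weighted Fourier norm to obtain linear, rather than cubic, dependence on $\NF$. Keeping the block-encoding subnormalization at that same $\tilde O(\NF)$ scale is the matching difficulty on the quantum side.
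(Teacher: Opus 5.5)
Your Galerkin reduction, the $\m P$-weighted free energy, the Schur-type bound on $\lVert\tilde{\m F}_2\rVert_2$ and the explicit $\lVert\m u(0)\rVert_{2,\m P}$ all follow the paper, so the implication from \eqref{eq:epsmax_Fourier_mode_explicitintro} to $R_{\m P}<1$ is fine.

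\textbf{The gap is the $T$-dependence.} You bound the condition number of the history-state linear system by $\tilde O(\alpha T)$. Fed into the linear-system solver, that gives $\tilde O(\NF\NH^{1/2}T/\epsilon)$ gates, not the $\operatorname{polylog}(T/\epsilon)$ the theorem claims. The missing step is turning $R_{\m P}<1$ into contractivity of the \emph{truncated linear Carleman propagator}. Decay of the nonlinear trajectory's $\m P$-norm does not by itself control $\m L^{-1}$.

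The paper rescales once more, $\bar{\m u}=\frac{1+R_{\mathrm{thr}}}{2\lVert\tilde{\m u}_0\rVert_2}\tilde{\m u}$. This keeps $\bar{\m F}_1=\tilde{\m F}_1$ but gives $\lVert\bar{\m F}_2\rVert_2\le\frac{2R_{\mathrm{thr}}}{1+R_{\mathrm{thr}}}\bar\nu$. Hence $\mu(\bar{\m F}_1)+\lVert\bar{\m F}_2\rVert_2<0$, and via the cited Carleman log-norm estimate, $\mu(\bar{\m A})<-\bar\nu/2$. Without this rescaling, the off-diagonal Carleman blocks carry $\tilde{\m F}_2$, whose norm may far exceed $\bar\nu$ when $\lVert\tilde{\m u}_0\rVert_2$ is small, and $\mu(\tilde{\m A})$ need not be negative.

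With $\mu(\bar{\m A})<-\bar\nu/2$ you get $\lVert\m T_k(\bar{\m A}\Delta t)^j\rVert\le(1+\epsilon_{\mathrm{TD}}/\bar z_{\max})\,\ee^{-\bar\nu j\Delta t/2}$. Summing the expansion $\m L^{-1}=\m I+\sum_{j\ge1}\sum_{m}\ketbra{m+j}{m}\otimes\m T_k(\bar{\m A}\Delta t)^j$ then gives $\lVert\m L^{-1}\rVert=O(1/(\bar\nu\Delta t))=O(\alpha_{\bar{\m A}}/\bar\nu)$, independent of $M=T/\Delta t$. Time enters only through $k=O(\log(\alpha_{\bar{\m A}}T/\epsilon))$ and the $\log M$ clock qubits. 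The factor $1/\bar\nu$ is not absorbed by any threshold; it is simply treated as a constant.

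\textbf{Your read-out step also does not support the $1/\epsilon$ claim as written.} The history state samples only $t=s\Delta t$ with $\Delta t\le1/\alpha_{\bar{\m A}}$, and you never say how the time integral in $\langle\mathcal K\rangle$ is discretized. A left-endpoint sum on this grid is first order. It generically forces $\Delta t=O(\epsilon/\alpha_{\bar{\m A}})$, which multiplies $\lVert\m L^{-1}\rVert$ by a further $1/\epsilon$.

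This is where the paper uses the higher Carleman sectors, not to ``boost the overlap''. It overlaps $\ket{\chi}$ with a state proportional to $\m R_k(\bar{\m A}\Delta t)^\dagger\bar\ell_{\mathcal K}\otimes\sum_s\ket{s}$, where $\m R_k(x)=\sum_{q=0}^k x^q/(q+1)!$ reproduces $\Delta t^{-1}\int_0^{\Delta t}\ee^{\bar{\m A}s}\,\mathrm ds$ up to $\ee/(k+2)!$. So $k=\Theta(\log(1/\epsilon)/\log\log(1/\epsilon))$ suffices. This state is prepared with $\Omega(1)$ success probability because $\lVert\m R_k(\bar{\m A}\Delta t)-\m I\rVert\le\ee-2$.

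You must also estimate $\mathcal N_\chi$ separately, since the overlap equals $\sqrt M\,\overline{\mathcal K}_k/(\lVert\bar\ell_{\mathcal K,k}\rVert\mathcal N_\chi)$; the paper uses the routine of~\cite{dalzell2024shortcut}. The absence of a norm-ratio penalty then comes from $\mathcal N_\chi\le\sqrt M\max_t\lVert\bar{\m z}(t)\rVert\le\sqrt M\lVert\bar{\m z}_0\rVert=O(\sqrt M)$. This again rests on the $R_{\mathrm{thr}}$-rescaling, which fixes $\lVert\bar{\m u}_0\rVert_2=(1+R_{\mathrm{thr}})/2<1$.

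\textbf{On the block-encoding of $\tilde{\m F}_2$, your route differs from the paper's but is viable.} The paper passes to real space, where the Lyapunov-transformed field is an exponentially decaying mixture of Yukawa fields. It applies its two-kernel hierarchical block-encoding and conjugates by discrete Fourier transforms, at the price of regularizing and replacing $\tilde{\m F}_2$ by the sampled $\tilde{\m G}_2$.

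A dyadic-shell weighted sparse encoding directly in Fourier space also works, but not because of screening. Each column of $\tilde{\m F}_2$ has only $3$ nonzeros, and its entries are bounded by $\sqrt{\NH}\,L^{-3/2}\lvert\v{\xi}_{\vn'}\rvert^{-1}$. So the shell $\lVert\vn'\rVert_\infty\sim2^m$ contributes $O(\sqrt{\NH/L}\,2^{m/2})$, and the total is $O(\sqrt{\NF\NH/L})$. That matches the paper's $\alpha_{\tilde{\m F}_2}$ and Proposition~\ref{prop:F2tildenorm}, with exact entries and simple oracles. Either way $\alpha_{\bar{\m A}}$ is dominated by $\lVert\tilde{\m F}_1\rVert_2=\Theta(\NF\sqrt{\NH})$, so your weaker $\tilde O(\NF\sqrt{\NH})$ target already suffices.
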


We also test the same convergence criterion on a more localized family of initial data obtained from Gaussian density perturbations. In the large-box regime relevant for the plots, this family is naturally viewed through the radial approximation
\small
\begin{equation}
\label{eq:g_rad_intro}
g_{\mathrm{rad}}(\vx,\vv,0)
=
\frac{\varphi_{\max}}{\tau S_1}
\bigg(
\frac{1}{(2\pi)^{3/2}\sigma^3}
\exp\bigg(-\frac{\lvert\vx\rvert^2}{2\sigma^2}\bigg)
-
\frac{1}{L^3}
\bigg),
\end{equation}
\normalsize
where $g_{\mathrm{rad}}$ is independent of \(\vv\) and $S_1$ is a lattice sum defined by 
\begin{equation}
\label{eq:S1}
S_1\coloneqq \frac{1}{L^3}\sum_{\vn\in \mathbb{Z}^3\setminus\{0\}} \frac{\exp\big(-\frac{2 \pi^2\sigma^2}{L^2}\lvert \vn \rvert^2 \big)}{\frac{4\pi^2}{L^2}\lvert{\vn}\rvert^2+\tau}.
\end{equation}
The radial potential corresponding to \eqref{eq:g_rad_intro} is determined by the Helmholtz equation~\eqref{eq:phi_g_single_speciesintro}; see Eq.~\eqref{eq:phi_radial_approx} for the explicit expression. The parameter \(\sigma\) controls the width of the Gaussian and, correspondingly, how far the initial datum is from spatial equilibrium. In the exact periodized Gaussian family, increasing $\sigma$ makes the density profile increasingly uniform on the torus, so that the mean-subtracted perturbation approaches zero. Decreasing \(\sigma\) instead produces a more localized density perturbation, larger spatial gradients, and a more pronounced screened electrostatic response. Therefore, this one-parameter Gaussian family interpolates between equilibrium initial data and localized nonequilibrium initial data, complementing the single-mode benchmark of Theorem~\ref{theorem}.  

Again, for this setting a result fully analogous to  Theorem~\ref{theorem} holds, with the only variation being a modification of inequality~\eqref{eq:epsmax_Fourier_mode_explicitintro}. See Section~\ref{subsec:initial_data} and Appendix~\ref{subsec:Gaussian_proof} for further details.

\subsubsection{Regimes of a priori provable convergence}

The previous results show that the quantum algorithm is rigorously guaranteed to perform a faithful simulation if the condition~\eqref{eq:epsmax_Fourier_mode_explicitintro} (and a similar condition for the localized perturbation) is satisfied. Does this include physically relevant regimes?
To answer this question, we note that the parameter $\varphi_{\max} $ in both the Fourier perturbation (Eq.~\eqref{eq:initialconditiontheorem}) and the localized perturbation (Eq.~\eqref{eq:g_rad_intro}) coincides with the measure of nonlinearity in Eq.~\eqref{eq:phibar}, maximized over $\v{x}$ at $t=0$:
\begin{align}
    \varphi_{\max} = \max_{\v{x} \in \TL^3}\, \lvert \varphi(\v{x},0)\rvert.
\end{align}
We can therefore plot the largest $\varphi_{\max}$ for which the perturbation satisfies the assumptions required by the algorithm's a priori performance guarantees, as a function of resolution and collisional strength.
Fig.~\ref{fig:Fourier_plot}  shows this for the Fourier perturbation and Fig.~\ref{fig:Gaussian_plot} for the localized perturbation. The localized perturbation allows us to certify a larger \(\varphi_{\max}\) compared to the delocalized Fourier mode. In fact, the more the starting perturbation is localized, the higher the value of \(\varphi_{\max}\).   

The results in Fig.~\ref{fig:Fourier_plot} and Fig.~\ref{fig:Gaussian_plot} bound the initial state nonlinearities. Since the plasma dynamics has non-normality in it, we can have transient growth in the size of the perturbations. This growth in perturbations will in turn correspond to an increase in the nonlinearity parameter $\varphi$ and is also accommodated by our algorithm. However, the size of possible transient growth is relatively limited,\footnote{See Proposition~\ref{prop:u2Pbounds} for an upper bound on the transient growth of perturbations.} and whether it is realized for given initial data is a separate matter.
 
Even accounting for the above, the upshot is that the analytical convergence guarantees only apply to high collisional rates and small nonlinearities, below what is normally considered in most applications. However, we expect the algorithm to apply beyond the regime where our a priori analysis applies, which motivates future investigation.

The choice $\Delta x\simeq \pi$ corresponds to resolving the
screening scale in Fourier space for the $\tau=1$ cases plotted here. Indeed,
with modes $\vn\in\{-\NF,\ldots,\NF\}^3$, the largest coordinate wavenumber is
$\xi_{\max}\coloneqq 2\pi\NF/L$, while the associated collocation spacing is
$\Delta x=L/(2\NF+1)$. A screened response has Fourier multiplier controlled
by $(\lvert\v{\xi}_{\vn}\rvert^2+\tau)^{-1}$ (see Eq.~\ref{eq:phi_g_single_speciesintro}), so the spectral crossover occurs when $\lvert\v{\xi}_{\vn}\rvert^2\sim\tau$.
For $\tau=1$, this gives $\xi_{\max}\gtrsim1$, or equivalently
\begin{equation}
    \Delta x\lesssim \frac{2\pi\NF}{2\NF+1}\simeq\pi.
    \end{equation}
This is the usual Nyquist factor: a mode with wavenumber $\lvert\v{\xi}_{\vn}\rvert$ has wavelength
$2\pi/\lvert\v{\xi}_{\vn}\rvert$, so two samples per wavelength give spacing $\pi/\lvert\v{\xi}_{\vn}\rvert$.

\begin{figure}[t]
    \centering    
    \begin{overpic}[width=0.35\textwidth]{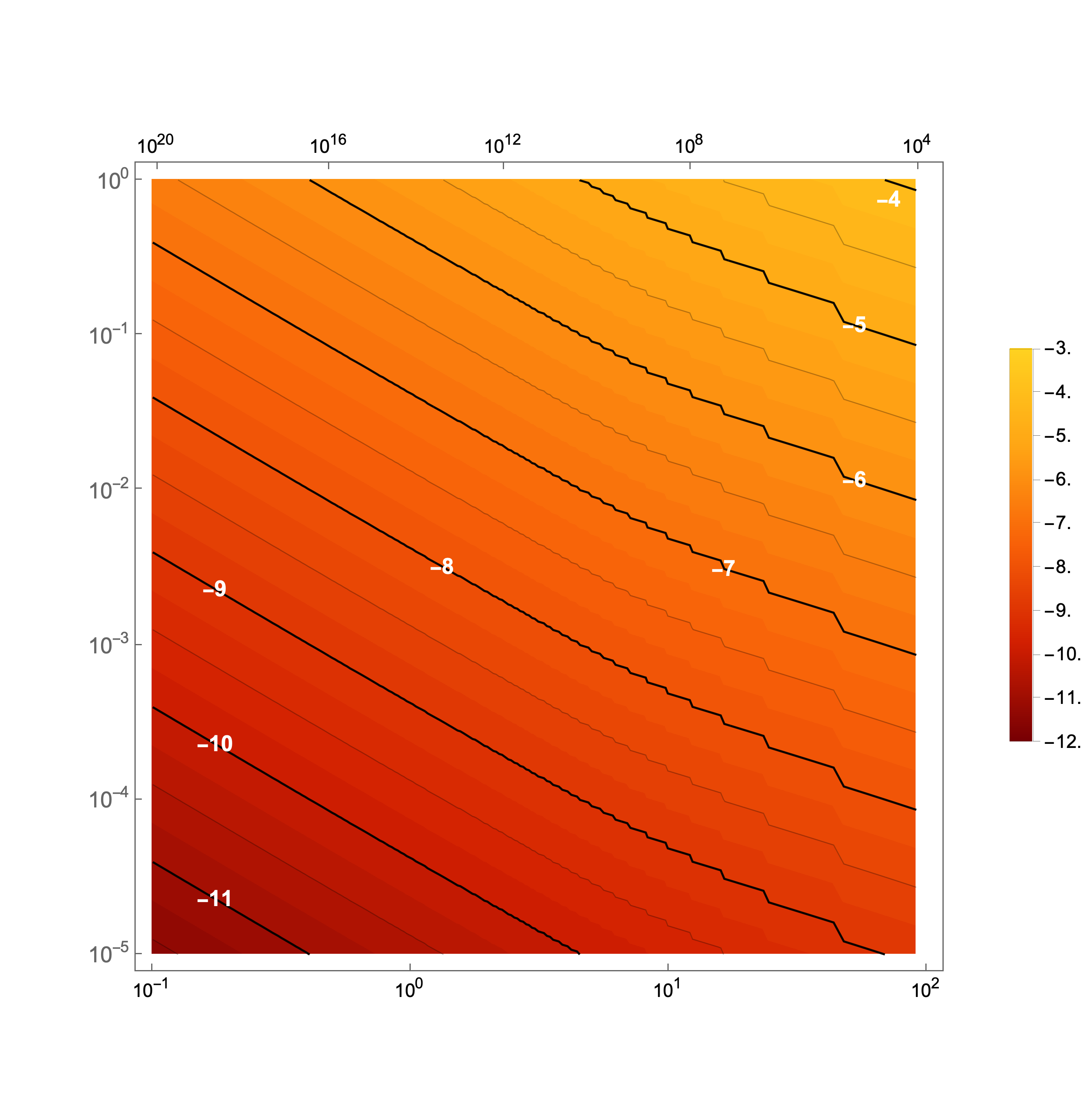}
    \put(0,48){$\bar{\nu}$}
    \put(30,3){$\Delta x=L/(2\NF+1)$}
    \put(85,72){${\log_{10}\varphi_{\max}}$}
    \put(25,92){$(2\NF+1)^3(\NH+1)^3$}
    \put(45,55){$\times$}
    \end{overpic}

    \caption{\textbf{Provably resolved plasma nonlinearities for plane-wave initial data.}
    Shown in the above contour plot is nonlinearity measure $\varphi_{\max}$ from Theorem~\ref{theorem} for which we can rigorously prove exponential convergence of the quantum algorithm. This places a simple analytic bound on the strength of nonlinearities that can be resolved. Further analysis is needed to determine if stronger nonlinearities can also be simulated.
    We set $\vn_0=(1,0,0)$, $\tau=1$, $L=10^3$, and $\NH=\lceil \NF/10\rceil$. The point $(\Delta x,\bar{\nu})=(\pi,10^{-2})$, on which $\varphi_{\max}=7.12\times 10^{-8}$, is indicated by the $\times$.} 
\label{fig:Fourier_plot}
\end{figure}

\begin{figure}[t]
    \centering    
    \begin{overpic}[width=0.35\textwidth]{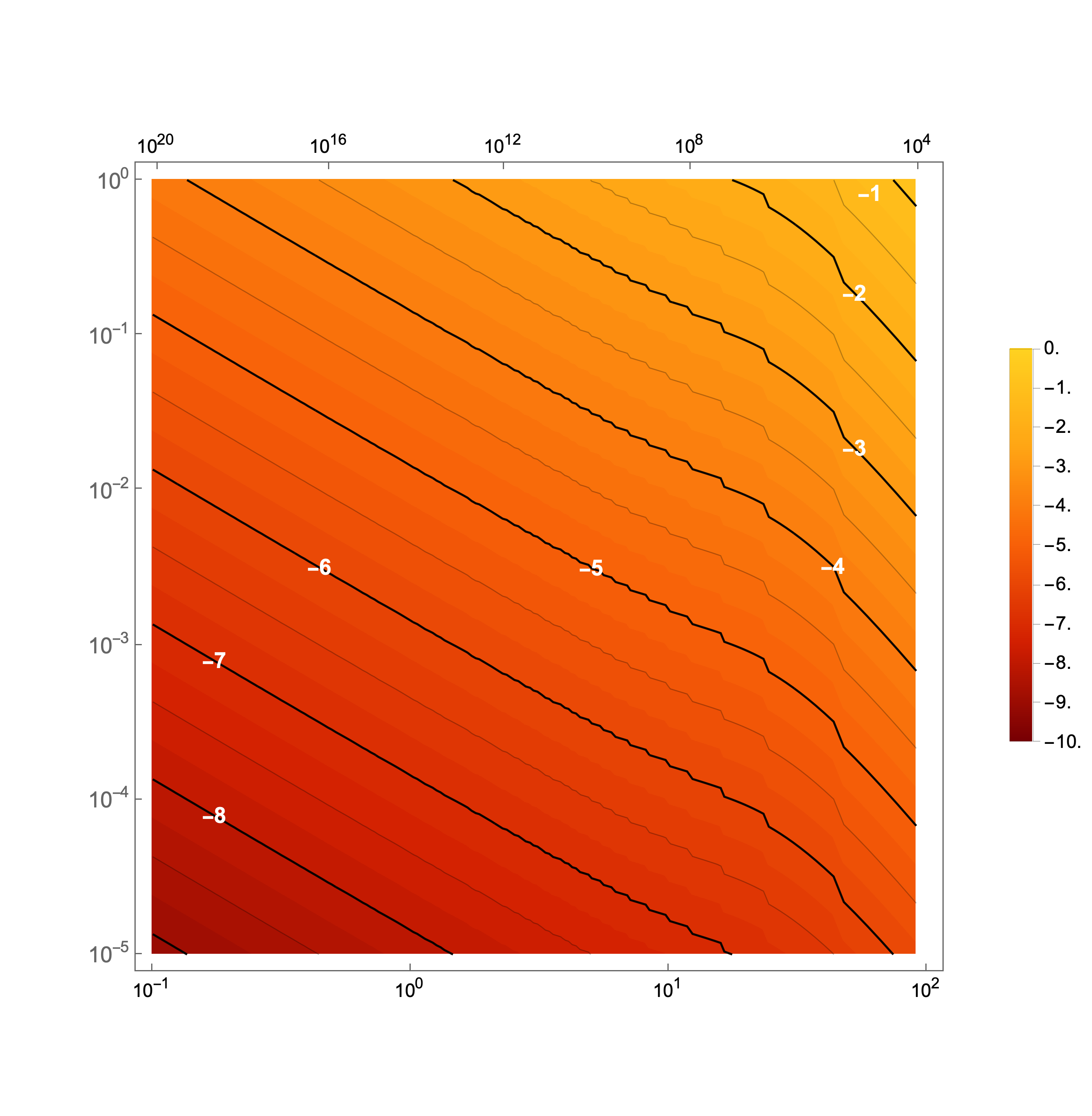}
    \put(0,48){$\bar{\nu}$}
    \put(30,3){$\Delta x=L/(2\NF+1)$}
    \put(85,72){${\small \log_{10}\varphi_{\max}}$}
    \put(25,92){$(2\NF+1)^3(\NH+1)^3$}
    \put(45,55){$\times$}
    \end{overpic}

    \caption{\textbf{Provably resolved plasma nonlinearities for Gaussian initial data.}
    Shown in the above contour plot is nonlinearity measure $\varphi_{\max}$ corresponding to the Gaussian initial data \eqref{eq:g_rad_intro} for which we can rigorously prove exponential convergence of the quantum algorithm.
    We set $\sigma=10$, $\tau=1$, $L=10^3$, and $\NH=\lceil \NF/10\rceil$. The point $(\Delta x,\bar{\nu})=(\pi,10^{-2})$, on which $\varphi_{\max}=2.08\times 10^{-5}$, is indicated by the $\times$.
    See Section~\ref{subsec:initial_data} for details. }
\label{fig:Gaussian_plot}
\end{figure}

\subsubsection{Quantum speedup relative to classical spectral methods}

What quantum speedup is entailed by Theorem~\ref{theorem}, using a Fourier-Hermite classical solver as a benchmark? Classically, the dominant complexity of a single step of the computation involves evaluating the nonlinear term. Naively, the cost would scale as $O(\NH^3 \NF^6)$, but via the Fast Fourier Transform this can be reduced to $\tilde{O}(\NH^3 \NF^3)$, where the tilde indicates that we suppress logarithmic factors. Since the norm of the linear term scales as $O(\NH^{1/2} \NF)$ (proof in Sec.~\ref{sec:normestimates}), for stability reasons the timestep of an explicit time-stepping method must scale with the inverse of that norm. Specifically, a $p$-order method requires $O(T \NH^{1/2} \NF/\epsilon^{1/p})$ timesteps to simulate up to time $T$, leading to an overall cost in terms of elementary operations
\begin{align}
\mathcal{C}_{\mathrm{op}} = \tilde{O}(T \NH^{7/2} \NF^4/\epsilon^{1/p}), 
\end{align}
and a memory cost \begin{align}
    \C_{\mathrm{mem}} = O(\NH^3 \NF^3).
\end{align}

Hence, the quantum algorithm has an exponential saving in memory cost,
and using Eq.~\eqref{eq:overallcostintro} the improvement in operations is given by
\begin{equation}
    \frac{\mathcal{C}_{\mathrm{op}}}{\mathcal{Q}_{\mathrm{op}}} = \tilde{O}(T \NH^{3} \NF^3\epsilon^{1-1/p}).
\end{equation}
Formally, the quantum algorithm displays a $7$th order $(\NH^{7/2} \rightarrow \NH^{1/2}$) and a quartic  ($\NF^4\rightarrow \NF$) improvement in the scaling of the number of operations with $\NH$ and $\NF$, respectively, but a worse scaling with $\epsilon$. It also has an exponential speedup in~$T$, but the significance of this speedup in the present model must be interpreted with care, since the perturbation decays to a scale $\epsilon$ in constant time.

In terms of Sobolev regularity parameters~\cite{adams2003sobolev} $s_{\m{F}}$ and $s_{\m{H}}$ one typically has $\NF = O(\epsilon^{-1/s_{\m{F}}})$ and $\NH = O(\epsilon^{-1/s_{\m{H}}})$, hence
\begin{equation}
    \frac{\mathcal{C}_{\mathrm{op}}}{\mathcal{Q}_{\mathrm{op}}} = \tilde{O}(T \epsilon^{1 -3/s_{\m{F}}-3/s_{\m{H}}-1/p}).
\end{equation}
If we assume high spatial regularity we can take \mbox{$s_{\m{F}}\rightarrow \infty$}. The Hermite regularity is more likely the bottleneck asymptotically. For example, with $s_{\m{H}} = 1$, one finds
\begin{align}
    \mathcal{Q}_{\mathrm{op}} = \tilde{O}(\epsilon^{-3/2}).
\end{align}
Taking the impractical asymptotic scenario $p = \infty$ for the classical method, one obtains
\begin{align}
\mathcal{C}_{\mathrm{op}} = \tilde{O}(\epsilon^{-7/2}) = \tilde{O}(\mathcal{Q}^{7/3}_{\mathrm{op}}).
\end{align}
For the more realistic case of fourth-order methods, e.g. Runge--Kutta, we have $p=4$ and one obtains
\begin{align}
\mathcal{C}_{\mathrm{op}} = \tilde{O}(\epsilon^{-15/4}) = \tilde{O}(\mathcal{Q}^{5/2}_{\mathrm{op}}).
\end{align}
Both cases give a \emph{superquadratic speedup} relative to the classical spectral method. 

The analysis applies to a simplified plasma model. Establishing this superquadratic advantage in a nonlinear regime is a stepping stone towards meaningful impact on plasma simulations. For more complex versions of the plasma problem, especially regimes with stronger phase mixing, filamentation, or turbulent phase-space cascades, where the effective Hermite regularity can be lower than in the smooth weakly nonlinear benchmark considered here, the corresponding Sobolev regularity parameter may be even smaller, e.g. $\soh \in (0,1)$. This could further increase the asymptotic separation between quantum and classical methods, provided that block-encoding and information-extraction costs are not adversely affected. 

\subsubsection{Main technical contributions}

We complete this overview of results by presenting the main technical novelties that were used in our algorithmic pipeline.

1. \emph{Stability analysis.} How did we address the issue of finding a faithful mapping of the nonlinear problem into a large linear one? First, we move from a problem for the distribution function $f$ to a problem for the perturbation $g$ in Eq.~\eqref{eq:perturbationg}. When we apply the Carleman linear embedding, which involves constructing a hierarchy of the dynamical variable and its tensor powers, $g$ is a more appropriate expansion point from the linear case than $f$ is. Secondly, we leverage the stability theory developed in Ref.~\cite{jennings2025quantum}. In that work, convergence criteria for the linear embedding are given whenever a Lyapunov observable (a functional that monotonically decreases along solutions) can be constructed for the linearized problem. In Ref.~\cite{jennings2025quantum} the Lyapunov $R$-number criterion is just an abstract condition, however in the context of the present plasma setting we show that it becomes a natural statement of the thermodynamic monotonicity of the plasma free energy
\small
\begin{align}
   \mathcal{F}[f,\phi] \coloneqq S(f\| f_0) + \frac{1}{2} \int_{\TL^3}  \big(\lvert\nabla \phi\rvert^2+\tau\phi^2\big)\, \mathrm{d}^3x + O(g^3)
    \label{eq:freenergycontinuum}
\end{align}
\normalsize
as a Lyapunov observable. The first term is the Kullback–Leibler relative entropy, a natural measure of the non-equilibrium ion free energy, relative to a Maxwell-Boltzmann equilibrium, and the second term is the screened electrostatic field energy. Leveraging results from Ref.~\cite{jennings2025quantum}, we show that if the free energy~\eqref{eq:freenergycontinuum} of the perturbation at the initial time is small enough, then a faithful linear embedding can be constructed with only a small $\Theta(\log(1/\epsilon))$ overhead in the qubit count. The condition in Eq.~\eqref{eq:epsmax_Fourier_mode_explicitintro} is just a special case of this broader construction for the case of a Fourier mode initial condition. 

2. \emph{Hierarchical data loading.} How did we address the issue of loading dense dynamical data? 
The main technical result we provide is a block-encoding theorem for a class of dense rectangular matrices that arise from
quadratic interactions.  Although we apply it to the Vlasov--Poisson nonlinearity, the result is not specific to that setting and is of more general use. In particular, in Sec.~\ref{app:block-encoding-rectangular} we show how a pure Coulomb kernel, the Coulomb plasma interaction term, and the present screened plasma nonlinearity can all be optimally block-encoded as special cases of the general result below.

The abstract object is any dense rectangular matrix with one output spatial index
and two input spatial indices,
\begin{align}
    \m K_{\v j,\v\alpha \mid \v j',\v\alpha';\v j'',\v\alpha''}
    =
    \sigma_{\v j,\v j''}
    \sum_{a=1}^3
    K^a_{\v j,\v j'}
    W^a_{\v\alpha\mid\v\alpha',\v\alpha''}.
    \label{eq:Kmatrixintro}
\end{align}
Here \(\v j,\v j',\v j''\) lie on a three-dimensional grid of
\(N_x^3\) points on the torus, and the matrix $\m{K}$ can be interpreted as describing a general 2--point interaction operator with additional degrees of freedom, specified by $W^a$. The vertical bar is used for the rectangular matrices to separate row and column indices for the sake of clarity.

In our setting $\m{K}$ is the nonlinear plasma interaction, the terms $W^a$ describe the velocity degrees of freedom (in terms of Hermite modes), while the spatial kernels $\sigma$ and $K^a$ describe the electrostatic field in a Lyapunov-transformed frame. In order to block-encode the nonlinear term we do not discretize in real-space, but instead do Nyquist sampling~\cite{shannon1949communication, nyquist1928certain} of continuum kernels at  lattice points, which in turn determines the Fourier data via a discrete Fourier transform. The kernels \(\sigma\) and \(K^a\) are dense, but are
assumed to decay with the spatial separation of their arguments.  The tensor
\(W^a\) contains the non-spatial degrees of freedom and is assumed to be
sparse. The task at hand is to construct a unitary that encodes $\m K$ as one of its blocks, with a rescaling constant $\alpha_{\m K}$ as small as possible, since the latter typically enters linearly in the cost of quantum algorithms. 

The difficulty is that a direct sparse block-encoding treats the spatial
kernels as dense matrices and therefore produces a normalization that can
grow like a large power of \(N_x\).  The hierarchical block-encoding avoids
this by decomposing each spatial kernel into \(O(\log N_x)\) pieces according
to distance scale.
Nearby interactions are kept in a genuinely sparse
``adjacent'' matrix, while far-away interactions are grouped into larger
blocks whose entries are small because of kernel decay. We provide a cleaner derivation of hierarchical block-encodings than the one presented in Ref.~\cite{nguyen2022block}, extended to three-dimensional settings and the non-square generalized kernel interactions in Eq.~\eqref{eq:Kmatrixintro}. We obtain the following general result.
\begin{theorem}[Hierarchical block-encoding for two polynomially decaying kernels -- Informal]
\label{thm:informal-two-kernel-hbe}
Let \(N_x=2^{\ell_{\max}}\), and consider a rectangular matrix $
    \m K
    \in
    \mathbb C^{N_x^3|\mathcal A|
    \times
    N_x^6|\mathcal A'||\mathcal A''|}$
with entries of the form~\eqref{eq:Kmatrixintro}. Assume that the non-spatial tensors \(W^a\) have \(N_x\)-independent max norm,
row sparsity, and column sparsity.  Assume also that the two spatial kernels
are upper-bounded by polynomial off-diagonal decay,
\begin{align}
    \small
    |\sigma_{\v j,\v k}|
    \le
    \frac{C_\sigma}{
    (1+\|\v j-\v k\|)^{p_\sigma}}, \; \; 
    |\m K^a_{\v j,\v k}|
    \le
    \frac{C_{K,a}}{
    (1+\|\v j-\v k\|)^{p_{K,a}}} .
\end{align}
for constants $C_\sigma, C_{K,a}$. One can then construct hierarchical decompositions
\[
    \sigma = \sum_{\lambda=1}^{\ell_{\max}} \sigma^{(\lambda)},
    \qquad
    K^a = \sum_{\ell=1}^{\ell_{\max}} K^{a,(\ell)} .
\]
In each decomposition, the  \(\ell,\lambda=1\) components contain the same-point and
nearest-neighbor points, while the components \(m=\lambda \geq 2\) or
\(m=\ell \geq 2\) group nonlocal interactions between points whose separation is of
order $
    2^{\ell_{\max}-m}$
in the \(L^\infty(\mathbb{N})\) norm, up to constant factors.

Assume access to unitaries enumerating the nonzero entries in $\sigma^{(\lambda)}$, $K^{a, (\ell)}$, $W^a$, both along columns and along rows; unitaries returning \(b\)-bit approximations to
    \(\sigma_{\v j,\v k}\), \(K^a_{\v j,\v k}\), and
    \(W^a_{\v\alpha\mid\v\alpha',\v\alpha''}\) given the indices; and a unitary routine that prepares a superposition over the hierarchy terms
\((a,\lambda,\ell)\), with amplitudes proportional to
the product of the largest entry of \(\sigma^{(\lambda)}\), the largest entry
of \(K^{a,(\ell)}\), and the relevant square-root sparsity factor. Then \(\m K\) admits a rectangular block-encoding whose normalization satisfies
\begin{align}
    \alpha_{\m K}
    =&\;
    \widetilde O\!\left(
    N_x^{r_1} + N_x^{r_2} + N_x^{r_3}
    \right), \nonumber \\
    r_a
    =&\;
    \max\left\{
        0,\,
        \frac32-p_\sigma,\,
        \frac32-p_{K,a},\,
        \frac92-p_\sigma-p_{K,a}
    \right\}. 
\end{align}
\end{theorem}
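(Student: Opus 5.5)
The plan is to write $\m K$ as a linear combination of unitaries (LCU) over hierarchy labels $(a,\lambda,\ell)$. Each term is block-encoded by a sparse-access construction adapted to rectangular matrices.

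\emph{Decomposition.} By linearity,
\begin{align}
\m K=\sum_{a=1}^3\sum_{\lambda,\ell=1}^{\ell_{\max}}\m K^{(a,\lambda,\ell)},
\end{align}
where
\begin{align}
\m K^{(a,\lambda,\ell)}_{\v j,\v\alpha\mid\v j',\v\alpha';\v j'',\v\alpha''}=\sigma^{(\lambda)}_{\v j,\v j''}K^{a,(\ell)}_{\v j,\v j'}W^a_{\v\alpha\mid\v\alpha',\v\alpha''}.
\end{align}
This gives $3\ell_{\max}^2=O(\log^2 N_x)$ terms.

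The hierarchical split is defined by dyadic shells in the $L^\infty$ distance on the torus. Level $1$ collects pairs at distance $O(1)$. Level $m\ge2$ collects pairs at distance $\Theta(2^{\ell_{\max}-m})$, organized as in the one-dimensional construction of Ref.~\cite{nguyen2022block} but with cubes in three dimensions.

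Fix a row $\v j$. At level $m$ the number of partners $\v k$ is $s_m=O(2^{3(\ell_{\max}-m)})$, and every entry satisfies
\begin{align}
\max|\sigma^{(m)}|\lesssim C_\sigma 2^{-p_\sigma(\ell_{\max}-m)},
\end{align}
with the analogous bound for $K^{a,(m)}$. Column sparsities are the same by translation symmetry.

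\emph{Single-term encoding.} For one term $\m K^{(a,\lambda,\ell)}$, I would use the standard sparse-access rectangular block-encoding. Prepare column states by enumerating nonzeros along columns and row states by enumerating along rows. Then load the product of entries by the $b$-bit oracles and a controlled rotation. This yields normalization
\begin{align}
\alpha_{a,\lambda,\ell}=\sqrt{s_r s_c}\,\|\cdot\|_{\max}.
\end{align}

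For the product structure, the row sparsity (fixed output $(\v j,\v\alpha)$) is $s_\lambda s_\ell s_r(W^a)$. The column sparsity (fixed $(\v j',\v\alpha';\v j'',\v\alpha'')$) is at most $\min(s_\lambda,s_\ell)\,s_c(W^a)$, since $\v j$ must be close to both $\v j'$ and $\v j''$. I would simply bound it by $s_\ell s_c(W)$, or by $s_\lambda s_c(W)$, whichever is smaller. With $\|W^a\|_{\max}=O(1)$, this gives
\begin{align}
\alpha_{a,\lambda,\ell}\lesssim 2^{\frac32(\ell_{\max}-\lambda)+\frac32(\ell_{\max}-\ell)}\,\min\!\big(2^{\frac32(\ell_{\max}-\lambda)},2^{\frac32(\ell_{\max}-\ell)}\big)^{\!0}\cdots
\end{align}
This needs care, so I would carry exponents explicitly. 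Write $x=\ell_{\max}-\lambda$ and $y=\ell_{\max}-\ell$, both in $[0,\ell_{\max}]$. The target is
\begin{align}
\alpha_{a,\lambda,\ell}\lesssim 2^{(3/2-p_\sigma)x+(3/2-p_{K,a})y}\quad\text{or better}.
\end{align}

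\emph{LCU and summation.} The assumed state-preparation unitary over $(a,\lambda,\ell)$, with amplitudes $\propto\sqrt{\alpha_{a,\lambda,\ell}}$, combined with SELECT over the single-term encodings, gives a block-encoding of $\m K$ with $\alpha_{\m K}=\sum_{a,\lambda,\ell}\alpha_{a,\lambda,\ell}$.

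Summing $2^{(3/2-p_\sigma)x+(3/2-p_{K,a})y}$ over $0\le x,y\le\ell_{\max}$ gives a maximum attained at a corner, up to $\ell_{\max}^2$ factors:
\begin{align}
\widetilde O\!\left(N_x^{\max\{0,\,3/2-p_\sigma,\,3/2-p_{K,a},\,3-p_\sigma-p_{K,a}\}}\right).
\end{align}

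\emph{Main obstacle.} The stated exponent has $\frac92-p_\sigma-p_{K,a}$ in the joint corner, not $3-p_\sigma-p_{K,a}$. This means the naive per-term accounting above is not the one that matches the statement. The hardest step is therefore the exact sparsity count of the product tensor when both kernels are nonlocal.

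Given $\v j$, the column pair $(\v j',\v j'')$ ranges over $s_\lambda s_\ell$ options. Given the column pair, $\v j$ ranges over at most $\min(s_\lambda,s_\ell)$ options, and actually over all $\v j$ lying within both shells. The extra $\frac32$ should arise from combining $\sqrt{s_\lambda s_\ell}$ for rows with $\sqrt{\min(s_\lambda,s_\ell)}$ for columns, bounded crudely by $\sqrt{s_\lambda s_\ell}^{\,1/2}$-type estimates. Equivalently, the grouping into blocks at level $m$ may make effective column sparsity $2^{3m'}$ rather than $1$.

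I would first carry out this counting carefully for the 3D block hierarchy, verifying that $\sqrt{s_r s_c}\le 2^{\frac32 x+\frac32 y+\frac32\min(x,y)}$. I would then check that the resulting double geometric sum still has its maximum at the corners $0$, $\frac32-p_\sigma$, $\frac32-p_{K,a}$, $\frac92-p_\sigma-p_{K,a}$, which yields the claimed $r_a$.

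Finally, errors from $b$-bit entry approximation are controlled by taking $b=O(\log(N_x/\epsilon))$, which contributes only to the tilde.
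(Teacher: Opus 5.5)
Your final plan is the paper's proof: an LCU over $(a,\lambda,\ell)$ whose terms are sparse-access encoded with normalization $\bar\sigma_\lambda\bar K_{a,\ell}w_a\sqrt{D_\lambda D_\ell\,\Gamma^a_{\lambda,\ell}\,d_r(W^a)d_c(W^a)}$ with $\Gamma^a_{\lambda,\ell}\le\min\{D_\lambda,D_\ell\}$, followed by a double geometric sum (the paper splits it into $x\le y$ and $x>y$, which is equivalent to your vertex argument on the two triangles). The ``main obstacle'' is self-inflicted: your first pass dropped the column factor $\sqrt{\min(s_\lambda,s_\ell)}$ that you had just derived (it can be attained, e.g.\ for $\v j'=\v j''$ and $\lambda=\ell$), so the exponent $3-p_\sigma-p_{K,a}$ was unjustified, whereas keeping that factor gives $\sqrt{s_r s_c}\lesssim 2^{\frac32 x+\frac32 y+\frac32\min(x,y)}$ directly and hence the stated $r_a$, with no further counting required.
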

The formal statement and proof are provided in Appendix~\ref{app:HBEs}.

The result shows that sufficiently rapid spatial decay offsets the formal density of the matrix.  In the symmetric case \(p_\sigma=p_{K,a}=p\), one obtains
\[
    \alpha_{\m K}
    =
    \begin{cases}
    O(1), & p>9/4,\\
    O(\operatorname{polylog} N_x), & p=9/4,\\
    \widetilde O(N_x^{9/2-2p}), & p<9/4.
    \end{cases}
\]
The power $p=9/4$ is a threshold for the 2-kernel interaction -- in the sense of separating $\mathrm{poly}(N_x)$ scaling of $\alpha_{\m{K}}$ from $O(1)$ scaling. In Appendix~\ref{app:block-encoding-rectangular} we show that if $\sigma_{\v{j},\v{j}''}= \delta_{\v{j},\v{j}''}$ then the resulting electric field is obtained via a convolution (e.g. such as appears in the nonlinear Vlasov-Poisson equation), and this has the threshold decay $p = 3/2$. Thus, for the Coulomb interaction in Vlasov-Poisson we can construct an $\tilde{O}(1)$ block-encoding of the non-linear interaction. Finally, we can also specialize the above result to the square kernel of the form $C/(1+\|\v{j}-\v{j}'\|)^p$, and recover a threshold decay of $p=3$ found in Ref.~\cite{nguyen2022block}. In particular, the block-encoding of Coulomb matrix has scale factor $\tilde{O}(N_x)$, which is again optimal~\cite{babbush2019quantumsublinear}.

3. \emph{Information extraction.} Our algorithmic pipeline outputs an approximate encoding of the plasma data, and tensor powers of the plasma, over time in the form a quantum state. From this `history state' we show how to estimate the space-time averaged non-equilibrium kinetic energy $\langle \mc{K} \rangle$ of the plasma system. Instead of a uniform sampling over time, we use an `integrator' method at the level of the Carleman embedding that converges super-exponentially in the order of the approximate integrator. Notably, within the Carleman embedding scheme this method leverages non-linear data encoded in the history state to improve the accuracy of the linear observable estimation.

\subsection{Overview of algorithm}

At a high level, our approach is the following:

\textbf{Step 1: Truncated system.} The core equations are~\eqref{eq:gnonlinearintro}-\eqref{eq:phi_g_single_speciesintro}. After truncating at $2\NF+1$ Fourier modes and $\NH+1$ Hermite modes per direction, we obtain a quadratic ordinary differential equation (ODE) of the form
\begin{equation}\label{eq:quadraticODEintro}
\dot{\m{u}}=\m{F}_1\m{u}+\m{F}_2(\m{u} \otimes \m{u}), \quad \m{u}(0)=\m{u}_0,
\end{equation}
where $\m{u}_0$ encodes the coefficients of the Fourier-Hermite expansion of $g(\v{x}, \v{v},t=0)$. The weighted square integrability of $g(\vx,\vv,0)$ implies 
\begin{align}
\label{eq:u0bound}
    \lVert\m{u}_0\rVert_2=O(1);
\end{align}  
see Appendix~\ref{app:regularity} for the precise statement. We assume that $\lVert\m{u}_0\rVert_2$ is either known or upper-bounded. 

\textbf{Step 2: Lyapunov transformed system.} An explicit, analytical construction is given of a symmetric positive definite matrix $\m{P}$, such that the function
\begin{align}
\label{eq:discretefreeenergy}
    \mathcal{F}[\m{u}]:= \frac{1}{2}\m{u}^\dag \m{P} \m{u}
\end{align}
is strictly decreasing along any nontrivial solution of the linearized equations. This is the free energy in Eq.~\eqref{eq:freenergycontinuum} for perturbations with modes in the truncation window, see Appendix~\ref{subsec:quadratic_Lyapunov}. 

Given a constant $R_{\mathrm{thr}}<1$, we introduce a corresponding \emph{Lyapunov transform} 
\begin{align}
\label{eq:ubar}
\bar{\m{u}} \coloneqq \frac{1+R_{\mathrm{thr}}}{2 \|\tilde{\m{u}}_0\|_2} \tilde{\m{u}}
\end{align} 
obeying the quadratic ODE system 
\begin{equation}
\label{eq:rescaledsystemintro}
\dot{\bar{\m{u}}}=\bar{\m{F}}_1\bar{\m{u}}+\bar{\m{F}}_2(\bar{\m{u}} \otimes \bar{\m{u}}), \quad \bar{\m{u}}(0)= \bar{\m{u}}_0.
\end{equation}
The transform~\eqref{eq:ubar} can be seen as a generalization of the rescaling considered in prior Carleman approaches, e.g., in Ref.~\cite{liu2021efficient, krovi2022improved, costa2023further}. With this Lyapunov transform, we find that
\begin{enumerate}
    \item The logarithmic norm of the linear component equals to the dimensionless dissipation rate, 
    $$\mu_2(\bar{\m{F}}_1)= - \bar{\nu},$$
    where $\mu_2(\m{X})$ is the largest eigenvalue of $\tfrac12(\m{X}+\m{X}^{\dag})$. This means the transformed system is dissipative. 
    \item The initial norm obeys 
    \begin{align}
    \label{eq:rescalednorm}
       0<  \| \bar{\m{u}}_0\| \leq \tfrac12(1+R_{\mathrm{thr}}) <1.
    \end{align}
\end{enumerate}
The first property allows us to get a priori estimates of the linear embedding error, discussed in Step~3, and it also underpins the complexity analysis of the quantum algorithm, discussed in Step~5. The second property is also required for the error analysis, as well as to guarantee efficient information extraction in the quantum algorithm. The fact that $\lVert\m{u}_0\rVert_2=O(1)$ and $\| \m{P}\|_2 = O(1)$ in our construction allows us to avoid complexity overheads, side-stepping a potential source of inefficiency.

\textbf{Step 3: Carleman linear embedding and a priori convergence guarantees.} In the rescaled coordinates, we obtain a hierarchy of ODE equations for the time-derivative of the variables $$\bar{\m{z}} = (\bar{\m{u}}, \bar{\m{u}}^{\otimes 2}, \dots, \bar{\m{u}}^{\otimes N_\mathrm{C}})^{\mathsf{T}}.$$ The equation for $\m{u}^{\otimes N_\mathrm{C}}$ depends on $\m{u}^{\otimes (\NC+1)}$, but we close the system by setting the latter to zero. This gives rise to an $O\left((\NH \NF)^{3N_\mathrm{C}}\right)$-dimensional truncated \emph{Carleman ODE system}
\begin{align}
\label{eq:Carlemansystem}
   \frac{\mathrm{d}}{\mathrm{d}t}\bar{\m{z}} = \bar{\m{A}} \bar{\m{z}}, \quad \bar{\m{z}}(0)= [ \bar{\m{u}}_0, \bar{\m{u}}^{\otimes 2}_0, \dots, \bar{\m{u}}^{\otimes N_\mathrm{C}}_0]^{\mathsf{T}},
\end{align}
for an appropriate \emph{Carleman matrix} $\bar{\m{A}}$. The linear embedding error is defined as
\begin{align}
    \epsilon_{\mathrm{C}} := \max_{t \geq 0} \| \bar{\m{z}}(t) - \bar{\m{w}}(t)\|,
\end{align}
where 
\begin{align}
\label{eq:wbar}
\m{w}(t) = [\bar{\m{u}}(t), \bar{\m{u}}(t)^{\otimes 2}, \dots, \bar{\m{u}}(t)^{\otimes \NC}] ^{\mathsf{T} }  
\end{align}
and $\bar{\m{u}}(t)$ is the solution to Eq.~\eqref{eq:rescaledsystemintro}. With our constructions and leveraging the analysis of Ref.~\cite{jennings2025quantum}, we can prove exponential decay of the error with the truncation parameter $\NC$
\begin{align}
    \NC = O(\log(1/\epsC)),
\end{align}
as long as the  Lyapunov $R$-number condition is satisfied:
\begin{align}
\label{eq:RPintro}
    R_{\m{P}}:= \frac{1}{-\mu(\bar{\m{F}}_1)} \| \bar{\m{F}}_2\|_2 \|{\bar{\m{u}}}_0\|_2 \leq R_{\mathrm{thr}}.
\end{align}

For the plasma system, it turns out that this takes the more physically intuitive formulation in terms of the free energy $\mathcal{F}$ in Eq.~\eqref{eq:discretefreeenergy}. The convergence condition can be phrased as a free energy bound at \mbox{$t=0$}:
\begin{align}
\label{eq:freenergycondition}
    \mathcal{F}(t=0) \leq \mathcal{F}_{\max},
\end{align}
where the threshold value is
\begin{equation}
\label{eq:Fmax_threshold}
\mathcal{F}_{\max}
=
\frac{\bar{\nu}^2 L^3}{6}
\frac{1}{
\big(
\frac{1}{1+\tau}
+
\frac{13L^2}{2\pi^2}\NF
\big)
\big(1+\sqrt{3\NH}\big)^2
}.
\end{equation}
Note that $\mathcal{F}_{\max}$ decreases as $(\NF\NH)^{-1}$. However we recall that the above is only a \emph{sufficient}, but not necessary, condition for convergence. We test these conditions for Fourier mode initial data in Eq.~\eqref{eq:initialconditiontheorem} and the Gaussian perturbation in Eq.~\eqref{eq:g_rad_intro}, obtaining the results we presented in Figs.~\ref{fig:Fourier_plot}-\ref{fig:Gaussian_plot}.

\textbf{Step 4: Quantum solver.} We run a quantum ODE solver on the Carleman ODE system~\eqref{eq:Carlemansystem}. In particular, we use the solver from Ref.~\cite{berry2022quantum} with no `idling' steps and the additive error scheme from Ref.~\cite{jennings2023cost}. To encode the required dense interaction matrices, we leverage the hierarchical block-encoding techniques of Theorem~\ref{thm:informal-two-kernel-hbe}.

This outputs an $\epsilon$-approximation to a quantum state encoding an $O(\epsilon)$-approximation to the solution and its tensor powers up to $\NC$ along the whole trajectory:
\begin{align}
\label{eq:powerhistorystate}
\ket{\chi} = \frac{1}{\mathcal{N}_\chi} \sum_{s=0}^{M-1} \|\bar{\m{w}}(s \Delta t)\| \ket{\bar{\m{w}}(s \Delta t)} \ket{s},
\end{align}
where $\ket{\bar{\m{w}}(t)}$ is a quantum encoding of $\m{w}(t)$ in Eq.~\eqref{eq:wbar}. This is a \emph{Carleman history state}. We prove that outputting this state has a gate cost
\begin{align}
\label{eq:complexitylinearsolver}
\tilde{O}\left(\NF\NH^{1/2}\operatorname{polylog} \left(\frac{T}{ \epsilon}\right) \operatorname{polylog}\left(\frac{1}{\epsilon}\right)\right),
\end{align}
and a qubit cost
\begin{align}
    \tilde{O}\bigg(\log(\NF \NH^{1/2} T) \log\bigg(\frac{1}{\epsilon}\bigg)\bigg).
\end{align}
The logarithmic scaling in $T$ follows from the stability properties of the nonlinear system~\eqref{eq:rescaledsystemintro}, and is in line with recent fast-forwarding results~\cite{an2026fast}. The scaling with $\NF$, $\NH$ follows from our data-encoding constructions.

\textbf{Step 5: Information extraction.} To solve the end-to-end Problem~\ref{problem}, we need to efficiently estimate a spacetime averaged kinetic moment of the perturbation in Eq.~\eqref{eq:kineticperturbation}, up to $\epsilon$ additive error. At a high level, we do as follows. 
We construct a high-order quadrature approximation to the time average directly from the Carleman history state~\eqref{eq:powerhistorystate}. Rather than estimating the observable only at the stored time points, we use the Carleman generator $\bar{\m{A}}$ to approximate the integral over each interval $[s\Delta t,(s+1)\Delta t]$ by applying the truncated operator
\begin{equation}
\m{R}_k(\bar{\m{A}} h):=\sum_{q=0}^{k}\frac{(\bar{\m{A}} h)^q}{(q+1)!}.
\end{equation}
This converts the desired time average into an overlap between the Carleman history state, including all powers of the data, and a state proportional to $\m{R}_k(\bar{\m{A}} h)^\dagger\bar\ell_K$, where $\bar\ell_K$ is the kinetic-energy functional in the Lyapunov-transformed Carleman space. The overlap is estimated by amplitude estimation~\cite{brassard2000quantum}, while the history-state normalization is estimated using the normalization-estimation procedure of Ref.~\cite{dalzell2024shortcut}. The Taylor quadrature error decreases factorially in $k$, so choosing $k=\Theta(\log(1/\epsilon)/\log\log(1/\epsilon))$ suffices. The resulting information-extraction cost is $\widetilde O(1/\epsilon)$, giving the overall dependence on $(\NF,\NH)$ and $\epsilon$ stated in Theorem~\ref{theorem}.

\section{Detailed analysis}
\label{sec:detailed}

We now present the detailed analysis of the plasma system and quantum algorithm. We shall assume that we have an electron-ion plasma in $3+3$ phase space dimensions. The extension to multi-species plasmas can also be done, however it turns out that the stability analysis can become involved when the collisional rates of the multiple species vary substantially.

    \subsection{Details of the plasma model}

We derive the weakly nonlinear plasma model, nondimensionalize it, and put the resulting equations in Fourier--Hermite form. We then introduce the symmetrizing change of variables later used in the stability analysis.
    
    \label{sec:detailedplasma}

    \subsubsection{Weakly nonlinear electron-ion plasma}

To demonstrate our quantum algorithm for solving a weakly nonlinear Vlasov-Poisson system, we consider an electron-ion plasma where the ions are treated as kinetic and the electrons are assumed to be adiabatic. This type of kinetic problem is common in plasma physics, such as the investigation of ion-temperature-gradient turbulence where ions are treated as gyrokinetic and the electrons are often assumed to be adiabatic, or the electron-temperature-gradient turbulence where the electrons are gyrokinetic and the ions are often adiabatic.

The thermalized electrons are assumed to be in a local Maxwell-Boltzmann equilibrium,
\begin{equation}
\label{eq:thermal_ansatz}
n_e(\v{x},t)=n_{e,0}\exp\bigg(-\frac{q_{e}\phi(\v{x},t)}{k_{\mathrm{B}}T_{e}} \bigg),
\end{equation}
where $q_e$, $T_e$, $n_{e,0}$ are the charge, equilibrium temperature, and equilibrium density of the electrons, respectively, and $k_{\mathrm{B}}$ is Boltzmann's constant. We will study the \emph{weakly nonlinear} regime, where \begin{align}
\label{eq:weaklynonlinear}
\varphi(\vx,t) \coloneqq {\lvert q_e\rvert \phi(\vx,t) \over k_B T_e}\ll 1. 
\end{align}

The kinetic species evolves according to the Vlasov equation for $f(\v{x},\v{v},t)$ in the $6$-dimensional phase space, and we allow for weak collisions modeled by a collision operator $\Omega$,  
\begin{equation}
\partial_t f =-\v{v}\cdot \nabla_x f-\frac{\lvert q_e\rvert }{m}\v{E}\cdot\nabla_v f+\Omega(f),
\label{eq:kinetic}
\end{equation}
with $|q_e|$ the ion charge.

For simplicity, we take the spatial domain to be
 a cubic box of length $L$ with periodic boundary conditions, denoted by $\mathbb{T}_L^3$. The collision operator 
$\Omega(f)$ is taken to have the form \cite{bhatagnar1954model}, 
\begin{equation}\label{eq:BGK}
    \Omega(f) = - \nu (f - f_{0}).
\end{equation}
This is a simplified model, and is primarily for technical convenience. In particular, it ensures stability of the plasma system, however, recent results~\cite{jennings2025quantum} have also shown that dynamics with conserved quantities can also be handled within the Carleman framework. It would be of interest to also explore other collisional operators with such properties. 

The kinetic species is assumed to be described by perturbations around a Maxwellian background:
\begin{equation}
\label{eq:kinetic_ansatz}
f(\v{x},\v{v},t)=f_{0}(\v{v})+h(\v{x},\v{v},t),
\end{equation}
where $h$ is assumed small relative to $f_{0}$.  Substituting~\eqref{eq:kinetic_ansatz} into ~\eqref{eq:kinetic}, the kinetic equation becomes
\begin{align}
\label{eq:kinetich}
\partial_t h=&\; -\v{v}\cdot \nabla_x h -\frac{\lvert q_e\rvert}{m}\v{E}\cdot \nabla_v f_{0} \nonumber \\
&\; -\frac{\lvert q_e\rvert}{m}\v{E}\cdot \nabla_v h - \nu h.
\end{align}

Naturally, we assume the perturbation does not change the total number of particles relative to the background. Set
\begin{equation}
\label{eq:Mh}
M(t)\coloneqq \int_{\TL^3\times \R^3} h(\vx,\vv,t)\,\mathrm{d}^3x\,\mathrm{d}^3 v. 
\end{equation}
If $M(0)=0$, then $M(t)=0$ for $t>0$; see 
Appendix~\ref{subsec:mass_conservation}.

The two species are coupled through the Poisson equation:
\begin{equation}
\begin{split}
\label{eq:Poisson}
\nabla^2_x \phi =&\; \frac{\lvert q_e\rvert}{\epsilon_0}(n_e-n), \\
\end{split}
\end{equation}
where 
\begin{equation}
 n(\v{x},t) \coloneqq \int_{\mathbb{R}^3} f(\v{x},\v{v},t)\,\mathrm{d}^3 v   
\end{equation} and $\epsilon_0$ is the permittivity of free space. We assume background charge neutrality,
\begin{equation}
\label{ass:charge_neutrality}
n_{e,0}=n_{0},
\end{equation}
where $n_0$ is the background ion density from Eq.~\eqref{eq:Maxwellian}.

\subsubsection{Screening}

The presence of the thermalized electrons generates a screening potential seen by the ions.  From Eq.~\eqref{eq:thermal_ansatz} and the assumption~\eqref{eq:weaklynonlinear}, we can expand the Boltzmann distribution of thermalized species: 
\begin{equation}
\label{eq:nsexp}
n_{e}(\v{x},t)
= n_{e,0}+ \delta n_e(\v{x},t) +O\big(\big(\tfrac{q_e\phi}{k_{\mathrm{B}} T_e} \big)^2\big),
\end{equation}
where $
\delta n_e(\v{x},t)\coloneqq -n_{e,0}\frac{q_e\phi(\v{x},t)}{k_{\mathrm{B}}T_e}$. With the expanded Boltzmann response for the thermal species, Poisson's equation (Eq.~\eqref{eq:Poisson}), now takes the form of a Helmholtz equation
\begin{equation}
\label{eq:Helmholtz}
\bigg( \nabla^2_x-\frac{1}{\lambda_{\mathrm{D},e}^2} \bigg)\phi =-\frac{\lvert q_e\rvert}{\epsilon_0} \int_{\R^3} h\,\mathrm{d}^3 v, 
\end{equation}
where $\nabla_x^2$ is the Laplacian on the configuration space $\TL^3$ and we have defined the \emph{Debye length}
\begin{equation}\label{eq:lambdadefns}
\lambda_{\mathrm{D},e}\coloneqq \sqrt{\frac{\epsilon_0 k_{\mathrm{B}}T_e}{n_{e,0}q_e^2}}.
\end{equation}

Eq.~\eqref{eq:Helmholtz} shows that the kinetic particles move in an effective screened potential due to Debye shielding. More specifically, the potential is generated by the Green's function
of the Helmholtz operator $\nabla_x^2-\lambda_{\mathrm{D},e}^{-2}$. On $\R^3$, this is precisely the Yukawa potential
\begin{equation}
\label{eq:Green_function}
G_{\mu}(\vx)
\coloneqq 
\frac{\ee^{-\mu\lvert \vx\rvert}}{4\pi\lvert\vx\rvert};
\end{equation}
with $\mu=1/\lde$ the corresponding Green's function on $\TL^3$ is obtained by convolving \eqref{eq:Green_function} with a periodic Dirac delta.

The weakly nonlinear Vlasov-Poisson system we solve is described by the Vlasov equation for the 6D perturbed distribution function $h$, Eq.~\eqref{eq:kinetich}, together with the Helmholtz equation \eqref{eq:Helmholtz} and initial conditions. 

\subsubsection{Natural plasma units}
\label{subsec:electron_ion_plasma}

Define the ion thermal velocity
\begin{equation}
\vth\coloneqq \sqrt{\frac{\kb T_{\mathrm{ion}}}{m}}.
\end{equation}
We move to natural dimensionless units:
\begin{equation}
\label{eq:natural_units}
\begin{split}
&\bar{\vx}\coloneqq \frac{1}{\ldion}\vx, \quad
\bar{\vv}\coloneqq \frac{1}{\vth}\vv, \quad
\bar{t}\coloneqq \frac{\vth}{\ldion}t, \\
&\bar{f}\coloneqq \frac{\vth^3}{n_0}f, \quad
\bar{\phi}\coloneqq \frac{\lvert q_e\rvert}{\kb T_{\mathrm{ion}}}\phi=\frac{1}{\tau}\varphi,
\end{split}
\end{equation}
where
\begin{equation}
\ldion\coloneqq \sqrt{\frac{\epsilon_0\kb T_{\mathrm{ion}}}{n_{0}q_e^2}}
\end{equation}
is the ion Debye length. Note that \eqref{eq:natural_units} induces a rescaling of the torus length parameter, to the dimensionless quantity
\begin{equation}
\bar{L}\coloneqq \frac{1}{\ldion}L.
\end{equation}
From this point onward, we drop the bars and understand all variables and parameters to be dimensionless. \\

Define
\begin{equation}\label{eq:honf}
g(\v{x},\v{v},t)\coloneqq \frac{h(\v{x},\v{v},t)}{w(\v{v})},
\end{equation}
where $w$ is the Gaussian weight function
\begin{equation}
\label{eq:w_def}
w(\v{v})\coloneqq \frac{1}{(2\pi)^{3/2}}\exp\bigg(-\frac12\lvert \v{v}\rvert^2\bigg).
\end{equation}
Note that $w$ coincides with $f_0$ under the change of variables \eqref{eq:natural_units}.
Moreover, let
\begin{equation}
\tau\coloneqq \frac{T_{\mathrm{ion}}}{T_e},\quad \bar{\nu}\coloneqq \frac{\nu\ldion}{\vth}.
\end{equation}
Then the kinetic equation~\eqref{eq:kinetich}, written in natural units and in terms of the scaled variable \eqref{eq:honf}, reads
\small
\begin{equation}\label{eq:gnonlinear}
g_t
=
-\vv\cdot\nabla_x g
-\vv\cdot\nabla_x\phi
-(\vv\cdot \nabla_x \phi)g
+\nabla_x\phi\cdot\nabla_v g
-\bar{\nu} g.
\end{equation}
\normalsize
It is coupled to the Helmholtz equation~\eqref{eq:Helmholtz}, which now reads
\begin{equation}
\label{eq:phi_g_single_species}
\big(\nabla_x^2-\tau\big)\phi
=
-\int_{\mathbb{R}^3}gw\,\mathrm d^3v.
\end{equation}

\subsubsection{Fourier-Hermite expansion}

For each Fourier mode $\vn\in \mathbb Z^3$, define the corresponding wavevector
\begin{equation}
\v{\xi}_{\vn}\coloneqq \frac{2\pi}{L}\vn.
\end{equation}

We Fourier transform the spatial variable $\vx$. For each Fourier mode
$\vn\in \mathbb Z^3$, we write
\begin{equation}
\hat g_{\vn}(\v v,t)
=
\frac{1}{L^{3/2}}
\int_{\mathbb T_L^3}
g(\vx,\vv,t)\ee^{-i\v\xi_{\vn}\cdot \v x}\,\mathrm d^3x.
\end{equation}

Furthermore, we expand the velocity dependence in the normalized Hermite tensor basis, orthonormal with respect to the Gaussian weight $w$:
\begin{equation}
\label{eq:Hermite_g}
\hat g_{\vn}(\vv,t)
=
\sum_{\v\alpha\in\mathbb N_0^3}
g_{\vn,\v\alpha}(t){H}_{\v\alpha}(\vv),
\end{equation}
where 
\begin{align}
\label{eq:hermite_basis}
{H}_\alpha(u)\coloneqq &\; \frac{\mathrm{He}_\alpha(u)}{\sqrt{\alpha!}}
\quad (\alpha\in \mathbb{N}_0), \\
{H}_{\v{\alpha}}(\v{u})\coloneqq &\;
\prod_{a=1}^3 {H}_{\alpha_a}(u_a)
\quad (\v{\alpha}\in \mathbb{N}_0^3),
\end{align}
with $\mathrm{He}_\alpha$ the probabilists' Hermite polynomials,
\begin{equation}
\mathrm{He}_{\alpha}(u)\coloneqq (-1)^{\alpha}\ee^{u^2/2}\frac{\mathrm{d}^\alpha}{\mathrm{d}u^\alpha}\ee^{-u^2/2} \quad (\alpha\in\mathbb{N}_0).
\end{equation}

We obtain (see Appendix~\ref{app:FHbasis})
\small
\begin{align}
\label{eq:g_mode_hermite}
\dot g_{\vn,\v\alpha}
={}&
-i
\sum_{a=1}^3
\xi_{\vn,a}
\big(
\sqrt{\alpha_a+1}g_{\vn,\v\alpha+\v e_a}
+
\sqrt{\alpha_a}g_{\vn,\v\alpha-\v e_a}
\big)
\nonumber\\
&
-i\frac{1}{\kappa_{\vn}}
\Bigg(
\sum_{a=1}^3 \xi_{\vn,a}\delta_{\v\alpha,\v e_a}
\Bigg)
g_{\vn,\v 0}
-\bar{\nu} g_{\vn,\v\alpha} \nonumber \\
& -\frac{i}{L^{3/2}}
\sum_{\substack{\vn',\vn''\in\mathbb{Z}^3\\ \vn'+\vn''=\vn}}
\frac{1}{\kappa_{\vn'}}g_{\vn',\v 0}
\sum_{a=1}^3 \xi_{\vn',a}\sqrt{\alpha_a}g_{\vn'',\v{\alpha}-\v e_a},
\end{align}
\normalsize
with the convention
\begin{equation}
g_{\vn,\v\alpha-\v e_a}=0
\qquad\text{if }\alpha_a=0 \quad (a=1,2,3)
\end{equation}
and
\begin{equation}
\label{eq:kappa_n_def}
\kappa_{\vn}\coloneqq |\v\xi_{\vn}|^2+\tau.
\end{equation}
Equation~(\ref{eq:g_mode_hermite}) defines the central set of nonlinear equations of interest here. However, there are two obstacles to a direct quantum simulation: firstly, the system is unbounded and must be suitably truncated, and secondly the system is nonlinear. We will apply a Carleman embedding, but the stability properties of the resultant linear ODE system require care. 

\subsubsection{Symmetrization}

The stability properties of the linear part of Eq.~\eqref{eq:g_mode_hermite} become transparent after an appropriate coordinate rescaling, which we will refer to in the next section as the Lyapunov transform.

The only asymmetry in Eq.~\eqref{eq:g_mode_hermite} occurs in the coupling
between the density mode $\v\alpha=\v 0$ and the first Hermite shell
$\v\alpha=\v e_a$. We therefore introduce the non-unitary rescaling
\begin{equation}\label{eq:rescaling}
\tilde g_{\vn,\v\alpha}
\coloneqq
\begin{dcases}
\sqrt{\frac{1+\kappa_{\vn}}{\kappa_{\vn}}}g_{\vn,\v 0} & \v\alpha=\v 0\\
g_{\vn,\v\alpha} & \v\alpha\neq \v 0.
\end{dcases}
\end{equation}

In these variables, the equations of motion take the following form. For the density mode $\v\alpha=\v 0$,
\begin{equation}
\label{eq:gtilde_density}
\dot{\tilde g}_{\vn,\v 0}
=
-i\sqrt{\frac{1+\kappa_{\vn}}{\kappa_{\vn}}}
\sum_{a=1}^3 \xi_{\vn,a}\tilde g_{\vn,\v e_a}
-\bar{\nu}\tilde g_{\vn,\v 0}.
\end{equation}
For the first Hermite shell, $\v{\alpha}=\v e_a$, $a=1,2,3$,
\small
\begin{align}
\dot{\tilde g}_{\vn,\v e_a}
=&\;
-i\sqrt{\frac{1+\kappa_{\vn}}{\kappa_{\vn}}}
\xi_{\vn,a}\tilde g_{\vn,\v 0}
\nonumber\\
&\;
-i\sum_{b=1}^3
\xi_{\vn,b}\sqrt{1+\delta_{a,b}}\,
\tilde g_{\vn,\v e_a+\v e_b}
-\bar{\nu} \tilde g_{\vn,\v e_a}
\nonumber\\
&\;
-\frac{i}{L^{3/2}}
\sum_{\substack{\vn',\vn''\in \mathbb{Z}^3\\ \vn'+\vn''=\vn}}
\frac{\sqrt{\kappa_{\vn''}}\,\xi_{\vn',a}}
{\sqrt{\kappa_{\vn'}(1+\kappa_{\vn'})(1+\kappa_{\vn''})}}\,
\tilde g_{\vn',\v 0}\tilde g_{\vn'',\v 0}.
\label{eq:gtilde_firstshell}
\end{align}
\normalsize
For all other Hermite modes,
\small
\begin{align}
\label{eq:gtilde_higher}
\dot{\tilde g}_{\vn,\v\alpha}
=&
-i
\sum_{a=1}^3
\xi_{\vn,a}
\big(
\sqrt{\alpha_a+1}\,\tilde g_{\vn,\v\alpha+\v e_a}
+
\sqrt{\alpha_a}\,\tilde g_{\vn,\v\alpha-\v e_a}
\big)
-\bar{\nu} \tilde g_{\vn,\v\alpha}
\nonumber\\
&
\! \! \!\! \! \!\! \! \! \! \! \!\! \! \! -\frac{i}{L^{3/2}}
\sum_{\substack{\vn',\vn''\in \mathbb{Z}^3\\ \vn'+\vn''=\vn}}
\frac{1}{\sqrt{\kappa_{\vn'}(1+\kappa_{\vn'})}}\,
\tilde g_{\vn',\v 0}
\sum_{a=1}^3 \xi_{\vn',a}\sqrt{\alpha_a}\,
\tilde g_{\vn'',\v{\alpha}-\v e_a}.
\end{align}
\normalsize

The crucial property of this transformation is that the linear component becomes the sum of a skew-Hermitian operator and the diagonal damping term $-\bar{\nu} \m{I}$, and is compatible with the one-dimensional analysis of Ref.~\cite{ameri2023quantum}. The reason for performing this transformation is to allow a simple stability analysis of the embedding Carleman ODE system.

\subsubsection{Electric field in the Lyapunov picture}
The above transformed system defines a nonlinear ODE system in Fourier-Hermite space, with a slightly modified nonlinear term compared to the screened/Yukawa setting. This in turn corresponds to a different spatial profile in real-space that is slightly more complex. In Appendix~\ref{app:BE-of-nonlinear-term} we present an explicit analysis of this Lyapunov-transformed field, and how it distorts the nonlinear interaction term. In particular, we find that the nonlinear term has two decaying kernels appearing in it, denoted by $\phi_L(\v{x})$ and $\sigma_L(\v{x})$. These can be viewed as distortions of the Yukawa potential. In the $L\rightarrow \infty$ limit we show that they take on the simple form
\begin{align}
    \lim_{L\rightarrow \infty} \phi_L (\v{x}) &= \frac{1}{\pi}\int_0^1 \frac{1}{\sqrt{s(1-s)}}  \frac{\ee^{-\mu(s)\lvert\v{r}\rvert }}{4\pi\lvert\v{r}\rvert}\,\mathrm{d}s, \nonumber \\
  \lim_{L\rightarrow \infty} \sigma_L (\v{x}) &=   \delta^{(3)}(\v{r})-\frac{1}{\pi}\int_0^1 \sqrt{\frac{s}{1-s}}\,\frac{\ee^{-\mu(s)\lvert\v{r}\rvert }}{4\pi\lvert\v{r}\rvert}\,\mathrm{d}s. 
\end{align}
Therefore, in this unbounded limit the effect of the Lyapunov transform is to generate a mixture of Yukawa potentials. This introduces some additional technicalities, but the situation is largely the same as the original screened field. Finally, the finite $L$ torus case can be obtained from the unbounded situation by the method of periodization by images, which is also discussed in Appendix~\ref{app:BE-of-nonlinear-term}.

\subsection{Galerkin approximation and Lyapunov stability analysis}

We now truncate the Fourier--Hermite system to a finite Galerkin ODE and use the symmetrization from the previous subsection to identify the Lyapunov structure of the linearized dynamics.

\label{sec:galerkin}
\subsubsection{Symmetrization as a Lyapunov transform}

From now on, we choose Fourier and Hermite cutoffs $\NF,\NH\in \mathbb{N}$, respectively~\cite{canuto2006spectral}. Define
\begin{equation}
\label{eq:Truncation}
\begin{split}
\mc{Z}_{\NF}\coloneqq &\;  \{-\NF,\ldots,\NF\},\\
\mc{N}_{\NH}\coloneqq &\; \{0,\ldots,\NH\}.
\end{split}
\end{equation}
We then work with the product sets $\mc{Z}_{\NF}^{ 3}$ and $\mc{N}_{\NH}^{3}$.

We may write the nonlinear system \eqref{eq:g_mode_hermite} in vector form after truncation as
\begin{equation}\label{eq:quadraticODE}
\dot{\m{u}}=\m{F}_1\m{u}+\m{F}_2(\m{u}\otimes \m{u}),
\end{equation}
where the components of $\m{u}$ are $\m{u}_{\vn,\v{\alpha}}\coloneqq g_{\vn,\v{\alpha}}$, an $N$-dimensional vector of size
\begin{align}
    N = (2\NF+1)^3(\NH+1)^3.
\end{align}

We shall now analyze the properties of the linearized dynamics (where we temporarily set $\m{F}_2=\m{0}$) through the lens of Lyapunov stability theory. The results will be leveraged to extract a priori guarantees for the Carleman linear embedding method, using the results of Ref.~\cite{jennings2025quantum}.

The vector form of the rescaled equations \eqref{eq:gtilde_density}, \eqref{eq:gtilde_firstshell}, \eqref{eq:gtilde_higher} after truncation is
\begin{equation}
\label{eq:quadraticODErescaled}
\dot{\tilde{\m{u}}}
=
\tilde{\m{F}}_1\tilde{\m u}
+
\tilde{\m{F}}_2 (\tilde{\m u} \otimes \tilde{\m u}),
\end{equation}
where the components of $\tilde{\m{u}}$ are
$\tilde{\m{u}}_{\vn,\v{\alpha}}\coloneqq \tilde{g}_{\vn,\v{\alpha}}$.

We can understand the relation between~\eqref{eq:quadraticODE} and \eqref{eq:quadraticODErescaled} via the rescaling \eqref{eq:rescaling} as a transformation defined by the quadratic form
\begin{equation}
\m{P}\coloneqq \bigoplus_{\vn\in\mc{Z}_{\NF}^3} \m{P}_{\vn},
\qquad
\m{P}_{\vn}\coloneqq \m{I}+\frac{1}{\kappa_{\vn}}\Pi_{\v 0},
\end{equation}
where $\Pi_{\v 0}$ is the orthogonal projector onto the $\v \alpha=\v 0$ Hermite mode,
\begin{equation}
(\Pi_{\v 0})_{\v \alpha|\v \alpha'}\coloneqq
\delta_{\v\alpha,\v 0}\delta_{\v \alpha',\v 0}.
\end{equation}
The unique symmetric positive-definite square root of $\m{P}$ is constructed from the unique symmetric positive-definite square root of each $\m{P}_{\vn}$,
\begin{equation}
\begin{split}
\m{P}^{\pm 1/2}
=&\; 
\bigoplus_{\vn\in\mc{Z}_{\NF}^3} \m{P}_{\vn}^{\pm 1/2},
 \\
\m P_{\vn}^{\pm 1/2}
=&\;
\m I+\bigg(
\left(\frac{1+\kappa_{\vn}}{\kappa_{\vn}}\right)^{\pm\frac{1}{2}}-1
\bigg)\Pi_{\v 0}. 
\end{split}
\end{equation}

With this notation, the rescaling~\eqref{eq:rescaling} reads
\begin{align}
\label{eq:utilde}
\tilde{\m{u}}
&\coloneqq \m{P}^{1/2}\m{u},
\qquad
\tilde{\m{F}}_1\coloneqq \m{P}^{1/2}\m{F}_1\m{P}^{-1/2},
\nonumber\\
\tilde{\m{F}}_2
&\coloneqq \m{P}^{1/2}\m{F}_2(\m{P}^{-1/2}\otimes \m{P}^{-1/2}).
\end{align}
The explicit forms of the matrices $\tilde{\m{F}}_1$ and $\tilde{\m{F}}_2$ are given in Appendix~\ref{app:quadraticODErescaled}. Now, note that $\m{P}\succ 0$ satisfies the Lyapunov inequality
\begin{align}
\label{eq:lyapunovmatrix}
    \m{P} \m{F}_1 + \m{F}_1^\dag \m{P} = - 2\nu  \m{P} \prec 0,
\end{align}
which is equivalent to the stability of the linearized problem~\cite{plischke2005transient}. Hence, the symmetrization~\eqref{eq:rescaling} is a Lyapunov transformation.

Recall that the logarithmic norm, relative to the spectral norm, of a matrix $\m{X}$ is defined as
\begin{equation}
\mu_2(\m{X})\coloneqq \lim_{\varepsilon\searrow 0} \frac{\lVert\m{I}+\varepsilon\m{X} \rVert_2-1}{\varepsilon},
\end{equation}
which also equals the largest eigenvalue of $\tfrac12(\m{X} + \m{X}^\dag)$. From the properties of $\tilde{\m{F}}_1$, it follows immediately that
\begin{equation}\label{eq:mu2F1}
\mu_2(\tilde{\m{F}}_{1})=-\bar{\nu}. 
\end{equation}

Equivalently~\cite{plischke2005transient}, in the symmetrized coordinates we have that the squared Euclidean norm of the solution is a Lyapunov functional 
\begin{equation}
\label{eq:free_energy_tilde}
\mc{F} \coloneqq \frac12 \lVert \tilde{\m{u}}\rVert_2^2
=
\frac12\sum_{\substack{\vn\in \mc{Z}_{\NF}^3 \\ \v{\alpha}\in \mc{N}_{\NH}^3}}  \lvert \tilde{g}_{\vn,\v{\alpha}} \rvert^2
\end{equation}
satisfies
\begin{equation}
\frac{\mathrm{d}}{\mathrm{d}t} \mc{F} \leq -2\bar{\nu} \mc{F},
\end{equation}
and hence decays exponentially along the trajectory.
Expressed in terms of the original coordinates, the Lyapunov functional is the $\m{P}$-squared norm of $\m{u}$:
\begin{align}\label{eq:Lyapunov_pullback}
\mc{F}[\m{u}]\coloneqq &\; \frac12\lVert \m{u}\rVert_{2,\m{P}}^2 \nonumber \\
:=&\; \frac12\sum_{\substack{\vn\in \mc{Z}_{\NF}^3 \\ \v{\alpha},\v{\alpha}'\in \mc{N}_{\NH}^3}}  g_{\vn,\v{\alpha}}^* (\m{P}_{\vn})_{\v{\alpha},\v{\alpha}'}g_{\vn,\v{\alpha}'} \nonumber \\
=&\; \frac12 \sum_{\vn\in \mc{Z}_{\NF}^3} \frac{1}{\kappa_{\vn}}\lvert g_{\vn,\v{0}} \rvert^2 + \frac12 \sum_{\substack{\vn\in \mc{Z}_{\NF}^3 \\ \v{\alpha}\in \mc{N}_{\NH}^3}} \lvert g_{\vn,\v{\alpha}} \rvert^2.
\end{align}
Finally, in terms of the physics, this Lyapunov functional is the free energy~\eqref{eq:freenergycontinuum} of the (band-limited) plasma data -- namely the free energy of plasma data with Fourier modes restricted to $\mc{Z}$ and Hermite modes restricted to $\mc{N}$, see Appendix~\ref{subsec:quadratic_Lyapunov} for details. 

In conclusion, the non-unitary rescaling~\eqref{eq:rescaling} has the matrix representation~\eqref{eq:utilde}, with $\m{P}$ a Lyapunov matrix~\eqref{eq:lyapunovmatrix}. In the rescaled coordinates the linear part is purely dissipative, in the sense that the Euclidean norm decays along the solution, which corresponds to a $\m{P}$-norm decay in the original coordinates. This is a Lyapunov functional, which exists due to the underlying stability of the linearized dynamics. Physically, it coincides with the free energy~\eqref{eq:freenergycontinuum} after the imposition of appropriate cutoffs.

\subsubsection{Norm estimates}
\label{sec:normestimates}

We establish a set of norm bounds required to evaluate the convergence of the algorithm. For a matrix~$\m{X}$, we define $\rVert \m{X}\rVert_2$ as the operator norm with respect to the Euclidean norm $\lVert \cdot \rVert_2$. First, we establish some basic properties of the matrix $\m{P}^{1/2}$.

\begin{proposition}[Properties of $\m{P}^{1/2}$]
\label{prop:P12norm}
The matrix $\m{P}^{1/2}$ satisfies
\begin{align}\lVert \m{P}^{1/2} \rVert_2=&\; \sqrt{1+\frac{1}{\tau}}, \\
\lVert \m{P}^{-1/2} \rVert_2=&\; 1, \\
\kappa_2(\m{P}^{1/2})=&\; \sqrt{1+\frac{1}{\tau}}.
\label{kappaPbound}
\end{align}
\end{proposition}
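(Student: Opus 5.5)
The plan is to exploit the block-diagonal structure $\m{P}^{\pm1/2}=\bigoplus_{\vn}\m{P}_{\vn}^{\pm1/2}$. The operator norm of a direct sum is the maximum of the operator norms of its blocks, so it suffices to compute the spectrum of each block and then optimize over $\vn\in\mc{Z}_{\NF}^3$.

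\textbf{Spectrum of a single block.} Since $\Pi_{\v 0}$ is an orthogonal projector of rank one, the block
\[
\m{P}_{\vn}^{\pm1/2}=\m{I}+\big(((1+\kappa_{\vn})/\kappa_{\vn})^{\pm1/2}-1\big)\Pi_{\v 0}
\]
is Hermitian. It has eigenvalue $((1+\kappa_{\vn})/\kappa_{\vn})^{\pm1/2}$ on the range of $\Pi_{\v 0}$, and eigenvalue $1$ on the orthogonal complement. That complement is nonempty because $\NH\ge1$, so the Hermite space has dimension $(\NH+1)^3\ge 8$. All these eigenvalues are positive. Hence $\lVert\m{P}_{\vn}^{1/2}\rVert_2=\sqrt{1+1/\kappa_{\vn}}$, because this quantity is at least $1$. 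Similarly, $\lVert\m{P}_{\vn}^{-1/2}\rVert_2=\max\{1,(1+1/\kappa_{\vn})^{-1/2}\}=1$.

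\textbf{Maximizing over $\vn$.} Recall $\kappa_{\vn}=\lvert\v\xi_{\vn}\rvert^2+\tau\ge\tau$, with equality exactly at $\vn=\v 0$, and $\vn=\v 0$ belongs to $\mc{Z}_{\NF}^3$. The map $\kappa\mapsto\sqrt{1+1/\kappa}$ is decreasing, so its maximum over $\vn$ is attained at $\vn=\v0$. This gives $\lVert\m{P}^{1/2}\rVert_2=\sqrt{1+1/\tau}$. The inverse bound $\lVert\m{P}^{-1/2}\rVert_2=1$ follows directly from the blockwise result.

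\textbf{Condition number.} By definition, $\kappa_2(\m{P}^{1/2})=\lVert\m{P}^{1/2}\rVert_2\lVert\m{P}^{-1/2}\rVert_2$, so it equals the product of the two values above, namely $\sqrt{1+1/\tau}$.

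There is no real obstacle here. The only points needing care are checking that the zero Fourier mode is included in the truncation window, so the supremum is attained, and that the orthogonal complement of $\Pi_{\v0}$ is nontrivial, so that the eigenvalue $1$ actually appears and the inverse norm is exactly $1$ rather than smaller.
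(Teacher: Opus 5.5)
Your proposal is correct and follows essentially the paper's argument: the paper notes that $\m{P}^{\pm1/2}$ is diagonal, so its spectral norm equals its largest diagonal entry. That maximum is attained at $\vn=\v 0$, where $\kappa_{\vn}=\tau$, and the condition number is then the product of the two norms. Your remarks that $\v 0\in\mc{Z}_{\NF}^3$, and that the Hermite space contains modes $\v\alpha\neq\v 0$ so the entry $1$ actually occurs in $\m{P}^{-1/2}$, make explicit two points the paper uses only implicitly.
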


\begin{proposition}[Euclidean norm of $\m{u}$]
\label{prop:u2Pbounds}
Let $\m{u}\in \mathbb{C}^{\mc{Z}_{\NF}^3\times \mc{N}_{\NH}^3}$. Then,
\begin{equation}\label{eq:u2Pbounds}
\lVert \m{u}\rVert_2 \leq \lVert \m{P}^{1/2}\m{u}\rVert_2\leq  \sqrt{1+\frac{1}{\tau}}  \lVert \m{u}\rVert_2. 
\end{equation}
Also, let $\bar{\m{u}}(t)$, $\m{u}(t)$ be solutions to Eq.~\eqref{eq:rescaledsystemintro}, Eq.~\eqref{eq:quadraticODE}, respectively. Suppose that $R_{\m{P}}<1$ and $\mu(\bar{\m{F}}_1)<0$. Then, 
\begin{equation}
\label{eq:ubarnorminequality}
    \| \bar{\m{u}}(t)\|_2 \leq \| \bar{\m{u}}(0)\|_2
\end{equation}
and
\begin{equation}
\label{eq:unormlinequality}
    \| \m{u}(t)\|_2 \leq \kappa_2(\m{P}^{1/2}) \| \m{u}(0)\|_2.
\end{equation}
\end{proposition}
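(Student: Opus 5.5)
The first inequality \eqref{eq:u2Pbounds} follows directly from the block structure of $\m{P}^{1/2}$. Each block $\m{P}_{\vn}^{1/2}=\m{I}+(\sqrt{(1+\kappa_{\vn})/\kappa_{\vn}}-1)\Pi_{\v 0}$ is diagonal in the Hermite index. Its eigenvalues are $1$ (on the complement of $\Pi_{\v 0}$) and $\sqrt{1+1/\kappa_{\vn}}$. Because $\kappa_{\vn}=\lvert\v\xi_{\vn}\rvert^2+\tau\ge\tau$, every eigenvalue lies in $[1,\sqrt{1+1/\tau}]$. Hence
\[
\lVert\m{u}\rVert_2^2\le\lVert\m{P}^{1/2}\m{u}\rVert_2^2=\sum_{\vn,\v\alpha}\lvert g_{\vn,\v\alpha}\rvert^2+\sum_{\vn}\kappa_{\vn}^{-1}\lvert g_{\vn,\v 0}\rvert^2\le(1+1/\tau)\lVert\m{u}\rVert_2^2 .
\]
This is essentially the content of Proposition~\ref{prop:P12norm}, and we invoke it directly.

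For \eqref{eq:ubarnorminequality}, we differentiate the squared norm along the rescaled quadratic flow \eqref{eq:rescaledsystemintro}. This gives
\[
\tfrac12\tfrac{\mathrm d}{\mathrm dt}\lVert\bar{\m u}\rVert_2^2=\mathrm{Re}\,\langle\bar{\m u},\bar{\m F}_1\bar{\m u}\rangle+\mathrm{Re}\,\langle\bar{\m u},\bar{\m F}_2(\bar{\m u}\otimes\bar{\m u})\rangle\le\mu(\bar{\m F}_1)\lVert\bar{\m u}\rVert_2^2+\lVert\bar{\m F}_2\rVert_2\lVert\bar{\m u}\rVert_2^3 .
\]
Writing $r(t)=\lVert\bar{\m u}(t)\rVert_2$, wherever $r>0$ this becomes $\dot r\le r\,(\mu(\bar{\m F}_1)+\lVert\bar{\m F}_2\rVert_2 r)$. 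Since $\mu(\bar{\m F}_1)<0$ and $R_{\m P}<1$, the definition \eqref{eq:RPintro} gives $\mu(\bar{\m F}_1)+\lVert\bar{\m F}_2\rVert_2 r(0)<0$.

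The main step is a barrier (continuation) argument. Let $t^*=\sup\{t: r(s)\le r(0)\ \forall s\le t\}$. On $[0,t^*]$ the bracket is at most $\mu(1-R_{\m P})<0$, so $r$ is nonincreasing there, and in fact $r(t)\le r(0)\ee^{\mu(1-R_{\m P})t}$. If $t^*<\infty$, continuity would force $r$ to exceed $r(0)$ immediately after $t^*$. But at $t^*$ we have $r(t^*)\le r(0)$ and a strictly negative derivative bound (or $r\equiv 0$ by uniqueness if $r$ ever vanishes), which is a contradiction. The only delicate point is this non-strict boundary case. We handle it using the strict inequality $R_{\m P}<1$, which gives a uniform negative margin. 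Global existence follows from the same a priori bound.

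For \eqref{eq:unormlinequality}, we transfer back through the rescaling. By \eqref{eq:ubar}, $\bar{\m u}$ is a positive constant multiple $c$ of $\tilde{\m u}=\m P^{1/2}\m u$, and the multiple is the same at all times. So $\lVert\m P^{1/2}\m u(t)\rVert_2\le\lVert\m P^{1/2}\m u(0)\rVert_2$. Combining with \eqref{eq:u2Pbounds} gives
\[
\lVert\m u(t)\rVert_2\le\lVert\m P^{1/2}\m u(t)\rVert_2\le\lVert\m P^{1/2}\m u(0)\rVert_2\le\sqrt{1+1/\tau}\,\lVert\m u(0)\rVert_2 ,
\]
which equals $\kappa_2(\m P^{1/2})\lVert\m u(0)\rVert_2$ by \eqref{kappaPbound}.
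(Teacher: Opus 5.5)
Your proof is correct. The first inequality, and the final transfer from $\bar{\m u}$ back to $\m u$, coincide with the paper's argument. The paper phrases the last chain through $\lVert\m P^{-1/2}\rVert_2=1$, which is the same fact as your lower bound $\lVert\m u\rVert_2\le\lVert\m P^{1/2}\m u\rVert_2$.

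You genuinely differ on the middle claim $\lVert\bar{\m u}(t)\rVert_2\le\lVert\bar{\m u}(0)\rVert_2$. The paper does not prove it; it cites Lemma~3.3 of Ref.~\cite{jennings2025quantum} (cf.\ Lemma~15 of Ref.~\cite{krovi2022improved}). You instead derive it directly from the energy inequality $\dot r\le r\,(\mu(\bar{\m F}_1)+\lVert\bar{\m F}_2\rVert_2\, r)$ together with a continuation argument.

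Your route has three advantages:
\begin{itemize}
\item It is self-contained.
\item It shows that only $R_{\m P}<1$ and $\mu(\bar{\m F}_1)<0$ are needed. The paper also invokes $\lVert\bar{\m u}_0\rVert_2<1$, which your argument never uses for the monotonicity itself.
\item It gives the stronger quantitative bound $\lVert\bar{\m u}(t)\rVert_2\le\lVert\bar{\m u}(0)\rVert_2\,\ee^{\mu(\bar{\m F}_1)(1-R_{\m P})t}$. With $\mu(\bar{\m F}_1)=-\bar{\nu}$, this is explicit exponential decay at rate $(1-R_{\m P})\bar{\nu}$.
\end{itemize}
The paper's route is shorter but outsources the only nontrivial step.

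A minor tightening of your barrier step. If $t^*>0$, your Gr\"onwall bound already gives $r(t^*)<r(0)$ strictly, so continuity yields the contradiction at once. The case $t^*=0$ is excluded because $\tfrac{\mathrm d}{\mathrm dt}r(t)^2$ at $t=0$ is at most $2\mu(\bar{\m F}_1)(1-R_{\m P})\,r(0)^2<0$. Hence the ``non-strict boundary case'' you flag never actually arises.
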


\begin{proposition}[Spectral norm of $\tilde{\m{F}}_1$]
\label{prop:F1tildenorm}
The matrix $\tilde{\m{F}}_1$, with the cutoff~\eqref{eq:Truncation}, satisfies
\begin{equation}\label{eq:F1tildebound}
\lVert \tilde{\m{F}}_1\rVert_2 \leq \bigg(\bigg(
\frac{12\pi}{L}\NF\sqrt{\NH}+\frac{1}{4\sqrt{\tau}}
\bigg)^2+\bar{\nu}^2\bigg)^{1/2}
\end{equation}
and
\begin{align}
\label{eq:F1tildelowerbound}
    \lVert \tilde{\m{F}}_1\rVert_2 \geq \bigg(  \bigg(\frac{2\sqrt{3}\pi}{L}\NF\sqrt{\NH}\bigg)^2+\bar{\nu}^2\bigg)^{1/2}.
\end{align}
In particular,
\begin{equation}
    \lVert \tilde{\m{F}}_1\rVert_2=\Theta\big(\NF\sqrt{\NH}\big).
\end{equation}
\end{proposition}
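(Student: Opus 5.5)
We decompose $\tilde{\m F}_1=\m S-\bar\nu\m I$, where $\m S$ is skew-Hermitian. This holds because the rescaling~\eqref{eq:rescaling} symmetrizes the only non-symmetric coupling. Since $\m S$ is skew-Hermitian, $i\m S$ is Hermitian, so $\m S$ is normal with purely imaginary spectrum $\{i\lambda\}$. It commutes with $\bar\nu\m I$, hence
\begin{equation*}
\lVert\tilde{\m F}_1\rVert_2=\max_\lambda\sqrt{\lambda^2+\bar\nu^2}=\sqrt{\lVert\m S\rVert_2^2+\bar\nu^2}.
\end{equation*}
Both bounds therefore reduce to two-sided bounds on $\lVert\m S\rVert_2$. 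Moreover $\m S=\bigoplus_{\vn}\m S_{\vn}$ is block diagonal in the Fourier index, so $\lVert\m S\rVert_2=\max_{\vn\in\mc Z_{\NF}^3}\lVert\m S_{\vn}\rVert_2$.

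\textbf{Upper bound.} For fixed $\vn$ we split $\m S_{\vn}=\m S^{\mathrm{str}}_{\vn}+\m S^{\mathrm{fld}}_{\vn}$. The streaming part is $-i\sum_a\xi_{\vn,a}\,\m J_a$, where $\m J_a$ is the truncated Hermite position (Jacobi) matrix in direction $a$, with off-diagonal entries $\sqrt{\alpha_a+1}$ for $\alpha_a<\NH$. The field part is the remainder coupling $\v 0\leftrightarrow\v e_a$, with entries $-i\xi_{\vn,a}\big(\sqrt{(1+\kappa_{\vn})/\kappa_{\vn}}-1\big)$.

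For the streaming part, $\lVert\m J_a\rVert_2$ is at most the largest zero of $\mathrm{He}_{\NH+1}$, which is at most $2\sqrt{\NH+1}$. Alternatively, a Schur/Gershgorin row-sum bound $\sqrt{\NH}+\sqrt{\NH}$ suffices. Combining this with $|\xi_{\vn,a}|\le 2\pi\NF/L$ and summing over $a=1,2,3$ gives $\lVert\m S^{\mathrm{str}}_{\vn}\rVert_2\le 3\cdot\tfrac{2\pi\NF}{L}\cdot 2\sqrt{\NH}=\tfrac{12\pi}{L}\NF\sqrt{\NH}$. The constant is loose enough that the crude row-sum bound should do; some care is needed at $\NH$ versus $\NH+1$.

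The field part has rank at most $2$: it is $-i(\ve_{\v0}\v c^\dag+\v c\,\ve_{\v0}^\dag)$ with $\lvert\v c\rvert=|\v\xi_{\vn}|\big(\sqrt{1+1/\kappa_{\vn}}-1\big)$, so its norm equals $\lvert\v c\rvert$. Writing $x=|\v\xi_{\vn}|^2$, we use $\sqrt{1+1/\kappa}-1\le 1/(2\kappa)$, which gives the bound $\sqrt{x}/(2(x+\tau))$. Maximizing over $x$ at $x=\tau$ yields $1/(4\sqrt\tau)$. The triangle inequality then gives~\eqref{eq:F1tildebound}.

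\textbf{Lower bound.} We exhibit a single block and a test vector. Take $\vn=(\NF,\NF,\NF)$, so that $\xi_{\vn,a}=2\pi\NF/L$ for all $a$, and restrict to the sector of Hermite indices with $\alpha_a\ge1$ where the field correction vanishes. Alternatively, we can use $\lVert\m S_{\vn}\rVert_2\ge\lVert\m S_{\vn}\v v\rVert_2$ directly. The streaming part equals $-i\,\tfrac{2\pi\NF}{L}(\m J_1+\m J_2+\m J_3)$, a sum of commuting tensor factors, so its norm is $\tfrac{2\pi\NF}{L}\cdot 3\lVert\m J\rVert_2$.

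This reduces the lower bound to showing $\lVert\m J\rVert_2\ge\tfrac{1}{\sqrt3}\sqrt{\NH}\cdot$(constant). Indeed, we need $3\lVert\m J\rVert_2\ge 2\sqrt3\sqrt{\NH}$, that is $\lVert\m J\rVert_2\ge \tfrac{2}{\sqrt3}\sqrt{\NH}$. We get this from a Rayleigh quotient with a flat test vector on the top portion of the Jacobi chain: for $\v v$ uniform on indices $\alpha\in[\NH/4,\NH]$, the quotient $\v v^\dag\m J\v v/\lVert \v v\rVert^2\approx 2\sqrt{\text{typical }\alpha}$ exceeds the required value. We must also handle the perturbation by the field term if the chosen vector touches $\v\alpha=\v 0$. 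We avoid this by supporting $\v v$ on $\alpha_a\ge1$, or else by noting that the field term only adds a bounded correction.

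\textbf{Main obstacle.} The delicate step is the lower bound constant $2\sqrt3\pi/L$, which must hold for all $\NH\ge1$, including small $\NH$ where asymptotics fail. I would instead use a Frobenius-type bound $\lVert\m J\rVert_2\ge\sqrt{\mathrm{tr}(\m J^2)/(\NH+1)}=\sqrt{\NH}$, which is exact since $\mathrm{tr}\,\m J^2=2\sum_{k=1}^{\NH}k=\NH(\NH+1)$. With three directions and commuting sums, this gives $\lVert\m S_{\vn}\rVert_2\ge\tfrac{2\pi\NF}{L}\sqrt{3\cdot 3\NH}/\ldots$, and I would tune the averaging so that the result is exactly $\tfrac{2\sqrt3\pi}{L}\NF\sqrt{\NH}$. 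For example, the Frobenius average of $(\sum_a\m J_a)^2$ over the tensor space equals $3\NH$, giving the bound $\tfrac{2\pi\NF}{L}\sqrt{3\NH}$. If this falls short by a factor, the fallback is the sharper $\lVert\m J\rVert_2\ge\sqrt{2\NH}$-type estimate via the largest Hermite root.
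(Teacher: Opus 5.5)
Your upper bound follows the paper's argument, and it is fine: normal splitting of each Fourier block, a bound $2\sqrt{\NH}$ on each Hermite ladder factor, and the rank-two field term bounded by $1/(4\sqrt{\tau})$. For the ladder factor, use the row-sum form $\sqrt{\NH-1}+\sqrt{\NH}$, not the Hermite-zero bound $2\sqrt{\NH+1}$.

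The lower bound, however, is not actually established. The trace average you end with, $\mathrm{tr}\big((\sum_a\m J_a)^2\big)/(\NH+1)^3=3\NH$, is correct and gives exactly the constant $\frac{2\sqrt3\pi}{L}\NF\sqrt{\NH}$. But it applies only to the streaming part $\m S^{\mathrm{str}}_{\vn}$. The block $\m S_{\vn}$ also contains the field term, whose support $\val\in\{\v 0,\ve_a\}$ lies inside the space you are averaging over. So ``supporting $\v v$ on $\alpha_a\ge1$'' has no meaning for a trace bound. Treating the field term as a bounded perturbation via the triangle inequality subtracts a positive quantity (up to $1/(4\sqrt{\tau})$), and the trace bound has no slack to absorb it.

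Your Rayleigh-quotient alternatives do not close the gap either. After compressing to the sector $\{\alpha_a\ge1\ \forall a\}$, at $\NH=1$ only the index $(1,1,1)$ survives and the compressed operator is zero. The flat-vector estimate is only asymptotic. Separately, the requirement you derive, $\lVert\m J\rVert_2\ge\frac{2}{\sqrt3}\sqrt{\NH}$, is off by a factor of $2$; what is needed is $\lVert\m J\rVert_2\ge\sqrt{\NH/3}$. Your version is false at $\NH=1$, where $\lVert\m J\rVert_2=1$, which is why the step looked more delicate than it is.

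Two short repairs are available. The first keeps your trace route. Write $\m S_{\vn}=-i\m K_{\vn}$ with $\m K_{\vn}=\m K^{(1)}_{\vn}+\m K^{(2)}_{\vn}$ real symmetric (streaming plus field). The two pieces overlap only in the entries $(\v 0,\ve_a)$ and $(\ve_a,\v 0)$, where they equal $\xi_{\vn,a}$ and $\big(\sqrt{1+1/\kappa_{\vn}}-1\big)\xi_{\vn,a}$ respectively. Hence $\mathrm{tr}(\m K^{(1)}_{\vn}\m K^{(2)}_{\vn})\ge0$. Together with $\mathrm{tr}\big((\m K^{(2)}_{\vn})^2\big)\ge0$, this gives $\lVert\m K_{\vn}\rVert_2^2\ge\mathrm{tr}(\m K_{\vn}^2)/(\NH+1)^3\ge\lvert\v\xi_{\vn}\rvert^2\NH$. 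At $\vn=\NF(1,1,1)$ this is the claim.

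The paper's route is simpler still: it bounds a column norm instead of a Rayleigh quotient. Take $\vn_*=\NF(1,1,1)$ and $\val_*=\NH(1,1,1)$. The field part annihilates $\m e_{\val_*}$, and $\tilde{\m F}_{1,\vn_*}\m e_{\val_*}=-i\frac{2\pi}{L}\NF\sqrt{\NH}\sum_a\m e_{\val_*-\ve_a}-\bar\nu\,\m e_{\val_*}$. Its squared norm is exactly $3(2\pi\NF/L)^2\NH+\bar\nu^2$ for every $\NH\ge1$, with no need for the normal splitting. Testing with $\lVert\m S\v v\rVert_2$ rather than $\v v^{\dagger}\m S\v v$ picks up the large ladder entries $\sqrt{\NH}$ from a single basis vector, which is what your attempts were missing.
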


\begin{proposition}[Spectral norm of $\tilde{\m{F}}_2$]
\label{prop:F2tildenorm}
The matrix $\tilde{\m{F}}_2$, with the cutoff~\eqref{eq:Truncation}, satisfies
\begin{align}\label{eq:F2tildespectralnorm}
\lVert \tilde{\m{F}}_2\rVert_2 \leq
&\; \frac{\sqrt{3}}{L^{3/2}}\bigg(\frac{1}{1+\tau}+\frac{13L^2}{2\pi^2}\NF \bigg)^{1/2}\big( 1+\sqrt{3\NH}\big).
\end{align}
\end{proposition}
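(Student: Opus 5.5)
The plan is to bound $\lVert\tilde{\m F}_2\rVert_2$ directly as an operator $\mathbb C^{N^2}\to\mathbb C^{N}$, acting on a general tensor $\m w=(\m w_{\vn',\v\alpha';\vn'',\v\alpha''})$ rather than only on product vectors. From \eqref{eq:gtilde_firstshell} and \eqref{eq:gtilde_higher}, $\tilde{\m F}_2=\m B+\m C$ splits into two pieces:
\begin{itemize}
\item[] $\m B$ is the density--density term feeding the first Hermite shell $\v\alpha=\v e_a$, with coefficient
\begin{equation*}
b_{\vn',\vn''}^a=\frac{\sqrt{\kappa_{\vn''}}\,\xi_{\vn',a}}{\sqrt{\kappa_{\vn'}(1+\kappa_{\vn'})(1+\kappa_{\vn''})}};
\end{equation*}
\item[] $\m C$ is the density$\times$ladder term, with coefficient
\begin{equation*}
c_{\vn'}^a\sqrt{\alpha_a},\qquad c_{\vn'}^a=\frac{\xi_{\vn',a}}{\sqrt{\kappa_{\vn'}(1+\kappa_{\vn'})}}.
\end{equation*}
\end{itemize}
Both carry the prefactor $L^{-3/2}$ and the convolution constraint $\vn'+\vn''=\vn$ with all indices inside the truncation window. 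I will use the triangle inequality $\lVert\tilde{\m F}_2\rVert_2\le\lVert\m B\rVert_2+\lVert\m C\rVert_2$. The aim is for $\m B$ to produce the ``$1$'' and $\m C$ to produce the ``$\sqrt{3\NH}$'' in the factor $(1+\sqrt{3\NH})$, with a common Fourier factor in front.

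For each piece I bound $\lVert\m X\m w\rVert_2^2$ by Cauchy--Schwarz over the single free summation index $\vn'$. For $\m C$, the output component $(\vn,\v\alpha)$ is
\begin{equation*}
L^{-3/2}\sum_{\vn'}\sum_a c^a_{\vn'}\sqrt{\alpha_a}\,\m w_{\vn',\v 0;\vn-\vn',\v\alpha-\v e_a}.
\end{equation*}
I apply Cauchy--Schwarz jointly in $(\vn',a)$, giving
\begin{equation*}
\Big(\sum_{\vn',a}|c^a_{\vn'}|^2\Big)\Big(\sum_{\vn',a}\alpha_a|\m w_{\vn',\v0;\vn-\vn',\v\alpha-\v e_a}|^2\Big).
\end{equation*}
I then bound $\alpha_a\le\NH$ and sum over $(\vn,\v\alpha)$. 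For fixed $a$ the map $(\vn,\v\alpha,\vn')\mapsto(\vn',\v 0;\vn-\vn',\v\alpha-\v e_a)$ is injective, so summing over $a$ gives at most $3\NH\lVert\m w\rVert_2^2$. This yields
\begin{equation*}
\lVert\m C\rVert_2\le L^{-3/2}\sqrt{3\NH}\,S^{1/2},\qquad S=\sum_{\vn'\in\mc Z_{\NF}^3}\frac{|\v\xi_{\vn'}|^2}{\kappa_{\vn'}(1+\kappa_{\vn'})}.
\end{equation*}
For $\m B$ I do the same. Using $\kappa_{\vn''}/(1+\kappa_{\vn''})\le 1$ and the fact that only three output components (one per $a$) are hit, this gives $\lVert\m B\rVert_2\le L^{-3/2}S^{1/2}$, up to the same type of constant.

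It then remains to estimate the lattice sum $S$. Since $|\v\xi_{\vn'}|^2/\kappa_{\vn'}\le1$, we have $S\le\sum_{\vn'}(1+\kappa_{\vn'})^{-1}$. The $\vn'=0$ term contributes $1/(1+\tau)$. This must be kept as the separate first term: it is a crude but harmless upper bound, since $\xi_{\v 0}=0$ actually kills it. The sum over $\vn'\neq0$ is bounded by $\sum_{\vn'\ne0}L^2/(4\pi^2|\vn'|^2)$, which I compare with an integral over shells in the cube $[-\NF,\NF]^3$: $\sum_{0<\lVert\vn\rVert_\infty\le \NF}|\vn|^{-2}\le c\,\NF$ for an explicit constant $c$. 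Choosing the comparison carefully (count the lattice points on each $\ell^\infty$ shell, at most $26k^2$ on shell $k$, each with $|\vn|\ge k$) gives $\sum \le 26\NF$. Hence $S\le \frac{1}{1+\tau}+\frac{26 L^2}{4\pi^2}\NF=\frac{1}{1+\tau}+\frac{13L^2}{2\pi^2}\NF$, which is exactly the constant appearing in the statement.

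Finally, the overall $\sqrt3$ in front should come from handling the three directions $a$ in the Cauchy--Schwarz step (equivalently, bounding $\sum_a|\xi_{\vn',a}|^2$ versus $|\v\xi_{\vn'}|^2$ per component). I will place it consistently in both pieces so that the sum becomes $\sqrt3\,L^{-3/2}S^{1/2}(1+\sqrt{3\NH})$.

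I expect the main obstacle to be bookkeeping rather than any conceptual difficulty. Two points need care. First, the injectivity and overlap counting must handle the $\v\alpha=\v e_a$ rows, which receive contributions from both $\m B$ and $\m C$ (for $\m C$ there, $\v\alpha-\v e_a=\v0$); the triangle inequality sidesteps this. Second, the shell-counting constant for the discrete sum must come out as $13/(2\pi^2)$; if the $26k^2$ count is too loose, I would refine it by summing $|\vn|^{-2}$ exactly over each shell.
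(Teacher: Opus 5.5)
Your proof is correct and is essentially the paper's argument. It uses the same bulk/first-shell split with the triangle inequality and a row-by-row Cauchy--Schwarz estimate; the paper packages that estimate as a Schur-type lemma, bounding the norm by the square root of the column sparsity $3$ times the maximal squared row norm. It also uses the identical $\ell^\infty$-shell count $\#A_m\le 26m^2$, giving $\mc S_1\le\mc S_2\le\frac{1}{1+\tau}+\frac{13L^2}{2\pi^2}\NF$.

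Your joint Cauchy--Schwarz over $(\vn',a)$ is slightly sharper and yields $\lVert\tilde{\m F}_2\rVert_2\le L^{-3/2}\mc S_2^{1/2}(1+\sqrt{3\NH})$, so the leading $\sqrt3$ in the statement is slack and needs no placing. Do define $\m C$ only on rows with $\lVert\v\alpha\rVert_1\ge 2$, i.e.\ $\v\alpha''\neq\v 0$ (the paper's factor $1-\delta_{\v\alpha'',\v 0}$). The rows $\v\alpha=\v e_a$ carry only the $\m B$ coefficient, so putting them in $\m C$ would double-count, and the triangle inequality does not sidestep that.
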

In particular,
\begin{equation}
\label{eq:F2tildenormscaling}
\lVert \tilde{\m{F}}_2\rVert_2=O(\sqrt{\NF \NH}).
\end{equation}

The proofs of these are provided in Appendix~\ref{app:normestimates}.

\subsection{Carleman linear embedding}
\label{sec:Carleman}

We next embed the stabilized quadratic Galerkin system into a finite linear Carleman system and state the convergence criterion used by the quantum solver.

\subsubsection{Setup}

The quantum simulation will be carried out in a further rescaled version of the tilde coordinates:
\begin{align}
\label{eq:ubarappedix}
 \bar{\m{u}} \coloneqq \frac{1+R_{\mathrm{thr}}}{2 \|\tilde{\m{u}}_0\|_2} \tilde{\m{u}},    
\end{align}
so that
\begin{align}
\label{eq:Lyapunovandrescaledsystem}
\dot{\bar{\m{u}}}=\bar{\m{F}}_1\bar{\m{u}}+\bar{\m{F}}_2\bar{\m{u}} \otimes \bar{\m{u}}, \quad \bar{\m{u}}(0)= \bar{\m{u}}_0
\end{align}
with
\begin{align}
\label{eq:Frescalings}
    \bar{\m{F}}_1 = \tilde{\m{F}}_1, \quad \bar{\m{F}}_2 = \frac{2 \| \tilde{\m{u}}_0\|_2}{1+ R_{\mathrm{thr}}} \tilde{\m{F}}_2.
\end{align}
We have
\begin{align}
\label{eq:uobar}
    \|\bar{\m{u}}_0\|_2 = \frac{1+ R_{\mathrm{thr}}}{2}<1.
\end{align}

The linear representation of the nonlinear dynamics is obtained by introducing powers of $\bar{\m{u}}\in \mathbb{C}^N$ as further variables,  where $N=  (2\NF+1)^3(\NH+1)^3$. We define $\bar{\m{w}}_j := \bar{\m{u}}^{\otimes j}$ for any integer $j=1,2,\dots \infty$ (we set $\bar{\m{w}}_0 :=1$). We also define $N^j \times N^k$ matrices~$\bar{\m{A}}^j_k$ 
\begin{align}
    \bar{\m{A}}^j_j  &:= \bar{\m{F}}_1 \otimes \m{I}^{\otimes (j-1)}  + \mbox{shifts},  \nonumber\\
    \bar{\m{A}}^j_{j+1}  &:=\bar{\m{F}}_2 \otimes \m{I}^{\otimes (j-1)}+ \mbox{shifts},
\end{align}
where ``$\mbox{shifts}$" denotes that the other terms that are obtained by shifting one after the other single copies of $\m{I}$ past the $\bar{\m{F}}_k$ term, from the right-hand side to the left-hand side. For example, 
$$\bar{\m{A}}_{3}^{3} = \bar{\m{F}}_1 \otimes \m{I} \otimes \m{I} + \m{I} \otimes \bar{\m{F}}_1 \otimes \m{I} +  \m{I} \otimes \m{I} \otimes \bar{\m{F}}_1.$$

By a direct calculation (see, e.g., \cite{liu2021efficient}) the infinite set of variables $\bar{\m{w}}_j$ satisfy an infinite set of ODEs: 
\begin{align}
 \dot{\bar{\m{w}}}_j &=  \bar{\m{A}}^j_j\bar{\m{w}}_{ j}+ \bar{\m{A}}^j_{j+1} \bar{\m{w}}_{j+1} \mbox{ for all }j\in \mathbb{N} \nonumber \\
 \bar{\m{w}}_0(t) &:=1 \nonumber\\
 \bar{\m{w}}_j (0) &= \bar{\m{u}}_{0}^{\otimes j}\mbox{ for all }j\in \mathbb{N}.
\end{align}
By construction, the solution to this system takes the form $\bar{\m{w}}_j(t) = \bar{\m{u}}(t)^{\otimes j}$ for all $j \in \mathbb{N}$, where $\bar{\m{u}}(t)$ is the exact solution at time $t \ge 0$. Thus $\bar{\m{w}}(t) = [\bar{\m{w}}_1(t), \bar{\m{w}}_2(t), \dots ]$ is an infinite dimensional vector.

We now turn to the dynamics of a finite, truncated Carleman system. To distinguish this from the infinite case we use $\bar{\m{z}}_j$ as the variables for the finite system. If we truncate to a level $j=\NC$ and set $\bar{\m{A}}^{\NC}_{\NC+1} =0$ for the closure, we obtain a finite ODE system
\begin{align}
 \dot{\bar{\m{z}}}_j =&\;  \bar{\m{A}}^j_j\bar{\m{z}}_{ j}+ \bar{\m{A}}^j_{j+1} \bar{\m{z}}_{j+1} \quad \quad j=1,\dots \NC -1 \nonumber \\
  \dot{\bar{\m{z}}}_{\NC} =&\; \bar{\m{A}}^{\NC}_{\NC}\bar{\m{z}}_{\NC} \nonumber \\
 \bar{\m{z}}_0(t) =&\;1 \nonumber\\
 \bar{\m{z}}_j (0) =&\;  \bar{\m{u}}_{0}^{\otimes j} \quad j =1,\dots \NC.
 \label{eq:Carlemaninfinite}
\end{align}
We denote the solution to the truncated system by $\bar{\m{z}}_j(t)$ for all $j=1,\dots, \NC$. By writing $\bar{\m{z}} = [\bar{\m{z}}_1, \bar{\m{z}}_2, \dots, \bar{\m{z}}_{\NC}]$ we therefore get that $
    \bar{\m{z}}(t) \in \mathbb{C}^{\sum_{k=1}^{\NC} N^k}$,
and so the dimension of $\bar{\m{z}}(t)$ is $ O(N^{\NC})$.
This vector is subject to the system of equations on $\bigoplus_{j=1}^{\NC} \mathbb{C}^{N^j}$:
\begin{align}
    \dot{\bar{\m{z}}} &= \bar{\m{A}}\bar{\m{z}}, \nonumber
    \\
    \bar{\m{z}}(0) &= [ \bar{\m{u}}^0,\bar{\m{u}}^{0^{\otimes 2}},\bar{\m{u}}^{0^{\otimes 3}}, \dots, \bar{\m{u}}^{0^{\otimes \NC}}  ].
    \label{eq:barproblem}
\end{align}
We shall refer to this as the \emph{Carleman ODE system}. 

\subsubsection{Encoding}
\label{sec:encoding}

For the later block-encoding constructions it is convenient to realize the
truncated Carleman system inside the fixed tensor-product Hilbert space
\begin{equation}
\label{eq:H_definition}
\mathcal H
\coloneqq
\mathrm{span}\{\ket{1},\dots,\ket{\NC}\}
\otimes
(\mathbb{C}^{N})^{\otimes \NC}.
\end{equation}
Within $\mathcal H$, the $j$th Carleman sector is identified with the subspace 
\begin{equation}
\label{eq:Vj_definition}
V_j
\coloneqq
\mathrm{span}
\Bigl\{
\ket{j}\underbrace{\ket{i_1}\cdots\ket{i_j}\ket{0}^{\otimes (\NC-j)}}_{\mathrm{data \; registers}}
\Bigr\},
\end{equation}
and the full working space is
\begin{equation}
\label{eq:V_definition}
V\coloneqq \bigoplus_{j=1}^{\NC} V_j \subseteq \mathcal H.
\end{equation}
Under this identification, $\v z_j$ is stored in the first $j$ data registers,
with the remaining $\NC-j$ registers padded by~$\ket{0}$. Specifically, after a square matrix embedding, $\bar{\m{A}}$ is a block matrix with block-diagonal components $\bar{\m{A}}^j_j$ and superdiagonal blocks $\bar{\m{A}}^j_{j+1}$. This gives the ODE system
    \begin{align}
    \nonumber
    \dot{\bar{\m{z}}}(t) &= \bar{\mathcal{A}}\bar{\m{z}}(t)
    \\
    \bar{\m{z}}(0) &=\bar{\m{z}}^0= [ \bar{\m{u}}^0,\bar{\m{u}}^{0^{\otimes 2}},\bar{\m{u}}^{0^{\otimes 3}}, \dots, \bar{\m{u}}^{0^{\otimes \NC}}  ],
    \label{eq:CarlemanODE}
\end{align}
for the embedding $\bar{\mathcal{A}}$ of the Carleman matrix $\bar{\m{A}}$, which is a matrix on $V$. This defines the dynamics that will be solved on a quantum computer.

Using Eq.~\eqref{eq:RPintro} (which also holds in the tilde coordinates) as well as Eq.~\eqref{eq:Frescalings}, we have that:
\begin{align}
\label{eq:rescalingchoiceworks}
    \mu(\bar{\m{F}}_1) + \| \bar{\m{F}}_2 \|_2 \leq  \mu(\bar{\m{F}}_1) \frac{1-R_{\mathrm{thr}}}{1+R_{\mathrm{thr}}} < 0. 
\end{align}
From the fact that our choice of rescaling satisfies Eq.~\eqref{eq:rescalingchoiceworks}, together with \cite[Eq.~B.39]{jennings2025quantum} that the log-norm of the rescaled Carleman system satisfies
\begin{align}
\label{eq:Carlemanlognorm}
    \mu(\bar{\m{A}}) < \mu(\bar{\m{F}}_1)/2 = - \bar{\nu}/2 <0.
\end{align} 
In other words, $ \mu(\bar{\m{A}}) $ is bounded away from zero by a constant. In particular the condition $R_{\m{P}}<1$ together with $\mu(\bar{\m{F}}_1)<0$ imply \cite[Lemma~15]{krovi2022improved},   \cite[Lemma~3.3]{jennings2025quantum}
\begin{align}
\label{eq:normdecreases}
    \|\bar{\m{u}}(t)\|_2 \leq \| \bar{\m{u}}(0)\|_2 \quad  (t \geq 0)
\end{align}

The strategy is to carry out the simulation in the above `bar' coordinates and then transform back to the original coordinates at the stage of information extraction.

\subsubsection{Linear embedding error bound and free energy threshold}

There are several approaches to bounding the total error
\begin{align}
\label{eq:totalerror}
    \epsC = \max_{t \geq 0} \| \bar{\m{z}}(t) - \bar{\m{w}}(t)\|_2,
\end{align}
where with a slight abuse of notation we indicated by 
\begin{align}
\label{eq:w}
\bar{\m{w}}(t) = [\bar{\m{u}}(t), \bar{\m{u}}(t)^{\otimes 2}, \dots, \bar{\m{u}}(t)^{\otimes \NC}]    
\end{align} 
the first $\NC$ components of the solution to the infinite system in Eq.~\eqref{eq:Carlemaninfinite}.

The construction of a Lyapunov functional from Sec.~\ref{sec:galerkin} and the results from Ref.~\cite{jennings2023cost} [Theorem 3.6] can be leveraged to find such error bound. In particular, since $\mu(\bar{\m{F}}_1) = \mu(\tilde{\m{F}}_1)  <0$ (Eq.~ \eqref{eq:mu2F1}), and $\|\bar{\m{u}}(0)\|_2<1$ (Eq.~\eqref{eq:uobar}) we have that under the condition
\begin{align}
\label{eq:RP}
    R_{\m{P}}:=  \frac{1}{-\mu(\bar{\m{F}}_1)} \| \bar{\m{F}}_2\|_2 \|{\bar{\m{u}}}_0\|_2  = \frac{1}{-\mu(\tilde{\m{F}}_1)} \| \tilde{\m{F}}_2\|_2 \| {\tilde{\m{u}}}_0\|_2 <1,
\end{align}
the error decays exponentially with $\NC$, i.e., if we take
\begin{align}
\label{eq:Carlemantruncationerror}
    \NC = O\left (\log\frac{1}{\epsC} \right),
\end{align}
we can ensure the linear embedding error is at most $\epsilon_{\mathrm{C}}$.\footnote{To see this, simply note that with the definitions of Ref.~\cite{jennings2025quantum} one has $\mu(\tilde{\m{F}}_1) = \mu_{2,\m{P}}(\m{F}_1)$, $\| \tilde{\m{F}_2}\| = \|\m{F}_2\|_{\m{P}}$, $\| \tilde{\m{u}}\|_2 = \| \m{u}\|_{2,\m{P}}$. } In summary, the convergence of the Carleman linear embedding of the system \eqref{eq:quadraticODE} is determined by the Lyapunov $R$-number condition \eqref{eq:RP}. 

From Proposition~\ref{prop:u2Pbounds} and the assumption in Eq.~\eqref{eq:u0bound}, $\| \tilde{\m{u}}_0\|_2 = O(1)$, we have that $\| \tilde{\m{F}}_2\|_2 = O(\sqrt{\NF \NH})$ and so $R_{\m{P}} = O(\sqrt{\NF \NH})$, which sets an upper limit to the finest resolution for which we can guarantee convergence under this particular stability method:

\begin{proposition}[Upper bound on $R_{\m{P}}$]
\label{prop:RP_upper_bound}
\begin{align}
\label{eq:RP_upper_bound}
R_{\m{P}}\leq &\; \frac{\sqrt{3}}{\bar{\nu}L^{3/2}}
\bigg(\frac{1}{1+\tau}+\frac{13L^2}{2\pi^2}\NF\bigg)^{1/2}
\nonumber \\
&\; \times\big(1+\sqrt{3\NH}\big)\lVert \m{u}(0)\rVert_{2,\m{P}}.
\end{align}
\end{proposition}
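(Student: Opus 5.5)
This bound is essentially a direct substitution into the definition~\eqref{eq:RP} of $R_{\m{P}}$, written in the tilde coordinates. First, by Eq.~\eqref{eq:mu2F1} we have $-\mu(\tilde{\m{F}}_1)=\bar{\nu}$. This identity follows from the symmetrization: $\tilde{\m{F}}_1$ equals a skew-Hermitian matrix minus $\bar{\nu}\m{I}$. It continues to hold after Galerkin truncation, because the truncation of a skew-Hermitian coupling within the cutoff window is still skew-Hermitian. Hence $\tfrac12(\tilde{\m{F}}_1+\tilde{\m{F}}_1^\dag)=-\bar{\nu}\m{I}$. Consequently the prefactor $1/(-\mu(\tilde{\m{F}}_1))$ in~\eqref{eq:RP} is exactly $1/\bar{\nu}$.

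Second, for $\lVert\tilde{\m{F}}_2\rVert_2$ I will use the bound~\eqref{eq:F2tildespectralnorm} of Proposition~\ref{prop:F2tildenorm}. This gives the factor
\[
\frac{\sqrt{3}}{L^{3/2}}\Big(\frac{1}{1+\tau}+\frac{13L^2}{2\pi^2}\NF\Big)^{1/2}\big(1+\sqrt{3\NH}\big).
\]

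Third, the remaining factor is $\lVert\tilde{\m{u}}_0\rVert_2$. By~\eqref{eq:utilde}, $\tilde{\m{u}}_0=\m{P}^{1/2}\m{u}(0)$. Since $\m{P}^{1/2}$ is Hermitian and positive definite, $\lVert\m{P}^{1/2}\m{u}\rVert_2^2=\m{u}^\dag\m{P}\m{u}=\lVert\m{u}\rVert_{2,\m{P}}^2$, which matches the definition in~\eqref{eq:Lyapunov_pullback}. So $\lVert\tilde{\m{u}}_0\rVert_2=\lVert\m{u}(0)\rVert_{2,\m{P}}$, and this can also be written as $\sqrt{2\mathcal{F}[\m{u}(0)]}$.

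Multiplying the three factors gives~\eqref{eq:RP_upper_bound}. I also need the second equality in~\eqref{eq:RP}, which expresses $R_{\m{P}}$ in bar or tilde coordinates. It holds because, by~\eqref{eq:Frescalings}, $\bar{\m{F}}_2$ scales by $c=2\lVert\tilde{\m{u}}_0\rVert_2/(1+R_{\mathrm{thr}})$ while $\bar{\m{u}}_0$ scales by $1/c$, so the product $\lVert\bar{\m{F}}_2\rVert_2\lVert\bar{\m{u}}_0\rVert_2$ is invariant.

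The only step with real content is the second one, the bound on $\lVert\tilde{\m{F}}_2\rVert_2$, and that is already supplied by Proposition~\ref{prop:F2tildenorm}. The one point I will check carefully is that the identity $\mu_2(\tilde{\m{F}}_1)=-\bar{\nu}$ really holds for the truncated matrix, including the boundary Hermite shell $\alpha_a=\NH$. Truncation simply drops the coupling to $\alpha_a+1$ symmetrically from both sides, so skew-Hermiticity is preserved. Even if the paper intends only an inequality there, the inequality $\mu_2\le-\bar{\nu}$ is all the upper bound needs.
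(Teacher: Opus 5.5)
Your proposal is correct and follows the paper's proof exactly. The paper also obtains the bound by substituting $\mu_2(\tilde{\m{F}}_1)=-\bar{\nu}$ from \eqref{eq:mu2F1} and the norm bound \eqref{eq:F2tildespectralnorm} into the definition \eqref{eq:RP}, using $\lVert\tilde{\m{u}}_0\rVert_2=\lVert\m{u}(0)\rVert_{2,\m{P}}$. Your additional checks are sound and simply make explicit what the paper leaves implicit: that skew-Hermiticity of $\tilde{\m{F}}_1$ survives the Hermite truncation, and that $R_{\m{P}}$ is invariant under the bar rescaling.
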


\begin{proof}
By putting \eqref{eq:mu2F1} and \eqref{eq:F2tildespectralnorm} into the definition of $R_{\m{P}}$ \eqref{eq:RP}, we obtain \eqref{eq:RP_upper_bound}.
\end{proof}

A simple algebraic manipulation allows one to rephrase the condition $R_{\m{P}}<1$ as in Eq.~\eqref{eq:freenergycondition}.

\subsubsection{$R_{\m{P}}$-number condition and plasma measures of nonlinearity}

In plasma physics, the local nonlinearity strength at time~$t$ is characterized by the quantity \eqref{eq:phibar}, which becomes
\begin{equation}
\label{eq:eta_transformed}
\varphi(\vx,t)=\tau \phi(\vx,t)
\end{equation}
under the transformations \eqref{eq:natural_units}, where as before we drop the bars after performing the transformation to dimensionless quantities. To obtain global measures of nonlinearity strength for the problem we consider, we take norms of \eqref{eq:eta_transformed} at $t=0$ to define
\begin{equation}
\label{eq:eta_avg}
\varphi_{\mathrm{avg}}\coloneqq \frac{1}{L^3}\lVert \varphi(\cdot,0)\rVert_{1,\TL^3}=\frac{\tau}{L^3}\int_{\TL^3} \lvert\phi(\vx,0)\rvert\,\mathrm{d}^3x
\end{equation}
and 
\begin{equation}
\label{eq:eta_max}
\varphi_{\max}\coloneqq \lVert \varphi(\cdot,0)\rVert_{\infty,\TL^3}=\tau\max_{\vx\in\TL^3}\lvert\phi(\vx,0)\rvert.
\end{equation}
In the following proposition, all quantities are assumed to be band-limited functions on $\TL^3$ with Fourier cutoff $\NF$.

\begin{proposition}[Necessary averaged-nonlinearity bound from $R_{\m P}<1$]
\label{prop:eta_avg_RP_necessary}
Suppose that the Lyapunov \(R\)-number satisfies $R_{\m{P}}<1$.
Then,
\begin{equation}
\varphi_{\mathrm{avg}}
<
\frac{\bar{\nu}\sqrt{\tau}}
{L^{3/2}\lVert \tilde{\m F}_2\rVert_2}.
\end{equation}
\end{proposition}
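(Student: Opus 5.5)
The plan is to chain three elementary inequalities: Cauchy--Schwarz on the torus, Parseval together with the Helmholtz relation for the Fourier coefficients of $\phi$, and the identification of the resulting quantity with the free energy $\mathcal F=\tfrac12\lVert\tilde{\m u}\rVert_2^2$ from \eqref{eq:Lyapunov_pullback}. Then insert the hypothesis $R_{\m P}<1$ in the tilde form of \eqref{eq:RP}, using $\mu(\tilde{\m F}_1)=-\bar\nu$ from \eqref{eq:mu2F1}. That hypothesis reads $\lVert\tilde{\m u}_0\rVert_2<\bar\nu/\lVert\tilde{\m F}_2\rVert_2$. It therefore suffices to prove
\begin{equation*}
\varphi_{\mathrm{avg}}\le \frac{\sqrt{\tau}}{L^{3/2}}\lVert\tilde{\m u}_0\rVert_2 .
\end{equation*}

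\textbf{Step 1 (Cauchy--Schwarz).} By the definition \eqref{eq:eta_avg},
\begin{equation*}
\varphi_{\mathrm{avg}}=\frac{\tau}{L^3}\int_{\TL^3}\lvert\phi(\vx,0)\rvert\,\mathrm d^3x\le \frac{\tau}{L^{3/2}}\lVert\phi(\cdot,0)\rVert_{L^2(\TL^3)} .
\end{equation*}

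\textbf{Step 2 (Parseval and Helmholtz).} The Fourier normalization $L^{-3/2}e^{i\v\xi_{\vn}\cdot\vx}$ is orthonormal, so Parseval gives $\lVert\phi\rVert_{L^2}^2=\sum_{\vn}\lvert\hat\phi_{\vn}\rvert^2$. By orthonormality of the Hermite basis with weight $w$ (and $H_{\v0}=1$), the Fourier transform of $\int g w\,\mathrm d^3v$ is $g_{\vn,\v0}$. The Helmholtz equation \eqref{eq:phi_g_single_species} then gives $\hat\phi_{\vn}=g_{\vn,\v0}/\kappa_{\vn}$. This is the relation already used to obtain the factor $1/\kappa_{\vn}$ in \eqref{eq:g_mode_hermite}, and I will re-check it against Appendix~\ref{app:FHbasis}. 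Band-limitedness means only $\vn\in\mc Z_{\NF}^3$ contribute.

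\textbf{Step 3 (comparison with the free energy).} Since $\kappa_{\vn}\ge\tau$,
\begin{equation*}
\lVert\phi\rVert_{L^2}^2=\sum_{\vn}\frac{\lvert g_{\vn,\v0}\rvert^2}{\kappa_{\vn}^2}\le\frac1\tau\sum_{\vn}\frac{\lvert g_{\vn,\v0}\rvert^2}{\kappa_{\vn}}.
\end{equation*}
By \eqref{eq:Lyapunov_pullback},
\begin{equation*}
\sum_{\vn}\frac{\lvert g_{\vn,\v0}\rvert^2}{\kappa_{\vn}}=\lVert\tilde{\m u}\rVert_2^2-\lVert\m u\rVert_2^2\le\lVert\tilde{\m u}\rVert_2^2 .
\end{equation*}
Hence $\lVert\phi(\cdot,0)\rVert_{L^2}\le\lVert\tilde{\m u}_0\rVert_2/\sqrt\tau$. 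Combining with Step 1 yields $\varphi_{\mathrm{avg}}\le\sqrt{\tau}\,L^{-3/2}\lVert\tilde{\m u}_0\rVert_2$. Inserting $\lVert\tilde{\m u}_0\rVert_2<\bar\nu/\lVert\tilde{\m F}_2\rVert_2$ then gives the strict inequality claimed.

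\textbf{Expected obstacle.} I expect the only delicate point to be Step 2: checking that the normalization conventions (the $L^{-3/2}$ in the Fourier transform and the $(2\pi)^{-3/2}$ in $w$) make $\hat\phi_{\vn}=g_{\vn,\v0}/\kappa_{\vn}$ hold exactly, with no stray factors of $L$. I would verify this by transforming \eqref{eq:phi_g_single_species} directly.

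The $\vn=\v0$ mode needs no special treatment, since $\kappa_{\v0}=\tau>0$; in any case it vanishes under the zero-mass assumption. I also implicitly use $\lVert\tilde{\m F}_2\rVert_2>0$, which holds because $\tilde{\m F}_2$ is nonzero for $\NF\ge1$, $\NH\ge1$.
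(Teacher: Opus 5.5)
Your proof is correct and follows essentially the same route as the paper. It chains Cauchy--Schwarz on $\TL^3$ with the bound $\tau\int_{\TL^3}\lvert\phi(\vx,0)\rvert^2\,\mathrm d^3x\le\lVert\m u_0\rVert_{2,\m P}^2=\lVert\tilde{\m u}_0\rVert_2^2$, and then with $R_{\m P}<1$ and $\mu(\tilde{\m F}_1)=-\bar\nu$. The only difference is cosmetic: the paper gets the middle bound from Proposition~\ref{prop:P_quadratic_free_energy} by discarding the kinetic and gradient terms of the free energy, whereas you derive it coefficient-wise from $\hat\phi_{\vn}=g_{\vn,\v 0}/\kappa_{\vn}$ and $\kappa_{\vn}\ge\tau$, which is the same inequality written in Fourier space.
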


A proof is given in Appendix~\ref{subsec:eta_avg_proof}. We highlight that this bound is a very blunt estimate of what nonlinearities can be simulated with exponential precision. Determining the true scope requires future analysis and numerical simulations.

\subsubsection{Convergence guarantees for illustrative initial data}
\label{subsec:initial_data}

We consider specific initial data $(g(\vx,\vv,0),\phi(\vx,0))$ for the non-dimensionalized Vlasov--Poisson system \eqref{eq:gnonlinear}--\eqref{eq:phi_g_single_species}, and analyze the available a priori convergence guarantees through Proposition~\ref{prop:RP_upper_bound}.

\begin{proposition}[Sufficient $\varphi_{\max}$ for real Fourier mode initial data]
\label{prop:Fourier_initial_data}
Consider the initial data \eqref{eq:initialconditiontheorem}. Under the assumptions of Theorem~\ref{theorem}, we have
\begin{equation}
\label{eq:u0P_Fourier_mode}
\lVert \m{u}(0)\rVert_{2,\m{P}}
=
\frac{\varphi_{\max}}{\tau}
\bigg(
\frac{L^3}{2}\kappa_{\vn_0}(1+\kappa_{\vn_0})
\bigg)^{1/2}.
\end{equation}
and a sufficient condition for $R_{\m P}<1$ is
\begin{align}
\label{eq:varphimax_Fourier_mode}
\varphi_{\max}
<
&\;
\frac{\sqrt{2}\tau\bar{\nu}}{\sqrt{3\kappa_{\vn_0}(1+\kappa_{\vn_0})}}
\nonumber\\
&\;\times
\bigg(
\bigg(
\frac{1}{1+\tau}+\frac{13L^2}{2\pi^2}\NF
\bigg)^{1/2}
\big(1+\sqrt{3\NH}\big)
\bigg)^{-1}. 
\end{align}
Moreover, $\varphi_{\max}=O\big((\NF\NH)^{-1/2}\big)$ as $\NF,\NH\to\infty$ and under Debye-resolution scaling $\NF=\Theta(L)$, $\varphi_{\max}=O\big((L\NH)^{-1/2}\big)$.
\end{proposition}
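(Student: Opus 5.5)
Three steps: compute the Fourier--Hermite coefficients of the initial datum \eqref{eq:initialconditiontheorem}, evaluate the free-energy norm \eqref{eq:Lyapunov_pullback}, and substitute into Proposition~\ref{prop:RP_upper_bound}.

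\textbf{Coefficients.} The initial $g$ is independent of $\vv$, so in the orthonormal Hermite basis only the $\v\alpha=\v 0$ component is nonzero, since $H_{\v 0}=1$. Write $\cos(\v\xi_{\vn_0}\cdot\vx)=\tfrac12(\ee^{i\v\xi_{\vn_0}\cdot\vx}+\ee^{-i\v\xi_{\vn_0}\cdot\vx})$. With the normalization $L^{-3/2}\int_{\TL^3}(\cdot)\,\ee^{-i\v\xi_{\vn}\cdot\vx}\,\mathrm d^3x$, the only nonzero coefficients are
\begin{equation}
g_{\pm\vn_0,\v 0}=\frac{\varphi_{\max}}{\tau}\kappa_{\vn_0}\frac{L^{3/2}}{2}.
\end{equation}
Both $\pm\vn_0$ lie in $\mc Z_{\NF}^3$ because each $|n_{0,a}|<\NF$, and $\kappa_{-\vn_0}=\kappa_{\vn_0}$.

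\textbf{Consistency with the Helmholtz equation.} In Fourier space, \eqref{eq:phi_g_single_species} gives $\hat\phi_{\vn}=g_{\vn,\v 0}/\kappa_{\vn}$. This returns $\phi(\vx,0)=\frac{\varphi_{\max}}{\tau}\cos(\v\xi_{\vn_0}\cdot\vx)$, so $\max_{\vx}|\tau\phi|=\varphi_{\max}$ and the prescribed pair $(g,\phi)$ is consistent. The datum also has zero mean because $\vn_0\neq 0$, so the zero-mass hypothesis of Problem~\ref{problem} holds.

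\textbf{Free-energy norm.} From \eqref{eq:Lyapunov_pullback}, $\lVert\m u(0)\rVert_{2,\m P}^2=\sum_{\pm}\big(1+\kappa_{\vn_0}^{-1}\big)|g_{\pm\vn_0,\v 0}|^2$. Substituting the coefficients,
\begin{equation}
\lVert\m u(0)\rVert_{2,\m P}^2=2\cdot\frac{1+\kappa_{\vn_0}}{\kappa_{\vn_0}}\cdot\frac{\varphi_{\max}^2}{\tau^2}\kappa_{\vn_0}^2\frac{L^3}{4},
\end{equation}
and taking the square root gives \eqref{eq:u0P_Fourier_mode}.

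\textbf{Sufficient condition.} Insert \eqref{eq:u0P_Fourier_mode} into the bound \eqref{eq:RP_upper_bound} of Proposition~\ref{prop:RP_upper_bound} and require the right-hand side to be $<1$. The factors $L^{3/2}$ cancel. Solving for $\varphi_{\max}$ gives exactly \eqref{eq:varphimax_Fourier_mode}, with prefactor $\sqrt2\,\tau\bar\nu/\sqrt{3\kappa_{\vn_0}(1+\kappa_{\vn_0})}$.

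\textbf{Asymptotics.} The bracket $\big(\frac{1}{1+\tau}+\frac{13L^2}{2\pi^2}\NF\big)^{1/2}$ grows like $L\NF^{1/2}$, and $(1+\sqrt{3\NH})$ grows like $\NH^{1/2}$. At fixed $L$, $\tau$, $\bar\nu$ and $\vn_0$ the threshold is therefore $O((\NF\NH)^{-1/2})$. For the Debye-resolution statement, set $\NF=\Theta(L)$. The bracket is then $\Theta(L^{3/2})$, and the $\kappa_{\vn_0}$ factors stay bounded for fixed $\vn_0$ as $L\to\infty$ because $\kappa_{\vn_0}\to\tau$. This gives $O(L^{-3/2}\NH^{-1/2})$, which is in particular $O((L\NH)^{-1/2})$ as claimed.

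The only step needing care is the bookkeeping of the Fourier normalization and the factor of two from the real cosine splitting into the modes $\pm\vn_0$. Everything after that is substitution.
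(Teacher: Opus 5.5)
Your proposal is correct and follows the paper's proof essentially step for step. Like the paper, you check consistency with the Helmholtz equation, compute $g_{\pm\vn_0,\v 0}(0)=\varphi_{\max}\kappa_{\vn_0}L^{3/2}/(2\tau)$, evaluate the $\m P$-norm, and substitute into Proposition~\ref{prop:RP_upper_bound}; your explicit justification of the asymptotic claims, which the paper's proof leaves implicit here, is also sound (indeed $\kappa_{\vn_0}\geq\tau$ alone bounds the prefactor, without needing $\vn_0$ fixed).
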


A proof is given in Appendix~\ref{subsec:Fourier_proof}.

We now consider Gaussian initial data. Define the periodized Gaussian
\begin{equation}\label{eq:Gm}
\mc{G}(\vx;L,\sigma)
\coloneqq
\frac{1}{(2\pi)^{3/2}\sigma^3}\sum_{\vn\in \mathbb{Z}^3}
\exp\bigg(
-\frac{\lvert \vx+\vn L\rvert^2}{2\sigma^2}
\bigg).
\end{equation}
and consider the initial data 
\begin{align}
g(\vx,\vv,0)=&\; C\bigg(\mc{G}(\vx;L,\sigma)-\frac{1}{L^3} \bigg), \label{eq:Gaussian_g0} \\
\phi(\vx,0)=&\; \frac{C}{L^3}\sum_{\vn\in \mathbb{Z}^3\setminus\{0\}} \frac{\exp\big(-\frac{\sigma^2}{2}\lvert\v{\xi}_{\vn}\rvert^2 \big)}{\lvert\v{\xi}_{\vn}\rvert^2+\tau}\exp\big(i\v{\xi}_{\vn}\cdot \vx\big)\label{eq:Gaussian_phi0}.
\end{align}
Observe that $M(0)=0$ and  \eqref{eq:phi_g_single_species} and satisfied, due to the exact integral
\begin{equation}
\int_{\TL^3} \mc{G}(\vx;L,\sigma)\,\mathrm{d}^3x =1.
\end{equation}
Moreover, $\lim_{\sigma\to\infty} g(\vx,\vv,0)=0$, so that the initial datum $g(\vx,\vv,0)$ interpolates between a  sharply peaked Gaussian and zero perturbation.

We employ the lattice sums $S_1(L,\sigma)$ \eqref{eq:S1} and
\begin{equation}\label{eq:Gaussian_S2_def}
S_2(\NF,L,\sigma)\coloneqq
\frac{1}{L^3}\sum_{\vn\in \mc{Z}_{\NF}^3\setminus\{\v 0\}}
\bigg(
1+\frac{1}{\kappa_{\vn}}
\bigg)
\exp\big(
-\sigma^2\lvert\v\xi_{\vn}\rvert^2
\big).
\end{equation}
in the following proposition.

\begin{proposition}[Sufficient $\varphi_{\max}$ for Gaussian initial data]
\label{prop:Gaussian}
Consider the initial data \eqref{eq:Gaussian_g0}--\eqref{eq:Gaussian_phi0}. We have 
\begin{equation}\label{eq:Gaussian_u0P_exact}
\lVert \m u_0\rVert_{2,\m P}
=
\lvert C\rvert S_2(\NF, L,\sigma)^{1/2},
\end{equation}
and a sufficient condition for $R_{\m{P}}<1$ is 
\begin{align}
\label{eq:Gaussian_RP}
\varphi_{\max} < &\; \frac{\tau\bar{\nu}L^{3/2} S_1(L,\sigma)}{\sqrt{3}S_2(\NF,L,\sigma)^{1/2}} \nonumber \\
&\; \times \bigg(
\frac{1}{1+\tau}
+
\frac{13L^2}{2\pi^2}\NF
\bigg)^{-1/2}
\big(1+\sqrt{3\NH}\big)^{-1}. 
\end{align}
Moreover, suppose that \(\sigma=cL\) for fixed \(c\in(0,1)\). Then, $\varphi_{\max}
    =
    O\big((\NF\NH)^{-1/2}
    \big), $
as \(\NF,\NH\to\infty\) and under
Debye-resolution scaling \(\NF=\Theta(L)\), $
    \varphi_{\max}
    =
    O\big(
    (L\NH)^{-1/2}\big)$.
\end{proposition}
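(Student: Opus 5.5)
The proof runs in three steps, following Proposition~\ref{prop:Fourier_initial_data}. First, compute the Fourier--Hermite coefficients of the Gaussian datum \eqref{eq:Gaussian_g0} exactly. Second, evaluate the $\m P$-norm through \eqref{eq:Lyapunov_pullback}. Third, insert the result into Proposition~\ref{prop:RP_upper_bound} and convert the norm bound into a bound on $\varphi_{\max}$.

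\textbf{Coefficients and the $\m P$-norm.} Since $g(\vx,\vv,0)$ is independent of $\vv$, only the Hermite mode $\v\alpha=\v0$ is nonzero. Poisson summation gives the Fourier coefficients of the periodized Gaussian as
\[
\frac{1}{L^{3/2}}\int_{\TL^3}\mc G\,\ee^{-i\v\xi_{\vn}\cdot\vx}\,\mathrm d^3x=L^{-3/2}\ee^{-\sigma^2\lvert\v\xi_{\vn}\rvert^2/2}.
\]
Subtracting $1/L^3$ removes exactly the $\vn=\v0$ mode. Hence
\[
g_{\vn,\v0}(0)=C L^{-3/2}\ee^{-\sigma^2\lvert\v\xi_{\vn}\rvert^2/2}\quad (\vn\neq\v0),\qquad g_{\v0,\v0}(0)=0.
\]
After Galerkin truncation to $\mc Z_{\NF}^3$, formula \eqref{eq:Lyapunov_pullback} gives
\[
\lVert\m u_0\rVert_{2,\m P}^2=\sum_{\vn\in\mc Z_{\NF}^3\setminus\{\v0\}}\Big(1+\frac1{\kappa_{\vn}}\Big)\lvert g_{\vn,\v0}\rvert^2=\lvert C\rvert^2S_2 .
\]
This is \eqref{eq:Gaussian_u0P_exact}. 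The factor $\tfrac12$ in $\mc F$ plays no role, since $\lVert\cdot\rVert_{2,\m P}$ is the square root of the quadratic form. Note that the exponent $\sigma^2\lvert\v\xi_{\vn}\rvert^2$ in $S_2$ matches the squared coefficient, as it should.

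\textbf{Relating $C$ to $\varphi_{\max}$.} The potential \eqref{eq:Gaussian_phi0} is a Fourier series with positive coefficients, so $\lvert\phi(\vx,0)\rvert$ is maximized at $\vx=\v0$. This gives
\[
\max_{\vx}\lvert\phi(\vx,0)\rvert=\lvert C\rvert S_1,\qquad\text{so}\qquad \varphi_{\max}=\tau\lvert C\rvert S_1 .
\]
Substituting $\lvert C\rvert=\varphi_{\max}/(\tau S_1)$ into Proposition~\ref{prop:RP_upper_bound} and requiring the right-hand side to be $<1$ yields exactly \eqref{eq:Gaussian_RP}, which is a sufficient condition for $R_{\m P}<1$. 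If the paper intends $\phi$ to be band-limited to $\NF$, the maximum is the truncated sum, which is at most $S_1$. The condition with $S_1$ is then still the stated one, provided we define $\varphi_{\max}$ through $\lvert C\rvert S_1$; I will note this convention.

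\textbf{Asymptotics (main obstacle).} Set $\sigma=cL$. The sums $L^3S_1$ and $L^3S_2$ then become $L$-dependent lattice sums over $\vn$ with Gaussian weights $\ee^{-2\pi^2c^2\lvert\vn\rvert^2}$ and $\ee^{-4\pi^2c^2\lvert\vn\rvert^2}$, which converge absolutely and uniformly in $\NF$. Both $\kappa_{\vn}=4\pi^2\lvert\vn\rvert^2/L^2+\tau$ and $1+1/\kappa_{\vn}$ are bounded above and below for fixed $L$. So for fixed $L,c,\tau$ the ratio $S_1/S_2^{1/2}$ is bounded uniformly in $\NF$. The $\NF$- and $\NH$-dependence is therefore carried entirely by
\[
\Big(\frac1{1+\tau}+\frac{13L^2}{2\pi^2}\NF\Big)^{-1/2}(1+\sqrt{3\NH})^{-1}=\Theta\big((\NF\NH)^{-1/2}\big).
\]
For the Debye-resolution statement with $\NF=\Theta(L)$, I would track the $L$-dependence carefully. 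The prefactor is $L^{3/2}S_1/S_2^{1/2}$. The lattice sums in $\vn$ are $O(1)$ in $c$, but the weights $1/\kappa_{\vn}$ differ between $S_1$ and $S_2$: $S_1\sim L^{-3}\sum_{\vn}\ee^{-2\pi^2c^2\lvert\vn\rvert^2}/\kappa_{\vn}$, while $S_2^{1/2}\sim L^{-3/2}(\cdots)^{1/2}$. For large $L$ the dominant small-$\vn$ terms have $\kappa_{\vn}\to\tau$, so each sum is $\Theta(1)$ in $L$ apart from its explicit power of $L$. The prefactor is then $L^{3/2}\cdot L^{-3}/L^{-3/2}=\Theta(1)$.

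Combining this with the middle factor $\Theta((L^2\NF)^{-1/2})=\Theta(L^{-3/2})$ appears to give a stronger decay than $(L\NH)^{-1/2}$. Since the claim is only an upper bound $O(\cdot)$, that would still be consistent. Reconciling these powers of $L$, i.e.\ bounding the sums uniformly in $L$ and checking the direction of the bound, is the step I expect to require the most care.
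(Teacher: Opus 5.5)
Your proposal is correct and follows essentially the paper's proof. The shared steps are: the exact coefficients $g_{\vn,\v 0}(0)=CL^{-3/2}\ee^{-\sigma^2\lvert\v\xi_{\vn}\rvert^2/2}$, the identity $\lVert\m u_0\rVert_{2,\m P}^2=\lvert C\rvert^2 S_2$, $\varphi_{\max}=\tau\lvert C\rvert S_1$ from saturation at $\vx=\v 0$, substitution into Proposition~\ref{prop:RP_upper_bound}, and $L^{3/2}S_1/S_2^{1/2}=O(1)$, which gives $O(L^{-3/2}\NH^{-1/2})$ and hence $O((L\NH)^{-1/2})$ for $L\ge 1$ under $\NF=\Theta(L)$, exactly as the paper concludes. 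The step you flag closes the same way the paper does it: $\kappa_{\vn}\ge\tau$ bounds $L^3S_1$ above uniformly in $L$, and the single mode $\vn=(1,0,0)$ bounds $L^3S_2$ below by $\ee^{-4\pi^2c^2}$ for $\NF\ge 1$.
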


A proof is given in Appendix~\ref{subsec:Gaussian_proof}.

\begin{rmk}[Radial approximants for the Gaussian profiles]
We may approximate the periodized Gaussian initial data
\eqref{eq:Gaussian_g0}--\eqref{eq:Gaussian_phi0}
by the radial functions
\begin{align}
g_{\mathrm{rad}}(\vx,\vv,0)
&\coloneqq
C\bigg(
\frac{1}{(2\pi)^{3/2}\sigma^3}
\exp\bigg(
-\frac{\lvert \vx\rvert^2}{2\sigma^2}
\bigg)
-\frac{1}{L^3}
\bigg),
\label{eq:g_radial_approx}
\end{align}
and
{\small
\begin{align}
\phi_{\mathrm{rad}}(\vx,\vv,0)\coloneqq & \,
C\frac{
\exp\big(-\lvert \vx\rvert^2/(2\sigma^2)\big)
}{
8\pi \lvert\vx\rvert
} \nonumber\\
& \!\! \times 
\bigg(\!\!
\operatorname{erfcx}
\bigg(
\frac{\sqrt{\tau}\sigma^2-\lvert \vx\rvert}{\sqrt{2}\sigma}
\! \bigg)
-
\operatorname{erfcx}
\bigg(
\frac{\sqrt{\tau}\sigma^2+\lvert \vx\rvert}{\sqrt{2}\sigma}
\bigg)
\! \! \bigg) \nonumber \\
& \!
-\frac{C}{\tau L^3},
\label{eq:phi_radial_approx}
\end{align}}
where $\operatorname{erfcx}(z)\coloneqq\exp(z^2)\operatorname{erfc}(z)$. The function \eqref{eq:phi_radial_approx} is defined for $\vx\neq \v{0}$ and extended to $\vx=\v{0}$ by continuity. Observe that the pair
$(g_{\mathrm{rad}},\phi_{\mathrm{rad}})$ verifies
\eqref{eq:phi_g_single_species} in the radial approximation.
\end{rmk}

\subsection{Quantum algorithm}

We now describe the quantum implementation of the Carleman system, including block-encodings, state preparation, history-state generation, and information extraction.

\label{sec:block-encodingsF1F2}
\subsubsection{Block-encoding of $\m{F}_1$, $\m{F}_2$ and $\m{P}^{-1/2}$}

The plasma problem is specified by the matrices $\m{F}_1$, $\m{F}_2$, and the initial datum $g(\v{x},\v{v},0)$. In order to have an efficient quantum algorithm we must be able to efficiently encode these on a quantum computer. This is done through unitary block-encodings of the structured data.

Recall that a $(\alpha_\m{M}, a, \epsilon)$ block-encoding of a matrix $\m{M}$ is a unitary $\m{U}_\m{M}$ such that
\begin{align}
    \m{U}_\m{M} \ket{0^a} \ket{\psi} = \ket{0^a} \frac{\m{M}_\epsilon}{\alpha_\m{M}} \ket{\psi} + \ket{\perp}, 
\end{align}
where 
\begin{align}
    (\bra{0^a} \otimes \m{I})\ket{\perp} = 0, \quad \| \m{M}_\epsilon - \m{M}\|_2 \leq \epsilon.
\end{align}
Various methods exist for explicitly constructing such a block-encoding. In particular, when the matrix elements have certain sparsity features and are specified by explicit functions it is possible to construct block-encodings. More precisely, if $\m{M}$ is $\dr(\m{M})$ row-sparse and $\dc(\m{M})$ column-sparse then we can construct a block-encoding $\m{U}_\m{M}$ with scale factor $\alpha_\m{M} = \|\m{M}\|_{\max}\sqrt{\dc(\m{M})\dr(\m{M})}$, where $\|\m{M}\|_{\max}:= \max_{i,j} |\m{M}_{ij}|$~\cite{gilyen2018quantum}. This basic construction can be directly applied to the sparse matrices in our problem.

The matrix $\m{P} =\oplus_{\v{n}} \m{P}_{\v{n}}>0$ is diagonal in the Fourier index $\v{n}$, and as shown above  
\small
\begin{align}
\nonumber
(\m{P}^{\pm \frac{1}{2}})_{\vn,\v{\alpha}|\vn',\v{\alpha}'}
& =\;
\delta_{\vn,\vn'}\delta_{\v{\alpha},\v{\alpha}'} \\ & +\bigg(
\left(\frac{1+\kappa_{\vn}}{\kappa_{\vn}}\right)^{\pm \frac{1}{2}}\! \! \! \! \! \! -1
\bigg)
\delta_{\vn,\vn'}\delta_{\v{\alpha},\v{0}}\delta_{\v{\alpha}',\v{0}}.
\end{align}
\normalsize
These are diagonal operators whose entries are simple functions, with size ranging in $[1, \sqrt{ 1 + \tau^{-1}}]$ and $[1/\sqrt{ 1 + \tau^{-1}},1]$, respectively. The constant $\sqrt{ 1 + \tau^{-1}}$ typically does not vary much over plasmas (e.g. for tokamak core plasmas $1/2 \lesssim T_i/T_e \lesssim 2 $). Therefore, we can efficiently implement the Quantum Lyapunov transform in either direction.

For $\tilde{\m{F}}_1$ the situation is also simple. It can be shown (see Appendix~\ref{app:quadraticODErescaled} for details) that it is $7$-sparse, and has 
\begin{equation}
    \|\tilde{\m{F}}_1\|_{\max} = \left (\frac{\NF}{L} \sqrt{\NH} + \bar{\nu} \right ),
\end{equation}
with components that can readily computed coherently. Using a sparse block-encoding we can therefore construct a block-encoding of $\tilde{\m{F}}_1$ with scale factor 
\begin{equation}
    \alpha_{\tilde{\m{F}}_1} = 7 \|\tilde{\m{F}}_1\|_{\max}= O\left (\frac{\NF}{L} \sqrt{\NH} + \bar{\nu} \right ).
\end{equation}

However, $\tilde{\m{F}}_2$ is more involved. We find that $\tilde{\m{F}}_2$ is $\dc=3$ column-sparse, however the row sparsity is $\dr=3 (2\NF+1)^3$ (see Appendix~\ref{app:quadraticODErescaled} for details). It is also readily computed that
\scriptsize
\begin{align}
    \|\tilde{\m{F}}_2\|_{\max}     & \le \frac{1}{L^{3/2}(\sqrt{\tau}+\sqrt{1+\tau})} \left ( \sqrt{\NH+1} +  \sqrt{\frac{ 3\NF^2k^2+\tau}{1+3\NF^2k^2+\tau}} \right ),
\end{align}
\normalsize
where $k=2\pi/L$. Thus, for a fixed minimal real-space resolving limit (corresponding to $\NF k$ being a constant), we see that $\|\tilde{\m{F}}_2\|_{\max} = O\left(\frac{1}{\sqrt{\tau}} \sqrt{\NH/L^3} \right)$. However, we can do better than applying the sparse block-encoding method to $\tilde{\m{F}}_2$, which has the costly $3(2\NF+1)^3$ for the row-sparsity. More precisely, we can exploit the fact that its real-space profile is rapidly decaying and leverage hierarchical block-encoding methods. The spectrum of $\tilde{\m{F}}_2$ decays like $O(1/|\v{n}|)$ in Fourier space, however it also needs regulation of the singularity at $\v{x}=\v{0}$. In real-space the electrostatic field decays much faster and we leverage this to construct a more efficient block-encoding. Specifically we sample the field in real-space and by doing so we can neatly define (a) a regulation of the singularity, and (b) a truncated interaction $\tilde{\m{G}}_2$ restricted to the finite subspace of Fourier-Hermite modes. This $\tilde{\m{G}}_2$ can be taken to define the finite-dimensional model of the plasma nonlinearity, and indeed as $\NF$ increases any aliasing errors decrease and $\tilde{\m{G}}_2$ converges rapidly to $\tilde{\m{F}}_2$. Since we are here interested in asymptotics (and indeed it turns out the linear term $\tilde{\m{F}}_1$ is anyways the algorithm bottleneck) we show how to obtain an optimal block encoding of the nonlinear term via this real-space sampling.

To this end, in Appendix~\ref{app:HBEs} we derive general theory on block-encoding dense matrices, such as the ones arising here, and which may be of independent interest. Specifically we establish Theorem~\ref{thm:two-kernel-hbe} that provides a unitary block-encoding of an arbitrary, dense rectangular matrix
\begin{equation}
\label{eq:arbitrary-2-kernel}
    \m K_{\v j,\v\alpha \mid \v j',\v\alpha';\v j'',\v\alpha''}= \sigma_{\v j,\v j''} \sum_{a=1}^3 K^a_{\v j,\v j'}\, W^a_{\v\alpha\mid \v\alpha',\v\alpha''},
\end{equation}
where both $\sigma$ and $K^a$ are assumed to have decaying off-diagonal components, while $W^a$ are generic additional degrees of freedom. 

The difficulty is that the spatial matrices \(\sigma\) and \(K^a\) may be
dense, even though their entries decay with spatial separation.  To avoid this we develop Theorem~\ref{thm:two-kernel-hbe}, which handles this by applying a hierarchical decomposition to each spatial kernel:
near-field interactions are kept in sparse adjacent matrices, while far-field
interactions are grouped into levels.  Each level is then
block-encoded using a sparse-access construction, and the full operator is assembled by an LCU over the levels of the hierarchy. The hierarchical construction weights each level by the largest interaction strength on that level, so distant blocks contribute less when the physical interaction decays.  Thus the normalization can be polylogarithmic, or at worst a much smaller power of the grid size, depending on the decay rates of \(\sigma\) and \(K^a\). 

The theorem also unifies several special cases: setting one kernel to a Kronecker delta recovers local Coulomb interactions in plasmas, while fixing an auxiliary input register recovers the square-kernel hierarchical block-encoding of Ref.~\cite{nguyen2022block}.

We find that for a matrix of the form given in Eq.~\eqref{eq:arbitrary-2-kernel} that we may construct a unitary block-encoding with scale factor
\small
\begin{equation}
    \alpha_{\m K}=\sum_{a=1}^3\sum_{\lambda=1}^{\ell_{\max}}\sum_{\ell=1}^{\ell_{\max}}\bar\sigma_\lambda\,\bar K_{a,\ell}\,w_a\,\sqrt{D_\lambda D_\ell\,\Gamma^a_{\lambda,\ell}\,\dr(W^a)\dc(W^a)},
\end{equation}
\normalsize
where $\bar\sigma_\lambda,\bar K_{a,\ell}$ are max element norms for $\ell_{\max}$ geometric contributions to $\sigma$ and $ K^a$, respectively, and $w_a:= \| W^a\|_{\max}$. The number of terms $\ell_{\max}$ scales logarithmically with the matrix size. The terms $D_\lambda, D_\ell$ are combinatorial factors for the ordered geometric components, $\Gamma^a_{\lambda,\ell}$ is an intersection number for the supports of the two kernels, and $d_{R,C}(W^a)$ denotes the row/column-sparsity of $W^a$. 

With this result we can recover a range of prior results, and determine the scaling of $\alpha_{\m K}$ with the decay profiles of the two kernels. In particular in Appendix~\ref{app:HBEs}, we find for our present model that
\begin{equation}
    \alpha_{\tilde{\m{F}}_2} = O \left ( \sqrt{\NF \NH} \right ),
\end{equation}
which is a substantial improvement over the direct sparse block-encoding of $\tilde{\m{F}}_2$, and since $\|\tilde{\m{F}}_2\|= O(\sqrt{\NF \NH})$, the hierarchical method gives an asymptotically optimal block-encoding of the nonlinear term.

\subsubsection{Oracle-constructions for the ODE solver}

We run the quantum ODE solver algorithm from Ref.~\cite{berry2017quantum,jennings2023cost}, as described in more detail in the next section. The quantum algorithm requires access to two unitaries:
\begin{enumerate}
    \item A $(\alpha_{\bar{\m{A}}}, a_{\bar{\m{A}}}, \epsilon_{\bar{\m{A}}})$ block-encoding of the matrix $\bar{\m{A}}$.
    \item A unitary $\m{U}_{\m{z}_0}$ preparing the vector $\ket{\m{z}_0} \propto [ \bar{\m{u}}_0, \bar{\m{u}}^{\otimes 2}_0, \dots, \bar{\m{u}}_0^{\otimes \NC}]$
\end{enumerate}
We discuss each in turn. 

\emph{Block-encoding of $\bar{\m{A}}$.} This construction is standard~\cite{liu2021efficient}, so we shall only sketch it here. First, one embeds the direct sum space $V$, defined in Eq.~\eqref{eq:V_definition}, into an $\NC N^{\NC}$ dimensional space $\mathcal{H}$, with $N:=(\NH+1)^3(2\NF+1)^3$, as described in Sec.~\ref{sec:encoding}.  Then, we shall use the  $(\alpha_{\bar{\m{F}}_1}, a_1, \epsilon)$ block-encoding $\m{U}_{\bar{\m{F}}_1} $ of $\bar{\m{F}}_1$, and the $(\alpha_{\bar{\m{F}}_2}, a_2, \epsilon)$ block-encoding $\m{U}_{\bar{\m{F}}_2}$ of $\bar{\m{F}}_2$ constructed in the previous section. The matrix $\bar{\m{A}}$ can be decomposed as $\bar{\m{A}} = \bar{\m{A}}_1 + \bar{\m{A}}_2$. While $\bar{\m{A}}_1$ has the form
\begin{align}
  \bar{\m{A}}_1 = \sum_{i=1}^{\NC} \ketbra{i}{i} \otimes \mathcal{F}_1^i   
\end{align}
where
$$\mathcal{F}_1^i = \underbrace{\bar{\m{F}}_1 \otimes \m{I} \otimes \dots \otimes \m{I}}_{i \textrm{ terms}} + \mathrm{shifts}.$$
 For each block, the term  $\mathcal{F}_1^i$ can be block-encoded as $\m{U}_{\bar{\m{F}}_1} \otimes \m{I} \otimes \dots \otimes \m{I}$, and the shifted terms by permutations of this. Block $i$ can then be block-encoded by an LCU, where the PREP is a uniform state preparation over $i$ indexes, and the SELECT involves controlled permutations of $\m{U}_{\bar{\m{F}}_1} \otimes \m{I} \otimes \dots \otimes \m{I}$. This unitary block-encoding is further wrapped in a `block-diagonal LCU', see e.g. Lemma~2 \cite{jennings2025end}, which introduces a control over the label $i$ and simple $1$-qubit state preparations. Note that $\m{U}_{\bar{\m{F}}_1} \otimes \m{I} \otimes \dots \otimes \m{I}$ remains fixed and uncontrolled. This construction allows one to construct a block-encoding of $\bar{\m{A}}_1$ with prefactor $\NC \alpha_{\bar{\m{F}}_1}$, using $a_1+ O( \log \NC)$ ancillary qubits and a single application of $\m{U}_{{\bar{\m{F}}_1}}$. A similar construction allows one to block-encode $\bar{\m{A}}_2$, and then $\bar{\m{A}}_1 + \bar{\m{A}}_2$ are summed via a final LCU. The final result is that we can construct a block-encoding of $\bar{\m{A}}$ with
 \begin{align}
 \nonumber
     \alpha_{\bar{\m{A}}} =&\; \NC(\alpha_{\bar{\m{F}}_1} + \alpha_{\bar{\m{F}}_2}) = O\big(\log(1/\epsilon) N_{\m{F}} \sqrt{N_\m{H}}\big), \\
     \nonumber
     a_{\bar{\m{A}}}  =&\; \max\{a_1,a_2\} + O(\log(\NC)) = O(\log \log(1/\epsilon)),\\
     \epsilon_{\bar{\m{A}}} =&\; O(\epsilon),
     \label{eq:barAcost}
 \end{align}
 using a single application of $\m{U}_{\bar{\m{F}}_1}$, $\m{U}_{\bar{\m{F}}_2}$ and $\operatorname{polylog}(N)$ extra gates. 

Owing to the block-encoding constructions in Sec.~\ref{sec:block-encodingsF1F2}, this block-encoding of $\bar{\m{A}}$ is asymptotically optimal up to $\log(1/\epsilon)$ factors. To see this, recall that $\alpha_{\bar{\m{A}}} \geq \|\bar{\m{A}}\|$ by definition of block-encoding. Furthermore, since $\bar{\m{F}}_1$ is the top-left block of $\bar{\m{A}}$, Eqs.~\eqref{eq:F1tildelowerbound}-\eqref{eq:Frescalings} imply
\begin{align}
\alpha_{\bar{\m{A}}} \geq \|\bar{\m{A}}\| \geq \| \bar{\m{F}}_1\| = \|\tilde{\m{F}}_1\| = \Omega(\NF\sqrt{\NH}).     
\end{align} 
It follows that any block-encoding construction has a rescaling lower bounded as $\Omega(N_{\m{F}} \sqrt{N_\m{H}})$. Therefore our block-encoding of the Carleman matrix has asymptotically optimal scaling up to $\log(1/\epsilon)$ factors, and in particular it has optimal scaling in the resolution parameters $(\NF,\NH)$.

\emph{State preparation for $\ket{\m{z}_0}$}. We assume knowledge of~$\| \m{u}_0\|$, as well as a unitary involving $\operatorname{polylog}(N)$ gates preparing the band-limited initial condition
\begin{align}
\label{eq:Uu0}
   \m{U}_{\m{u}_0} \ket{0} = \ket{\m{u}_0},
\end{align}
where $\ket{\m{u}_0}$ denotes the quantum state whose amplitudes in the computational basis encode the entries of $\m{u}_0/\| \m{u}_0\|$. These correspond to the normalized Fourier-Hermite expansion of the initial condition. For our illustrative state preparations, the Fourier mode preparation is trivial, while the localized initial condition is a Gaussian profile (see Appendix~\ref{appendix:initialdata}), which can be efficiently prepared, for example, via rejection sampling techniques~\cite{lemieux2024quantum}.

Having access to the unitary in Eq.~\eqref{eq:Uu0}, we can apply the construction in the proof of Lemma~5 in Ref.~\cite{liu2021efficient} to show that the unitary $\m{U}_{\ket{\m{z}_0}}$ can be realized via $O(\NC \operatorname{polylog}(N))$ gates, which means $O(\log(1/\epsilon) \operatorname{polylog}(N))$ gates. 

\subsubsection{Coherent history-state output}

We showed how to rephrase the nonlinear Vlasov-Poisson equation as a linear set of ODEs \eqref{eq:CarlemanODE} in a larger space while incurring a controlled error. The next step is to construct a coherent encoding of the plasma dynamics over the time-interval $[0,T]$. Specifically, we output an $O(\epsilon)$--approximation to the Carleman history state
\begin{align}
\label{eq:historywitherror}
\ket{\chi} = \frac{1}{\mathcal{N}_\chi} \sum_{s=0}^{M-1} \|\bar{\m{z}}(s \Delta t)\| \ket{\bar{\m{z}}(s \Delta t)} \ket{s},
\end{align}
where the last register is a clock register, and $\mathcal{N}_\chi$ is a normalization factor. The Carleman history state encodes discrete snapshots of the plasma data (i.e., the plasma state in Fourier-Hermite basis, and its tensor powers up to~$\NC$) at $M$ times $t=0, t=\Delta t, \dots, t=M\Delta t = T$.  This state will allow us to extract an estimate of the target plasma observable for Problem~\ref{problem}.

The Carleman history state is obtained from the solution of a linear system of equations, derived from the linear Carleman ODE. This linear system of equations is obtained as follows. The solution to Eq.~\eqref{eq:CarlemanODE} can be formally written as 
\begin{equation}
\label{eq:ODEsolution}
    \bar{\m{z}}(t) = \ee^{\mathcal{\bar{\mathcal{A}}} t} \bar{\m{z}}^0.
\end{equation}
By discretizing time, we can recursively approximate the solution at time $t=m\Delta t$ advanced from that at time $t=(m-1)\Delta t$ via a Taylor series truncated at degree $k$:
\begin{equation}
\label{eq:recursive}
    \bar{\m{z}}(m\Delta t) \approx \bar{\m{z}}^m := \m{T}_k(\bar{\mathcal{A}}\Delta t) \bar{\m{z}}^{m-1}, \quad m=1, \ldots, M,
\end{equation}
where 
\begin{align}
\label{eq:truncation}
    \m{T}_k(x) = \sum_{j=0}^k \frac{x^j}{j!}.
\end{align} 
The above recursive relations can be encoded as a linear system of equations~\cite{berry2022quantum},
\begin{align}
\label{eq:linearsystemembedded}
    \m{L} \bar{\m{y}} = \bar{\m{b}},
\end{align}
where the matrix $\m{L}$ is given by
\begin{align}
\label{eq:L}
    \m{L} = \m{I} -\sum_{m=0}^{M-1} \ketbra{m+1}{m} \otimes \m{T}_k(\bar{\mathcal{A}}\Delta t).
\end{align}
The solution vector $\bar{\m{y}}$ and $\bar{b}$ have the structure
\begin{align}
\label{eq:Lextra}
\bar{\m{y}} = [ \bar{\m{z}}^0, \bar{\m{z}}^1, \dots, \bar{\m{z}}^m, \dots, \bar{\m{z}}^M], \quad 
    \bar{\m{b}} = [ \bar{\m{z}}^0, \underbrace{\m{0}, \dots,\m{0}}_{M \textrm{ times}}]. 
\end{align}
The quantities $\Delta t$ and $k$ are free parameters, and we now choose
\begin{enumerate}
    \item A timestep $\Delta t \leq 1/\alpha_{\bar{\m{A}}} \le 1/\|\bar{\m{A}}\|$.
    \item A Taylor series truncation $k$ that satisfies
    \begin{align}
    (k+1)! \geq \frac{M \ee^3 \|\bar{\m{z}}_0\|}{\epsilon}.
\end{align}
\end{enumerate}
By using Lemma 3.3 in Ref.~\cite{jennings2025quantum}, we have that $\|\bar{\m{z}}_0\| \geq \sup_{t \in [0,T]} \| \bar{\m{z}}(t)\|$. Hence, an application of  Lemma 3 in Ref.~\cite{jennings2023cost} gives that the chosen $k$ implies 
\begin{align}
\label{eq:timediscretization}
    \| \bar{\m{z}}^m - \bar{\m{z}}(m\Delta t) \| \leq \epsilon \quad \textrm{for every } m=0,\dots, M-1.
\end{align}
Therefore, the solution vector $\bar{\m{y}}$ provides an $O(\epsilon)$ accurate approximation to the history state $\ket{\chi}$ in Eq.~\eqref{eq:historywitherror}, which encodes the plasma dynamics.

The above linear system can be solved on a quantum computer using the linear solvers~\cite{costa2022optimal, jennings2025randomized, dalzell2024shortcut}, that output a state $O(\epsilon)$ close to the history state $\ket{\chi}$ at a cost of 
\begin{align}
\label{eq:linearsolvercost}
    \tilde{O}(\alpha_{\m{L}} \| \m{L}\|\| \m{L}^{-1}\| \log 1/\epsilon)
\end{align}
applications of (controlled) $\epsilon$-block encodings of the matrix $\m{L}$ with scaling factor $\alpha_{\m{L}}$, as well as (controlled) state preparations for~$\ket{\bar{\m{b}}}$. Also note that one can replace $\m{L}(\bar{\mathcal{A}})$ in Eq.~\eqref{eq:linearsolvercost} with the non-embedded linear system $\m{L}(\bar{\m{A}})$), see Appendix~\ref{sec:quantumODEsolver}. It remains to determine the gate-complexity of each component of the algorithm.

\emph{State preparation of $\ket{\bar{\m{b}}}$}: The state preparation boils down to the preparation of $\ket{\bar{\m{z}}_0}$. This involves $\operatorname{polylog}(N) \log\tfrac{1}{\epsilon}$ gates, as was discussed in the last section. 

\emph{Block-encoding of $\m{L}$}: The matrix $\m{L}$ can be constructed as a linear combination of unitaries:
\begin{align}
    \m{U}_{\m{L}} = \m{I} - \m{S} \otimes \m{U}_{\m{T}_k(\bar{\mathcal{A}}\Delta t)}.
\end{align}
Here $\m{S}$ is a quantum adder, adding $+1$ on a single qubit ancilla initialized in zero and the $\lceil \log(M)\rceil$ qubits of the clock register. This block-encodes $\sum_{m=0}^{M-1} \ketbra{m+1}{m}$ with rescaling  factor $1$, using a single ancilla qubit.  $\m{U}_{\m{T}_k(\bar{\mathcal{A}}\Delta t)}$ is instead a block-encoding of $\m{T}_k(\bar{\mathcal{A}}\Delta t)$, which can be realized via LCU using $\lceil \log k \rceil = \lceil \log \log (\alpha_{\bar{\m{A}}}T/ \epsilon) \rceil$ auxiliary qubits. The LCU involves 
\begin{align}
\label{eq:koverhead}
    k = O(\log (\alpha_{\bar{\m{A}}}T/ \epsilon) )
\end{align} 
applications of a block-encoding of $\bar{\mathcal{A}}$, which we have access to as shown in the last section. Note that the corresponding PREP of a state with amplitudes proportional to $1/j!$ can be realized with at most $O(\log (\alpha_{\bar{\m{A}}}T/ \epsilon))$ gates. The block-encoding prefactor of $\m{L}$ in this construction is 
\begin{align}
\label{eq:omegaL}
\alpha_{\m{L}} =O(\ee^{\|\bar{\mathcal{A}}\| \Delta t}) = O(1)
\end{align}

\emph{Bounding $\| \m{L}^{-1}\|$}: We show in Appendix~\ref{sec:quantumODEsolver} that, as long as $\mu(\bar{\m{A}})<0$ we have
\begin{align}
\label{eq:Linversebound}
  \| \m{L}^{-1}\| = O\left(\frac{\alpha_{\bar{\m{A}}}}{-\mu(\bar{\m{A}})}\right).   
\end{align}
Note the absence of a dependence on $T$, as expected from fast-forwarding results for stable systems~\cite{jennings2023cost, an2026fast}. Also note that we showed in Eq.~\eqref{eq:Carlemanlognorm} that $-\mu(\bar{\m{A}})$ is positive and bounded away from zero under our $R_{\m{P}}$-number condition. 

Now combine the cost in Eq.~\eqref{eq:linearsolvercost} with the estimates~\eqref{eq:omegaL}-\eqref{eq:Linversebound} and the overhead~\eqref{eq:koverhead}. We get that the number of times we need to apply the block-encodings of $\bar{\mathcal{A}}$ and the unitary preparing $\ket{{\bar{\m{z}}_0}}$ is 
\begin{align}
    \tilde{O}\left(\alpha_{\bar{\m{A}}} \log \left(\frac{\alpha_{\bar{\m{A}}}T}{ \epsilon}\right) \log \frac{1}{\epsilon} \right).
\end{align}
Using the results in Eq.~\eqref{eq:barAcost}, and including the gate overheads~\cite{berry2022quantum}, the gate complexity is
\begin{align}
    \tilde{O}\left(\NF\NH^{1/2}\operatorname{polylog} \left(\frac{T}{ \epsilon}\right) \operatorname{polylog}\left(\frac{1}{\epsilon}\right)\right).
\end{align}
The total number of qubits required is dominated by the encoding of the linear system to be solved
\begin{align}
    \tilde{O}(\log(\NF \NH^{1/2} T) \log(1/\epsilon)).
\end{align}

\subsubsection{Information-extraction from the history state}

To provide a solution to the benchmark Problem~\ref{problem}, we now estimate the time-averaged kinetic energy $\langle \mc{K} \rangle$ over the time interval $[0,T]$. The quantum algorithm outputs an approximate history state, which encodes discrete data sampled at a finite number $M$ of time-steps, where $T=M\Delta t$, and the time-step $\Delta t$ obeys $\alpha_{\bar{\m{A}}}\Delta t <1$. Taking a uniform sample of the kinetic energy over these $M$ points provides a convergent estimate of $\langle \mc{K} \rangle$ as $M\rightarrow \infty$. However, this first-order quadrature incurs an additional $O(1/\epsilon)$ cost. To avoid this, we instead use a local Taylor approximation for the integral of the full Carleman ODE solution over a small interval. This avoids any additional factors of $1/\epsilon$ in the overall algorithm complexity and gives an approximation $\overline{\mc{K}}_k$ for any order--$k$ Taylor truncation. 

In Appendix~\ref{app:quadrature-error} we show that for an order--$k$ Taylor truncation, the quadrature $\overline{\mc{K}}_k$ approximates the target quantity with error
\begin{equation}
    \left|\langle \mc{K} \rangle-\overline{\mc{K}}_{k}\right| =  O(\epsilon),
\end{equation}
when $k = \Theta ( \log (1/\epsilon)/\log\log(1/\epsilon))$. This scaling coincides with the constraints of the quantum subroutine to generate the history state. It remains to be shown how to output an estimate of $\overline{\mc{K}}_k$ from the coherent quantum output. 

The quantum algorithm outputs an $\epsilon$--approximation to $\ket{\chi}$ in Eq.~\eqref{eq:historywitherror}, which in turn encodes $\bar{\m{z}}(t)$ over the $M$ discrete times in $[0,T]$. The data $\bar{\m{z}}(t)$ in turn provide an $\epsilon$--approximation to the exact dynamics of the Galerkin system, and from this we wish to estimate an observable defined in the original plasma variables. To realize this, we transform the observable into the Lyapunov frame (via the construction from Sec.~\ref{sec:block-encodingsF1F2}). However, we do not post-select on the single-copy data, but instead work on the full history state $\ket{\chi}$, which provides a sharper estimate of the integrated single-copy observable.

In Appendix~\ref{app:quadrature-error}, we show that for any observable $\mc{K}(t)$ that can be expressed as a linear functional of the vector $\bar{\m{z}}(t)$ in the form $\bar{\ell}_{\mc{K}}^\dagger \bar{\m{z}}(t)$, we may define a quadrature approximation of the time average of $\mc{K}(t)$ over $[0,T]$ of the form
\begin{equation}
    \overline{\mc{K}}_k := \bar{\ell}_{\mc{K},k}^\dagger \left (\frac{1}{M}\sum_{s=0}^{M-1}\bar{\m{z}}(sh) \right ),
\end{equation}
where the vector $\bar{\ell}_{\mc{K},k}$ is given by
\begin{equation}
    \bar{\ell}_{\mc{K},k} \coloneqq \m{R}_k(\bar{\m{A}}\Delta t)^\dagger \bar{\ell}_{\mc{K}},
\end{equation}
and 
\begin{equation}
    \m{R}_k(\bar{\m{A}}\Delta t) := \sum_{q=0}^{k} \frac{ (\bar{\m{A}}\Delta t)^q}{(q+1)!}
\end{equation}
is a truncated Taylor term that approximates the integrator on a single time-step of size $\Delta t$. In particular, in the context of the Carleman embedding the approximate integrator does not just have support on the single-copy sector but couples this sector to nonlinear terms. Because of these nonlinear couplings we improve over sampling simply on the single-copy sector, and $\m{R}_k$ rapidly converges to the correct integrator as we increase the truncation scale. We obtain an $O(\epsilon)$ error when $k=\tilde{\Theta}(\log (1/\epsilon))$.

In the case of the kinetic energy of the plasma perturbation, the details of the vector $\bar{\ell}_{\mc{K},k}$ are as follows. The Lyapunov-transform for the exact dynamics is defined as
\begin{equation}
    \bar{\m u}(t)
    =
    \gamma \m{P}^{1/2}\m u(t),
    \qquad
    \gamma
    :=
    \frac{1+R_{\rm thr}}
    {2\|\m{u}_0\|_2\sqrt{1+1/\tau}}.
\end{equation}
Therefore, we have $\m{u} =\gamma^{-1} \m{P}^{-1/2}\bar{\m u}$, and in what follows we assume $\m{P}^{-1/2}$ is extended to the full Carleman space, where it only acts non-trivially on the single-copy sector. Thus the kinetic-energy functional on the full Carleman space is given by the vector
\begin{equation}
    \bar{\ell}_{\mc K} = \gamma^{-1}\, \m{P}^{-1/2}\ell_{\mc K}.
\end{equation}
and in the single-copy sector the vector $\ell_{\mc K}$ is given by (see Appendix~\ref{app:average-kinetic-energy})
\begin{equation}
     (\ell_{\mc{K}})_{\v{n},\v{\alpha}}=\frac{1}{\sqrt{2}L^{3/2}}\v 1_{\{\v{n}=\v{0}\}}\sum_{a=1}^3\v 1_{\{\v{\alpha}=2\v{e}_a\}} .
\end{equation}
Since \(P^{-1/2}\) acts nontrivially only on the density Hermite modes
\(\v\alpha=\v 0\), it acts trivially on this vector. Therefore,
\[
    \|\bar{\ell}_{\mc K}\|
    =
    \frac{\sqrt3}{ \sqrt{2} \gamma L^{3/2}} .
\]

To obtain $\overline{\mc{K}}_k$ we estimate a state overlap between $\ket{\chi}$ and a state $\ket{\overline{\mc{K}}_k}$, defined as
\begin{equation}
    \ket{\overline{\mc{K}}_k} := \ket{\bar{\ell}_{\mc{K},k}} \otimes \frac{1}{\sqrt{M}}\sum_{s=0}^{M-1} \ket{s}
\end{equation}
where $\ket{\bar{\ell}_{\mc{K},k}} = \bar{\ell}_{\mc{K},k} / \|\bar{\ell}_{\mc{K},k}\|$. Therefore, 
\begin{align}
    \braket{\overline{\mc{K}}_k}{\chi} = \frac{\sqrt{M}}{\|\bar{\ell}_{\mc{K},k}\| \mathcal{N}_\chi} \overline{\mc{K}}_k.
\end{align}
This implies that an estimate of the overlap $\braket{\overline{\mc{K}}_k}{\chi }$ and the normalization $\mathcal{N}_\chi$ together yield an estimate of $\overline{\mc{K}}_k$, and thus $\langle \mc{K} \rangle$. 

We already have an efficient quantum circuit $\m{U}_\chi$ to output an $O(\epsilon)$ approximation to the coherent state $\ket{\chi}$, and so it remains to efficiently prepare the state $\ket{\overline{K}_k}$. The vector $\bar{\ell}_{\mc{K}}$ is sparse and its components readily computed with coherent arithmetic to yield a state $\ket{\bar{\ell}_{\mc{K}}} = \bar{\ell}_{\mc{K}} / \|\bar{\ell}_{\mc{K}}\|$. We construct $\ket{\bar{\ell}_{\mc{K},k}}$ by implementing a unitary block-encoding of $\m{R}_k(\bar{\m{A}}\Delta t)^\dagger$ on $\ket{\bar{\ell}_{\mc{K}}}$ to give us $\ket{\bar{\ell}_{\mc{K},k}}$ on post-selection of success and $\ket{\overline{\mc{K}}_k}$ follows with an additional uniform state preparation on the time register. To achieve this, in Appendix~\ref{app:state-prep-ell} we show that a unitary block-encoding for $\m{R}_k(\bar{\m{A}}\Delta t)$ can be constructed with $\tilde{O}(\log(1/\epsilon))$ calls to the block-encoding for $\bar{\m{A}}$. We then show that the state preparation of $\ket{\overline{\mc{K}}_k}$ can be achieved with success probability $1-\delta$, using fixed-point amplitude amplification~\cite{yoder2014fixed} and $O(\log(1/\delta))$ calls to the block-encoding for $\m{R}_k(\bar{\m{A}}\Delta t)$. This provides reliable state preparation of $\ket{\overline{\mc{K}}_k}$ via a circuit $\m{U}_{\overline{\mc{K}}}$.

The state overlap $\braket{\overline{\mc{K}}_k}{\chi}$ can then be estimated to additive error $O(\epsilon)$ via amplitude estimation~\cite{brassard2000quantum}, and uses $O(1/\epsilon)$ calls to the unitaries $\m{U}_{\overline{\mc{K}}}$ and $\m{U}_{\chi}$ and their adjoints. The normalization $\mathcal{N}_\chi$ can be estimated to lie inside an interval $[(1+\epsilon)^{-1}\mathcal{N}_\chi, (1+\epsilon)\mathcal{N}_\chi] $ using the method given in~\cite{dalzell2024shortcut}, and requires $O(1/\epsilon)$ calls to $\m{U}_{\m{L}}$. This, together with the quadrature error bound and the fact that $\gamma = \Omega(1)$, implies an $O(\epsilon)$ estimate of $\langle \mc{K}\rangle$ with complexity $\tilde{O}(\NF \NH^{1/2}/\epsilon)$ as claimed.

 \section{Conclusions and outlook}

We have presented an end-to-end quantum algorithm for a weakly nonlinear kinetic plasma model, including a priori convergence guarantees, block-encodings of the relevant operators, and a concrete observable-extraction procedure. The model is deliberately simplified, but it contains two obstacles that are central to quantum algorithms for nonlinear plasma physics: the need to embed nonlinear dynamics into a linear quantum computation with controlled error, and the need to encode dense nonlocal field interactions without losing the putative speedup. The first obstacle is addressed through a plasma free energy Lyapunov transform and a corresponding Carleman convergence criterion, building on the broader theory of Carleman embeddings and quantum algorithms for nonlinear differential equations~\cite{forets2017explicit,liu2021efficient,krovi2022improved, jennings2025quantum}. The second is addressed through a hierarchical block-encoding~\cite{nguyen2022block} that converts spatial decay of the screened interaction into an improved normalization. Together these ingredients yield an exponential memory saving and a superquadratic improvement in time relative to a Fourier--Hermite spectral solver for the same truncated problem~\cite{canuto2006spectral}.

This work should be viewed as an initial step rather than a complete algorithm for practical plasma simulation. The physical model considered here is electrostatic, unmagnetized, weakly nonlinear, near Maxwellian, and uses a simple relaxation operator. Many important plasma regimes require additional structure: magnetic geometry, electromagnetic fluctuations, multiple kinetic species, more realistic collision operators, sources and sinks, boundaries, sheaths, and stronger turbulent nonlinearities~\cite{brizard2007foundations,krommes2012gyrokinetic,dimits2000comparisons,jenko2000electron,stangeby2000plasma,helander2005collisional}. Incorporating these ingredients is essential for determining whether the mechanisms identified here can survive in models closer to those used in fusion, space, and laboratory plasma physics.

A central open problem is to enlarge the certified nonlinear regime that can be resolved by our algorithm. The Lyapunov $R$-number condition $R_{\m{P}}<1$ used here is a sufficient condition; it should not be interpreted as the true physical or algorithmic boundary of applicability. It certifies convergence for a rigorously controlled weakly nonlinear window, but it may be conservative for at least three reasons: it uses worst-case operator-norm estimates, it does not exploit detailed structure of particular initial data, and it treats transient growth through global stability bounds rather than trajectory-dependent estimates. Future work could extend the range of amplitudes that can be simulated by constructing sharper Lyapunov functionals, problem-specific pseudospectral bounds, or numerical methods. Another route might be via nonlinear interaction picture methods~\cite{jennings2026quantum}. Determining this practical nonlinear threshold is essential for assessing whether the approach can move from a rigorous perturbative benchmark toward plasma regimes of direct physical interest.

There is also substantial algorithmic work left to do. The present algorithm estimates a scalar spacetime-averaged observable; more general diagnostic observables may require new state-preparation and data-extraction methods. It will also be important to sharpen the block-encoding normalizations, reduce precision overheads, understand the practical constants in the Carleman truncation, and identify broader classes of plasma nonlinearities for which Lyapunov-controlled embeddings are efficient. Another important direction is to compare against more refined classical algorithms, including adaptive spectral methods, particle methods, semi-Lagrangian schemes, and fast algorithms for nonlocal field solves~\cite{canuto1988spectral,cheng1976integration,filbet2001conservative,greengard1987fast}.

The broader lesson is that quantum advantage for plasma simulation cannot be established by dimension counting alone. It depends on the choice of representation, the stability structure of the equations, the cost of loading physical operators, and the cost of extracting useful classical information. The Fourier--Hermite representation sets a strong classical baseline, and the results here show that superquadratic quantum improvements can nevertheless persist for a controlled nonlinear kinetic problem. Extending this conclusion to richer and more realistic plasma physics problems remains an exciting open challenge.

\bigskip

{\textbf{Author contributions:} Authors are listed alphabetically. BB performed the stability analysis and contributed to the development of the Vlasov-Poisson model. DJ developed and implemented the hierarchical block-encoding theory, and assisted with the stability analysis, model development and information extraction. ML developed a preliminary approach to the Vlasov-Poisson equations via Carleman linearization and assisted with the stability analysis, hierarchical block-encoding and information extraction. SP provided theoretical support in choosing the plasma model. BB, DJ, ML, and SP wrote the paper.
\bigskip

\textbf{Acknowledgements:} We thank Carys Harvey for helpful observations on the hierarchical block-encoding theory. We thank Angus Kan, Cristian Cortes and Sukin Sim for helpful discussions on the compilation of quantum algorithms for plasma. We also thank Ryoji Anzaki for discussions on plasma dynamics and on the physical consistency and approximations of the Carleman embedding method. We also thank Ilon Joseph, Frank Graziani, Marta Mauri, Tamás Vaszary and Tomohisa Nagata, for many helpful discussions on plasma physics.

\bibliographystyle{apsrev4-2}
\bibliography{ Bibliography}

\clearpage

\appendix

\onecolumngrid

\phantomsection
\pdfbookmark[1]{Appendices}{appendices}
\begin{center}
{\Large\bfseries Appendices\par}
\end{center}

\begingroup
\appendixonlytocfalse
\tableofcontents
\endgroup

\appendixonlytoctrue

\clearpage

\makeatletter
\renewcommand*{\theHequation}{appendix.\Alph{section}.\arabic{equation}}
\renewcommand*{\theHparentequation}{appendix.\Alph{section}.\arabic{parentequation}}
\makeatother

\section{Notation}
\label{app:notation}

\begin{table}[h!]
\centering
\setlength{\tabcolsep}{8pt}
\renewcommand{\arraystretch}{1.0}
\begin{tabular}{cll }
\toprule
& \textbf{Symbol} & \textbf{Description} \\
\midrule
\multirow{11}{*}{\rotatebox{90}{\textbf{Plasma system}}}
& \v{x} & Position in 3D space \\
& \v{v} & Velocity in 3D space \\
&   $t$             & Time (generic) \\
& $f(\v{x},\v{v},t)$ & Ion phase space distribution function \\
& $f_0(\v{x},\v{v},t)$ & Ion Maxwellian phase space distribution \\
& $h(\v{x},\v{v},t)$ & Ion phase space  perturbation about Maxwellian background, $f-f_0$ \\
& $g(\v{x},\v{v},t)$ & Ion phase space relative perturbation about Maxwellian background, $(f-f_0)/f_0$ \\
&   $q_e$             & Electron charge  \\
&   $\phi(\v{x},t)$         &Electrostatic potential \\ 
&   $\kb$   & Boltzmann's constant \\
&   $\epsilon_0$   & Permittivity of free space \\
&   $T_e$             & Electron temperature \\
& $\nu$ & Collisional frequency \\
&   $\varphi(\v{x},t)$             & Ratio of electrostatic potential energy to electron thermal energy $|q_e|\phi(\v{x},t)/(k_B T_e)$ \\
&   $T_{\mathrm{ion}}$         & Ion temperature \\
& $\tau$ & Ion to electron temperature, $T_{\mathrm{ion}}/T_e$  \\
&   $m$           & Ion mass \\
&   $n_0$             & Ion equilibrium density \\
&   $n_{e,0}$             & Electron equilibrium density \\
& $
\lambda_{\mathrm{D},e}$ & Electron Debye length, $\sqrt{\frac{\epsilon_0 k_{\mathrm{B}}T_e}{n_{e,0}q_e^2}}$ \\
& $\ldion$ & Ion Debye length, $\sqrt{\frac{\epsilon_0\kb T_{\mathrm{ion}}}{n_{0}q_e^2}}$ \\
& $\vth$ & Ion thermal velocity, $\sqrt{\frac{\kb T_{\mathrm{ion}}}{m}}$ \\
\midrule
\multirow{16}{*}{\rotatebox{90}{\textbf{Simulated plasma system}}}
& $T$  & Total simulation time \\
& $L$ & Size of the cubic simulation box \\
& $\mathbb{T}^3_L$
&  Cubic simulation box with periodic boundary conditions (3-torus) \\ 
&$\lvert \v{x}\rvert$  &  Euclidean norm of a three-vector $\vx$ \\
&$\NF$   &           The Galerkin approximation projects onto $(2\NF+1)^3$ Fourier modes \\
&   $\NH$ & The Galerkin approximation projects onto $(\NH+1)^3$ Hermite modes \\
&   $N$ & Total number of modes, $N= (2\NF+1)^3(\NH +1)^3$ \\
&   $\langle \mathcal{K} \rangle $       & Time and space-averaged kinetic energy moment of $g$ \\
&  $\epsilon$ & Additive error tolerance for the estimation of $\langle \mathcal{K} \rangle $ \\
&$\bar{\vx}$ & Position in natural dimensionless units, $\frac{1}{\ldion}\vx$ (bar omitted) \\
& $
\bar{\vv}$ & Velocity in natural dimensionless units, $\frac{1}{\vth}\vv$ (bar omitted) \\
& $\bar{t}$ & Time in natural dimensionless units, $\frac{\vth}{\ldion}t$ (bar omitted) \\
&$\bar{f}$ & Ion distribution function in natural dimensionless units, $\frac{\vth^3}{n_0}f$ (bar omitted) \\
& $\bar{\phi}$ & Electrostatic potential in natural dimensionless units, $\frac{\lvert q_e\rvert}{\kb T_{\mathrm{ion}}}\phi=\frac{1}{\tau}\varphi$ (bar omitted) \\
& $\bar{L}$ & Size of the cubic simulation box in natural dimensionless units, $\frac{1}{\ldion}L$ (bar omitted)\\
& $\bar{\nu}$ & Collisional rate in natural dimensionless units, $ \frac{\nu\ldion}{\vth}$ (bar omitted) \\
& $w(\v{v})$ & Maxwell distribution in natural dimensionless units, $(2\pi)^{-3/2} \exp(-|\v{v}|^2/2)$ \\
& $\v{\xi}_{\vn}$ & Fourier mode wavevector, $ \frac{2\pi}{L}\vn$ for Fourier mode label $\vn \in \mathbb Z^3$  \\
& $\kappa_{\v{n}} = |\v{\xi}_{\v{n}}|^2 + \tau$ & Electrostatic spectral amplitude in Eq.~\eqref{eq:g_mode_hermite} \\
& $g_{\v{n},\v{\alpha}}$ ($\tilde{g}_{\v{n},\v{\alpha}}$) & (Lyapunov-transformed) Ion phase space relative perturbation, Fourier-Hermite component \\
\midrule
\multirow{13}{*}{\rotatebox{90}{\textbf{Quantum algorithm}}}
&   $\NC$               & Carleman truncation order \\
&   $\epsilon_{\mathrm{C}}$        & Carleman truncation error\\
&   $\bar{\m{u}}(t)$, $\bar{\m{u}}^0$ & Vector $\tilde{g}_{\v{n},\alpha}(t)$, suitably rescaled by a factor $\gamma$. Same vector at time $t=0$ \\
&   $\m{F}_1,\m{F}_2$           & Linear and quadratic plasma terms \\
&   $\bar{\m{A}}$, $\bar{\mathcal{A}}$             & Carleman matrix and its square embedding \\
& $\bar{\m{z}}(t)$, $\bar{\m{y}}$ & Carleman vector and Carleman history state \\
& $\Delta t, k$ & Time discretization time step and order of Taylor truncation \\
& $\m{L}$ & Linear system problem constructed from time discretization of the Carleman ODE system \\
&  $\|\m{X}\|_2$, $\| \m{X}\|_{\mathrm{max}}$, $\mu(\m{X})$          & Operator norm, max element and logarithmic norm of matrix $\m{X}$ \\
&   $\alpha_{\m{X}}$          & Block-encoding prefactor of a matrix $\m{X}$ \\
&   $R_{\mathrm{P}}$      & Generalized Lyapunov $R$-number \\
\bottomrule
\end{tabular}
\end{table}

\section{Conventions and definitions}
Here we set out some conventions and theory for our analysis. For spatial degrees of freedom, we work on the $3$-torus $\mathbb{T}_L^3\coloneqq \R^3/(L\mathbb{Z})^3$. On $\mathbb{T}_L^3$, we consider the class of functions 
$$
L^2(\TL^3)\coloneqq \Bigg\{f:\TL^3\to\mathbb{C}: \int_{\TL^3} \lvert f(\vx)\rvert^2\,\mathrm{d}^3 x<\infty  \Bigg\},
$$
namely the space of square-integrable functions on the torus. The Fourier expansion of any element $f$ of this set is given by
\begin{equation}\label{eq:ftofn}
    f(\v{x}) = \frac{1}{L^{3/2}} \sum_{\vn \in \mathbb{Z}^3} \hat{f}_{\vn} \ee^{i \xi_{\vn} \cdot \v{x}},
\end{equation}
where $\v{\xi}_{\vn} := (2 \pi/L ) \vn$, and $\vn \coloneqq (n_1,n_2,n_3)$ with components taking values in the integers. The Fourier coefficients in \eqref{eq:ftofn} can be obtained from
\begin{equation}
\label{eq:Fourier_coeffs}
    \hat{f}_{\vn} \coloneqq \frac{1}{L^{3/2}} \int_{\mathbb{T}_L^3}  f(\v{x}) \ee^{-i\v{\xi}_{\vn} \cdot \v{x}} \,\mathrm{d}^3 x.
\end{equation}
The periodic delta function is given explicitly by
\begin{equation}
    \delta_L^{(3)} (\v{x} - \v{x}') \coloneqq \frac{1}{L^3} \sum_{\vn \in \mathbb{Z}^3} \ee^{i \xi_{\vn} \cdot (\v{x}-\v{x}')}.
\end{equation}
Parseval's identity tells us that
\begin{equation}
    \int_{\mathbb{T}_L^3}  \lvert f(\v{x})\rvert^2\,\mathrm{d}^3 x = \sum_{\vn \in \mathbb{Z}^3} |\hat{f}_{\vn}|^2 .
\end{equation}
The limit $L\rightarrow \infty$ corresponds to
\begin{equation}
    \frac{1}{L^3} \sum_{\vn \in \mathbb{Z}^3} \rightarrow \int_{\R^3} \frac{\mathrm{d}^3\xi}{(2\pi)^3},
\end{equation}
with $L^{-3/2}\hat{f}_{\vn} \rightarrow \hat{f}(\v{k})$. This gives rise to the infinite space Fourier analysis
\begin{align}
    f(\v{x}) =&\; \int \frac{\mathrm{d}^3 \xi}{(2\pi)^3} \hat{f}(\v{\xi}) \ee^{i \v{\xi}\cdot \v{x}} ,\nonumber \\
        \hat{f}(\v{\xi}) =&\; \int \frac{\mathrm{d}^3x}{(2\pi)^3} f(\v{x}) \ee^{-i \v{\xi}\cdot \v{x}}.
\end{align}
We also consider truncating the infinite Fourier series to the set $\mc{Z}_{\NF} \coloneqq \{-\NF, \dots, \NF\}$ to obtain band-limited functions. We define $N_x := 2\NF+1$ as the total number of modes per dimension. This restricts us to a finite-dimensional subspace $L^2_{N_x}(\mathbb{T}_L^3)$ of $L^2(\mathbb{T}_L^3)$ functions on the torus, with finite number of Fourier modes, defined as
\begin{equation}
    L^2_{N_x}(\mathbb{T}_L^3) := \Bigg\{ f \in L^2 (\mathbb{T}_L^3) : f(\v{x}) = \frac{1}{L^{3/2}}\sum_{\vn \in \mc{Z}_{\NF}^3} \hat{f}_{\vn} \ee^{\frac{2\pi}{L} i \vn \cdot \v{x} } \Bigg\}.
\end{equation}
We also define $\Pi_{\NF}$ to be the projector onto this subspace, so that
\begin{equation}
    f(\v{x}) =  \frac{1}{L^{3/2}}\sum_{\vn \in \mathbb{Z}^3} \hat{f}_{\vn} \ee^{\frac{2\pi}{L} i \vn \cdot \v{x} } \mapsto (\Pi_{\NF} f) =  \frac{1}{L^{3/2}}\sum_{\vn \in \mc{Z}_{\NF}^3} \hat{f}_{\vn} \ee^{\frac{2\pi}{L} i \vn \cdot \v{x} }.
\end{equation}

The dimension of the truncated subspace is $\mathrm{dim}(L^2_{N_x}) = N_x^3$. The elements of this set are functions on $\mathbb{T}_L^3$, however we can restrict the domain to the uniform grid $\{\v{x} \in \mathbb{T}_L^3 : \v{x} = \frac{L}{N_x} \v{j}, \quad \v{j} \in \mathbb{Z}_{N_x}^3 \}$. With this restricted domain we now consider
\begin{equation}
    L^2_{N_x}(\mathbb{Z}_{N_x}^3) := \left \{ (f_{\v{j}})= \left (f\left (\frac{L}{N_x}\v{j} \right ) \right) : f_{\v{j}} = \left (\frac{N_x}{L} \right )^{3/2} \frac{1}{N_x^{3/2}}\sum_{\vn \in \mc{Z}_{\NF}^3} \hat{f}_{\vn} \ee^{\frac{2\pi}{N_x} i \vn \cdot \v{j} } \right \},
\end{equation}
where it is understood in the notation on the left that the finite lattice on the torus is isomorphic to $\mathbb{Z}_{N_x}^3$.
This set is therefore the set of finite data vectors obtained by \emph{sampling} the band-limited $L^2$ functions on the torus to the above uniform grid. The sum inside the expression for $f_{\v{j}}$ is given in terms of the characters of the finite group $\mathbb{Z}_{N_x}^3$, and therefore the data $\hat{f}_{\vn}$ can be obtained from $f_{\v{j}}$ via the \emph{discrete} Fourier transform
\begin{equation}
   \hat{f}_{\vn} = \left (\frac{L}{N_x} \right )^{3/2}  \frac{1}{N_x^{3/2}} \sum_{\v{j} \in \mathbb{Z}_{N_x}^3} \ee^{-\frac{2\pi}{N_x} i \vn \cdot \v{j} } f_{\v{j}},
\end{equation}
where we note the additional prefactor required to match data. However, note that $\hat{f}_{\vn}$ are the Fourier data defined from the Fourier series on the continuum torus, and so we can equally define $\tilde{f}_{\vn} := (N_x/L)^{3/2} \hat{f}_{\vn}$, for the Fourier data in the discrete setting, to avoid such prefactors. Performing this sampling on a general $L^2$ function can give rise to aliasing effects, which we address later in Appendix~\ref{app:BE-of-nonlinear-term}.

\newpage

\section{Galerkin approximation for the plasma system}

This appendix derives the Fourier--Hermite Galerkin ODE and the Lyapunov-transformed matrices used in the main text.

\subsection{Fourier--Hermite basis transformation}
\label{app:FHbasis}

In this section we present the matrices that define the plasma dynamics in the Fourier--Hermite basis. This is central to our work, and the system is reduced to a finite-dimensional problem by truncating the Fourier modes to some maximum $\NF$ for each direction and the Hermite modes to some maximum order $\NH$ in each direction~\cite{canuto2006spectral}.

Start from Eqs.~\eqref{eq:gnonlinear}--\eqref{eq:phi_g_single_species}. By Fourier transforming in the spatial coordinates, Eq.~\eqref{eq:phi_g_single_species} becomes
\begin{equation}
\label{eq:helmoltzgfourier}
\kappa_{\vn}\hat\phi_{\vn}
=
\int_{\R^3}\hat g_{\vn}(\vv,t)w(\vv)\,\mathrm d^3v.
\end{equation}
Using the Hermite expansion \eqref{eq:Hermite_g} and the orthonormality of the basis defined in Eq.~\eqref{eq:hermite_basis}, we have
\begin{equation}
g_{\vn,\v 0}(t)
=
\int_{\R^3}\hat g_{\vn}(\vv,t)w(\vv)\,\mathrm d^3v.
\end{equation}
Hence, Eq.~\eqref{eq:helmoltzgfourier} becomes
\begin{equation}
\kappa_{\vn}\hat\phi_{\vn}=g_{\vn,\v 0}.
\end{equation}
Thus, the field is determined solely by the $\v{\alpha}=\v 0$ Hermite coefficient.

Let $Q(g)$ denote the sum of the nonlinear terms in \eqref{eq:gnonlinear},
\begin{equation}
Q(g)\coloneqq -(\vv\cdot\nabla_x\phi)g+\nabla_x\phi\cdot \nabla_v g.
\end{equation}
Then,
\begin{align}
\label{eq:Qhat}
\widehat{Q(g)}_{\vn}
=&\; -\frac{1}{L^{3/2}}
\sum_{\substack{\vn',\vn''\in \mathbb{Z}^3 \\ \vn'+\vn''=\vn}}
\vv\cdot (i\v{\xi}_{\vn'})\hat{\phi}_{\vn'}\hat{g}_{\vn''}
+\frac{1}{L^{3/2}}
\sum_{\substack{\vn',\vn''\in \mathbb{Z}^3 \\ \vn'+\vn''=\vn}}
(i\v{\xi}_{\vn'}\hat{\phi}_{\vn'})\cdot\nabla_v\hat{g}_{\vn''}
\nonumber\\
=&\; \frac{1}{L^{3/2}}
\sum_{\substack{\vn',\vn''\in \mathbb{Z}^3 \\ \vn'+\vn''=\vn}}
\sum_{a=1}^3 i\xi_{\vn',a}\hat{\phi}_{\vn'}
\big(-v_a+\partial_{v_a}\big)\hat{g}_{\vn''}.
\end{align}

Hence, Eq.~\eqref{eq:gnonlinear} in Fourier space reads
\begin{align}
\label{eq:ghat_mode_eq_eliminated}
\partial_t \hat g_{\vn}
& =
-i(\v{\xi}_{\vn}\cdot \vv)\hat g_{\vn}
-i(\v{\xi}_{\vn}\cdot \vv)\hat\phi_{\vn}
-\bar{\nu} \hat g_{\vn}
+\widehat{Q(g)}_{\vn}
\nonumber\\
& =
-i(\v{\xi}_{\vn}\cdot \vv)\hat g_{\vn}
-i\frac{\v{\xi}_{\vn}\cdot \vv}{\kappa_{\vn}}
g_{\vn,\v 0}
-\bar{\nu} \hat g_{\vn}
+\widehat{Q(g)}_{\vn}.
\end{align}

We now project onto the Hermite basis of Eq.~\eqref{eq:hermite_basis}. The standard Hermite identities give
\begin{equation}
v_a H_{\v{\alpha}}(\vv)
=
\sqrt{\alpha_a+1}\,
H_{\v{\alpha}+\v e_a}(\vv)
+
\sqrt{\alpha_a}\,
H_{\v{\alpha}-\v e_a}(\vv),
\end{equation}
and
\begin{equation}
\partial_{v_a}H_{\v{\alpha}}(\vv)
=
\sqrt{\alpha_a}\,
H_{\v{\alpha}-\v e_a}(\vv),
\end{equation}
with the convention $H_{\v{\alpha}-\v e_a}=0$ if $\alpha_a=0$. Therefore,
\begin{equation}
\big(-v_a+\partial_{v_a}\big)H_{\v{\alpha}}
=
-\sqrt{\alpha_a+1}\,
H_{\v{\alpha}+\v e_a},
\end{equation}
and hence
\begin{align}
\label{eq:Hermiteladdercombined}
\big(-v_a+\partial_{v_a}\big)\hat{g}_{\vn}
=&\; -\sum_{\v{\alpha}\in \mathbb{N}_0^3}
\sqrt{\alpha_a+1}\,
g_{\vn,\v{\alpha}}
H_{\v{\alpha}+\v e_a}
\nonumber\\
=&\; -\sum_{\v{\alpha}\in \mathbb{N}_0^3}
\sqrt{\alpha_a}\,
g_{\vn,\v{\alpha}-\v e_a}
H_{\v{\alpha}}.
\end{align}

Putting \eqref{eq:Hermiteladdercombined} into \eqref{eq:Qhat} gives
\small
\begin{align}
\label{eq:Qhat2}
\widehat{Q(g)}_{\vn,\v{\alpha}}
=&\; -\frac{i}{L^{3/2}}
\sum_{\substack{\vn',\vn''\in\mathbb{Z}^3\\ \vn'+\vn''=\vn}}
\hat{\phi}_{\vn'}
\sum_{a=1}^3 \xi_{\vn',a}\sqrt{\alpha_a}\,
g_{\vn'',\v{\alpha}-\v e_a}
\nonumber\\
=&\; -\frac{i}{L^{3/2}}
\sum_{\substack{\vn',\vn''\in\mathbb{Z}^3\\ \vn'+\vn''=\vn}}
\frac{1}{\kappa_{\vn'}}g_{\vn',\v 0}
\sum_{a=1}^3 \xi_{\vn',a}\sqrt{\alpha_a}\,
g_{\vn'',\v{\alpha}-\v e_a}.
\end{align}
\normalsize

Projecting \eqref{eq:ghat_mode_eq_eliminated} onto the Hermite basis gives precisely Eq.~\eqref{eq:g_mode_hermite}. Setting $\m{u}_{\vn,\v{\alpha}}\coloneqq g_{\vn,\v{\alpha}}$, we may write the equations in vector form as
\begin{equation}
\dot{\m{u}}=\m{F}_1\m{u}+\m{F}_2(\m{u}\otimes \m{u}),
\end{equation}
where
\begin{align}
(\m{F}_1)_{\vn,\v\alpha|\vn',\v\alpha'}
\coloneqq &\;
-i\delta_{\vn,\vn'}
\sum_{a=1}^3 \xi_{\vn,a}
\big(
\sqrt{\alpha_a'+1}\,\delta_{\v\alpha,\v\alpha'+\v e_a}
+
\sqrt{\alpha_a'}\,\delta_{\v\alpha,\v\alpha'-\v e_a}
\big)
\nonumber\\
&\;
-i\frac{1}{\kappa_{\vn}}\delta_{\vn,\vn'}
\Bigg(\sum_{a=1}^3 \xi_{\vn,a}\delta_{\v\alpha,\v e_a}\Bigg)
\delta_{\v\alpha',\v 0}
-\bar{\nu}\delta_{\vn,\vn'}\delta_{\v\alpha,\v\alpha'},
\label{eq:A_NComponents}
\end{align}
and
\begin{align}
(\m{F}_2)_{\vn,\v{\alpha}|\vn',\v{\alpha}';\vn'',\v{\alpha}''}
=&\;
-\frac{i}{L^{3/2}}\frac{1}{\kappa_{\vn'}}
\delta_{\vn,\vn'+\vn''}\delta_{\v{\alpha}',\v{0}}
\sum_{a=1}^3\xi_{\vn',a}\sqrt{\alpha_a}\,
\delta_{\v{\alpha}'',\v{\alpha}-\v e_a}
\nonumber\\
=&\;
-\frac{i}{L^{3/2}}\frac{1}{\kappa_{\vn'}}
\delta_{\vn,\vn'+\vn''}\delta_{\v{\alpha}',\v{0}}
\sum_{a=1}^3\xi_{\vn',a}\sqrt{\alpha''_a+1}\,
\delta_{\v{\alpha},\v{\alpha}''+\v e_a},
\end{align}
for all $\v{n} ,\v{n}',\v{n}''\in \mathbb{Z}^3$ and $\v{\alpha},\v{\alpha}',\v{\alpha}'' \in \mathbb{N}_0$.

\subsection{Explicit form of the Lyapunov-transformed matrices}
\label{app:quadraticODErescaled}

Here we compute details related to the key matrices $\tilde{\m{F}}_1$ and $\tilde{\m{F}}_2$, which define the truncated plasma problem. In particular, we also compute sparsity and norm properties relevant for their block-encodings.

For the linear part,
\begin{align}
(\tilde{\m F}_1)_{\vn,\v\alpha|\vn',\v\alpha'}
=&\;
-i\delta_{\vn,\vn'}
\sum_{a=1}^3 \xi_{\vn,a}
\big(
\sqrt{\alpha_a'+1}\delta_{\v\alpha,\v\alpha'+\v e_a}
+
\sqrt{\alpha_a'}\delta_{\v\alpha,\v\alpha'-\v e_a}
\big)
\nonumber\\
&\;
-i\delta_{\vn,\vn'}
\bigg(\sqrt{1+\frac{1}{\kappa_{\vn}}}-1\bigg)
\sum_{a=1}^3 \xi_{\vn,a}
\big(
\delta_{\v\alpha,\v e_a}\delta_{\v\alpha',\v0}
+
\delta_{\v\alpha,\v0}\delta_{\v\alpha',\v e_a}
\big)
-\bar{\nu}\delta_{\vn,\vn'}\delta_{\v\alpha,\v\alpha'}.
\label{eq:F1tilde_components}
\end{align}

The sparsity parameters are obtained as follows. To compute the row-sparsity,
fix $\vn,\v\alpha$. Nonzero elements can occur only when $\vn'=\vn$ and
$\v\alpha'\in \{\v\alpha \pm \v e_a,\v\alpha\}$, subject to the Hermite cutoff.
This implies a row-sparsity of at most $7$. The column-sparsity is obtained by
fixing $\vn', \v\alpha'$, and in the same way is at most $7$.

The maximal normed element $\lVert {\tilde{\m{F}}}_1\rVert_{\max}$ is given by
\small
\begin{align}
 \! \!   \lVert{\tilde{\m{F}}}_1\rVert_{\max} &:= \max_{\vn,\v\alpha, \vn',\v\alpha'} \left | (\tilde{\m F}_1)_{\vn,\v\alpha|\vn',\v\alpha'} \right | \nonumber \\
   &= \max_{\vn,\v\alpha,\v\alpha'} \left | \sum_{a=1}^3 \xi_{\vn,a}
\big(
\sqrt{\alpha_a'+1}\delta_{\v\alpha,\v\alpha'+\v e_a}
+
\sqrt{\alpha_a'}\delta_{\v\alpha,\v\alpha'-\v e_a}
\big)+ \bigg(\sqrt{1+\frac{1}{\kappa_{\vn}}}-1\bigg)
\sum_{a=1}^3 \xi_{\vn,a}
\big(
\delta_{\v\alpha,\v e_a}\delta_{\v\alpha',\v0}
+
\delta_{\v\alpha,\v0}\delta_{\v\alpha',\v e_a}
\big)
+i\bar{\nu}\delta_{\v\alpha,\v\alpha'}\right | \nonumber \\
 &\le \max_{\vn,\v\alpha,\v\alpha'} \left [ 3 \xi_{\vn,1}
\big(
\sqrt{\alpha_1'+1}\delta_{\v\alpha,\v\alpha'+\v e_1}
+
\sqrt{\alpha_1'}\delta_{\v\alpha,\v\alpha'-\v e_1}
\big)+ 3\bigg(\sqrt{1+\frac{1}{\kappa_{\vn}}}-1\bigg)
 \xi_{\vn,1}
\big(
\delta_{\v\alpha,\v e_1}\delta_{\v\alpha',\v0}
+
\delta_{\v\alpha,\v0}\delta_{\v\alpha',\v e_1}
\big)
+\bar{\nu}\right ] \nonumber \\
&\le 6\frac{2\pi}{L}\NF\sqrt{\NH+1} + 6 \frac{2\pi}{L} \NF\max_{\vn}\bigg(\sqrt{1+\frac{1}{\kappa_{\vn}}}-1\bigg) +\bar{\nu}\nonumber \\
&\le \frac{12\pi}{L}\NF\sqrt{\NH+1} +  \frac{12\pi}{L} \NF\bigg(\sqrt{1+\frac{1}{\tau}}-1\bigg) +\bar{\nu}.
\end{align}
\normalsize
Thus, $\lVert {\tilde{\m{F}}}_1\rVert_{\max} = O(\frac{\NF}{L} \sqrt{\NH} + \bar{\nu})$.

We next turn to the nonlinear part, which is given by 
\begin{equation}
\label{eq:F2tildesum}
\tilde{\m{F}}_2=\tilde{\m{F}}_2^{(1)}+\tilde{\m{F}}_2^{(2)},
\end{equation}
where $\tilde{\m{F}}_2^{(1)}$ is the bulk term and $\tilde{\m{F}}_2^{(2)}$ is the first-shell correction. Their components are
\begin{align}
(\tilde{\m{F}}_2^{(1)})_{\vn,\v{\alpha}|\vn',\v{\alpha}';\vn'',\v{\alpha}''}
=&\;
-\frac{i}{L^{3/2}}
\frac{1}{\sqrt{\kappa_{\vn'}(1+\kappa_{\vn'})}}
\delta_{\vn,\vn'+\vn''}
\delta_{\v{\alpha}',\v{0}}
(1-\delta_{\v{\alpha}'',\v{0}})
\sum_{a=1}^3
\xi_{\vn',a}\sqrt{\alpha''_a+1}\,
\delta_{\v{\alpha},\v{\alpha}''+\ve_a}
\label{eq:F2tilde_bulk_components}
\end{align}
and
\begin{align}
(\tilde{\m{F}}_2^{(2)})_{\vn,\v{\alpha}|\vn',\v{\alpha}';\vn'',\v{\alpha}''}
=&\;
-\frac{i}{L^{3/2}}
\sqrt{\frac{\kappa_{\vn''}}{\kappa_{\vn'}(1+\kappa_{\vn'})(1+\kappa_{\vn''})}}
\delta_{\vn,\vn'+\vn''}
\delta_{\v{\alpha}',\v{0}}
\delta_{\v{\alpha}'',\v{0}}
\sum_{a=1}^3
\xi_{\vn',a}\,
\delta_{\v{\alpha},\ve_a}.
\label{eq:F2tilde_firstshell_components}
\end{align}
We first note how the magnitude of the components of $\tilde{\m{F}}_2$ behaves in $\vn'$. Specifically we have a factor $\xi_{\vn',a}/ \sqrt{\kappa_{\vn'}( 1+ \kappa_{\vn'})}$. This implies that the components scale like $1/|\vn'|$, and hence have a slow decay as we increase the size of the 3-d lattice of Fourier modes.

Let us next compute the column-sparsity of $\tilde{\m{F}}_2$. We see that for fixed $\vn',\v{\alpha}';\vn'',\v{\alpha}''$ nonzero elements occur for $\vn = \vn'+\vn''$, $\v\alpha \in \{\v\alpha'' + \v{e_a}, \v{e_a}\}$. A careful examination of cases implies that it is $s_{\mathrm{c}}=3$ column-sparse.

The row-sparsity requires more care. We now fix $\vn, \v\alpha$ and identify what values of $\vn',\v{\alpha}';\vn'',\v{\alpha}''$ give non-zero elements. We first see that $\vn'+\vn'' = \vn$ gives non-zero components and so there are at most $(2\NF+1)^3$ possible pairs $(\vn',\vn'')$ for any $\vn$. For the Hermite labels we have $\v\alpha' =\v{0}$ and $\v\alpha'' \in \{\v\alpha - \v{e_a} ,\v{0}\}$ that are involved in non-zero components. Therefore, the row-sparsity is less than or equal to $s_{\mathrm{r}}=3 (2\NF+1)^3$. This implies that the sparse-encoding method carries a large overhead due to the dense columns.

We next compute the maximal normed element, and use the notation $\vn^\downarrow = (n_1,n_2,n_3)$ with $n_1 \ge n_2 \ge n_3 \ge0$. We have that
\small
\begin{align}
    \|\tilde{\m{F}}_2\|_{\max} =&\; \max_{\vn,\v{\alpha};\vn',\v{\alpha}';\vn'',\v{\alpha}''} |(\tilde{\m{F}}_2)_{\vn,\v{\alpha}|\vn',\v{\alpha}';\vn'',\v{\alpha}''}| \nonumber \\
    \le&\;  \frac{1}{L^{3/2}}\max_{\vn^\downarrow, \v\alpha} \left (\frac{\xi_{\vn, 1}\sqrt{\v\alpha_1+1}}{\sqrt{\kappa_{\vn} (1+\kappa_{\vn})}} \right ) + \frac{1}{L^{3/2}}\max_{\vn^\downarrow, \vn'} \left (\frac{\xi_{\vn, 1}\sqrt{\kappa_{\vn'}}}{\sqrt{\kappa_{\vn} (1+\kappa_{\vn})(1+\kappa_{\vn'})}} \right ) \nonumber \\
    \le&\;  \frac{1}{L^{3/2}}\max_{\vn^\downarrow} \left (\frac{(2\pi n_1/L )\sqrt{\NH+1}}{\sqrt{( (2\pi/L)^2|\vn|^2+\tau) (1+( (2\pi/L)^2|\vn|^2+\tau))}} \right )  \nonumber \\
    &\;+\frac{1}{L^{3/2}}\max_{\vn^\downarrow, \vn'} \left (\frac{(2\pi n_1/L)\sqrt{( (2\pi/L)^2|\vn'|^2+\tau)}}{\sqrt{( (2\pi/L)^2|\vn|^2+\tau) (1+( (2\pi/L)^2|\vn|^2+\tau))(1+( (2\pi/L)^2|\vn'|^2+\tau))}} \right ) \nonumber \\
    =&\;  \frac{1}{L^{3/2}}\max_{n_1} \left (\frac{(2\pi n_1/L )\sqrt{\NH+1}}{\sqrt{( (2\pi/L)^2n_1^2+\tau) (1+( (2\pi/L)^2n_1^2+\tau))}} \right ) + \nonumber \\
    &\;+\frac{1}{L^{3/2}}\max_{n_1, \vn'} \left (\frac{(2\pi n_1/L)\sqrt{( (2\pi/L)^2|\vn'|^2+\tau)}}{\sqrt{( (2\pi/L)^2n_1^2+\tau) (1+( (2\pi/L)^2n_1^2+\tau))(1+( (2\pi/L)^2|\vn'|^2+\tau))}} \right ) .
\end{align}
\normalsize
The maximization over $n_1$ in both terms reduces to computing
\begin{equation}
   M_L := \max_{0\le m \le \NF} \frac{km}{\sqrt{ (k^2m^2+\tau)(1+ (k^2m^2 +\tau))}},
\end{equation}
with $k = 2\pi/L$. The maximization over $\vn'$ is of the form
\begin{equation}
    R_L:=\max_{\vn' \in \mathbb{Z}_{\NF}^3} \sqrt{\frac{u (\vn')}{1+u(\vn')}}, \quad u :=  k^2 |\vn'|^2+\tau.
\end{equation}
This function is monotone increasing in $u$ and is therefore maximized at $u_{\max} :=  3\NF^2k^2+\tau$, with value
\begin{equation}
    R_L = \sqrt{\frac{ 3\NF^2k^2+\tau}{1+3\NF^2k^2+\tau}}.
\end{equation}
We therefore, just need to estimate $M_L$, and to do so we define the smooth function
\begin{equation}
    g(x)\coloneqq \frac{x}{\sqrt{ (x^2+\tau)(1+ (x^2+\tau))}}
\end{equation}
This is found to have a maximum at
\begin{equation}
    x_* = (\tau(1+\tau))^{1/4}.
\end{equation}
Thus,
\begin{equation}
    M_L \le g(\min \{2\pi \NF/L, x_*\}) \le g(x_*) = \frac{1}{\sqrt{\tau}+\sqrt{1+\tau}}.
\end{equation}
For sufficiently large enough $\NF/L$ and fixed $\tau$, the value of $x_*$ is inside the interval associated to the lattice points. Moreover, in the limit of $\NF/L\rightarrow 0$ the point $x_*$ can be approached arbitrarily closely.

Putting all of this together, we obtain
\begin{align}
    \|\tilde{\m{F}}_2\|_{\max} &\le \frac{M_L}{L^{3/2}} (  \sqrt{\NH+1} +  R_L) \nonumber \\
    & \le \frac{1}{L^{3/2}(\sqrt{\tau}+\sqrt{1+\tau})} \left ( \sqrt{\NH+1} +  \sqrt{\frac{ 3\NF^2k^2+\tau}{1+ 3\NF^2k^2+\tau)}} \right ),
\end{align}
with $k = 2\pi/L$. In the above $\NF k$ is the Nyquist cut-off scale for the spectral method associated to a real-space resolution scale $\Delta x \sim L/(2\NF)$ in any direction.

\newpage
\section{Norm estimates}
\label{app:normestimates}

Here, we prove various norm estimates used in the main text, and elsewhere.

\subsection*{Proof of Proposition~\ref{prop:P12norm}}

Observe that the matrices $\m{P}^{\pm 1/2}$ are diagonal and hence normal. It follows that 
\begin{equation}
\lVert \m{P}^{1/2} \rVert_2= \max_{\substack{\vn\in \mc{Z}_{\NF}^3 \\ \v{\alpha}\in \mc{N}_{\NH}^3}} \lvert (\m{P}^{1/2})_{\vn,\v{\alpha}|\vn,\v{\alpha}}\rvert= \max_{\vn\in \mc{Z}_{\NF}^3} \sqrt{1+\frac{1}{{\kappa}_{\vn}}}=\sqrt{1+\frac{1}{\tau}}
\end{equation}
and
\begin{equation}
\lVert \m{P}^{-1/2} \rVert_2= \max_{\substack{\vn\in \mc{Z}_{\NF}^3 \\ \v{\alpha}\in \mc{N}_{\NH}^3}} \big\lvert (\m{P}^{-1/2})_{\vn,\v{\alpha}|\vn,\v{\alpha}}\big\rvert
= \max_{\substack{\vn\in \mc{Z}_{\NF}^3 \\ \v{\alpha}\in \mc{N}_{\NH}^3}} \begin{dcases}
\sqrt{\frac{{\kappa}_{\vn}}{1+{\kappa}_{\vn}}} & \val=\v 0 \\
1 & \val\neq \v 0 
\end{dcases} 
= 1. 
\end{equation}
The condition number follows from $\kappa_2(\m{P}^{1/2})=\lVert \m{P}^{1/2}\rVert_2\lVert \m{P}^{-1/2}\rVert_2$.

\subsection*{Proof of Proposition~\ref{prop:u2Pbounds}}

We have that 
\begin{align}
\lVert \m{P}^{1/2}\m{u}\rVert_2^2=\lVert \m{u}\rVert_2^2+\sum_{\vn\in \mc{Z}_{\NF}^3} \frac{1}{\kappa_{\vn}}\lvert \m{u}_{\vn,\v 0} \rvert^2.
\end{align}
The lower bound in \eqref{eq:u2Pbounds} follows immediately. The corresponding upper bound follows from Proposition~\ref{prop:P12norm} and submultiplicativity.

For the second claim, let us first analyze the solution norm of the rescaled and Lyapunov transformed system in Eq.~\eqref{eq:Lyapunovandrescaledsystem}. Since $\| \bar{\m{u}}_0\| <1$ (Eq.~\eqref{eq:rescalednorm}), $\mu(\bar{\m{F}}_1)<0$ (Eq.~\eqref{eq:mu2F1}) and $R_{\m{P}}<1$ (Eq.~\eqref{eq:RP}), we have that $\| \bar{\m{u}}(t)\| \leq \| \bar{\m{u}}(0)\| <1$ (Ref.~\cite{jennings2025quantum}, Lemma 3.3). Furthermore, since $\bar{\m{u}} = \gamma \m{P}^{1/2} \m{u}$ for some $\gamma >0$, the previous inequality implies $ \|\m{P}^{1/2}\m{u}(t)\| \leq  \|\m{P}^{1/2}\m{u}(0)\|$ and so,
\begin{align}
    \|\m{u}(t)\| = \| \m{P}^{-1/2} \m{P}^{1/2} \m{u}(t)\| \leq \| \m{P}^{-1/2}\| \|\m{P}^{1/2} \m{u}(t)\| \leq \| \m{P}^{-1/2}\| \|\m{P}^{1/2}\m{u}(0)\| \leq \kappa(\m{P}^{1/2}) \| \m{u}(0)\|.
\end{align}

\subsection*{Proof of Proposition~\ref{prop:F1tildenorm}}

We write 
\begin{equation}
\tilde{\m{F}}_{1}=\bigoplus_{\vn\in \mc{Z}_{\NF}^3} \tilde{\m{F}}_{1,\vn}
\end{equation}
and consider the decomposition
\begin{equation}
\tilde{\m{F}}_{1,\vn}
=
-i\m{K}_{\vn}-\bar{\nu}\m{I},
\end{equation}
where
\begin{align}
(\m{K}_{\vn})_{\val|\val'}
= 
\sum_{a=1}^3 \xi_{\vn,a}
\big(
\sqrt{\alpha_a+1}\delta_{\v\alpha',\,\v\alpha+\ve_a}
+
\sqrt{\alpha_a}\delta_{\v\alpha',\v\alpha-\ve_a}
\big)
+\bigg(
\sqrt{\frac{1+\kappa_{\vn}}{\kappa_{\vn}}}-1
\bigg)
\sum_{a=1}^3 \xi_{\vn,a}
\big(
\delta_{\v\alpha,\v 0}\delta_{\v\alpha',\ve_a}
+
\delta_{\v\alpha,\ve_a}\delta_{\v\alpha',\v 0}
\big).
\end{align}
Observe that $\m{K}_{\vn}$ is real-symmetric. It follows that
\begin{equation}
\tilde{\m{F}}_{1,\vn}^{\dag}\tilde{\m{F}}_{1,\vn}=\m{K}_{\vn}^2+\bar{\nu}^2\m{I}
\end{equation}
and hence
\begin{equation}
\lVert\tilde{\m{F}}_{1,\vn}\rVert_2=\sqrt{\lVert \m{K}_{\vn}\rVert_2^2+\bar{\nu}^2}.
\end{equation}

We decompose $\m{K}_{\vn}$ as
\begin{equation}
\m{K}_{\vn}=\m{K}_{\vn}^{(1)}+\m{K}_{\vn}^{(2)},
\end{equation}
where
\begin{align}
\m{K}_{\vn}^{(1)}=&\;
\sum_{a=1}^3 \xi_{\vn,a}
\big(
\sqrt{\alpha_a+1}\delta_{\v\alpha',\,\v\alpha+\ve_a}
+
\sqrt{\alpha_a}\delta_{\v\alpha',\v\alpha-\ve_a}
\big), \\
\m{K}_{\vn}^{(2)}=&\;
\bigg(
\sqrt{\frac{1+\kappa_{\vn}}{\kappa_{\vn}}}-1
\bigg)
\sum_{a=1}^3 \xi_{\vn,a}
\big(
\delta_{\v\alpha,\v 0}\delta_{\v\alpha',\ve_a}
+
\delta_{\v\alpha,\ve_a}\delta_{\v\alpha',\v 0}
\big).
\end{align}

Let us write
\begin{equation}\label{eq:K_decomposition}
\m{K}_{\vn}^{(1)}=\sum_{a=1}^3 \xi_{\vn,a}\m{J}_a,
\end{equation}
where
\begin{equation}
\m{J}_a\coloneqq \m{j}_a+\m{j}_a^{\dag}, \quad (\m{j}_a)_{\val|\val'}\coloneqq \sqrt{\alpha_a}\delta_{\val,\val'+\v{e}_a}.
\end{equation}
We compute
\begin{equation}
\lVert \m{j}_a\rVert_2=\lVert \m{j}_a^{\dag}\rVert_2=\sqrt{\NH},
\end{equation}
and it follows that 
\begin{equation}
\lVert \m{J}_a\rVert_2\leq 2\sqrt{\NH}
\end{equation}
and, using \eqref{eq:K_decomposition}, 
\begin{equation}
\lVert \m{K}_{\vn}^{(1)}\rVert_2\leq  \lVert \xi_{\vn}\rVert_1\,2\sqrt{\NH} \leq \frac{12\pi}{L}\NF\sqrt{\NH}. 
\end{equation}

We next consider $\m{K}_{\vn}^{(2)}$. Define vectors $\m{r}_{\vn}$ and $\m{e}_{\v{0}}$ by 
\begin{equation}
(\m{e}_{\v 0})_{\val}\coloneqq \delta_{\val,\v 0},\quad (\m{r}_{\vn})_{\val}\coloneqq \sum_{a=1}^3 (\xi_{\vn})_a\delta_{\val,\ve_a}
\end{equation}
and write
\begin{equation}
\m{K}_{\vn}^{(2)}=
\bigg(
\sqrt{\frac{1+\kappa_{\vn}}{\kappa_{\vn}}}-1
\bigg)\big( \m{e}_{\v{0}}\m{r}_{\vn}^{\top}+\m{r}_{\vn}\m{e}_{\v{0}}^{\top} \big).
\end{equation}
Moreover, using 
\begin{equation}
\m{e}_{\v 0}^{\top}\m{r}_{\vn}=0 ,\quad  \lVert \m{r}_{\vn}\rVert_2^2=\lvert \v{\xi}_{\vn}\rvert^2, 
\end{equation}
we compute
\begin{align}
(\m{K}_{\vn}^{(2)})^2=&\;
\bigg(
\sqrt{\frac{1+\kappa_{\vn}}{\kappa_{\vn}}}-1
\bigg)^2\big(\lVert \m{r}_{\vn}\rVert_2^2\m{e}_{\v 0}\m{e}_{\v 0}^{\top}+\m{r}_{\vn}\m{r}_{\vn}^{\top} \big),
\end{align}
which manifestly has maximal rank two. To obtain the spectral radius of $(\m{K}_{\vn}^{(2)})^2$, we compute the nonzero eigenvalues by
\begin{equation}
(\m{K}_{\vn}^{(2)})^2\m{e}_{\v 0}=
\bigg(
\sqrt{\frac{1+\kappa_{\vn}}{\kappa_{\vn}}}-1
\bigg)^2\lvert\v{\xi}_{\vn}\rvert^2 \m{e}_{\v 0}, \quad 
(\m{K}_{\vn}^{(2)})^2\m{r}_{\vn}=
\bigg(
\sqrt{\frac{1+\kappa_{\vn}}{\kappa_{\vn}}}-1
\bigg)^2\lvert\v{\xi}_{\vn}\rvert^2 \m{r}_{\vn}.
\end{equation}
It follows that 
\begin{equation}
\lVert \m{K}_{\vn}^{(2)}\rVert_2^2 =\lVert(\m{K}_{\vn}^{(2)})^2\rVert_2=\bigg(
\sqrt{\frac{1+\kappa_{\vn}}{\kappa_{\vn}}}-1
\bigg)^2\lvert\v{\xi}_{\vn}\rvert^2.
\end{equation}

Using
\begin{equation}
\sqrt{1+x}-1\leq \frac{x}{2}
\qquad (x\geq 0),
\end{equation}
with $x=1/\kappa_{\vn}$, we obtain
\begin{align}
\lVert \m{K}_{\vn}^{(2)}\rVert_2^2
\leq
\bigg(\frac{1}{2\kappa_{\vn}}\bigg)^2\lvert\v{\xi}_{\vn}\rvert^2=
\frac{\lvert\v{\xi}_{\vn}\rvert^2}{4\kappa_{\vn}^2}
=
\frac{\lvert\v{\xi}_{\vn}\rvert^2}{4(\lvert\v{\xi}_{\vn}\rvert^2+\tau)^2}
\leq
\frac{1}{16\tau},
\end{align}
where in the last inequality we used
\begin{equation}
\frac{y}{(y+\tau)^2}\leq \frac1{4\tau}
\quad (y\geq 0)
\end{equation}
with $y=\lvert \v{\xi}_{\vn}\rvert^2$. Therefore,
\begin{equation}
\max_{\vn\in \mc{Z}_{\NF}^3}\lVert \m{K}_{\vn}^{(2)}\rVert_2
\leq
\frac{1}{4\sqrt{\tau}}.
\end{equation}

By the triangle inequality,
\begin{equation}
\lVert \m{K}_{\vn}\rVert_2
\leq
\lVert \m{K}_{\vn}^{(1)}\rVert_2+\lVert \m{K}_{\vn}^{(2)}\rVert_2
\leq
\frac{12\pi}{L}\NF\sqrt{\NH}+\frac{1}{4\sqrt{\tau}}.
\end{equation}
Hence,
\begin{equation}
\lVert\tilde{\m{F}}_{1,\vn}\rVert_2
\leq
\sqrt{
\bigg(
\frac{12\pi}{L}\NF\sqrt{\NH}+\frac{1}{4\sqrt{\tau}}
\bigg)^2+\bar{\nu}^2 }.
\end{equation}
Finally, since
\begin{equation}
\tilde{\m{F}}_{1}=\bigoplus_{\vn\in \mc{Z}_{\NF}^3} \tilde{\m{F}}_{1,\vn},
\end{equation}
we conclude that
\begin{align}
\lVert\tilde{\m{F}}_{1}\rVert_2
=&\; 
\max_{\vn\in \mc{Z}_{\NF}^3}\lVert\tilde{\m{F}}_{1,\vn}\rVert_2 \nonumber \\
\leq &\; 
\sqrt{
\bigg(
\frac{12\pi}{L}\NF\sqrt{\NH}+\frac{1}{4\sqrt{\tau}}
\bigg)^2+\bar{\nu}^2 },
\end{align}
which is \eqref{eq:F1tildebound}. The asymptotic scaling follows immediately. \\

\paragraph{Lower bound.}

Set $\vn_*=\NF(1,1,1)$ and $\val_*=\NH(1,1,1)$. Define a vector $\m{e}_{\val_*}\in \mathbb{C}^{\mc{N}_{\NH}^3}$ by $(\m{e}_{\val_*})_{\val}\coloneqq (\delta_{\val,\val_*})_{\val\in \mc{N}_{\NH}^3}$. Then,
\begin{align}
\tilde{\m F}_{1,\vn_*}\m e_{\val_*}=
-i\frac{2\pi}{L}\NF\sqrt{\NH}
\sum_{a=1}^3
\m e_{\val_*-\ve_a}
-\bar{\nu}\m e_{\val_*}.
\end{align}
Because the vectors $\m{e}_{\val_*}$ and $\m{e}_{\val_*-\v{e}_a}$ for $a=1,2,3$ are mutually orthonormal, we have
\begin{equation}
\label{eq:F1enorm}
\lVert \tilde{\m F}_{1,\vn_*}\m e_{\val_*} \rVert_2^2= 
3\left(\frac{2\pi}{L}\NF\right)^2\NH+\bar{\nu}^2.
\end{equation}

Combining \eqref{eq:F1enorm} with
\begin{equation}
\|\tilde{\m F}_{1}\|_2 \geq
\|\tilde{\m F}_{1,\vn_*}\|_2 \geq \lVert \tilde{\m F}_{1,\vn_*}\m e_{\val_*} \rVert_2
\end{equation}
gives the result \eqref{eq:F1tildelowerbound}.

\subsection*{Proof of Proposition~\ref{prop:F2tildenorm}}

The following lemma is adapted from \cite{byrne2009bounds}. 
\begin{lemma}
The following inequality holds,
\begin{equation}
\label{eq:F22estimate}
\lVert \tilde{\m{F}}_2\rVert_2 \leq \bigg(\max_{\substack{\vn',\vn''\in \mc{Z}_{\NF}^3 \\ \v{\alpha}',\v{\alpha}''\in \mc{N}_{\NH}^3}} \sum_{\substack{\vn\in \mc{Z}_{\NF}^3 \\ \v{\alpha}\in \mc{N}_{\NH}^3}}\lvert(\tilde{\m{F}}_2)_{\vn,\v{\alpha}|\vn',\v{\alpha}';\vn'',\v{\alpha}''}\rvert^{0^+}\bigg)^{1/2} \Bigg(\max_{\substack{\vn\in \mc{Z}_{\NF}^3 \\ \v{\alpha}\in \mc{N}_{\NH}^3}} \sum_{\substack{\vn',\vn''\in \mc{Z}_{\NF}^3 \\ \v{\alpha}',\v{\alpha}''\in \mc{N}_{\NH}^3}}\lvert(\tilde{\m{F}}_2)_{\vn,\v{\alpha}|\vn',\v{\alpha}';\vn'',\v{\alpha}''}\rvert^2\Bigg)^{1/2},
\end{equation}
where
\begin{equation}
x^{0^+}\coloneqq \lim_{\varepsilon\searrow 0} x^{\varepsilon}=\begin{dcases}
0 & x=0\\
1 & x>0.
\end{dcases}
\end{equation}
\end{lemma}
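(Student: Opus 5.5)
This is a Schur-test style bound, and I will prove it for a general rectangular matrix $\m{M}$ with rows indexed by $r=(\vn,\v\alpha)$ and columns by $c=(\vn',\v\alpha';\vn'',\v\alpha'')$. Write $d_c\coloneqq \sum_r \lvert \m{M}_{r,c}\rvert^{0^+}$ for the number of nonzero entries in column $c$, and set $D\coloneqq\max_c d_c$. Also set $S\coloneqq \max_r \sum_c \lvert \m{M}_{r,c}\rvert^2$. The goal is $\lVert \m{M}\rVert_2\le \sqrt{DS}$. Taking $\m{M}=\tilde{\m{F}}_2$ gives exactly \eqref{eq:F22estimate}: its first factor is the square root of the column sparsity and its second is the maximal squared row norm.

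\textbf{Step 1.} Fix a unit vector $\m{x}$ indexed by columns. For each row $r$, apply Cauchy--Schwarz to $(\m{M}\m{x})_r=\sum_c \m{M}_{r,c}\m{x}_c$, splitting each term as $\m{M}_{r,c}\cdot\big(\lvert \m{M}_{r,c}\rvert^{0^+}\m{x}_c\big)$. Since $\m{M}_{r,c}=\m{M}_{r,c}\lvert\m{M}_{r,c}\rvert^{0^+}$, this gives
\begin{equation*}
\lvert(\m{M}\m{x})_r\rvert^2\le \Big(\sum_c\lvert\m{M}_{r,c}\rvert^2\Big)\Big(\sum_c \lvert \m{M}_{r,c}\rvert^{0^+}\lvert\m{x}_c\rvert^2\Big)\le S\sum_c \lvert \m{M}_{r,c}\rvert^{0^+}\lvert\m{x}_c\rvert^2 .
\end{equation*}

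\textbf{Step 2.} Sum over $r$ and exchange the order of summation:
\begin{equation*}
\lVert\m{M}\m{x}\rVert_2^2\le S\sum_c\lvert\m{x}_c\rvert^2 d_c\le SD\lVert\m{x}\rVert_2^2 .
\end{equation*}
Taking the supremum over unit $\m{x}$ yields the lemma.

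The only point needing care is the placement of the indicator $\lvert\cdot\rvert^{0^+}$. It has to sit on the $\m{x}$ side of the Cauchy--Schwarz split so that it later counts column sparsity rather than row sparsity. Everything else is elementary, since all sums are finite after truncation.
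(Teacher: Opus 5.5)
Your proof is correct. The Cauchy--Schwarz split $\m{M}_{r,c}\cdot(\lvert\m{M}_{r,c}\rvert^{0^+}\m{x}_c)$, followed by exchanging the row and column sums, gives $\lVert\m{M}\m{x}\rVert_2^2\le S\sum_c d_c\lvert\m{x}_c\rvert^2\le SD\lVert\m{x}\rVert_2^2$.

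The paper gives no proof of its own; it only states that the bound is adapted from \cite{byrne2009bounds}. That reference rests on the same Cauchy--Schwarz computation, but inserts the column counts $d_c$ as weights, which yields the slightly sharper $\max_r\sum_c d_c\lvert\m{M}_{r,c}\rvert^2$ and immediately implies the stated inequality.
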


We use the lemma to prove the theorem. Recall that (Eq.~\eqref{eq:F2tildesum}) $\tilde{\m{F}}_2=\tilde{\m{F}}_2^{(1)}+\tilde{\m{F}}_2^{(2)}$, with the expressions for the two components given in Eq.~\eqref{eq:F2tilde_bulk_components}
and Eq.~\eqref{eq:F2tilde_firstshell_components}. Then, by the triangle inequality,
\begin{equation}
\label{eq:F2_triangle_inequality}
\lVert \tilde{\m{F}}_2 \rVert_2=\lVert \tilde{\m{F}}_2^{(1)}+\tilde{\m{F}}_2^{(2)} \rVert_2\leq \lVert \tilde{\m{F}}_2^{(1)} \rVert_2+\lVert \tilde{\m{F}}_2^{(2)} \rVert_2.
\end{equation}
We estimate each of these quantities.

For a fixed column $\vn',\vn''\in \mc{Z}_{\NF}^3$ and $\v{\alpha}',\v{\alpha}''\in \mc{N}_{\NH}^3$, we have that for $\lvert(\tilde{\m{F}}_2^{(1)})_{\vn,\v{\alpha}|\vn',\v{\alpha}';\vn'',\v{\alpha}''}\rvert^{0^+}$ to be nonzero, it is necessary that $\vn=\vn'+\vn''$ and $\v{\alpha}=\v{\alpha}''+\ve_a$ for some $a=1,2,3$. Hence, 
\begin{align}
\max_{\substack{\vn',\vn''\in \mc{Z}_{\NF}^3 \\ \v{\alpha}',\v{\alpha}''\in\mc{N}_{\NH}^3}}\sum_{\substack{\vn\in\mc{Z}_{\NF}^3\\\v{\alpha}\in \mc{N}_{\NH}^3 }}\lvert(\tilde{\m{F}}_2^{(1)})_{\vn,\v{\alpha}|\vn',\v{\alpha}';\vn'',\v{\alpha}''}\rvert^{0^+}\leq 3
\end{align}
i.e., $\tilde{\m{F}}_2^{(1)}$ is $3$-column-sparse and the first factor in the estimate \eqref{eq:F22estimate} is bounded by $\sqrt{3}$.

To estimate the second factor, we write
\begin{align}
\lvert(\tilde{\m{F}}_2^{(1)})_{\vn,\v{\alpha}|\vn',\v{\alpha}';\vn'',\v{\alpha}''}\rvert^2 = &\;\frac{1}{L^3}\frac{1}{\kappa_{\vn'}(1+\kappa_{\vn'})}\delta_{\vn,\vn'+\vn''}\delta_{\v{\alpha}',\v{0}}(1-\delta_{\v{\alpha}'',\v{0}}) \Bigg\lvert \sum_{a=1}^3
\xi_{\vn',a}\sqrt{\alpha''_a+1}
\delta_{\v{\alpha},\v{\alpha}''+\ve_a} \Bigg\rvert^2 \nonumber \\
=&\; \frac{1}{L^3}\frac{1}{\kappa_{\vn'}(1+\kappa_{\vn'})}\delta_{\vn,\vn'+\vn''}\delta_{\v{\alpha}',\v{0}}(1-\delta_{\v{\alpha}'',\v{0}})  \sum_{a=1}^3
\xi_{\vn',a}^2(\alpha''_a+1)
\delta_{\v{\alpha},\v{\alpha}''+\ve_a},
\end{align}
where we have used
\begin{equation}
 \Bigg\lvert \sum_{a=1}^3
\xi_{\vn',a}\sqrt{\alpha''_a+1}
\delta_{\v{\alpha},\v{\alpha}''+\ve_a} \Bigg\rvert^2 = \sum_{a=1}^3
\xi_{\vn',a}^2(\alpha''_a+1)
\delta_{\v{\alpha},\v{\alpha}''+\ve_a};
\end{equation}
for distinct $a\in \{1,2,3\}$, the constraints $\v{\alpha}=\v{\alpha}''+\ve_a$ are mutually exclusive, so no cross terms survive. Hence,
\begin{align}
\sum_{\substack{\vn',\vn''\in \mc{Z}_{\NF}^3 \\ \v{\alpha}',\v{\alpha}''\in \mc{N}_{\NH}^3}}\lvert(\tilde{\m{F}}_2^{(1)})_{\vn,\v{\alpha}|\vn',\v{\alpha}';\vn'',\v{\alpha}''}\rvert^2=&\; 
\frac{1}{L^3}\sum_{a=1}^3 \alpha_a(1-\delta_{\v{\alpha},\ve_a})\sum_{\substack{\vn'\in\mc{Z}_{\NF}^3 \\ \vn-\vn'\in\mc{Z}_{\NF}^3 }} \frac{\xi_{\vn',a}^2}{\kappa_{\vn'}(1+\kappa_{\vn'})} \nonumber \\\leq &\; 
\frac{1}{L^3}\sum_{a=1}^3 \alpha_a(1-\delta_{\v{\alpha},\ve_a})\sum_{\substack{\vn'\in\mc{Z}_{\NF}^3}} \frac{\xi_{\vn',a}^2}{\kappa_{\vn'}(1+\kappa_{\vn'})}.
\end{align}
and
\begin{align}
\max_{\substack{\vn\in\mc{Z}_{\NF}^3 \\ \v{\alpha}\in \mc{N}_{\NH}^3}}\sum_{\substack{\vn',\vn''\in \mc{Z}_{\NF}^3 \\ \v{\alpha}',\v{\alpha}''\in \mc{N}_{\NH}^3}}\lvert(\tilde{\m{F}}_2^{(1)})_{\vn,\v{\alpha}|\vn',\v{\alpha}';\vn'',\v{\alpha}''}\rvert^2\leq     
\frac{3\NH}{L^3}\sum_{\vn'\in\mc{Z}_{\NF}^3} \frac{\lvert\v{\xi}_{\vn'}\rvert^2}{\kappa_{\vn'}(1+\kappa_{\vn'})}.
\end{align}

We need the following lemma to proceed. 
\begin{lemma}\label{lem:latticesums}
The lattice sums
\begin{equation}
\mc{S}_1(\NF)\coloneqq \sum_{\vn\in\mc{Z}_{\NF}^3} \frac{\lvert\v{\xi}_{\vn}\rvert^2}{\kappa_{\vn}(1+\kappa_{\vn})}
\end{equation}
and 
\begin{equation}
\mc{S}_2(\NF)\coloneqq \sum_{\vn\in\mc{Z}_{\NF}^3} \frac{1}{1+\kappa_{\vn}}
\end{equation}
satisfy
\begin{equation}\label{eq:S1bound}
\mc{S}_1(\NF)\leq \mc{S}_2(\NF)\leq \frac{1}{1+\tau}+\frac{13L^2}{2\pi^2}\NF 
\end{equation}
respectively.
\end{lemma}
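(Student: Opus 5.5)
The plan is to prove both inequalities term by term, with no need for integral comparison. The key tool for the second inequality is a shell decomposition of $\mc{Z}_{\NF}^3$ by the sup-norm.

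\emph{First inequality.} Since $\kappa_{\vn}=\lvert\v\xi_{\vn}\rvert^2+\tau$ with $\tau>0$, we have $\lvert\v\xi_{\vn}\rvert^2/\kappa_{\vn}\le 1$ for every $\vn$. Multiplying by $1/(1+\kappa_{\vn})$ and summing over $\vn\in\mc{Z}_{\NF}^3$ gives $\mc{S}_1(\NF)\le\mc{S}_2(\NF)$ immediately.

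\emph{Second inequality.} I will isolate the $\vn=\v 0$ term of $\mc{S}_2$. Since $\kappa_{\v 0}=\tau$, this term equals $1/(1+\tau)$, which is exactly the first summand of the claimed bound. For $\vn\neq\v 0$, I will drop $1+\tau>0$ and use
\begin{equation*}
\frac{1}{1+\kappa_{\vn}}\le \frac{1}{\lvert\v\xi_{\vn}\rvert^2}=\frac{L^2}{4\pi^2\lvert\vn\rvert^2}.
\end{equation*}
Because $13L^2\NF/(2\pi^2)=\tfrac{L^2}{4\pi^2}\cdot 26\NF$, it then suffices to show
\begin{equation*}
\sum_{\vn\in\mc{Z}_{\NF}^3\setminus\{\v 0\}}\frac{1}{\lvert\vn\rvert^2}\le 26\,\NF .
\end{equation*}

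\emph{Shell count.} To prove this, I will partition the nonzero lattice points by $m=\max_a\lvert n_a\rvert\in\{1,\dots,\NF\}$. The shell at level $m$ contains $(2m+1)^3-(2m-1)^3=24m^2+2$ points, and on it $\lvert\vn\rvert^2\ge m^2$. Hence
\begin{equation*}
\sum_{\vn\neq\v 0}\frac{1}{\lvert\vn\rvert^2}\le\sum_{m=1}^{\NF}\frac{24m^2+2}{m^2}=24\NF+2\sum_{m=1}^{\NF}\frac{1}{m^2}\le 26\NF,
\end{equation*}
using $1/m^2\le 1$ in the last step. Combining this with the isolated $\vn=\v 0$ term yields \eqref{eq:S1bound}.

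The only point needing care is the shell count and matching the constant $13/(2\pi^2)$. The crude bound $\sum_m 1/m^2\le\NF$ is what produces exactly $26$; a sharper estimate such as $\pi^2/6$ would also suffice.
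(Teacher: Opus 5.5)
Your proposal is correct and follows the paper's proof essentially step for step: you isolate the $\vn=\v 0$ term, bound the remaining terms by $L^2/(4\pi^2\lvert\vn\rvert^2)$, sum over sup-norm shells of size $24m^2+2\le 26m^2$, and use $\lvert\v\xi_{\vn}\rvert^2/\kappa_{\vn}\le 1$ to get $\mc{S}_1\le\mc{S}_2$. One small caveat, which affects only your closing aside and not the proof: replacing $\sum_{m\le\NF}m^{-2}$ by $\pi^2/6$ gives $24\NF+\pi^2/3$, which is at most $26\NF$ only for $\NF\ge 2$, so at $\NF=1$ you still need the crude bound $1/m^2\le 1$ that you actually used.
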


\begin{proof}
We first estimate $\mc{S}_2(\NF)$. We have
\begin{align}\label{eq:S2sum1}
\mc{S}_2(\NF)=  \frac{1}{1+\tau}+\sum_{\vn\in \mc{Z}_{\NF}^3\setminus\{\v 0\}}\frac{1}{1+\frac{4\pi^2}{L^2}\lvert\vn\rvert^2+\tau} \leq \frac{1}{1+\tau}+ \frac{L^2}{4\pi^2}\sum_{\vn\in \mc{Z}_{\NF}^3\setminus\{\v 0\}}\frac{1}{\lvert\vn\rvert^2}. 
\end{align}

Let
\begin{equation}
A_m\coloneqq \{\vn\in \mc{Z}_{\NF}^3: \lVert \vn\rVert_{\infty}=m\} \quad (m=1,\ldots,\NF)
\end{equation}
so that
\begin{equation}
\mc{Z}_{\NF}^3\setminus\{\v 0\}=\bigsqcup_{m=1}^{\NF} A_m
\end{equation}
and 
\begin{equation}
\# A_m=(2m+1)^3-(2m-1)^3=24m^2+2\leq 26m^2.
\end{equation}
It then follows from $\lvert \vn\rvert\geq \lVert\vn\rVert_{\infty}$ that 
\begin{equation}\label{eq:S2sum2} 
\sum_{\vn\in\mc{Z}_{\NF}^3\setminus\{\v 0\}} \frac{1}{\lvert\vn\rvert^2}=\sum_{m=1}^{\NF}\sum_{\vn\in A_m} \frac{1}{\lvert\vn\rvert^2} \leq  \sum_{m=1}^{\NF} 26 =26\NF. 
\end{equation}
Inserting \eqref{eq:S2sum2} into \eqref{eq:S2sum1} gives the second inequality in \eqref{eq:S1bound}. 

To estimate $\mc{S}_1(\NF)$, we use
\begin{equation}
\frac{\lvert\v{\xi}_{\vn}\rvert^2}{\kappa_{\vn}}=\frac{\lvert\v{\xi}_{\vn}\rvert^2}{\lvert\v{\xi}_{\vn}\rvert^2+\tau}\leq 1.
\end{equation}
Hence, 
\begin{equation}
\frac{\lvert\v{\xi}_{\vn}\rvert^2}{\kappa_{\vn}(1+\kappa_{\vn})}\leq \frac{1}{1+\kappa_{\vn}}
\end{equation}
and
\begin{equation}
\mc{S}_1(\NF)\leq \mc{S}_2(\NF). 
\end{equation}
We have established the first inequality in \eqref{eq:S1bound}. This completes the proof.  
\end{proof}

It follows that 
\begin{equation}\label{eq:F21bound}
\lVert \tilde{\m{F}}_2^{(1)}\rVert_2 \leq \frac{3\sqrt{\NH}}{L^{3/2}} \bigg( \frac{1}{1+\tau}+\frac{13L^2}{2\pi^2}\NF \bigg)^{1/2}.
\end{equation}

Next we estimate $\lVert \tilde{\m{F}}_2^{(2)}\rVert_2$. For a fixed column $\vn',\vn''\in \mc{Z}_{\NF}^3$ and $\v{\alpha}',\v{\alpha}''\in \mc{N}_{\NH}^3$, we have that for $\lvert(\tilde{\m{F}}_2^{(2)})_{\vn,\v{\alpha}|\vn',\v{\alpha}';\vn'',\v{\alpha}''}\rvert^{0^+}$ to be nonzero, it is necessary that $\vn=\vn'+\vn''$ and $\v{\alpha}=\ve_a$ for some $a=1,2,3$. Hence, 
\begin{align}
\max_{\substack{\vn',\vn''\in \mc{Z}_{\NF}^3 \\ \v{\alpha}',\v{\alpha}''\in\mc{N}_{\NH}^3}}\sum_{\substack{\vn\in\mc{Z}_{\NF}^3\\\v{\alpha}\in \mc{N}_{\NH}^3 }}\lvert(\tilde{\m{F}}_2^{(2)})_{\vn,\v{\alpha}|\vn',\v{\alpha}';\vn'',\v{\alpha}''}\rvert^{0^+}\leq 3
\end{align}
i.e., $\tilde{\m{F}}_2^{(2)}$ is $3$-column-sparse and the first factor in the estimate \eqref{eq:F22estimate} is bounded by $\sqrt{3}$. 

\begin{align}
\lvert(\tilde{\m{F}}_2^{(2)})_{\vn,\v{\alpha}|\vn',\v{\alpha}';\vn'',\v{\alpha}''}\rvert^2 =&\; \frac{1}{L^{3}}
\frac{\kappa_{\vn''}}{\kappa_{\vn'}(1+\kappa_{\vn'})(1+\kappa_{\vn''})}\delta_{\vn,\vn'+\vn''}
\delta_{\v{\alpha}',\v{0}}\delta_{\v{\alpha}'',\v{0}}\Bigg\lvert\sum_{a=1}^3
\xi_{\vn',a}
\delta_{\v{\alpha},\ve_a}\Bigg\rvert^2 \nonumber \\
=&\; 
\frac{1}{L^{3}}
\frac{\kappa_{\vn''}}{\kappa_{\vn'}(1+\kappa_{\vn'})(1+\kappa_{\vn''})}\delta_{\vn,\vn'+\vn''}
\delta_{\v{\alpha}',\v{0}}\delta_{\v{\alpha}'',\v{0}}\sum_{a=1}^3
\xi_{\vn',a}^2
\delta_{\v{\alpha},\ve_a}.
\end{align}
Using
\begin{equation}
 \Bigg\lvert\sum_{a=1}^3
\xi_{\vn',a}
\delta_{\v{\alpha},\ve_a}\Bigg\rvert^2=  \sum_{a=1}^3
\xi_{\vn',a}^2
\delta_{\v{\alpha},\ve_a}, 
\end{equation}
It follows that 
\begin{equation}
\sum_{\substack{\vn',\vn''\in \mc{Z}_{\NF}^3 \\ \v{\alpha}',\v{\alpha}''\in \mc{N}_{\NH}^3}}\lvert(\tilde{\m{F}}_2^{(2)})_{\vn,\v{\alpha}|\vn',\v{\alpha}';\vn'',\v{\alpha}''}\rvert^2=   \frac{1}{L^3}\sum_{a=1}^3 \delta_{\v{\alpha},\ve_a}\sum_{\substack{\vn'\in \mc{Z}_{\NF}^3 \\ \vn-\vn'\in \mc{Z}_{\NF}^3}} \frac{\kappa_{\vn-\vn'}\xi_{\vn',a}^2}{\kappa_{\vn'}(1+\kappa_{\vn'})(1+\kappa_{\vn-\vn'})}.
\end{equation}

Next, 
\begin{equation}
\max_{\substack{\vn\in\mc{Z}_{\NF}^3 \\ \v{\alpha}\in \mc{N}_{\NH}^3 }}\sum_{\substack{\vn',\vn''\in \mc{Z}_{\NF}^3 \\ \v{\alpha}',\v{\alpha}''\in \mc{N}_{\NH}^3}}\lvert(\tilde{\m{F}}_2^{(2)})_{\vn,\v{\alpha}|\vn',\v{\alpha}';\vn'',\v{\alpha}''}\rvert^2 = \frac{1}{L^3}\max_{\substack{\vn\in\mc{Z}_{\NF}^3\\ a\in \{1,2,3\} }} \sum_{\substack{\vn'\in \mc{Z}_{\NF}^3 \\ \vn-\vn'\in \mc{Z}_{\NF}^3}} \frac{\kappa_{\vn-\vn'}\xi_{\vn',a}^2}{\kappa_{\vn'}(1+\kappa_{\vn'})(1+\kappa_{\vn-\vn'})}.
\end{equation}

Using the uniform bounds
\begin{equation}
\frac{\kappa_{\vn-\vn'}}{1+\kappa_{\vn-\vn'}}\leq 1,\quad \frac{\xi_{\vn',a}^2}{\kappa_{\vn'}}\leq 1,
\end{equation}
we see that 
\begin{equation}
\max_{\substack{\vn\in\mc{Z}_{\NF}^3 \\ \v{\alpha}\in \mc{N}_{\NH}^3 }}\sum_{\substack{\vn',\vn''\in \mc{Z}_{\NF}^3 \\ \v{\alpha}',\v{\alpha}''\in \mc{N}_{\NH}^3}}\lvert(\tilde{\m{F}}_2^{(2)})_{\vn,\v{\alpha}|\vn',\v{\alpha}';\vn'',\v{\alpha}''}\rvert^2 \leq 
\frac{1}{L^3}\sum_{\v{n'}\in\mc{Z}_{\NF}^3}\frac{1}{1+\kappa_{\v{n}'}}.
\end{equation}

Hence, using Lemma~\ref{lem:latticesums},
\begin{equation}\label{eq:F22bound}
\lVert \tilde{\m{F}}_2^{(2)}\rVert_2\leq \frac{\sqrt{3}}{L^{3/2}}\bigg(\frac{1}{1+\tau}+\frac{13L^2}{2\pi^2}\NF \bigg)^{1/2}.
\end{equation}

Putting \eqref{eq:F21bound} and \eqref{eq:F22bound} into \eqref{eq:F2_triangle_inequality} gives the result \eqref{eq:F2tildespectralnorm}.

\newpage

\section{Block-encoding the nonlinear term}
\label{app:BE-of-nonlinear-term}
A direct sparse-encoding of the nonlinear term is inefficient. Even leveraging hierarchical methods, block-encoding in the Fourier-Hermite basis and Lyapunov-transformed frame of choice is not straightforward, due to the lack of appropriate decay properties. 

To circumvent this issue, we will analyze the nonlinear interaction term in real space and Hermite basis. We will construct the form of the Lyapunov-transformed interaction in this basis, for finite $L$ and in the $L\rightarrow \infty$ analytic limit. We shall see that, under appropriate regularity conditions, sampling the interaction kernel from a lattice only introduces small aliasing errors. Furthermore, we will see that, under appropriate conditions, we can exchange the finite $L$ for the $L\rightarrow \infty$ limit. The resulting kernel can be block-encoded efficiently via hierarchical methods, and transformed back into Fourier-Hermite basis via Fourier transform. This indirect procedure introduces controlled errors, and allows us to block-encode the target $\tilde{\m{F}}_2$ matrix up to decreasing aliasing errors in the Fourier-Hermite basis, even if this does not have the required properties to directly apply hierarchical block-encoding methods.

\subsection{Analysis of Lyapunov-transformed electrostatic field}

We start our analysis from the $\m{F}_2$ after Hermite truncation but before Fourier truncation 
\begin{align}
(\tilde{\m{F}}_2)_{\vn,\val | \vn', \val', \vn'', \val''}
= 
\delta_{\vn,\vn'+\vn''}(\tilde{\m{F}}'_2)_{\val | \vn', \val', \vn'', \val''}, \qquad \val, \val', \val'' \in \mathcal{N}_{\NF}, \vn, \vn' \vn'' \in \mathbb{Z}^3. 
\end{align}

Define the kernel as
\begin{align}
\label{eq:Kuntruncated}
\m{K}^L_{\val | \val', \val''}(\vx | \vx',\vx'') & = \frac{1}{L^{9/2}}
\sum_{\vn, \vn',\vn'' \in \mathbb{Z}^3}
\ee^{i (\v{\xi}_{\vn}\cdot \vx -\v{\xi}_{\vn'}\cdot \vx' -\v{\xi}_{\vn''}\cdot \vx'')}
(\tilde{\m{F}}'_2)_{\val | \vn', \val', \vn'', \val''} \delta_{\vn,\vn'+\vn''}.
\end{align}

In order to efficiently block-encode the nonlinear term we leverage the decay profile that the electric force has in real-space. We need the following technical lemma:

\begin{lemma}
\label{lem:transformed_field}
Define the function
\begin{equation}
\phi_L(\v{r})\coloneqq \frac{1}{L^3}\sum_{\vn\in\mathbb{Z}^3}\frac{1}{\sqrt{\kappa_{\vn}(1+\kappa_{\vn})}}\ee^{i\v{\xi}_{\vn}\cdot \v{r}}
\end{equation}
and the function
\begin{equation}
\sigma_L(\v{r})\coloneqq \frac{1}{L^3}\sum_{\vn\in\mathbb{Z}^3}\sqrt{\frac{\kappa_{\vn}}{1+\kappa_{\vn}}}\ee^{i\v{\xi}_{\vn}\cdot \v{r}}.
\end{equation}
Moreover, let 
\begin{equation}\label{eq:mus}
\mu(s)\coloneqq \sqrt{s+\tau}.
\end{equation}
Then, the following limits hold as $L\to\infty$:
\begin{equation}
\label{eq:phixlimit}
\phi_L(\v{r})\to \phi(\v{r})= \frac{1}{\pi}\int_0^1 \frac{1}{\sqrt{s(1-s)}}  \frac{\ee^{-\mu(s)\lvert\v{r}\rvert }}{4\pi\lvert\v{r}\rvert}\,\mathrm{d}s
\end{equation}
uniformly on compact subsets of $\R^3\setminus\{\v 0\}$ and
\begin{equation}
\label{eq:sigmaxlimit}
\sigma_L(\v{r})\to \sigma(\v{r})=   \delta^{(3)}(\v{r})-\frac{1}{\pi}\int_0^1 \sqrt{\frac{s}{1-s}}\,\frac{\ee^{-\mu(s)\lvert\v{r}\rvert }}{4\pi\lvert\v{r}\rvert}\,\mathrm{d}s,
\end{equation}
in the sense of distributions. 
\end{lemma}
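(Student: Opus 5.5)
The proof has three steps. First, identify the continuum Fourier multipliers. Second, write them as superpositions of Yukawa symbols. Third, pass from the lattice sum to the integral.

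As $L\to\infty$, the sums $\frac{1}{L^3}\sum_{\vn}\widehat m(\v\xi_{\vn})\ee^{i\v\xi_{\vn}\cdot\v r}$ are Riemann sums for $\int\frac{\mathrm d^3\xi}{(2\pi)^3}\widehat m(\v\xi)\ee^{i\v\xi\cdot\v r}$. The symbols are
\[
\widehat m_\phi(\v\xi)=\frac{1}{\sqrt{\kappa(1+\kappa)}},\qquad \widehat m_\sigma(\v\xi)=\sqrt{\frac{\kappa}{1+\kappa}},\qquad \kappa=|\v\xi|^2+\tau .
\]
The key algebraic identity is the Beta-type representation
\[
\frac{1}{\sqrt{\kappa(1+\kappa)}}=\frac1\pi\int_0^1\frac{1}{\sqrt{s(1-s)}}\,\frac{1}{\kappa+s}\,\mathrm ds .
\]
This is a Stieltjes representation of $x^{-1/2}(1+x)^{-1/2}$. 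I would verify it directly with the substitution $s=\sin^2\theta$, or by computing $\int_0^1[s(1-s)]^{-1/2}(a+s)^{-1}\mathrm ds=\pi/\sqrt{a(a+1)}$ by residues. For $\sigma$, I would write
\[
\sqrt{\frac{\kappa}{1+\kappa}}=1-\Big(1-\sqrt{\frac{\kappa}{1+\kappa}}\Big),
\]
and use the companion identity
\[
1-\sqrt{\frac{\kappa}{1+\kappa}}=\frac1\pi\int_0^1\sqrt{\frac{s}{1-s}}\,\frac{1}{\kappa+s}\,\mathrm ds .
\]
To check it, both sides vanish as $\kappa\to\infty$ like $1/(2\kappa)$ (note $\int_0^1\sqrt{s/(1-s)}\,\mathrm ds=\pi/2$), and I would compare derivatives in $\kappa$ or evaluate by the same trigonometric substitution.

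Since $\kappa+s=|\v\xi|^2+\mu(s)^2$, each inner symbol $1/(|\v\xi|^2+\mu(s)^2)$ has inverse Fourier transform on $\R^3$ equal to the Yukawa kernel $G_{\mu(s)}(\v r)=\ee^{-\mu(s)|\v r|}/(4\pi|\v r|)$ from Eq.~\eqref{eq:Green_function}. The constant $1$ in the $\sigma$ symbol transforms to $\delta^{(3)}$. Exchanging the $s$-integral with the Fourier integral is justified by Tonelli/Fubini in the distributional pairing. For $\phi$, I would use the $L^1$ bound $\int[s(1-s)]^{-1/2}\,\mathrm ds<\infty$ together with $\mu(s)\ge\sqrt\tau>0$, which gives locally integrable and exponentially decaying kernels. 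For $\sigma$, the analogous weight $\sqrt{s/(1-s)}$ is also integrable. This yields the claimed continuum limits $\phi$ and $\sigma$.

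The main obstacle is the convergence of the lattice sum to the integral, with the stated mode of convergence. I would use Poisson summation:
\[
\frac1{L^3}\sum_{\vn}\widehat m(\v\xi_{\vn})\ee^{i\v\xi_{\vn}\cdot\v r}=\sum_{\v k\in\mathbb Z^3}m(\v r+L\v k),
\]
which is the periodization by images mentioned in the main text. Applying this to the $s$-mixture gives
\[
\phi_L(\v r)=\frac1\pi\int_0^1\frac{\mathrm ds}{\sqrt{s(1-s)}}\sum_{\v k}G_{\mu(s)}(\v r+L\v k).
\]
Because $\mu(s)\ge\sqrt\tau$, the image terms with $\v k\neq0$ are bounded uniformly for $\v r$ in a compact set $K\subset\R^3\setminus\{\v0\}$ by $C\sum_{\v k\ne0}\ee^{-\sqrt\tau(L|\v k|-R)}/(L|\v k|-R)$, which tends to $0$. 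Hence $\phi_L\to\phi$ uniformly on $K$.

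For $\sigma_L$, the same periodization gives $\delta_L^{(3)}$ minus the periodized Yukawa mixture. Tested against a Schwartz function, the $\delta_L^{(3)}$ term pairs to $\varphi(\v0)$ plus image contributions $\varphi(L\v k)$ with $\v k\neq0$, which vanish as $L\to\infty$. The Yukawa-mixture images vanish by the same exponential bound, which holds in $L^1_{\mathrm{loc}}$ because the $1/|\v r|$ singularity is integrable. The care needed is to justify Poisson summation for these symbols. I would do this at the level of each fixed $s$, since $G_\mu$ decays exponentially and is in $L^1$, and only then integrate in $s$, which again uses dominated convergence with the integrable weights.
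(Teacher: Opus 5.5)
Your proof is correct, and its algebraic core matches the paper's. You use the same Feynman-parameter identity $\frac{1}{\sqrt{\kappa(1+\kappa)}}=\frac1\pi\int_0^1\frac{\mathrm ds}{\sqrt{s(1-s)}\,(\kappa+s)}$ and the same identification of $1/(\lvert\v\xi\rvert^2+\mu(s)^2)$ with $G_{\mu(s)}$. You also rewrite $\sqrt{\kappa/(1+\kappa)}$, as the paper does, as $1$ minus a $\sqrt{s/(1-s)}$-weighted Yukawa mixture. For the companion identity, the paper gets it in one line from the first, using $\sqrt{s/(1-s)}=s/\sqrt{s(1-s)}$ and $s/(\kappa+s)=1-\kappa/(\kappa+s)$, so no derivative comparison is needed.

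Where you genuinely differ is the $L\to\infty$ step. The paper pulls the $s$-integral outside the lattice sum, replaces $\frac{1}{L^3}\sum_{\vn}$ by $\int\frac{\mathrm d^3\xi}{(2\pi)^3}$, and asserts uniform convergence on compacta. This is a Riemann-sum heuristic, and not an innocent one: the summands decay only like $\lvert\vn\rvert^{-2}$, so the lattice sums are not absolutely convergent and $\phi_L$ is really a distribution.

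You instead apply Poisson summation for each fixed $s$. This is precisely the periodization identity the paper proves only afterwards (Proposition~\ref{prop:periodization}, via uniqueness of the periodic Helmholtz Green's function). From it you read off $\phi_L(\v r)-\phi(\v r)=\frac1\pi\int_0^1\frac{\mathrm ds}{\sqrt{s(1-s)}}\sum_{\v k\neq\v 0}G_{\mu(s)}(\v r+L\v k)$. This is bounded by $\sum_{\v k\neq\v 0}G_{\sqrt\tau}(\v r+L\v k)$, hence exponentially small uniformly on compact subsets of $\R^3\setminus\{\v 0\}$. Your route buys three things:
\begin{itemize}
\item an actual proof of the stated mode of convergence;
\item an explicit statement that $\phi_L$ means the periodized representative, continuous away from $L\mathbb{Z}^3$;
\item a clean Tonelli justification for the $s$-exchange, since all terms are nonnegative.
\end{itemize}
The price is reversing the paper's logical order between this lemma and the periodization proposition.

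One small tightening for $\sigma_L$. If you pair with Schwartz test functions $\psi$ rather than compactly supported ones, the image bound is not uniform near the singularities at $-L\v k$. After substituting $\v y=\v r+L\v k$, split the integral into two regions:
\begin{itemize}
\item $\lvert\v y\rvert<L/2$, where $\sum_{\v k\neq\v 0}\lvert\psi(\v y-L\v k)\rvert$ is small by the decay of $\psi$;
\item $\lvert\v y\rvert\ge L/2$, where the $L^1$ tail of the Yukawa mixture is exponentially small.
\end{itemize}
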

\begin{proof}
    By the Feynman parameter identity
\begin{equation}
\frac{1}{\sqrt{AB}}
=
\frac{1}{\pi}\int_0^1 \frac{1}{\sqrt{s(1-s)}}\frac{1}{sA+(1-s)B}\,\mathrm ds
\end{equation}
with
\begin{equation}
A=\kappa_{\vn}+1,
\qquad
B=\kappa_{\vn},
\end{equation}
we obtain
\begin{equation}\label{eq:Feynman1}
\frac{1}{\sqrt{\kappa_{\vn}(1+\kappa_{\vn})}}
=
\frac{1}{\pi}\int_0^1 \frac{1}{\sqrt{s(1-s)}}\frac{1}{\kappa_{\vn}+s}\,\mathrm ds=
\frac{1}{\pi}\int_0^1 \frac{1}{\sqrt{s(1-s)}}\frac{1}{|\v\xi_{\vn}|^2+\mu(s)^2}\,\mathrm ds,
\end{equation}
where $\mu(s)$ is given by \eqref{eq:mus}. Hence
\begin{align}
\phi_L(\v r)
&=
\frac{1}{\pi}\int_0^1 \frac{1}{\sqrt{s(1-s)}}
\Bigg(
\frac{1}{L^3}\sum_{\vn\in\mathbb{Z}^3}
\frac{\ee^{i\v\xi_{\vn}\cdot \v r}}{|\v\xi_{\vn}|^2+\mu(s)^2}
\Bigg)\,\mathrm ds
\nonumber\\
&\to
\frac{1}{\pi}\int_0^1 \frac{1}{\sqrt{s(1-s)}}
\Bigg(
\frac{1}{(2\pi)^3}\int_{\R^3}
\frac{\ee^{i\v\xi\cdot \v r}}{|\v\xi|^2+\mu(s)^2}\,\mathrm d^3\xi
\Bigg)\,\mathrm ds
\nonumber\\
&=
\frac{1}{\pi}\int_0^1 \frac{1}{\sqrt{s(1-s)}}
\frac{\ee^{-\mu(s)|\v r|}}{4\pi |\v r|}\,\mathrm ds,
\end{align}
uniformly on compact subsets of $\R^3\setminus\{\v 0\}$, where we have used the standard identity
\begin{equation}
\frac{1}{(2\pi)^3}\int_{\R^3}
\frac{\ee^{i\v\xi\cdot \v r}}{|\v\xi|^2+\mu^2}\,\mathrm d^3\xi
=
\frac{\ee^{-\mu |\v r|}}{4\pi |\v r|},
\qquad
\mu>0.
\end{equation}

Similarly, using the previously established identity \eqref{eq:Feynman1}, we compute 
\begin{align}
\sqrt{\frac{\kappa_{\vn}}{1+\kappa_{\vn}}}
&=
\kappa_{\vn}\frac{1}{\sqrt{\kappa_{\vn}(1+\kappa_{\vn})}}
\nonumber\\
&=
\frac{1}{\pi}\int_0^1 \frac{1}{\sqrt{s(1-s)}}\frac{\kappa_{\vn}}{\kappa_{\vn}+s}\,\mathrm ds
\nonumber\\
&=
\frac{1}{\pi}\int_0^1 \frac{1}{\sqrt{s(1-s)}}
\left(1-\frac{s}{\kappa_{\vn}+s}\right)\,\mathrm ds
\nonumber\\
&=
1-\frac{1}{\pi}\int_0^1 \sqrt{\frac{s}{1-s}}\frac{1}{\kappa_{\vn}+s}\,\mathrm ds
\nonumber\\
&=
1-\frac{1}{\pi}\int_0^1 \sqrt{\frac{s}{1-s}}\,
\frac{1}{|\v\xi_{\vn}|^2+\mu(s)^2}\,\mathrm ds,
\end{align}
where we have used the exact integral
\begin{equation}
\frac{1}{\pi}\int_{0}^1 \frac{1}{\sqrt{s(1-s)}}\,\mathrm{d}s=1.
\end{equation}
It follows that
\begin{align}
\sigma_L(\v r)
&=
\delta_L^3(\v r)
-
\frac{1}{\pi}\int_0^1 \sqrt{\frac{s}{1-s}}
\Bigg(
\frac{1}{L^3}\sum_{\vn\in \mathbb{Z}^3}
\frac{\ee^{i\v\xi_{\vn}\cdot \v r}}{|\v\xi_{\vn}|^2+\mu(s)^2}
\Bigg)\,\mathrm ds
\nonumber\\
&\to
\delta^3(\v r)
-
\frac{1}{\pi}\int_0^1 \sqrt{\frac{s}{1-s}}
\Bigg(
\frac{1}{(2\pi)^3}\int_{\R^3}
\frac{\ee^{i\v\xi\cdot \v r}}{|\v\xi|^2+\mu(s)^2}\,\mathrm d^3\xi
\Bigg)\,\mathrm ds
\nonumber\\
&=
\delta^3(\v r)
-
\frac{1}{\pi}\int_0^1 \sqrt{\frac{s}{1-s}}
\frac{\ee^{-\mu(s)|\v r|}}{4\pi |\v r|}\,\mathrm ds,
\end{align}
in the sense of distributions.
This proves the claimed limits.

\end{proof}

We now have the following results that specify the decay profile of the Lyapunov-transformed field. In particular, we find that it behaves as  mixture of Yukawa potentials, and thus rapidly decays.
\begin{theorem}
\label{theorem:kerneluntruncatedcontinuum}
    The untruncated kernel in Eq~\eqref{eq:Kuntruncated} admits the following decomposition
    \begin{equation}
\m{K}^L_{\val | \val', \val''}(\vx|\vx',\vx'')
=
\m{K}^{L,(1)}_{\val | \val', \val''}(\vx|\vx',\vx'')
+
\m{K}^{L,(2)}_{\val|\val',\val''}(\vx|\vx',\vx''),
\end{equation}
where, for $a=1,2,3$
\begin{align}
\m{K}^{L,(1)}_{\val|\val',\val''}(\vx|\vx',\vx'')
& =
-\delta_L^3(\vx-\vx'')\nabla_x \phi_L(\vx-\vx')\cdot \v{W}^{(1)}_{\val|\val',\val''}, \quad 
(W_{\val|\val',\val''}^{(1)})_a
\coloneqq
\delta_{\val',\v{0}}(1-\delta_{\val'',\v{0}})\sqrt{\alpha''_a+1}\,
\delta_{\v{\val},\v{\val}''+\ve_a},
\nonumber
\\ 
\m{K}^{L,(2)}_{\val|\val',\val''}(\vx|\vx',\vx'') & = -\sigma_L(\vx-\vx'')\nabla_x\phi_L(\vx-\vx')\cdot \v{W}^{(2)}_{\val|\val',\val''},\quad
(W_{\val|\val',\val''}^{(2)})_a
\coloneqq
\delta_{\val',\v 0}\delta_{\val'',\v 0}\delta_{\val,\ve_a}.
\label{eq:kerneldecomposition}
\end{align}
and where we introduced the periodic Dirac delta
\begin{equation}
\label{eq:badlimiteddelta}
\delta_L^3(\vx)\coloneqq \frac{1}{L^3}\sum_{\vn\in \mathbb{Z}^3}\ee^{i\v{\xi}_{\vn}\cdot \vx}.
\end{equation}
Taking the limit $L\rightarrow \infty$ we have, 
\begin{align}
\m{K}^{L,(1)} \rightarrow \m{K}^{(1)}_{\val|\val',\val''}(\vx|\vx',\vx'')
&=\delta^3(\vx-\vx'')\mc{E}(\vx-\vx')\cdot \v{W}^{(1)}_{\val|\val',\val''} \nonumber \\
\m{K}^{L,(2)} \rightarrow  \m{K}^{(2)}_{\val | \val', \val''}(\vx | \vx',\vx'')&=
\sigma(\vx-\vx'')\mc{E}(\vx-\vx')\cdot \v{W}^{(2)}_{\val|\val',\val''},
\label{eq:kernelcontinuum}
\end{align}
on $\mathbb{R}^3 \setminus \{ \v{0}\}$ in the sense of distributions,
where
\begin{align}
 \mc{E}(\v{x}) := \frac{1}{\pi}\int_0^1 \left (\frac{\mu(s) \hat{\v{x}}}{\sqrt{s(1-s)}}
\frac{\ee^{-\mu(s)|\vx|}}{4\pi |\vx|} + \frac{\hat{\v{x}}}{\sqrt{s(1-s)}}
\frac{\ee^{-\mu(s)|\vx|}}{4\pi |\vx|^2}\right ) \,\mathrm ds.
\end{align}
\end{theorem}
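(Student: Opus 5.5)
The plan is to substitute the explicit components of $\tilde{\m F}_2=\tilde{\m F}_2^{(1)}+\tilde{\m F}_2^{(2)}$ from Eqs.~\eqref{eq:F2tilde_bulk_components}--\eqref{eq:F2tilde_firstshell_components} into the definition~\eqref{eq:Kuntruncated}. Each resulting triple Fourier sum should then factorize, after resolving the constraint $\delta_{\vn,\vn'+\vn''}$, into a product of two single lattice sums, which we recognize as $\phi_L$, $\sigma_L$ and $\delta_L^3$ from Lemma~\ref{lem:transformed_field}.

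Concretely, we sum over $\vn$ first using the Kronecker delta, so that $\ee^{i\v\xi_{\vn}\cdot\vx}=\ee^{i\v\xi_{\vn'}\cdot\vx}\ee^{i\v\xi_{\vn''}\cdot\vx}$. The phase then splits as $\ee^{i\v\xi_{\vn'}\cdot(\vx-\vx')}\ee^{i\v\xi_{\vn''}\cdot(\vx-\vx'')}$, and the prefactor $L^{-9/2}$ combines with the $L^{-3/2}$ in $\tilde{\m F}_2$ to give $L^{-3}\cdot L^{-3}$, one factor for each independent sum.

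For the bulk term, the $\vn''$-dependence is only the phase, so the $\vn''$ sum is exactly $\delta_L^3(\vx-\vx'')$. The $\vn'$ sum is
\[
-i\,L^{-3}\sum_{\vn'}\xi_{\vn',a}\bigl(\kappa_{\vn'}(1+\kappa_{\vn'})\bigr)^{-1/2}\ee^{i\v\xi_{\vn'}\cdot(\vx-\vx')}.
\]
Since $-i\xi_a\ee^{i\v\xi\cdot\v r}=-\partial_{r_a}\ee^{i\v\xi\cdot\v r}$, this equals $-\partial_a\phi_L(\vx-\vx')$. Contracting with the Hermite factor gives $W^{(1)}$ and hence $\m K^{L,(1)}$.

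For the first-shell term, the extra weight $\sqrt{\kappa_{\vn''}/(1+\kappa_{\vn''})}$ turns the $\vn''$ sum into $\sigma_L(\vx-\vx'')$, which yields $\m K^{L,(2)}$ with $W^{(2)}$. At fixed $L$ the derivative must be read distributionally, because the lattice sums converge only conditionally: $\phi_L$ has coefficients of order $|\vn|^{-2}$, so the sums are treated as tempered distributions on the torus.

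For the $L\to\infty$ limit, Lemma~\ref{lem:transformed_field} already gives $\phi_L\to\phi$ uniformly on compact subsets of $\R^3\setminus\{\v0\}$, and $\sigma_L\to\sigma$ distributionally. We upgrade the first convergence to convergence of gradients away from the origin. One route is the Feynman representation~\eqref{eq:Feynman1}: $\phi_L$ is an $s$-average of periodized Yukawa Green's functions. By the method of images, each of these equals $\sum_{\v m\in\mathbb Z^3}G_{\mu(s)}(\v r+L\v m)$. The image terms with $\v m\neq 0$, together with their gradients, are bounded by $C\ee^{-\sqrt\tau L/2}$ uniformly in $s$ on compacts, since $\mu(s)\ge\sqrt\tau$. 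This gives $C^1$ convergence on compacts.

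Differentiating under the $s$-integral, with integrability supplied by the $1/\sqrt{s(1-s)}$ weight, the gradient of $\ee^{-\mu r}/(4\pi r)$ is
\[
-\hat{\v x}\Bigl(\mu\,\ee^{-\mu r}/(4\pi r)+\ee^{-\mu r}/(4\pi r^2)\Bigr).
\]
Therefore $-\nabla\phi=\mc E$, matching the stated formula. Also $\delta_L^3\to\delta^3$ distributionally, since $\delta_L^3$ is the periodic sum of deltas and only the $\v m=0$ image survives on a fixed compact.

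The main obstacle is justifying the products in~\eqref{eq:kernelcontinuum}. Products such as $\delta^3(\vx-\vx'')\,\mc E(\vx-\vx')$ and $\sigma(\vx-\vx'')\,\mc E(\vx-\vx')$ are tensor products in the independent variables $\vx-\vx''$ and $\vx-\vx'$. Convergence of each factor, distributional for one and locally $C^1$ away from zero for the other, then gives convergence of the tensor product on test functions supported where $\vx\ne\vx'$. That is precisely the restriction to $\R^3\setminus\{\v0\}$ in the statement, and at this point the proof is complete.
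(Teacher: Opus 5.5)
Your proposal is correct and follows the paper's proof. You substitute $\tilde{\m F}_2=\tilde{\m F}_2^{(1)}+\tilde{\m F}_2^{(2)}$ into \eqref{eq:Kuntruncated}, use $\delta_{\vn,\vn'+\vn''}$ to factorize each sum into $\delta_L^3$ or $\sigma_L$ times $-\nabla\phi_L$, and pass to $L\to\infty$ via Lemma~\ref{lem:transformed_field} and the explicit gradient of the Yukawa mixture; your image-sum argument for locally $C^1$ convergence of $\phi_L$ and your appeal to sequential continuity of tensor products of distributions only fill in justification the paper leaves implicit (it just cites the lemma, which gives uniform convergence of $\phi_L$ itself).
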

\begin{proof}
    Replace in Eq.~\eqref{eq:Kuntruncated} the decomposition of $\tilde{\m{F}}_2 = \tilde{\m{F}}^{(1)}_2  + \tilde{\m{F}}^{(2)}_2 $  in Eq.~\eqref{eq:F2tildesum}. This correspondingly gives 
\begin{equation}
\m{K}^{L}_{\val|\val',\val''}(\vx|\vx',\vx'')
=
\m{K}^{L, (1)}_{\val|\val',\val''}(\vx|\vx',\vx'')
+
\m{K}^{L, (2)}_{\val|\val',\val''}(\vx|\vx',\vx'').
\end{equation}
First, we compute $\m{K}^{L, (1)}$ by replacing the expression for $\tilde{\m{F}}_2^{(1)}$
Eq.~\eqref{eq:F2tilde_bulk_components}:
\begin{align}
\m{K}^{L, (1)}_{\val | \val', \val''}(\vx | \vx',\vx'')
=&\; 
-\frac{i}{L^{6}}
\sum_{\vn',\vn''\in \mathbb{Z}^3}
\ee^{i (\v{\xi}_{\vn'}\cdot (\vx-\vx') +\v{\xi}_{\vn''}\cdot (\vx-\vx''))}
\frac{1}{\sqrt{\kappa_{\vn'}(1+\kappa_{\vn'})}}
\delta_{\val',\v 0}(1-\delta_{\val'',\v 0})\sum_{a=1}^3
\xi_{\vn',a}\sqrt{\alpha''_a+1}\,
\delta_{\val,\val''+\ve_a}
\nonumber\\
=&\; 
-i
\delta_{\val',\v 0}(1-\delta_{\val'',\v 0})
\sum_{a=1}^3
\sqrt{\alpha''_a+1}\,
\delta_{\val,\val''+\ve_a}
\nonumber\\
&\;\times
\Bigg(\frac{1}{L^3}
\sum_{\vn'\in\mathbb{Z}^3}
\frac{\xi_{\vn',a}}{\sqrt{\kappa_{\vn'}(1+\kappa_{\vn'})}}
\ee^{i \v{\xi}_{\vn'}\cdot (\vx-\vx')}
\Bigg)
\Bigg(\frac{1}{L^3}
\sum_{\vn''\in\mathbb{Z}^3}
\ee^{i \v{\xi}_{\vn''}\cdot (\vx-\vx'')}
\Bigg).
\end{align}
We obtain
\begin{align}
\m{K}^{L, (1)}_{\val | \val', \val''}(\vx | \vx',\vx'')
=&\;
-\delta_L^3(\vx-\vx'')
\delta_{\val',\v 0}(1-\delta_{\val'',\v 0})
\sum_{a=1}^3
\sqrt{\alpha''_a+1}\,
\delta_{\val,\val''+\ve_a}
\Bigg(
\frac{i}{L^3}\sum_{\vn'\in\mathbb{Z}^3}
\frac{\xi_{\vn',a}}{\sqrt{\kappa_{\vn'}(1+\kappa_{\vn'})}}
\ee^{i \v{\xi}_{\vn'}\cdot (\vx-\vx')}
\Bigg).
\end{align}
Hence,
\begin{equation}
\m{K}^{L, (1)}_{\val|\val',\val''}(\vx|\vx',\vx'')
=
-\delta_L^3(\vx-\vx'')\nabla_x \phi_L(\vx-\vx')\cdot \v{W}^{(1)}_{\val|\val',\val''},
\end{equation}
where
\begin{equation}
\label{eq:W1}
(W_{\val|\val',\val''}^{(1)})_a
\coloneqq
\delta_{\val',\v{0}}(1-\delta_{\val'',\v{0}})\sqrt{\alpha''_a+1}\,
\delta_{\v{\val},\v{\val}''+\ve_a},
\qquad a=1,2,3,
\end{equation}
which is the first of the two equations in \eqref{eq:kerneldecomposition}.

Next, we compute $\m{K}^{L, (2)}$ by replacing the expression for $\tilde{\m{F}}_2^{(2)}$
Eq.~\eqref{eq:F2tilde_firstshell_components}:
\begin{align}
\m{K}^{L, (2)}_{\val | \val', \val''}(\vx | \vx',\vx'')
=&\; -\frac{i}{L^{6}}
\sum_{\vn',\vn''\in\mathbb{Z}^3}
\ee^{i (\v{\xi}_{\vn'}\cdot (\vx-\vx') +\v{\xi}_{\vn''}\cdot (\vx-\vx''))}
\sqrt{\frac{\kappa_{\vn''}}{\kappa_{\vn'}(1+\kappa_{\vn'})(1+\kappa_{\vn''})}}
\delta_{\val',\v 0}\delta_{\val'',\v 0}
\sum_{a=1}^3 \xi_{\vn',a}\delta_{\val,\ve_a}
\nonumber\\
=&\;
-i
\delta_{\val',\v 0}\delta_{\val'',\v 0}
\sum_{a=1}^3 \delta_{\val,\ve_a}
\Bigg(\frac{1}{L^3}
\sum_{\vn'\in\mathbb{Z}^3}
\frac{\xi_{\vn',a}}{\sqrt{\kappa_{\vn'}(1+\kappa_{\vn'})}}
\ee^{i \v{\xi}_{\vn'}\cdot (\vx-\vx')}
\Bigg)
\Bigg(\frac{1}{L^3}
\sum_{\vn''\in\mathbb{Z}^3}
\sqrt{\frac{\kappa_{\vn''}}{1+\kappa_{\vn''}}}
\ee^{i \v{\xi}_{\vn''}\cdot (\vx-\vx'')}
\Bigg)
\nonumber\\
&=
-\sum_{a=1}^3
\delta_{\val',\v 0}\delta_{\val'',\v 0}\delta_{\val,\ve_a}\,
(\partial_{x_a}\phi_L)(\vx-\vx')\,\sigma_L(\vx-\vx'')
\nonumber\\
&=
-\sigma_L(\vx-\vx'')\nabla_x\phi_L(\vx-\vx')\cdot \v{W}^{(2)}_{\val|\val',\val''},
\end{align}
where
\begin{equation}
(W_{\val|\val',\val''}^{(2)})_a
\coloneqq
\delta_{\val',\v 0}\delta_{\val'',\v 0}\delta_{\val,\ve_a},
\qquad a=1,2,3,
\end{equation}
which is the second equation in~\eqref{eq:kerneldecomposition}.

The relation in the $L\rightarrow \infty$ limit is obtained from Lemma~\ref{lem:transformed_field}. 
The contribution to the first term in Eq.~\eqref{eq:kerneldecomposition} in the $L\rightarrow \infty$ is given by
\begin{align}
   \delta^3(\vx-\vx'') [ -\nabla_x \phi(\vx -\vx')] &= \delta^3(\vx-\vx'')(-\nabla_x) \left ( 
\frac{1}{\pi}\int_0^1 \frac{1}{\sqrt{s(1-s)}}
\frac{\ee^{-\mu(s)|\vx -\vx'|}}{4\pi |\vx -\vx' |}\,\mathrm ds \right ) \nonumber \\
&=   
\delta^3(\vx-\vx'')\frac{1}{\pi}\int_0^1 \left (\frac{\mu(s) \hat{\v{x}}}{\sqrt{s(1-s)}}
\frac{\ee^{-\mu(s)|\vx -\vx'|}}{4\pi |\vx -\vx'|} + \frac{\widehat{\v{x}-\v{x}'}}{\sqrt{s(1-s)}}
\frac{\ee^{-\mu(s)|\vx -\vx'|}}{4\pi |\vx -\vx'|^2}\right ) \,\mathrm ds.
\end{align}
The second term in Eq.~\eqref{eq:kerneldecomposition} in the $L\rightarrow \infty$ is similar, and given by
\begin{align}
    \sigma(\v{x}-\v{x}'') [ - \nabla_x \phi (\v{x}-\v{x}')] &= \left ( \delta^{3}(\v{x}-\v{x}'') - \frac{1}{\pi } \int_0^1 \sqrt{\frac{s}{1-s}} \frac{\ee^{-\mu(s)|\v{x}-\v{x}'' |}}{4\pi |\v{x}-\v{x}''|}\, \mathrm{d}s\right) \left [ - \nabla_x \phi (\v{x}-\v{x}') \right ].
\end{align}

\end{proof}
This provides an exact specification of the nonlinear interaction for the plasma system in the Lyapunov-transformed picture, both on the finite torus $\mathbb{T}_L^3$ and also with an explicit, analytic form in the $L\rightarrow \infty$ limit in terms of a mixture of Yukawa/screened fields. In the next section we show that in fact the unbounded analytic form can also be used for the bounded torus setting, with some simple caveats.

\subsection{The kernel $\m{K}$ to be block-encoded}

The form of the distributions $\phi_L (\v{x})$, $\sigma_L(\v{x})$ on the 3-torus can be related to that of the unbounded $\phi(\v{x})$, $\sigma(\v{x})$ via periodization by images. The precise statement is now given. 

\begin{proposition}[Periodization of $\phi(\vx)$ and $\sigma(\vx)$]
\label{prop:periodization}

The distributions $\phi_L$ and $\sigma_L$ on $\TL^3$ are, in the sense of distributions, the periodizations of $\phi$ and $\sigma$, respectively, 
\begin{align}
    \phi_L(\vx)
    =&\;
    \sum_{\v{m}\in \mathbb{Z}^3}
    \phi(\vx+\v{m}L),
    \label{eq:periodic_result_1}
    \\
    \sigma_L(\vx)
    =&\;
    \sum_{\v{m}\in \mathbb{Z}^3}
    \sigma(\vx+\v{m}L).
    \label{eq:periodic_result_2}
\end{align}
Moreover, \eqref{eq:periodic_result_1} converges locally uniformly on compact subsets of $\vx\in\TL^3\setminus\{\v 0\}$.
\end{proposition}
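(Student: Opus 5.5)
The plan is to use the Poisson summation formula, applied in the sense of distributions, and to organize the argument so that the hard part reduces to Yukawa potentials. The key input is the Feynman-parameter representation from Lemma~\ref{lem:transformed_field}. It writes both Fourier multipliers $\hat\phi(\v\xi)=(\kappa(\v\xi)(1+\kappa(\v\xi)))^{-1/2}$ and $\hat\sigma(\v\xi)=\sqrt{\kappa/(1+\kappa)}$, where $\kappa(\v\xi)=\lvert\v\xi\rvert^2+\tau$, as superpositions over $s\in[0,1]$ of the screened multipliers $(\lvert\v\xi\rvert^2+\mu(s)^2)^{-1}$. The multiplier for $\hat\sigma$ additionally carries a constant term $1$, which corresponds to a $\delta$-function.

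\textbf{Step 1.} Define $\phi$ and $\sigma$ on $\R^3$ as the inverse Fourier transforms of $\hat\phi$ and $\hat\sigma$. The computations in Lemma~\ref{lem:transformed_field} identify these with the explicit formulas \eqref{eq:phixlimit} and \eqref{eq:sigmaxlimit}. Both are tempered distributions.

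\textbf{Step 2.} Test against a periodic test function $\psi\in C^\infty(\TL^3)$, equivalently an $L$-periodic smooth function on $\R^3$. Pair the right-hand side $\sum_{\v m}\phi(\cdot+\v m L)$ with $\psi$ restricted to a fundamental cell; this gives $\int_{\R^3}\phi\,\psi$ over the whole space. Fourier-expand $\psi$ as $\psi(\vx)=L^{-3}\sum_{\vn}\hat\psi_{\vn}\ee^{i\v\xi_{\vn}\cdot\vx}$ (up to the paper's normalization). Each exponential then pairs with $\phi$ to give $\hat\phi(\v\xi_{\vn})$, so the total is $L^{-3}\sum_{\vn}\hat\phi(\v\xi_{\vn})\hat\psi_{\vn}$. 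This is exactly the pairing of $\phi_L$ with $\psi$.

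To justify the interchange of sums and integrals, first note that $\phi$ decays exponentially, at rate at least $e^{-\sqrt\tau\lvert\vx\rvert}/\lvert\vx\rvert$ uniformly in $s$, since $\mu(s)\ge\sqrt\tau$. It is also integrable near the origin, since its singularity is $O(1/\lvert\vx\rvert)$ and the weight $(s(1-s))^{-1/2}$ is integrable. Hence $\phi\in L^1(\R^3)$. The image sum therefore converges in $L^1$ of a cell, which legitimizes the computation.

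For $\sigma$, split $\sigma=\delta-\sigma_{\rm reg}$. The periodization of $\delta$ is $\delta_L^3$ by the standard Poisson summation formula. The regular part $\sigma_{\rm reg}$ is handled exactly as $\phi$ was, with weight $\sqrt{s/(1-s)}$, which is still integrable on $[0,1]$.

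\textbf{Step 3: local uniform convergence of \eqref{eq:periodic_result_1}.} Fix a compact $K\subset\TL^3\setminus\{\v 0\}$ and lift it to a compact set in the fundamental cell that avoids all lattice points $\v mL$. For $\v m\neq 0$ with $\lvert\v m\rvert$ large we have $\lvert\vx+\v mL\rvert\ge c\lvert\v m\rvert L$. Combined with $\phi(\vx)\le e^{-\sqrt\tau\lvert\vx\rvert}/(4\pi\lvert\vx\rvert)$, which holds because $\pi^{-1}\int_0^1(s(1-s))^{-1/2}\,\mathrm ds=1$, this gives a Weierstrass M-test bound $\sum_{\v m}e^{-c\sqrt\tau\lvert\v m\rvert L}$. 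The finitely many remaining terms, including $\v m=0$, are continuous on $K$ because $K$ avoids the singular points.

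The main obstacle is rigor in Step 2, specifically the handling of the delta/singular parts and the choice of test-function class. I would handle this first by working with the regularized families $\phi^{(s)}$, the single Yukawa potentials, for which Poisson summation is classical. I would then integrate in $s$ using Fubini, with the $s$-weights as the dominating integrable functions.
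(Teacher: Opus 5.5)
Your proposal is correct and follows essentially the paper's route. Both write $\phi$ and $\delta-\sigma$ as Feynman-parameter mixtures of Yukawa kernels $G_{\mu(s)}$ with $\mu(s)\ge\sqrt{\tau}$, periodize each kernel, and recover the Fourier multipliers of $\phi_L$ and $\sigma_L$ from the $s$-integral identities; in both, the $\delta$ part periodizes to $\delta_L^{(3)}$.

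The differences are minor. You apply Poisson summation directly to $\phi\in L^1(\R^3)$ by pairing with periodic test functions, whereas the paper proves the Yukawa periodization identity by a uniqueness argument for the periodic Green's function. Your Weierstrass M-test with $\phi(\vx)\le \ee^{-\sqrt{\tau}\lvert\vx\rvert}/(4\pi\lvert\vx\rvert)$ also makes explicit the locally uniform convergence that the paper only asserts.
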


\begin{proof}
We use the standard periodization identity for the Yukawa Green's function. For
$\mu>0$, define
\begin{equation}
\label{eq:Yukawa_kernel_periodization_proof}
G_{\mu}(\vx)
\coloneqq
\frac{\ee^{-\mu|\vx|}}{4\pi|\vx|}.
\end{equation}
Then, in the sense of distributions on $\TL^3$ and pointwise on $\TL^3\setminus\{\v{0}\}$,
\begin{equation}
\label{eq:Yukawa_periodization_identity}
\sum_{\vn\in\mathbb Z^3}
G_\mu(\vx+\vn L)
=
\frac{1}{L^3}
\sum_{\vn\in\mathbb Z^3}
\frac{\ee^{i\v{\xi}_{\vn}\cdot\vx}}
{|\v{\xi}_{\vn}|^2+\mu^2}.
\end{equation}
Indeed, both sides are $L$-periodic distributions and both solve
\begin{equation}
\label{eq:Yukawa_periodic_Green_equation}
(-\nabla_x^2+\mu^2)G_{\mu}^L=\delta_L^{(3)}(\vx),
\end{equation}
where $G_{\mu}^L$ denotes the left-hand side of
\eqref{eq:Yukawa_periodization_identity}. The Fourier representation in
\eqref{eq:Yukawa_periodization_identity} is then fixed uniquely by
\begin{equation}
\label{eq:Yukawa_periodic_coefficients}
\delta_L^{(3)}(\vx)
\coloneqq
\frac{1}{L^3}
\sum_{\vn\in\mathbb Z^3}
\ee^{i\v{\xi}_{\vn}\cdot\vx}.
\end{equation}

We first prove the claim for $\phi_L$. By Lemma~\ref{lem:transformed_field},
\begin{equation}
\label{eq:phi_unbounded_yukawa_mixture}
\phi(\vx)
=
\frac{1}{\pi}
\int_0^1
\frac{G_{\mu(s)}(\vx)}{\sqrt{s(1-s)}}
\,\mathrm ds.
\end{equation}
We have $\mu(s)\geq \sqrt{\tau}$ uniformly in $s$. Hence, the
image sum is exponentially convergent away from $\vx=\v 0$, and the
interchange of the image sum with the $s$-integral is justified there. Thus,
for $\vx\in\TL^3\setminus\{\v 0\}$,
\begin{equation}
\label{eq:periodized_phi_first_step}
\sum_{\vn\in\mathbb Z^3}
\phi(\vx+\vn L)
=
\frac{1}{\pi}
\int_0^1
\frac{1}{\sqrt{s(1-s)}}
\sum_{\vn \in\mathbb Z^3}
G_{\mu(s)}(\vx+\vn L)
\,\mathrm ds.
\end{equation}
Applying \eqref{eq:Yukawa_periodization_identity} gives
\begin{equation}
\label{eq:periodized_phi_fourier_step}
\sum_{\vn \in\mathbb Z^3}
\phi(\vx+\vn L)
=
\frac{1}{L^3}
\sum_{\vn\in\mathbb Z^3}
\ee^{i\v{\xi}_{\vn}\cdot\vx}
\frac{1}{\pi}
\int_0^1
\frac{1}{\sqrt{s(1-s)}}
\frac{1}{|\v{\xi}_{\vn}|^2+\tau+s}
\,\mathrm ds.
\end{equation}
Using $\kappa_{\vn}=|\v{\xi}_{\vn}|^2+\tau$ and the Feynman-parameter identity
\begin{equation}
\label{eq:Feynman_identity_phi_periodization}
\frac{1}{\pi}
\int_0^1
\frac{1}{\sqrt{s(1-s)}}
\frac{1}{\kappa_{\vn}+s}
\,\mathrm ds
=
\frac{1}{\sqrt{\kappa_{\vn}(1+\kappa_{\vn})}},
\end{equation}
we obtain
\begin{equation}
\label{eq:periodized_phi_result}
\sum_{\vn \in\mathbb Z^3}
\phi(\vx+\vn L)
=
\frac{1}{L^3}
\sum_{\vn\in\mathbb Z^3}
\frac{\ee^{i\v{\xi}_{\vn}\cdot\vx}}
{\sqrt{\kappa_{\vn}(1+\kappa_{\vn})}}
=
\phi_L(\vx).
\end{equation}
This proves the first claim \eqref{eq:periodic_result_1}.

We now prove the claim for $\sigma_L$. By Lemma~\ref{lem:transformed_field},
\begin{equation}
\label{eq:sigma_unbounded_yukawa_mixture}
\sigma(\vx)
=
\delta^{(3)}(\vx)
-
\frac{1}{\pi}
\int_0^1
\sqrt{\frac{s}{1-s}}
G_{\mu(s)}(\vx)\,\mathrm ds.
\end{equation}
Periodizing the delta distribution gives
\begin{equation}
\label{eq:delta_periodization}
\sum_{\vn \in\mathbb Z^3}
\delta^{(3)}(\vx+\vn L)
=
\delta_L^{(3)}(\vx)
=
\frac{1}{L^3}
\sum_{\vn\in\mathbb Z^3}
\ee^{i\v{\xi}_{\vn}\cdot\vx}.
\end{equation}
Therefore, using \eqref{eq:Yukawa_periodization_identity} again, we find in
the sense of distributions that
\begin{equation}
\label{eq:periodized_sigma_fourier_step}
\sum_{\vn \in\mathbb Z^3}
\sigma(\vx+\vn L)
=
\frac{1}{L^3}
\sum_{\vn\in\mathbb Z^3}
\ee^{i\v{\xi}_{\vn}\cdot\vx}
\Bigg(
1
-
\frac{1}{\pi}
\int_0^1
\sqrt{\frac{s}{1-s}}
\frac{1}{\kappa_{\vn}+s}
\,\mathrm ds
\Bigg).
\end{equation}
The remaining scalar integral is elementary. For $\kappa>0$,
\begin{equation}
\label{eq:Feynman_identity_sigma_periodization}
\frac{1}{\pi}
\int_0^1
\sqrt{\frac{s}{1-s}}
\frac{1}{\kappa+s}
\,\mathrm ds
=
1-\sqrt{\frac{\kappa}{1+\kappa}}.
\end{equation}
Applying \eqref{eq:Feynman_identity_sigma_periodization} with
$\kappa=\kappa_{\vn}$ gives
\begin{equation}
\label{eq:periodized_sigma_result}
\sum_{\vn \in\mathbb Z^3}
\sigma(\vx+\vn L)
=
\frac{1}{L^3}
\sum_{\vn\in\mathbb Z^3}
\sqrt{\frac{\kappa_{\vn}}{1+\kappa_{\vn}}}
\ee^{i\v{\xi}_{\vn}\cdot\vx}
=
\sigma_L(\vx).
\end{equation}
This proves the second claim \eqref{eq:periodic_result_2}.
\end{proof}

 If we simply assume that $L$ is extremely large compared to the Debye length, and that the plasma is localized within the computational cell far from the boundaries, then we can simply take $$\phi_L(\v{x}) \approx \phi(\v{x}) \qquad \sigma_L(\v{x}) \approx \sigma(\v{x}),$$ 
 and the error can be made arbitrarily small for sufficiently large $L$ (we took $L=1000$ in the examples in Figs.~\ref{fig:Fourier_plot}--\ref{fig:Gaussian_plot}). This is the physically relevant regime, since when these assumptions fail, the periodic boundary conditions introduce spurious interactions of the plasma with its copy across the boundary. Note however we consider a Fourier mode as a benchmark for the convergence of our algorithm, since this is standard in the field. Alternatively, one could use more terms in the above Poisson summation formula, and combine block-encodings via LCU (we expect nearest neighbor corrections should suffice in practice), but we will not pursue this approach here.

 Under these assumptions, Theorem~\ref{theorem:kerneluntruncatedcontinuum} tells us that the untruncated kernel~\eqref{eq:Kuntruncated} can be approximated with its continuum version
\begin{align}
\label{eq:Kapprox}
\m{K}^{L}_{\val|\val',\val''}(\vx|\vx',\vx'') & \approx \m{K}_{\val|\val',\val''}(\vx|\vx',\vx'') = \m{K}^{(1)}_{\val|\val',\val''}(\vx|\vx',\vx'') +\m{K}^{(2)}_{\val|\val',\val''}(\vx|\vx',\vx'') \\ & =\delta^3(\vx-\vx'')\mc{E}(\vx-\vx')\cdot \v{W}^{(1)}_{\val|\val',\val''} +
\sigma(\vx-\vx'')\mc{E}(\vx-\vx')\cdot \v{W}^{(2)}_{\val|\val',\val''},
\end{align}
with $\v{W}^{(1,2)}$ having the form in Eq.~\eqref{eq:kerneldecomposition}.

\subsection{Regulation of the real-space singularity}
The electrostatic field in the Lyapunov picture blows up for $\v{x} \rightarrow \v{0}$, and so we must regulate it. We make the natural choice of a hard-core cut-off for $\|\v{x}\| \le a$, where $a$ is the cut-off scale. This has the consequence of distorting the spectral data for $\tilde{\m{F}}_2$ in Fourier space, and thus the nonlinear interaction in the plasma model on the spatial degrees of freedom given by $L^2(\mathbb{T}_L^3)$. Note that this regulation is separate from the projection onto the finite dimensional subspace under $\Pi_{\NF}$.

Recall that the interaction is of circulant form, and has a Coulomb-like singularity
\begin{equation}
    \int_0^1 \left (\frac{\mu(s) \hat{\v{x}}}{\sqrt{s(1-s)}}
\frac{\ee^{-\mu(s)|\vx|}}{4\pi |\vx|} + \frac{\widehat{\v{x}}}{\sqrt{s(1-s)}}
\frac{\ee^{-\mu(s)|\vx|}}{4\pi |\vx|^2}\right ) \,\mathrm ds
\end{equation}
with similar contributions arising in $\sigma (\v{x})$. For large separations we get exponential decay in real-space, while for very small separations we get divergences. We first consider regulating a single fixed term in the integrand, and set $r = |\vx|$. For convenience locally to this subsection we define
\begin{equation}
    E_\mu(\vx) := F_\mu(r) \hat{\vx},
\end{equation}
with
\begin{equation}
    F_\mu (r) := \ee^{-\mu r} \frac{(1 +\mu r)}{r^2}.
\end{equation}
We now regulate this to the form
\begin{equation}
F_{\mu,a}(r)=\begin{cases}
F_\mu(a), & 0<r\le a,\\
F_\mu(r), & r>a.
\end{cases}
\end{equation}
We can now consider both the regulated and unregulated fields in Fourier space. The unregulated case has
\begin{align}
    \widehat{\v{E}}_\mu (\v{\xi}) &= -i \v{\xi} \widehat{Y}_\mu(\v{\xi}) \nonumber \\
    \widehat{Y}_\mu(\v{\xi}) &= \frac{4 \pi}{|\v{\xi}|^2 + \mu^2}.
\end{align}
This implies that $|\widehat{\v{E}}_\mu| = O(1/|\v{\xi}|)$, which is the slow decay in Fourier space that we highlighted in the main text.

The Fourier space profile for the regulated field can also be computed from
\begin{equation}
\widehat{\v{E}}_{\mu,a}(\v{\xi}) = \int_{\mathbb{R}^3} \v{E}_{\mu,a}(\v{x}) \ee^{-i \v{\xi}\cdot \v{x}}\,\mathrm{d}^3 x.
\end{equation}
Indeed, it suffices to consider the difference between the regulated and unregulated fields, which takes the form
\begin{equation}
    \widehat{\v{E}}_{\mu}(\v{\xi}) - \widehat{\v{E}}_{\mu,a}(\v{\xi})  = -i \v{\xi} (\widehat{Y}_\mu - \widehat{Y}_{\mu,a}) =  \int_{r \le a} \left (\v{E}_{\mu}(\v{x}) - \v{E}_{\mu,a}(\v{x}) \right ) \ee^{-i \v{\xi}\cdot \v{x}}\,\mathrm{d}^3 x.
\end{equation}
A direct calculation gives that
\begin{equation}
    \widehat{Y}_\mu - \widehat{Y}_{\mu,a} = -\int_0^a\left[ r^2F_\mu(a)-\ee^{-\mu r}(1+\mu r)\right ]j_1(qr)\,\mathrm{d}r,
\end{equation}
where $q := |\v{\xi}|$ and $j_1(x)$ is the first spherical Bessel function. We now consider the regimes $q \ll 1/a$ and $q \gg 1/a$.

For $q \ll 1/a$ we have that
\begin{equation}
     \widehat{Y}_\mu - \widehat{Y}_{\mu,a} = O\left ( a^2 ( q^2 + \mu^2)\widehat{Y}_\mu  \right),
\end{equation}
and so the low frequency regime is only weakly affected by the regulation. For the regime $q \gg 1/a$ on the other hand, we find that
\begin{equation}
    |\widehat{\v{E}}_{\mu,a}| = O(1/q^3),
\end{equation}
and so the regulated Yukawa field $\widehat{\v{E}}_{\mu,a}$ has a much faster decay in Fourier space. The above cutoff has a discontinuity in derivatives at $r=a$, which limits the rate at which the Fourier spectrum decays. For smoother cutoffs we can obtain faster decays of the form $O(1/q^m)$ with $m \ge 4$.

The consequence of this is that if we consider a truncation to some scale $\|\v{n}\|_{\infty} \le \NF$ we can define an electric field $\v{E}^L_{\mu, a, \NF}$ by discarding modes with components above $\NF$. This provides a very good approximation to the actual regulated Yukawa field $\v{E}_{\mu,a}$, provided $\NF \ge 1/a$. Moreover it converges rapidly to the correct regulated field as we take $\NF \rightarrow \infty$.

In terms of block-encoding, the truncated interaction is dense in the Fourier indices, but does decay algebraically. However, we can do better by using the fact that the interaction has \emph{exponential} decay in real-space. We next discuss how this block-encoding is done in real-space, and for simplicity do the regularization tailored to the real-space sampling method we use.

\subsection{Real-space sampling}

Since the spatial dependence has been truncated to the Fourier modes $
\mathcal{Z}_{\NF}^3=\{-\NF,\ldots,\NF\}^3$,
each spatial variable is represented by a finite Fourier series. We therefore evaluate the kernel~\eqref{eq:Kapprox} on the uniform lattice
\begin{align}
    x_{\vj}=\frac{L}{N_x}\vj,
\qquad
\vj\in \mathbb Z_{N_x}^3,
\qquad
N_x=2\NF+1, \qquad \Delta_x := \frac{L}{N_x}
\end{align}
On this lattice, the exponentials $
    \{\ee^{2\pi i \vn\cdot \vj/N_x}\}_{\vn\in \mathcal{Z}_{\NF}^3}$,
form a discrete Fourier basis. Hence the grid values of any function with Fourier support in $\mathcal{Z}_{\NF}^3$ uniquely determine its Fourier coefficients, and vice versa. This is consistent with the fact that the largest mode is below the Nyquist sampling rate~\cite{shannon1949communication,nyquist1928certain}. With this, and the appropriate delta-function discretization, 
$$\delta(\v{x} - \v{x}') \rightarrow \frac{1}{\Delta^3_x} \delta_{\vj,\vj'}$$
we have from Eq.~\eqref{eq:Kapprox}
\begin{align}
\m{K}_{\v{j}, \val |\v{j}', \val', \v{j}'',\val''} & = \m{K}^{(1)}_{\v{j}, \val |\v{j}', \val', \v{j}'',\val''} + 
\m{K}^{(2)}_{\v{j}, \val |\v{j}', \val', \v{j}'',\val''} \\
\m{K}^{(1)}_{\v{j}, \val |\v{j}', \val', \v{j}'',\val''}
&=\frac{1}{\Delta^3_x} \delta_{\v{j}, \v{j}''} \mc{E}_{\v{j},\v{j}'}\cdot \v{W}^{(1)}_{\val|\val',\val''}  \\
\m{K}^{(2)}_{\v{j}, \val |\v{j}', \val', \v{j}'',\val''} &=
 \sigma_{\v{j}, \v{j}''} \mc{E}_{\v{j},\v{j}'}\cdot \v{W}^{(2)}_{\val|\val',\val''}.
\end{align}
where the expression $\mc{E}_{\v{j},\v{j}'}$ can be unpacked as follows,
\begin{align}
   \mc{E}_{\v{j},\v{j}'}&=  \frac{1}{\pi}\int_0^1 \mathrm{d}s \left (\frac{\mu(s) \hat{\v{x}}}{\sqrt{s(1-s)}}
\frac{\ee^{-\mu(s)\Delta_x|\v{j} -\v{j}'|}}{4\pi \Delta_x|\v{j} -\v{j}'|} + \frac{\hat{\v{x}}}{\sqrt{s(1-s)}} \frac{\ee^{-\mu(s) \Delta_x | \v{j} - \v{j}'|}}{4\pi \Delta_x^2|\v{j}-\v{j}'|^2 } \right ) .
\end{align}
Similarly,
\begin{align}
\sigma_{\v{j},\v{j}''}&= \frac{1}{\Delta_x^3} \delta_{\v{j}, \v{j}''} - \frac{1}{\pi} \int_0^1 \sqrt{\frac{s}{1-s}} \frac{\ee^{-\mu(s)\Delta_x|\v{j}-\v{j}''|}}{4\pi \Delta_x|\v{j}-\v{j}''|} \,\mathrm{d}s.
\end{align} 
We must regulate all singularities that occur at $\v{j}=\v{j}'$. We do this simply by setting
\begin{equation}
    \frac{1}{|\v{j}-\v{j}'|} \rightarrow 1 \mbox{ when } \v{j} \rightarrow \v{j}',
\end{equation}

With this, we define
\begin{align}
    \sigma'_{\v{j},\v{j}''} := \begin{dcases} \frac{1}{\pi } \int_0^1 \sqrt{\frac{s}{1-s}} \frac{\ee^{-\mu(s)\Delta_x|\v{j}-\v{j}''|}}{4\pi \Delta_x|\v{j}-\v{j}''|}\,\mathrm{d}s, \quad \v{j} \ne \v{j}'' \\
    \frac{1}{8\pi  \Delta_x}, \quad \v{j} = \v{j}''.
    \end{dcases}
\end{align}
and we have
\begin{align}
    \sigma_{\vj,\vj'}= \frac{1}{\Delta_x^3} \delta_{\v{j}, \v{j}''} - \sigma'_{\vj \vj''}.
\end{align}

Furthermore, denoting by $\v{e}_a$ unit vectors along coordinate directions
\begin{align}
    \mc{E}_{\v{j},\v{j}}&:=  \frac{\sum_{a=1}^3 \v{e_a}}{\pi}\int_0^1 \mathrm{d}s \left (\frac{\sqrt{s+\tau}}{\sqrt{s(1-s)}}
\frac{1}{4\pi \Delta_x} + \frac{1}{\sqrt{s(1-s)}} \frac{1}{4\pi \Delta_x^2 } \right )  \nonumber \\
&=  \frac{\sum_{a=1}^3 \v{e_a}}{\pi} \left (\int_0^1 \frac{\sqrt{s+\tau}}{\sqrt{s(1-s)}}
\frac{1}{4\pi \Delta_x} \,\mathrm{d}s +  \frac{1}{4 \Delta_x^2 } \right ) \nonumber \\
&=  \frac{\sum_{a=1}^3 \v{e_a}}{4\pi} \left (\sqrt{\tau}\,{}_2F_1\!\left(-\frac12,\frac12;1;-\frac{1}{\tau}\right)
\frac{1}{ \Delta_x}  +  \frac{1}{ \Delta_x^2 } \right ) \nonumber \\
& \le \frac{\sum_{a=1}^3 \v{e_a}}{4\pi} \left (\frac{\sqrt{1+\tau}}{ \Delta_x}
  +  \frac{1}{ \Delta_x^2 } \right ),
\end{align}
where the last inequality understood to apply component-wise. For simplicity we shall replace 
\begin{equation}
\mathcal{E}_{\v{j}, \v{j}} \rightarrow \mathcal{E}_{\v{j}, \v{j}} = \frac{\sum_{a=1}^3 \v{e_a}}{4\pi} \left (\frac{\sqrt{1+\tau}}{ \Delta_x}
  +  \frac{1}{ \Delta_x^2 } \right )
\end{equation}
This choice preserves monotonicity of the components in the bin-separation $|\v{j}-\v{j}'|$, and so
\begin{equation}
    \|\mc{E}^a\|_{\max} \le \frac{1}{4\pi} \left (\frac{\sqrt{1+\tau}}{ \Delta_x}
  +  \frac{1}{ \Delta_x^2 } \right ),
\end{equation}
for any $a=1,2,3$ with the max occurring on the diagonal entries.
Also we have $\|\sigma'\|_{\max} = 1/(8\pi \Delta_x)$, obtained on the diagonal entries.
We therefore have that $\m{K} = \m{K}^{(c)} + \m{K}^{(n)}$ is composed of the term
\begin{equation}
   \m{K}^{(c)}_{\v{j}, \val |\v{j}', \val', \v{j}'',\val''}:= \frac{1}{\Delta^3_x}\delta_{\v{j}, \v{j}''} \mc{E}_{\v{j},\v{j}'}\cdot (\v{W}^{(1)}_{\val|\val',\val''} +\v{W}^{(2)}_{\val|\val',\val''}),
\end{equation}
plus the term with the weighting $\sigma'_{\v{j},\v{j}''}$,
\begin{equation}
   \m{K}^{(n)}_{\v{j}, \val |\v{j}', \val', \v{j}'',\val''} := \sigma'_{\v{j},\v{j}''} \mc{E}_{\v{j},\v{j}'}\cdot  \v{W}^{(2)}_{\val|\val',\val''}.
\end{equation}
The superscripts $(c,n)$ are chosen since the first corresponds to a convolution, while the second does not.

The matrix $\m{K} = (\m{K}_{\v{j}, \val |\v{j}', \val', \v{j}'',\val''})$ defined above is obtained via a \emph{real-space sampling} of the full interaction that arises from the electrostatic field. This sampled matrix of data corresponds to a finite dimensional matrix $\tilde{\m{G}}_2$ in Fourier space defined as
\begin{align}
   \frac{1}{\Delta^{9/2}_x} (\tilde{\m{G}}_2)_{\vn,\val | \vn', \val', \vn'', \val''} &:= \frac{1}{N_x^{9/2}}
\sum_{\vj,\vj',\vj''\in \mathbb{Z}_{N_x}^3}
\ee^{-\frac{2\pi i}{N_x} (\vn \cdot \v{j} - \vn' \cdot \v{j}' - \vn'' \cdot \v{j}'') }\m{K}_{\v{j}, \val |\v{j}', \val', \v{j}'',\val''} \nonumber \\
&= \left [(\mathbb{F}_{N_x}^\dagger \otimes \m{I}) \m{K} ((\mathbb{F}_{N_x}\otimes \m{I})^{\otimes 2} \right ]_{\vn,\val | \vn', \val', \vn'', \val''},
\end{align}
where $\mathbb{F}_{N_x}$ denotes the discrete Fourier transform in 3-d over all the spatial indices $(j_1,j_2,j_3)$. By this construction we therefore define a truncated nonlinear interaction by sampling the exponentially decaying field. 

The interaction obtained from the real-space sampling in turn approximates the truncation in Fourier space to modes $\v{n} \in \{-\NF, \dots, \NF\}^{3}$, and
\begin{equation}
    \tilde{\m{G}}_2 \approx (\Pi_{\NF} \otimes \m{I})\tilde{\m{F}}_2 (\Pi_{\NF} \otimes \m{I})^{\otimes 2}.
\end{equation}
The error in this approximation is due to aliasing errors where high-frequency modes also contribute to $\tilde{\m{G}}_2$, while they are absent in $\tilde{\m{F}}_2$. However, as explained above, the regulation causes the Fourier modes to decay quickly, and therefore as $\NF\rightarrow \infty$ the gap between $\tilde{\m{G}}_2$ and $\tilde{\m{F}}_2$ closes. Since here we are concerned with the asymptotic scaling of the algorithm, the aliasing errors will not contribute under a smooth regularization. Alternatively we can simply take the data $(\tilde{\m{F}}_1, \tilde{\m{G}}_2, \v{g}(0))$ as defining the finite dimensional benchmark problem that approximates the plasma system of interest. Moreover, as we shall see later, the block-encoding scale factor of $\tilde{\m{G}}_2$ is $O(\sqrt{\NF \NH})$, which coincides with that of $\|\tilde{\m{F}}_2\|$, which in turn is asymptotically smaller than the norm of $\tilde{F}_1$, which is the actual bottleneck cost. We therefore have,

\begin{align}
 \frac{1}{\Delta^{9/2}_x} (\tilde{\m{F}}_2)_{\vn,\val | \vn', \val', \vn'', \val''} \approx \frac{1}{\Delta^{9/2}_x} (\tilde{\m{G}}_2)_{\vn,\val | \vn', \val', \vn'', \val''} &= \left [(\mathbb{F}_{N_x}^\dagger \otimes \m{I}) \m{K} ((\mathbb{F}_{N_x}\otimes \m{I})^{\otimes 2} \right ]_{\vn,\val | \vn', \val', \vn'', \val''},
\end{align}
and we simply take $\tilde{\m{F}}_2 = \tilde{\m{G}}_2$ in what follows.
This implies that a unitary block-encoding $\m{U}_{\m{K}}$ of $\m{K}$ with scale factor $\alpha_{\m{K}}$ gives a unitary block-encoding $\m{U}_{\tilde{\m{F}}_2}$ of $\tilde{\m{F}}_2$ with scale factor 
\begin{equation}
    \alpha_{\tilde{\m{F}}_2} = \Delta_x^{9/2}\alpha_{\m{K}},
\end{equation}
via controlled discrete Fourier transforms conjugating $\m{U}_{\m{K}}$, which can be efficiently implemented on a quantum computer. The problem of constructing a block-encoding of $\tilde{\m{F}}_2$ can thus be reduced to constructing a unitary block-encoding of $\m{K}$. This  is a dense matrix obtained from evaluating a rapidly decreasing profile on a cube of lattice points, however we construct $\m{U}_{\m{K}}$ via a Hierarchical Block-Encoding (HBE) approach. We turn to this theory in the next section.

\newpage
\section{Hierarchical block-encodings}
\label{app:HBEs}

This appendix develops the hierarchical block-encoding framework used to encode dense spatial interactions with scale factors controlled by kernel decay.

\subsection{Block-encoding of rectangular matrices}
\label{app:block-encoding-rectangular}
We begin by generalizing the notation of block-encodings to rectangular matrices. We assume for simplicity that a given matrix transforms definite numbers of qubits, so the dimensions are power of $2$. In what follows, we shall use the notation that \emph{unprimed} indices or variables refer to the \emph{rows} of a matrix $\m{A}$, and \emph{primed} indices or variables refer to the \emph{columns} of $\m{A}$. We begin by introducing a convenient extension of the standard unitary block-encoding.

\textbf{Rectangular block-encodings.} Let $\m{A} \in \mathbb{C}^{2^s \times 2^{s'}}$ be a rectangular matrix, mapping $s'$ qubits into $s$ qubits. We say that $\m{U}_{\m{A}}$ is an $(\alpha, a_{\mathrm{out}}, a'_{\mathrm{in}}, \epsilon)$ block-encoding of $A$~if 
\begin{equation}
\| (\bra{0^{a_{\mathrm{out}}}} \otimes I_s) \m{U}_{\m{A}} (\ket{0^{a'_{\mathrm{in}}}} \otimes I_{s'}) - \m{A}/\alpha\| \le \epsilon,
\end{equation}
and $a_{\mathrm{out}}+s = a'_{\mathrm{in}}+s'$.

In particular, for an exact block-encoding, with $\epsilon=0$, we have that
\begin{equation}
    \m{U}_{\m{A}} \ket{0^{a'_{\mathrm{in}}}}\ket{\psi} = \ket{0^{a_{\mathrm{out}}}}\otimes \frac{\m{A}}{\alpha}\ket{\psi} + \ket{\perp},
\end{equation}
where $\m{A}$ maps a state $\ket{\psi}$ on $s'$ qubits into a vector in an $s$--qubit space, and $\ket{\perp}$ denotes a vector in the nullspace of the projector $\ketbra{0^{a_{\mathrm{out}}}}{0^{a_{\mathrm{out}}}}\otimes I_s$.
\subsection{Weighted sparse encodings}
We now give relevant block-encodings for sparse rectangular matrices.  We say a rectangular matrix $\m{A}\in \mathbb{C}^{2^s\times 2^{s'}}$ is $\dr$ row-sparse if it has at most $\dr$ non-zero elements in each row, and say it is $\dc$ column sparse if it has at most $\dc$ non-zero elements in each column. I.e., if $\m{A}=(a_{mn'})$, then row-sparsity of $\dr$ means that for fixed $m$ the number of values for $n'$ that give a nonzero element is (at most) $\dr$. Similarly, column-sparsity of $\dc$ means that for fixed $n'$ the number of values for $m$ that give a nonzero element is $\dc$.

\begin{lemma}[Sparse encoding]
\label{lemma:sparseBE}
Let $\m{A} \in \mathbb{C}^{2^s \times 2^{s'}}$ be any rectangular matrix sending $s'$-qubits into $s$-qubits,  and assume $\m{A} = (a_{mn'})$ is $(\dr,\dc)$ (row,column)-sparse. Assume we have access to the following oracle
\begin{equation}
\m{O}_\m{A}: \ket{m} \ket{n'}\ket{0^b} \rightarrow  \ket{m} \ket{n'}\ket{\tilde{a}_{m n'}}, 
\end{equation}
where $\tilde{a}_{mn'}$ is a $b$-bit encoding of the component $a_{mn'}$. Also, assume we have access to oracles $\m{O}'_{r}$ and $\m{O}_{c}$ that both act on $s+s'+1$ qubit registers, such that
\begin{align}
\m{O}'_{r} \ket{m}\ket{n}&=\ket{m} \ket{r'_{m n}} \\
\m{O}_{c} \ket{m}\ket{n'} &= \ket{c_{m n'}} \ket{n'} .
\end{align}
where $r'_{mn}$ is the column index of the $n$'th non-zero element in row $m$ if such an entry exists, or is equal to $n+2^{s'}$ otherwise.  Likewise $c_{mn'}$ is the row index of the $m$'th non-zero element in column $n'$,  or if no such entry exists it is equal to $m+2^s$.  Let $\hat{a} := \max_{m,n'} |a_{m,n'}|$.

Then we may construct an $(\hat{a} \sqrt{\dr \dc}, s'+3, s+3, \epsilon)$ block-encoding of the rectangular matrix $\m{A}$ using a single call to each of $\m{O}_{\m{A}}$, $\m{O}^\dagger_{\m{A}}$, $\m{O}'_{r}$, $\m{O}_{c}$ and $O(b, polylog (\hat{a} \sqrt{\dr \dc}/\epsilon))$ auxiliary qubits and primitive logical gates.
\end{lemma}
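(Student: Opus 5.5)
The plan is to mimic the standard square sparse-access construction (Gilyén et al.) as two state-preparation isometries whose overlap yields $\m A/(\hat a\sqrt{\dr\dc})$, while tracking the different register sizes on the input and output sides. Work on a common register of $s+s'+1$ qubits, padded with ancillas so that both sides have equal total width $a_{\mathrm{out}}+s=a'_{\mathrm{in}}+s'$. This explains the $s'+3$ versus $s+3$ ancilla counts: the output side must absorb the $s'$ column qubits, and the input side the $s$ row qubits.

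\emph{Right (column) isometry.} Starting from $\ket{0}\ket{0^{s}}\ket{n'}$, prepare a uniform superposition over $m\in[\dc]$ on the row-index register, giving $\frac{1}{\sqrt{\dc}}\sum_{m}\ket{m}\ket{n'}$. Then apply $\m O_c$ to obtain $\frac{1}{\sqrt{\dc}}\sum_m \ket{c_{mn'}}\ket{n'}$. The extra flag qubit in the $(s+1)$-qubit row register records whether the index is a padded, nonexistent entry.

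Next, query $\m O_{\m A}$ on $(c_{mn'},n')$ into $b$ bits. Perform a controlled rotation of a further ancilla to $\frac{\tilde a}{\hat a}\ket0+\sqrt{1-|\tilde a/\hat a|^2}\ket1$, with complex phases handled by a second rotation or by encoding modulus and phase. Finally uncompute with $\m O_{\m A}^\dagger$. The result is a map $\m T_c:\ket{n'}\mapsto\frac{1}{\sqrt{\dc}}\sum_{m}\ket{\mathrm{rot}}\ket{c_{mn'}}\ket{n'}$.

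\emph{Left (row) isometry.} Symmetrically, starting from $\ket{m}\ket{0^{s'}}$, prepare a uniform superposition over $n\in[\dr]$ and apply $\m O'_r$. This gives $\m T_r:\ket{m}\mapsto\frac{1}{\sqrt{\dr}}\sum_n\ket{0}\ket{m}\ket{r'_{mn}}$, with no amplitude rotation on this side.

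Set $\m U_{\m A}=\m T_r^\dagger\m T_c$, with suitable swaps so that the final ancilla registers sit where the definition requires. Computing $\bra{0^{a_{\mathrm{out}}}}\bra{m}\m U_{\m A}\ket{0^{a'_{\mathrm{in}}}}\ket{n'}=\langle \m T_r m|\m T_c n'\rangle$, the only surviving overlaps are terms where $c_{\cdot n'}=m$ and $r'_{m\cdot}=n'$, i.e.\ genuinely nonzero entries. Each such pair occurs exactly once, because the enumerations are bijective onto the support. This gives $\frac{1}{\sqrt{\dr\dc}}\,\tilde a_{mn'}/\hat a$.

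The $\epsilon$ error comes from the $b$-bit representation and the rotation synthesis. Choosing $b=O(\log(\hat a\sqrt{\dr\dc}/\epsilon))$ and synthesizing the rotations with polylog gates gives operator-norm error at most $\epsilon$. The error bound needs some care, since entrywise errors are rescaled by $\sqrt{\dr\dc}$; a Schur-test argument, using row and column sparsity, bounds the norm.

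The main obstacle is the bookkeeping of the rectangular case. The two isometries act on registers of different natural sizes, and the padded "nonexistent entry" labels ($n+2^{s'}$, $m+2^s$) must land in states orthogonal to every genuine term, so that they never contribute to the overlap. I would handle this with the single flag qubit (the leading bit of the $(s+1)$ or $(s'+1)$-qubit index register) together with the rotation ancilla. The count comes to: one flag, one rotation ancilla, and one qubit for the $\ket{0}/\ket{1}$ branch, which accounts for the "$+3$". With these in place, checking that the padded terms vanish and that the ancilla dimensions match $a_{\mathrm{out}}+s=a'_{\mathrm{in}}+s'$ is routine.
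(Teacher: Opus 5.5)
Your proposal is correct and follows the paper's approach: the paper omits the proof as a near-verbatim adaptation of Gily\'en et al.'s Lemma~48, which is exactly your $T_r^\dagger T_c$ two-isometry construction, with the overlap computation and flag-bit orthogonality carried over to unequal row and column registers. One small bookkeeping correction: the ``$+3$'' on each side is the rotation qubit together with \emph{two} flag bits, namely the leading bit of the $(s+1)$-qubit row register and the leading bit of the $(s'+1)$-qubit column register (one of which pads the system index on each side); it is not one flag plus a rotation ancilla plus a separate $\ket{0}/\ket{1}$ branch qubit, because those last two are the same qubit.
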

The proof of this is near-identical as for square matrices (Lemma 48~\cite{gilyen2018quantum}) so it will be omitted.

We now consider the case where we can split $A$ up into a sum of matrices with varying sparsity parameters and max-norm elements.  We get the following key result. 
\begin{lemma}[Weighted sparse encoding] 
\label{lem:weighted-sparse-encoding}
Let $\m{A} \in \mathbb{C}^{2^s \times 2^{s'}}$ be any rectangular matrix sending $s'$-qubits into $s$-qubits. Suppose we write 
\begin{equation}
\m{A} = \sum_{l\in [\ell_{\max}]} \m{A}^{(l)},
\end{equation}
where $[\ell_{\max}] = \{1,2,\dots \ell_{\max}\}$ and each matrix $\m{A}^{(l)} \in  \mathbb{C}^{2^s \times 2^{s'}}$ is $(d^l_{\mathrm{r}},d^l_{\mathrm{c}})$ row/column sparse, for each $l=1,2,\dots, \ell_{\max}$.  Assume the matrices $\{\m{A}^{(l)}\}$ have disjoint support, and denote by $a^l_{mn'}$ the components of $\m{A}^{(l)}$ in the computational basis, and define $\hat{a}_l := \max_{m,n'} | a^l_{mn'}|$. 

Assume we have access to the following oracle
\begin{equation}
\m{O}_{\m{A}}: \ket{m} \ket{n'}\ket{0^b} \rightarrow  \ket{m} \ket{n'}\ket{\tilde{a}_{m n'}}, 
\end{equation}
where $\tilde{a}_{mn'}$ is a $b$-bit encoding of the component $a_{mn'}$.  Define the oracles $\m{O}^l_{r}$ and $\m{O}^l_{c}$ that both act on $s+s'+1$ qubit registers, such that
\begin{align}
\m{O}^l_{r} \ket{m}\ket{n}&=\ket{m} \ket{r^l_{m n}} \\
\m{O}^l_{c} \ket{m}\ket{n'} &= \ket{c^l_{m n'}} \ket{n'} .
\end{align}
where $r^l_{m n}$ is the column index of the $n$'th non-zero element in row $m$ of $\m{A}^{(l)}$ if such an entry exists, or is equal to $n+2^{s'}$ otherwise.  Likewise $c^l_{ m n'}$ is the row index of the $m$'th non-zero element in column $n'$ of $\m{A}^{(l)}$,  or if no such entry exists it is equal to $m+2^s$.  
We assume access to the controlled oracles
\begin{align}
\m{O}_r &:= \sum_{l \in [\ell_{\max}]} \ketbra{l}{l} \otimes \m{O}^l_{r} \\ 
\m{O}_c &:= \sum_{l \in [\ell_{\max}]} \ketbra{l}{l} \otimes \m{O}^l_{c} .
\end{align}
We also assume we have access to an oracle $\textsc{prep}$ acting on $\Lambda = \lceil \log \ell_{\max} \rceil$ qubits, such that
\begin{equation}
\textsc{prep}\ket{0^\Lambda} = \sum_{l\in [\ell_{\max}]} \sqrt{w_l} \ket{l} ,
\end{equation}
where the amplitudes are given by $w_l := \hat{a}_l \sqrt{\dr^l \dc^l} /\big(\sum_{y\in[\ell_{\max}]} \hat{a}_y \sqrt{d^y_{r}d^y_{c}}\big)$. Finally, we assume access to an oracle $\m{O}_{\max}$ such that
\begin{equation}
    \m{O}_{\max} \ket{l} \ket{0} = \ket{l}\ket{\hat{a}_l},
\end{equation}
which encodes a $b$-bit encoding of $\hat{a}_l$ in the second register.

Then we may construct an $(\alpha, s'+\Lambda+3, s+\Lambda+3, \epsilon)$ block-encoding of the rectangular matrix $A$ with
\begin{equation}
\alpha =\sum_{l\in [\ell_{\max}]} \hat{a}_l \sqrt{\dr^l\dc^l}.
\end{equation}
The construction uses a single call to each of $\m{O}_{\m{A}}$, $\m{O}^\dagger_{\m{A}}$, $\textsc{prep},$ $\textsc{prep}^\dagger$, $\m{O}_{\max}$, $\m{O}_{\max}^\dagger$, $\m{O}'_{r}$, $O_{c}$ and also using $O(b, \Lambda, polylog (\hat{a} \max_l(\sqrt{\dr^l\dc^l})/\epsilon))$ auxiliary qubits and primitive logical gates.
\end{lemma}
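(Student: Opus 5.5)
The plan is to reduce the statement to an LCU over the level index $l$, where each term is the sparse encoding of Lemma~\ref{lemma:sparseBE} applied to $\m{A}^{(l)}$, but normalized by its own max entry $\hat{a}_l$ rather than by the global $\hat{a}$. Concretely, I would build a single controlled unitary $\textsc{sel}=\sum_l \ketbra{l}{l}\otimes \m{U}_l$, where $\m{U}_l$ is an $(\hat{a}_l\sqrt{\dr^l\dc^l}, s'+3, s+3,\epsilon')$ block-encoding of $\m{A}^{(l)}$. Then $(\textsc{prep}^\dagger\otimes I)\,\textsc{sel}\,(\textsc{prep}\otimes I)$ block-encodes $\sum_l w_l\, \m{A}^{(l)}/(\hat{a}_l\sqrt{\dr^l\dc^l}) = \m{A}/\alpha$ with $\alpha=\sum_l \hat{a}_l\sqrt{\dr^l\dc^l}$. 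The rectangular bookkeeping ($a_{\mathrm{out}}+s=a'_{\mathrm{in}}+s'$) is then inherited from the sparse encoding, with $\Lambda$ extra ancillas on each side.

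The key step is to realize all the $\m{U}_l$ coherently using one call to each oracle, rather than $\ell_{\max}$ separate controlled calls. I would follow the Gily\'en et al.\ construction. Starting from $\ket{l}\ket{0}\ket{\psi}$, prepare on the column side the uniform state $\frac{1}{\sqrt{\dc^l}}\sum_m\ket{c^l_{mn'}}\ket{n'}$ using the controlled $\m{O}_c$. On the row side, prepare $\frac{1}{\sqrt{\dr^l}}\sum_n\ket{m}\ket{r^l_{mn}}$ using the controlled $\m{O}_r$. Uniform superpositions over $\dr^l,\dc^l$ elements, controlled on $l$, cost only $O(\Lambda+\log\max_l d^l)$ gates, since the sparsities can be computed from $l$ or loaded alongside $\hat{a}_l$.

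Next, the amplitude rotation. Query $\m{O}_{\m{A}}$ to load $\tilde{a}_{mn'}$, and $\m{O}_{\max}$ to load $\hat{a}_l$. Then use coherent arithmetic to rotate an ancilla qubit to $\frac{a_{mn'}}{\hat{a}_l}\ket{0}+\sqrt{1-|a_{mn'}/\hat{a}_l|^2}\ket{1}$. This requires $|a_{mn'}|\le\hat{a}_l$ whenever $(m,n')$ lies in the support of $\m{A}^{(l)}$, and that is exactly where disjointness of supports is used. For a given $(l,m,n')$ pair reached by the row/column enumerations of level $l$, the entry $a_{mn'}$ of the full $\m{A}$ coincides with $a^l_{mn'}$. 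Uncompute with $\m{O}_{\max}^\dagger$ and $\m{O}_{\m{A}}^\dagger$. The inner product of the row-side and column-side states then gives $a^l_{mn'}/(\hat{a}_l\sqrt{\dr^l\dc^l})$ on the flagged subspace, as in Lemma~\ref{lemma:sparseBE}.

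I expect the main obstacle to be the correctness of this single-oracle trick in the rectangular, level-controlled setting. Two issues need care. First, the ``overflow'' indices $n+2^{s'}$ and $m+2^s$ must be projected out uniformly across levels, which is why the extra flag qubits appear. Second, the entry oracle $\m{O}_{\m{A}}$ knows nothing about $l$, so I must check that no cross-level amplitude survives. It should not: the row enumeration for level $l$ and the column enumeration for level $l$ overlap only on $\mathrm{supp}\,\m{A}^{(l)}$, and the $l$ register is shared and restored by $\textsc{prep}^\dagger$.

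Finally, the $b$-bit precision errors in $\tilde{a}$ and $\hat{a}_l$ propagate linearly through the rotation. Choosing $b=O(\log(\alpha/\epsilon))$, i.e.\ polylog in $\hat{a}\max_l\sqrt{\dr^l\dc^l}/\epsilon$, gives total error $\epsilon$ after the triangle inequality over the LCU, since the weights $w_l$ sum to one.
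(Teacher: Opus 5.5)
Your proposal is correct and matches the paper's proof: an LCU over the level index $l$ of the per-level sparse encodings from Lemma~\ref{lemma:sparseBE}, each normalized by $\hat a_l\sqrt{d^l_{\mathrm r}d^l_{\mathrm c}}$, with weights $w_l$ proportional to these normalizations, and with disjointness of supports justifying a single call to the level-independent entry oracle $\m{O}_{\m{A}}$. You spell out in more detail than the paper why the controlled enumeration oracles, $\m{O}_{\max}$ and $\m{O}_{\m{A}}$ combine without cross-level terms, but the argument is the same.
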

\begin{proof} 
For any $\m{A}^{(l)}$, Lemma~\ref{lemma:sparseBE} tells us that we can construct a block-encoding $\m{U}_{\m{A}^{(l)}}$ with scale factor $\alpha_l:=\hat{a}_l \sqrt{\dr^l \dc^l}$. Given that
\begin{equation}
    \m{A}= \sum_{l\in [\ell_{\max}]} \alpha_l  \left(\frac{\m{A}^{(l)}}{\alpha_l} \right ),
\end{equation}
it follows that by constructing an LCU of $\sum_l \alpha_l \m{U}_{\m{A}^{(l)}}$ using $\Lambda$ auxiliary qubits, we obtain a block-encoding of $\m{A}$ with scale factor $\alpha$, as claimed.
Since the non-zero elements of $\m{A}^{(l)}$ appear in $\m{A}$ we need only call $\m{O}_{\m{A}}$ and $\m{O}_\m{A}^\dagger$ once in the LCU circuit. Similarly, the other oracles are only required to be called once. The qubit and gate dependencies on $b$ and $\epsilon$ follow from those of block-encoding the terms $\m{A}^{(l)}$.
\end{proof}

Crucially, we shall use this lemma in the case of the nonlinear interaction to obtain an efficient, and in fact asymptotically optimal, block-encoding of the nonlinear plasma term. Recall that an asymptotically optimal unitary block-encoding for a matrix is one whose constant prefactor scales as the norm of the matrix.

\subsection{Hierarchical decomposition of the nonlinear interaction}
We now apply this theory to the nonlinear term $\tilde{\m{F}}_2$ in the plasma problem. Recall that this is in the Fourier-Hermite basis, but with a discrete Fourier transform we obtain a matrix with spatial indices that have a decaying profile in terms of the separation of points. The hierarchical decomposition we now describe is applied to the spatial degrees of freedom only. The central idea is to separate out the contributions to the electric field interaction that arise at different length scales. We then can block-encode the contribution at each level, and then finally obtain a block-encoding of the full interaction via the weighted sparse encoding given in Lemma~\ref{lem:weighted-sparse-encoding}. The weightings used correspond to the largest magnitude of the interaction at a given level, and since the electric interaction drops off rapidly in separation, we can mitigate the fact it is a dense matrix by forming large ``blocks" associated to well-separated points and small ``blocks" associated to points near to each other.

We first illustrate the idea by the following toy example where we replace the matrix for the electric field with a near-pointlike interaction $\m{K}=(\m{K}_{ij})$ that is of unit strength $\m{K}_{ii}=1$ for points that coincide $i=j$, and is negligible for any two separated points: $\m{K}_{ij} = \epsilon$ for $i\ne j$. In particular, we can consider the following $2^n\times 2^n$ matrix on $n$ qubits,
\begin{equation}
    \m{K} = \begin{bmatrix}
        1 & \epsilon & \epsilon &\epsilon &\dots &\epsilon \\
        \epsilon & 1 & \epsilon &\epsilon &\dots &\epsilon \\
        \epsilon & \epsilon & 1 &\epsilon &\dots &\epsilon \\
        \vdots & \vdots & \ddots &\ddots &\dots &\vdots \\
        \epsilon & \epsilon & \epsilon &\epsilon &1 &\epsilon \\
        \epsilon & \epsilon & \epsilon &\epsilon &\dots &1 \\
    \end{bmatrix}
\end{equation}
We can write this as $\m{K}=\m{I} + \epsilon \m{D}$, where $\m{D}$ is the matrix of all ones on the non-diagonal elements. It is clear that $\m{K}$ is a dense matrix, and so a direct sparse-encoding of $\m{K}$ would lead to an exponentially growing scale factor 
\begin{equation}
    \alpha_{\m{K},\textrm{sparse}} = 2^n.
\end{equation}
However, if we separately handle the ``well-separated" points (here $i\ne j$) from the ``near" points (here $i=j$) then we can separately block encode $\m{K}^{(1)}:= \epsilon \m{D}$ and $\m{K}_{\mathrm{ad}} := I$ we obtain scale factors
\begin{align}
    \alpha_{\m{K}^{(1)}} = (2^n-1)\epsilon \nonumber \\
    \alpha_{\m{K}_{\mathrm{ad}}} = 1.
\end{align}
Now, a weighted sparse encoding gives
\begin{equation}
    \alpha_{\m{K},\textrm{weighted}} = 1+\epsilon (2^n-1).
\end{equation}
Therefore, if $\epsilon = O(2^{-n})$ say, we obtain an $O(1)$ scale factor for the interaction matrix. This is essentially the idea behind the hierarchical block-encoding we perform on a cubic grid of $O(N_x^3)$ points, with the difference being that instead of decomposing $\m{K}$ into two pieces we decompose it into a sum of $O(\log N_x)$ pieces, each corresponding to a scale or `level' labeled by an index $l$, and construct corresponding matrices $\m{K}^{(l)}$. The remaining contributions from very nearby points are put into an `adjacent' matrix $\m{K}_{\mathrm{ad}}$, which is a sparse, near-diagonal matrix.

The high-level details are as follows. We will consider a kernel matrix $\m{K}_{\v{j},\v{j}'}$
where $\v{j}, \v{j}'$ are  in $\mathbb{Z}_{N_x}^3$ and which has a decay profile
\begin{equation}
    |\m{K}_{\v{j},\v{j}'}| \le \mathcal{E}(|\v{j}-\v{j}'|), \quad \mbox{ for all } \v{j},\v{j}',
\end{equation}
for some monotonically decreasing function $\mc{E}$. We shall assume $N_x= 2^{\ell_{\max}}$ for simplicity, where $\ell_{\max}$ is some fixed integer. We now define hierarchy levels labeled by $l$. At level $l=0,1, 2,\dots,{\ell_{\max}}$, the set of all lattice points is partitioned into cubic clusters of size $2^{3(\ell_{\max}-l)}$. A level $l$ can be viewed as a coarse-graining, or resolution scale. $l=0$ corresponds to coarse-graining the whole set to a single element, while $l=\ell_{\max}$ is the finest scale, with clusters composed of the individual lattice points. At any level $l$, two $l$-clusters are said to be \emph{adjacent} if they share a face, edge or a corner, otherwise there are said to form an \emph{admissible block} in the matrix. 

The following construction gives the hierarchical decomposition.  
\begin{enumerate}
    \item Start from $l=2$.
    \item Split the cubic lattice in $l$-clusters.
    \item Select the admissible blocks in $\m{K}$.
    \item Remove from matrix $\m{K}$ the elements $\m{K}_{\v{j},\v{j}'}$ with indices ($\v{j},\v{j}')$ selected in step 3 and put the elements into a matrix $\m{K}^{(l)}$.
    \item Increase $l\mapsto l+1$ and go back to step 2, repeating the process on the resulting sublattice.
    \item Repeat up to $l={\ell_{\max}}$  when the filtering process terminates.
    \item Collect the remaining points not filtered into a single matrix $\m{K}_{\mathrm{ad}}$.
\end{enumerate}

For example, Fig.~\ref{fig:blockstructure2D} shows the result of applying this algorithm 
in a $16 \times 16$ lattice. 
\begin{figure}
    \centering
\includegraphics[width=0.4\linewidth]{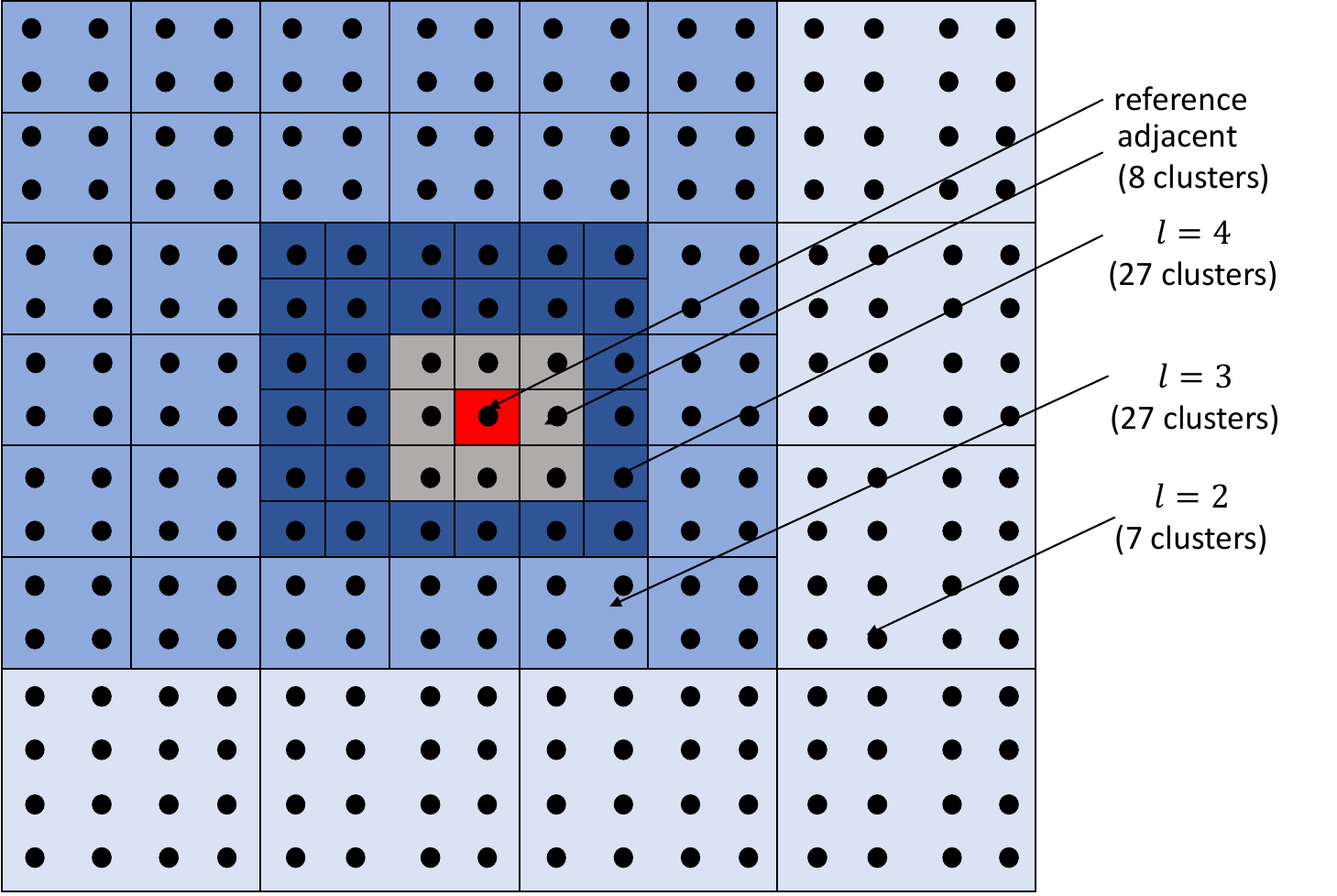}
\caption{\textbf{Hierarchical decomposition of a 2-d kernel}: The combinatorics of the hierarchical decomposition for a 2-d kernel, with $N=16$, ${\ell_{\max}}= \log_2 16=4$. The red square is the reference, and the selected $2$-clusters, $3$-clusters and $4$-clusters have been highlighted.  Can be extended to $3$ dimensions in a straightforward manner.}
    \label{fig:blockstructure2D}
\end{figure}
A decomposition of the kernel results by repeating the above construction for every reference particle, and defining matrices $\m{K}^{(l)}$ that only include the interactions from clusters selected at level $l$. We then obtain
\begin{equation}
    \m{K}= \sum_{l=2}^{\ell_{\max}} \m{K}^{(l)}+ \m{K}_{\mathrm{ad}},
\end{equation}
 one element per hierarchical level plus an extra for the residual interactions not included at any of the levels. We then block-encode $\m{K}$ using the weighted sparse encoding of Lemma~\ref{lem:weighted-sparse-encoding} over this hierarchical decomposition.

 In the coming sections we unpack this theory more precisely.

\subsubsection{Clusters, admissible blocks and fresh blocks}
We now give some key terminology and theory that is needed to describe the hierarchical block-encoding. Recall that for each spatial direction we index the points with an indexing set $I=\{0,1,2, \dots, 2^{\ell_{\max}}-1\}$ for a fixed ${\ell_{\max}}$. An arbitrary point $\v{x}$ in space is indexed by a triple $\v{j} = (j_1,j_2,j_3) \rightarrow \v{x}_{\v{j}}$, and the full indexing set is therefore $I^{3}$ for a cubic lattice. We shall also define 
\begin{align}
    \v{j}^l= (j^l_1,j^l_2,j^l_3),
\end{align} 
where each $j^l_k$ is an $l\leq {\ell_{\max}}$-bit number, i.e., $j^l_k \in \{ (00\dots 0), (00\dots 01),\dots, (11 \dots 1)\}$, however for clarity we write $0,1, \dots 2^{\ell}-1$ for the corresponding strings $j^l_k$.

 We coarse-grain the set $I^{3}$ and group indices into ``clusters" at different scales.  More precisely, we split the set $I^{3}$ up into \emph{clusters} of size $2^{3({\ell_{\max}}-l)}$, where $l=0,1,\dots {\ell_{\max}}$ is the \emph{level} of the hierarchical decomposition. We denote a single cluster at level $l$ as $\tau^l_{\v{j}^l}$. The triple $\v{j}^l$ is the cluster label at level $l$, and the elements in this cluster are given by 
\begin{equation}
\label{eq:levellcluster}
    \tau^l_{\v{j}^l} := \{ 2^{{\ell_{\max}}-l}\v{j}^l + \v{j}^{\bar{l}} : 0 \le \v{j}^{\bar{l}} \le 2^{{\ell_{\max}}-l}-1\}.
\end{equation}
The inequalities are understood to apply to each of the components of the vector $\v{j}^{\bar{l}}$.
In simpler terms, the cluster is obtained by taking the union of all points $\v{j}$ for which the $l$ most significant bits are fixed to be $\v{j}^l$ and the remaining $\ell_{\max}-l$ bits are varied over.
 The following properties can be verified for any cluster. 
\begin{lemma}[Cluster properties]\label{lem:cluster-properties}
    Consider a cubic lattice of points that are indexed by $I^{3} = \{0,1,2,\dots, 2^{\ell_{\max}}-1\}^{3}$. Let $\tau^l_{\v{j}^l}$ be a cluster at level $l$, defined in Eq.~\eqref{eq:levellcluster}. Then the following hold:
    \begin{enumerate}
        \item The number of elements in a level--$l$ cluster is $|\tau^l_{\v{j}^l}| = 2^{3({\ell_{\max}}-l)}$ for any index $\v{j}^l$.
        \item The largest separation in the $L^\infty(\mathbb{N})$ norm between two points inside $\tau^l_{\v{j}^l}$ is $2^{\ell_{\max}-l}-1$, namely the diameter of any cluster at level $l$.
        \item Any point $\v{j}\in I^{3}$ belongs to a unique cluster $\tau^l_{\v{j}^l}$ at level $l$, where
        \begin{equation}
            \v{j} = 2^{{\ell_{\max}}-l} \v{j}^l + \v{j}^{\bar{l}},
        \end{equation}
        Therefore, for any $k=1,2,3$ writing components as binary strings, we have that $j_k = j_k^l j_k^{\bar{l}}$, namely $j_k^l$ in binary is given by the leading $l$ bits of $j_k$, and $j_k^{\bar{l}}$ is the remaining string of $ {\ell_{\max}}-l$ bits of $j_k$ in binary notation.
        \item The number of distinct clusters at level-$l$ is $2^{3l}$.
        \item The clusters at each level $l$ define a partition of $I^{3}$: namely, $\cup_{\v{j}^l} \tau^l_{\v{j}^l} = I^{3}$,
         where the union is over all $\v{j}^l = (j_1^l,j_2^l,j^l_3)$ with $j_k^l = 0, \dots, 2^l-1$, and all clusters are disjoint.
        \item Subsequent levels of the hierarchy have the following nesting relation for clusters: 
        \begin{equation}
            \bigcup_{\v{b}:\, b_k \in \{0,1\}} \tau^l_{2\v{j}^{l-1}+\v{b}}  = \tau^{l-1}_{\v{j}^{l-1}},
        \end{equation}
        and hence the set of clusters over all levels forms an octree in $3$--dimensions, with an edge denoting set-inclusion.
    \end{enumerate}
\end{lemma}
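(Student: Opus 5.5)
The plan is to derive all six properties from the explicit description \eqref{eq:levellcluster}, together with one elementary observation: because $N_x=2^{\ell_{\max}}$, every coordinate $j_k\in I$ is an $\ell_{\max}$-bit string. The map $(j^l_k, j^{\bar l}_k)\mapsto 2^{\ell_{\max}-l}j^l_k+j^{\bar l}_k$, with $j^l_k\in\{0,\dots,2^l-1\}$ and $j^{\bar l}_k\in\{0,\dots,2^{\ell_{\max}-l}-1\}$, is a bijection onto $I$. Equivalently, it is Euclidean division by $2^{\ell_{\max}-l}$: $j^l_k=\lfloor j_k/2^{\ell_{\max}-l}\rfloor$ is the string of leading $l$ bits, and $j^{\bar l}_k$ is the remainder. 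I will state this bijection first and apply it componentwise for $k=1,2,3$. Most items then follow directly.

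\emph{Items 3, 5, 4.} Existence and uniqueness of the quotient and remainder show that each $\v j$ lies in exactly one cluster, namely the one labelled by $\v j^l$ with leading bits as described. This is item 3. It also shows that the clusters are pairwise disjoint and cover $I^3$, which is item 5. Counting the possible labels, $2^l$ per coordinate, gives $2^{3l}$ clusters, which is item 4.

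\emph{Items 1 and 2.} For fixed $\v j^l$, the map $\v j^{\bar l}\mapsto 2^{\ell_{\max}-l}\v j^l+\v j^{\bar l}$ is injective. Hence $|\tau^l_{\v j^l}|$ equals the number of admissible $\v j^{\bar l}$, which is $(2^{\ell_{\max}-l})^3$; this is item 1. For item 2, two points of the same cluster differ only in $\v j^{\bar l}$. Each component difference therefore lies in $[-(2^{\ell_{\max}-l}-1),\,2^{\ell_{\max}-l}-1]$, which bounds the $L^\infty$ separation by $2^{\ell_{\max}-l}-1$. The bound is attained by the pair $\v j^{\bar l}=\v 0$ and $\v j^{\bar l}=(2^{\ell_{\max}-l}-1)(1,1,1)$. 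One caveat: distance here is measured on the index lattice, not on the periodic torus, which matches the convention used in the statement.

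\emph{Item 6, the nesting relation.} This is the only step that requires a little bookkeeping. Take a point of $\tau^l_{2\v j^{l-1}+\v b}$ and rewrite it as $2^{\ell_{\max}-l}(2\v j^{l-1}+\v b)+\v j^{\bar l} = 2^{\ell_{\max}-l+1}\v j^{l-1}+(2^{\ell_{\max}-l}\v b+\v j^{\bar l})$. The bracketed remainder ranges over exactly $\{0,\dots,2^{\ell_{\max}-l+1}-1\}^3$ as $\v b\in\{0,1\}^3$ and $\v j^{\bar l}$ vary. This is the one-bit case of the same division bijection: $b_k$ is the most significant bit of the $(\ell_{\max}-l+1)$-bit remainder. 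The identity gives both inclusions of the claimed union. Since the eight children are disjoint by item 5, each level-$(l-1)$ cluster is the disjoint union of exactly eight level-$l$ clusters. The octree statement follows, with the root at $l=0$ equal to $I^3$ and the leaves at $l=\ell_{\max}$ equal to singletons.

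I expect no genuine obstacle. The main thing to get right is this index bookkeeping in item 6, so that the remainder ranges are shown to match exactly rather than only up to inclusion.
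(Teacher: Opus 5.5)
Your proposal is correct and follows the paper's proof essentially step for step: Euclidean division by $2^{\ell_{\max}-l}$ gives items 3--5, direct counting of the internal index $\v{j}^{\bar{l}}$ and the componentwise difference bound give items 1--2, and item 6 comes from splitting off the leading bit $\v{b}$ of the level-$(l-1)$ remainder. The only differences are cosmetic. You count the $2^{3l}$ labels directly instead of dividing $2^{3\ell_{\max}}$ by the cluster size. You also correctly flag that the diameter in item 2 is measured on the index lattice rather than the periodic torus, a distinction that matters only at $l=0$.
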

\begin{proof}
To see that (1) is true, note that the cluster is obtained by ranging $\v{j}^{\bar{l}}$ over all possible values. Since each component can take on $2^{\ell_{\textrm{max}}-l}$ values the result follows.\\
To see that (2) is true, note that for any two points in a given cluster we have $\|\v{j}-\v{j}'\|_{\infty} = \|\v{j}^{\bar{l}} - \v{j}^{'\bar{l}}\|_\infty$. Here $\|\v{v}\|_\infty = \max_i |v_i|$, is the largest normed element of the vector $\v{v}$.  This is maximized by, e.g. $\v{j}^{'\bar{l}} = \v{0}$ and $\v{j}^{\bar{l}} = (2^{\ell_{\textrm{max}}-l}-1,0,0)$, from which the result follows.\\
To see that (3) is true, note that by Euclid's division lemma the decomposition
\begin{equation}
            j_k = 2^{{\ell_{\max}}-l} j_k^l + j_k^{\bar{l}},
        \end{equation}
exists and is unique for any component $k$ and any choice of $2^{\ell_{\textrm{max}}-l}$. Therefore, the point $\v{j}$ belongs to a unique cluster at any level $l$. In binary notation $2^{{\ell_{\max}}-l} j_k^l$ consists of $l$ leading digits of $j_k$ followed by $\bar{l}$ zeroes. The claimed binary decomposition follows.\\
To see that (4) and (5) are correct, note that the total number of points is $2^{3 \ell_{\textrm{max}}}$, however each point is an element of a unique cluster. Moreover, it is clear from the definition that $\tau^l_{\v{j}^l}$ and $\tau^l_{\v{j}^{'l}}$ are disjoint whenever $\v{j}^l\ne \v{j}^{'l}$. Therefore the set of clusters forms a partition of the full set of points and since each cluster at level $l$ has exactly $2^{3(\ell_{\textrm{max}}-l)}$ points, it follows that there must be $2^{3l}$ clusters at level $l$.\\
To see that (6) is true, we first note that a point $\v{j}$ is in $\tau^{l-1}_{\v{j}^{l-1}}$ if and only if it takes the form
\begin{align}
    \v{j} = 2^{{\ell_{\max}}-l+1} \v{j}^{l-1} + \v{j}^{\overline{l-1}} =2(2^{{\ell_{\max}}-l}) \v{j}^{l-1} + \v{j}^{\overline{l-1}} .
\end{align}
However, now let $b_k$ be the leading binary digit in $j_k^{\overline{l-1}}$ so that $j_k^{\overline{l-1}} = 2^{\ell_{\textrm{max}}-l} b_k + j_k^{\bar{l}}$, with $j_k^{\bar{l}}$ being the remaining $\ell_{\textrm{max}}-l$ bits of the string. Therefore $\v{j} = 2^{{\ell_{\max}}-l}(2\v{j}^{l-1} +\v{b})+\v{j}^{\bar{l}} $. Ranging over the full cluster at level $l-1$ involves ranging over all values for $\v{b}$ and $j_k^{\bar{l}}$. Separating these two out, and noting the definition of a cluster at level $l$, the result follows.
\end{proof}

The coarsest-grained cluster is $\tau^0_{\v{0}} = I$ at level $l=0$, which is the ``root" of the hierarchical tree of clusters. We also note that the size of clusters in $3$ dimensions grows cubically in $2^{({\ell_{\max}}-l)}$. This is natural since the vectors $\v{j}^{\bar{l}}$ comprise ``internal" indices that uniquely label a point within a fixed cluster, and the components of $\v{j}^{\bar{l}}$ are arbitrary $(\ell_{\max}-l)$--bit numbers.  This scaling becomes important in the weighted sparse encoding subroutine.

The indices $\v{j}\in I^{3}$ are associated to vectors $\v{x}_{\v{j}}$, but they also are matrix indices $\m{K}=(\m{K}_{\v{j},\v{j}'})$ where $\v{j}$ and $\v{j}'$ both lie in $I^{3}$. For matrices, the coarse-graining implies we form \emph{blocks} of the matrix $\m{K}$. We do this at a fixed level $l$ so that such a block $b$ is entirely described by
\begin{equation}
    b =(\tau^l_{\v{j}^l} , \tau^l_{\v{j}^{'l}}),
\end{equation}
for indexing vectors $\v{j}^l,\v{j}^{'l}$ with components in $\{0,1,\dots, 2^l-1\}$. This $b$ labels the block of any matrix $\m{K}$ with indices $(\v{j},\v{j}')$ and $\v{j}\in \tau^l_{\v{j}^l}$ and $\v{j}' \in \tau^l_{\v{j}^{'l}}$. Recall, we use primes to label column indices, for clarity with rectangular matrices.

In order to handle kernels that are dense, but decaying off the diagonal (such as the electric field kernel) we need the clusters $\tau^l_{\v{j}^l}$ and $\tau^l_{\v{j}^{'l}}$ that form a block to be sufficiently far apart. This \emph{admissibility condition} is that
\begin{equation}
\label{eq:clusterdistance}
    \|\v{j}^l - \v{j}^{'l} \|_\infty \ge 2,
\end{equation}
and if this holds we say that block $b$ is an \emph{admissible block}. Such blocks only occur for $l \ge 2$.

We now use the hierarchical method, and the concept of an admissible block to iteratively break $\m{K}$ up into a sum of matrices as
\begin{equation}
    \m{K} = \m{K}_{\mathrm{ad}}+\sum_{l=2}^{{\ell_{\max}}} \m{K}^{(l)},
\end{equation}
where at level $l$ the matrix $\m{K}^{(l)}$ is composed of non-zero admissible blocks at level $l$ and zero elsewhere. The matrix $\m{K}_{\mathrm{ad}}$ is a constant sparsity matrix containing ``adjacent" clusters, which we explain later. The algorithm to construct this splitting of $\m{K}$ is to start at level $l=2$, and separate out from it the admissible blocks at level $l=2$. This leaves us with the matrix $\m{K}-\m{K}^{(2)}$, and we move to level $l=3$ and then subtract off the admissible blocks at this level. This is continued until we reach $l={\ell_{\max}}$. We now define a \emph{fresh block} at level $l$ as one that is admissible and has not been subtracted off at some level $y <l$ of the hierarchy (this is not standard terminology, but it is useful here for clarity). Therefore the non-zero components of $\m{K}^{(l)}$ are composed of precisely those fresh blocks at level~$l$.
\begin{equation}
    \mbox{ Matrix $\m{K}^{(l)}$} : \mbox{ all fresh blocks at level }l.
\end{equation}
Therefore, we just need to compute the indexing for fresh blocks at a given level. It turns out that the boundary conditions affect the fresh blocks at large scales. No fresh blocks occur at $l=0$ and $l=1$, but they do appear at $l=2$. In the next section we describe these $l=2$ blocks, and then proceed to give the prescription for $l\ge 3$.

\subsubsection{Boundary conditions and fresh blocks at $l=2$}\label{sec:Kl=2}

For the spatial directions we assume periodic boundary conditions:
\begin{equation}
    \m{K}_{\v{j} + N_x \v{e}_k, \v{j}'} = \m{K}_{\v{j} , \v{j}'+ N_x \v{e}_k} = \m{K}_{\v{j}, \v{j}'},
\end{equation}
for any direction $\v{e}_k$. However, since we work with registers with $N_x$ points in each direction, this is attained by computing modulo $N_x$ in the indexing registers. Therefore, we identify
\begin{equation}
    j_k \cong j_k + 2^{\ell_{\max}}. 
\end{equation}
Note that we can decompose any component as
\begin{equation}
    j_k = 2^{{\ell_{\max}}-l} j_k^l + j_k^{\bar{l}},
\end{equation}
and so $j_k^l$ is the quotient of $j_k$ under division by $2^{{\ell_{\max}}-l}$. If we take the quotient of $j_k + 2^{\ell_{\max}}$ by $2^{{\ell_{\max}}-l}$ then we get $j_k^l + 2^l$. Therefore, identifying the two points implies periodicity of $2^l$ for the \emph{clusters} at level $l$:
\begin{equation}
    j_k^l \cong j_k^l + 2^l.
\end{equation}

This periodicity affects the algorithm for identifying the fresh blocks at level $l=2$, since clusters that are distance $3$ away according to Eq.~\eqref{eq:clusterdistance} become a distance $1$ away once we impose periodicity.

We can without loss of generality consider $\v{j}^l = \v{0}$, and so the admissibility condition becomes $\|\v{j}^{'l}\| \ge 2$. However, at $l=2$ it is readily seen that only $\|\v{j}^{'l}\| = 2$ occurs. The full set is partitioned into $2^{ 3 \times 2}=64$ clusters, and the clusters that are not admissible with $\v{j}$ are those in a unit ball in the $L^\infty(\mathbb{N})$ distance. This contains $3^3=27$ clusters which are excluded -- corresponding to the number of points at distance $0$ or $1$ away from a point in the bulk of a regular cubic lattice of points with spacing equal to $1$. These excluded clusters are resolved at higher levels $l\ge 3$. The remaining $64-27=37$ clusters at a distance $2$ provide the admissible, and hence fresh blocks at level $2$. Therefore, at $l=2$ the matrix $\m{K}^{(2)}$ is $37$-block sparse with $37 \times 2^{3({\ell_{\max}}-2)}$ non-zero elements in each row/column. Note that while these blocks contain a large number of points, the key thing is that the magnitude of the electric term is ``small" since the separation is large on the scale of the full system.

The impact of the periodic boundary conditions for $l\ge 3$ is straightforward: we work modulo~$2^l$. This includes that all clusters are contained in the same number of fresh blocks, since there are no boundary conditions to truncate the indexing, and the resolution is sufficiently fine such that the periodicity does not eliminate potential fresh blocks.

\subsubsection{The adjacent matrix $\m{K}_{\mathrm{ad}}$}\label{sec:Kad}
We next turn to the adjacent matrix. This arises from those blocks that have not been included at any level $l$. When we get to level $l={\ell_{\max}}$, we are down to clusters made of individual points. The clusters that are not admissible are the points adjacent to the reference point. The corresponding blocks are then simply the matrix elements of $\m{K}$ on the diagonal and those adjacent to it. Therefore, for any row-index $\v{j}$ the column indices containing the non-zero elements of the adjacent matrix are simply the set
\begin{equation}
   \m{K}_{\mathrm{ad}} = (\m{K}_{\v{j},\v{j}'}) \mbox{ such that }  \|\v{j}-\v{j}'\|_\infty \le 1 .
\end{equation}
Also note that for any fixed row-index $\v{j}$ the cardinality of the associated column indices $\v{j}'$ obeying this condition is $3^3=27$. Therefore the sparsity parameter for the adjacent matrix $\m{K}_{\mathrm{ad}}$ is $27$. The largest element is $1$, by the choice of cutoff.

\subsubsection{The hierarchical partition of the kernel}
\label{sec:Coulombkernelhierarchical}
The preceding sections have described how the kernel $\m{K}$, or indeed a similar kernel, can be decomposed using the metric of the space involved (here a 3--dimensional torus). This method involves labeling blocks as `fresh' at every level $l$, and then identifying the remaining terms as those comprising an `adjacent' matrix. This procedure results in a decomposition 
\begin{equation}
    \m{K} = \m{K}_{\mathrm{ad}} + \sum_{l=2}^{\ell_{\max}} \m{K}^{(l)}.
\end{equation}
We obtain a \emph{valid (hierarchical) partition} of $\m{K}$ and every nonzero element of the matrix $\m{K}$ appears in one and only one of the matrices on the right-hand side and the non-zero entries of matrices $\m{K}_{\mathrm{ad}}, \m{K}^{(l)}$ are precisely these terms. This in turn allows us to construct the weighted sparse encoding, and as we shall see shortly it gives an asymptotically optimal block-encoding scale factor.

\newpage
\subsection{General HBE for two-kernel, dense rectangular matrices}
Note that the $\m{K}$ that generates our nonlinear term is a dense rectangular matrix, and the terms appearing in $\m{K}$ can be slightly awkward. To handle this neatly we give a theorem that is sufficiently flexible to accommodate all cases of interest here.

\begin{theorem}[Hierarchical block-encoding for a two-kernel rectangular interaction]
\label{thm:two-kernel-hbe}
Let $N_x = 2^{\ell_{\max}}$ with $\ell_{\max} \ge 2$, and let \(\mathcal A,\mathcal A',\mathcal A''\) be finite index sets. Consider the rectangular matrix $\m K \in\mathbb C^{N_x^3|\mathcal A| \times N_x^6|\mathcal A'||\mathcal A''|} $, with entries
\begin{equation}
\label{eq:two-kernel-rectangular-K}
\m K_{\v j,\v\alpha \mid \v j',\v\alpha';\v j'',\v\alpha''}
=
\sigma_{\v j,\v j''}
\sum_{a=1}^3
K^a_{\v j,\v j'}\,
W^a_{\v\alpha\mid \v\alpha',\v\alpha''}.
\end{equation}
Here $\v j,\v j',\v j''\in\mathbb{Z}_{N_x}^3$, \(\v\alpha\in\mathcal A\), and
\((\v\alpha',\v\alpha'')\in\mathcal A'\times\mathcal A''\). Assume that \(\sigma\) and each \(K^a\) are spatial kernels on the periodic cubic lattice, and assume that \(\sigma\) and \(K^a\) admit monotone decaying envelopes
\begin{equation}
    |\sigma_{\v j,\v k}|
    \le
    \mathcal E_\sigma\!\left(\rho(\v{j},\v{k})\right), \quad 
    |K^a_{\v j,\v k}|
    \le
    \mathcal E_{K,a}\!\left(\rho(\v{j},\v{k})\right),
\end{equation}
where $\mathcal E_\sigma$ and $\mathcal E_{K,a}$ are real-valued functions and $\rho(\v{j},\v{k})$ is any metric measuring the separation of indices $\v{j}$ and $\v{k}$. Consider hierarchical decompositions to both spatial kernels:
\begin{equation}
    \sigma =\sum_{\lambda=1}^{\ell_{\max}}\sigma^{(\lambda)},\qquad K^a=\sum_{\ell=1}^{\ell_{\max}} K^{a,(\ell)}.
\end{equation}
Here the label $1$ is re-purposed to denote the adjacent matrix, and the labels
$\ell,\lambda\ge 2$ denote the non-trivial levels of the hierarchy. Define $q_\ell := 2^{\ell_{\max}-\ell},$ for $ \ell\ge 2$, and $q_1=1$, and for each $\ell$ define the spatial sparsity number $D_\ell$ as
\begin{equation}
\label{eq:Dell-def}
D_\ell :=\begin{cases}
27,
& \ell=1,\\[2mm]
37\times 2^{3(\ell_{\max}-2)},
& \ell=2,\\[2mm]
189\times 2^{3(\ell_{\max}-\ell)},
& 3\le \ell\le \ell_{\max}.
\end{cases}
\end{equation}
Moreover, define max-entry bounds $w_a:= \|W^a\|_{\max}$, $\bar\sigma_\lambda:=\|\sigma^{(\lambda)}\|_{\max}$ and $\bar K_{a,\ell}:=\|K^{a,(\ell)}\|_{\max}$. 

Let $\dr(W^a)$ and $\dc(W^a)$ denote the row and column sparsities of $W^a$, and for each pair of spatial levels \((\lambda,\ell)\), define the kernel intersection number
\begin{equation}
\label{eq:Gamma-def}
\Gamma^a_{\lambda,\ell}
:=
\max_{\v j',\v j''\in\mathbb{Z}_{N_x}^3}
\left|
\left\{
\v j\in\mathbb{Z}_{N_x}^3
:
\sigma^{(\lambda)}_{\v j,\v j''}
K^{a,(\ell)}_{\v j,\v j'}
\neq 0
\right\}
\right| \le \min\{D_\lambda,D_\ell\}.
\end{equation}
Assume coherent oracles for the hierarchical support of
\(\sigma^{(\lambda)}\) intersected with that of \(K^{a,(\ell)}\), oracles for \(W^a\), and value oracles that return \(b\)-bit approximations to
$\sigma_{\v j,\v j''},K^a_{\v j,\v j'},W^a_{\v\alpha\mid\v\alpha',\v\alpha''}$.
Assume also access to a state-preparation oracle over the labels
\((a,\lambda,\ell)\) with amplitudes proportional to the square roots of the weights defined below.

Then we can construct a rectangular block-encoding of \(\m K\) with scale factor
\begin{equation}
\label{eq:alpha-two-kernel-HBE}
\alpha_{\m K}\le \sum_{a=1}^3\sum_{\lambda=1}^{\ell_{\max}}\sum_{\ell=1}^{\ell_{\max}}\bar\sigma_\lambda\,\bar K_{a,\ell}\,w_a\,\sqrt{D_\lambda D_\ell\,\Gamma^a_{\lambda,\ell}\,\dr(W^a)\dc(W^a)}.
\end{equation}
For \(\lambda,\ell\ge 2\), and taking $\rho$ to be the $L^\infty(\mathbb{N})$--norm on the periodic lattice $\mathbb{Z}_{N_x}^3$, the decay assumptions imply the bounds
\begin{equation}
\label{eq:alpha-two-kernel-HBE-L-infty}
\alpha_{\m K}\le \sum_{a=1}^3\sum_{\lambda=1}^{\ell_{\max}}\sum_{\ell=1}^{\ell_{\max}}\mathcal E_\sigma(q_\lambda-1)\,\mathcal E_{K,a}(q_\ell-1)\,w_a\,\sqrt{D_\lambda D_\ell\,\Gamma^a_{\lambda,\ell}\,\dr(W^a)\dc(W^a)}.
\end{equation}
The resulting block-encoding uses
\begin{equation}
O\!\left(b +\log N_x+\log|\mathcal A|+\log|\mathcal A'|+\log|\mathcal A''|+\log\ell_{\max}+ \operatorname{polylog}\frac{\alpha_{\m K}}{\epsilon}
\right)
\end{equation}
auxiliary qubits and primitive gates, in addition to one controlled use of the relevant sparse-access and value oracles per LCU term. Here $\epsilon = O(2^{-b})$, and the rectangular block-encoding is understood with sufficient padding ancillae so that the total input and output Hilbert-space dimensions agree.
\end{theorem}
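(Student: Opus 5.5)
The plan is to reduce the statement to Lemma~\ref{lem:weighted-sparse-encoding}, treating each triple $(a,\lambda,\ell)$ as one term of a weighted sparse encoding. First, insert the two hierarchical partitions into Eq.~\eqref{eq:two-kernel-rectangular-K}:
\begin{equation*}
\m K=\sum_{a=1}^3\sum_{\lambda=1}^{\ell_{\max}}\sum_{\ell=1}^{\ell_{\max}}\m K^{(a,\lambda,\ell)},\qquad
\m K^{(a,\lambda,\ell)}_{\v j,\v\alpha\mid\v j',\v\alpha';\v j'',\v\alpha''}:=\sigma^{(\lambda)}_{\v j,\v j''}K^{a,(\ell)}_{\v j,\v j'}W^a_{\v\alpha\mid\v\alpha',\v\alpha''}.
\end{equation*}
Because the hierarchical partitions of $\sigma$ and $K^a$ are valid partitions (Sec.~\ref{sec:Coulombkernelhierarchical}), the pieces with fixed $a$ have disjoint support. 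Across different $a$ the supports may overlap, but Lemma~\ref{lem:weighted-sparse-encoding} only really needs linearity: it is an LCU of the individual sparse encodings. So I would apply Lemma~\ref{lemma:sparseBE} to each term separately and combine the terms with a \textsc{prep} over $(a,\lambda,\ell)$. This gives $\alpha_{\m K}\le\sum_{a,\lambda,\ell}\hat a_{a\lambda\ell}\sqrt{\dr\dc}$.

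Second, bound the max entry and the sparsities of each term. The max entry is immediate: $\hat a\le\bar\sigma_\lambda\bar K_{a,\ell}w_a$.

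\emph{Row sparsity.} Fix a row $(\v j,\v\alpha)$. The index $\v j''$ ranges over the support of row $\v j$ of $\sigma^{(\lambda)}$, which has at most $D_\lambda$ entries. The index $\v j'$ ranges over row $\v j$ of $K^{a,(\ell)}$, with at most $D_\ell$ entries. The pair $(\v\alpha',\v\alpha'')$ contributes at most $\dr(W^a)$. Hence $\dr\le D_\lambda D_\ell\,\dr(W^a)$.

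\emph{Column sparsity.} Fix a column $(\v j',\v\alpha';\v j'',\v\alpha'')$. The row index $\v j$ must lie in both supports, which by definition gives at most $\Gamma^a_{\lambda,\ell}$ choices. The index $\v\alpha$ contributes at most $\dc(W^a)$. Hence $\dc\le\Gamma^a_{\lambda,\ell}\dc(W^a)$. Multiplying the two sparsities gives exactly the square root appearing in Eq.~\eqref{eq:alpha-two-kernel-HBE}.

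The bound $\Gamma\le\min\{D_\lambda,D_\ell\}$ follows from the same row and column counts, provided the kernels' sparsity patterns are symmetric, which holds for translation-invariant hierarchical blocks. The row and column oracles for each term are built from the level oracles by compositional indexing. The row index is written as a tuple (neighbor in $\sigma^{(\lambda)}$, neighbor in $K^{a,(\ell)}$, $W$-index). The column side enumerates the intersection, which is assumed available.

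Third, establish the values of $D_\ell$. $D_1=27$ was shown in Sec.~\ref{sec:Kad}, and $D_2=37\cdot 2^{3(\ell_{\max}-2)}$ in Sec.~\ref{sec:Kl=2}. For $\ell\ge3$, I would count fresh blocks as follows. The parent $(\ell-1)$-cluster was adjacent to the reference's parent, giving $27$ parent clusters and hence $27\cdot8=216$ child clusters. From these, the $27$ clusters adjacent at level $\ell$ are removed, leaving $189$. Each child cluster has $2^{3(\ell_{\max}-\ell)}$ points, which gives the stated $D_\ell$. Periodicity modulo $2^\ell$ does not identify distinct clusters once $\ell\ge3$, since $2^\ell\ge8>5$.

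Finally, for Eq.~\eqref{eq:alpha-two-kernel-HBE-L-infty}, take any entry of a fresh block at level $\ell\ge2$. The clusters are at cluster distance at least $2$, so the point separation satisfies $\|\v j-\v k\|_\infty\ge 2q_\ell-(q_\ell-1)=q_\ell+1\ge q_\ell-1$. Monotonicity of the envelopes then gives $\bar K_{a,\ell}\le\mathcal E_{K,a}(q_\ell-1)$, and likewise for $\sigma$. For the adjacent level, $q_1-1=0$ gives the trivial bound $\mathcal E(0)$. The qubit count follows from Lemma~\ref{lem:weighted-sparse-encoding}, with $O(\log\ell_{\max})$ label qubits.

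The main obstacle I expect is making the column-side oracle and the count $\Gamma^a_{\lambda,\ell}$ rigorous. One has to enumerate coherently the $\v j$ in the intersection of two hierarchical supports defined relative to different reference points $\v j'$ and $\v j''$. My first approach would be to enumerate the smaller support (say $\sigma^{(\lambda)}$'s column $\v j''$) and flag the non-members of the other, which pads the count to $\min\{D_\lambda,D_\ell\}$. The exact $\Gamma$ is then used only through the oracle the statement assumes.
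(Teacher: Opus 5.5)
Your proposal is correct and follows essentially the same route as the paper: split $\m K$ into the pieces $\sigma^{(\lambda)}K^{a,(\ell)}W^a$, bound their row sparsity by $D_\lambda D_\ell\,\dr(W^a)$, their column sparsity by $\Gamma^a_{\lambda,\ell}\,\dc(W^a)$, and their max entry by $\bar\sigma_\lambda\bar K_{a,\ell}w_a\le\mathcal E_\sigma(q_\lambda-1)\,\mathcal E_{K,a}(q_\ell-1)\,w_a$, then sparse-encode each piece and combine them by an LCU over $(a,\lambda,\ell)$. Your additional checks are also correct and fill in points the paper only asserts: the $216-27=189$ fresh-block count, the separation bound $q_\ell+1\ge q_\ell-1$, and the symmetry of the hierarchical supports needed for $\Gamma^a_{\lambda,\ell}\le\min\{D_\lambda,D_\ell\}$.
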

The above theorem is the most general form we need and we recover important applications as special cases. In the use of the theorem, it is assumed that the bounding decay profiles $\mc{E}_\sigma$ and $\mc{E}_{K,a}$ are explicitly known, and these can be used to replace $\bar{\sigma}_\lambda$ and $\bar{K}_{a,\ell}$ in both the scale factor formula, and any oracle construction for the matrix entries. The distance function $\rho(\v{j},\v{k})$ is kept general. For the torus setting, the correct definition is
\begin{equation}
    \rho(\v{j},\v{k}) = \min_{ \v{n} \in \mathbb{Z}^3} \| \v{j}-\v{k} - \v{n}N_x\|,
\end{equation}
for some vector norm $\|\cdot \|$, and which handles periodicity correctly. In the hierarchical decomposition, this was previously incorporated into the $\ell=2$ analysis, since this is the first level for which periodicity becomes relevant.

Note also that the intersection number $\Gamma^a_{\lambda,\ell}$ can have significant impact in the context of block-encoding nonlinear interactions, as we highlight below. Using the universal bound \(\Gamma^a_{\lambda,\ell}\le \min\{D_\lambda,D_\ell\}\), one obtains the simpler expression
\begin{equation}
\label{eq:alpha-two-kernel-HBE-min}
\alpha_{\m K}\le \sum_{a=1}^3 \sum_{\lambda=1}^{\ell_{\max}} \sum_{\ell=1}^{\ell_{\max}} \bar\sigma_\lambda\,\bar K_{a,\ell}\, w_a\,\sqrt{D_\lambda D_\ell\,\min\{D_\lambda,D_\ell\}\,\dr(W^a)\dc(W^a)}.
\end{equation}

\begin{proof}
First decompose both spatial kernels into their hierarchical decompositions:
\begin{equation}
    \sigma
    =
    \sum_{\lambda=1}^{\ell_{\max}}
    \sigma^{(\lambda)},
    \qquad
    K^a
    =
    \sum_{\ell=1}^{\ell_{\max}}
    K^{a,(\ell)}.
\end{equation}
Substituting these decompositions into \eqref{eq:two-kernel-rectangular-K} gives
\begin{equation}
    \m K
    =
    \sum_{a=1}^3
    \sum_{\lambda=1}^{\ell_{\max}}
    \sum_{\ell=1}^{\ell_{\max}}
    \m K^{a,\lambda,\ell},
\end{equation}
where
\begin{equation}
\label{eq:K-a-lambda-ell}
\m K^{a,\lambda,\ell}_{\v j,\v\alpha
\mid
\v j',\v\alpha';\v j'',\v\alpha''}
=
\sigma^{(\lambda)}_{\v j,\v j''}
K^{a,(\ell)}_{\v j,\v j'}
W^a_{\v\alpha\mid\v\alpha',\v\alpha''}.
\end{equation}

We now bound the sparsity and max-entry norm of each summand. By construction of the hierarchical decomposition, for fixed \(\v j\), the number of possible \(\v j''\) such that
\(\sigma^{(\lambda)}_{\v j,\v j''}\neq 0\) is at most \(D_\lambda\), which was previously shown to have the level-dependent values stated. Similarly, the number of possible \(\v j'\) such that
\(K^{a,(\ell)}_{\v j,\v j'}\neq 0\) is at most \(D_\ell\). Finally, for fixed \(\v\alpha\), the number of possible pairs
\((\v\alpha',\v\alpha'')\) such that
\(W^a_{\v\alpha\mid\v\alpha',\v\alpha''}\neq 0\) is at most \(\dr(W^a)\). Hence
\begin{equation}
\label{eq:row-sparsity-K-a-lambda-ell}
\dr(\m K^{a,\lambda,\ell})
\le
D_\lambda D_\ell \dr(W^a).
\end{equation}

For the column sparsity, fix a column index $(\v j',\v\alpha';\v j'',\v\alpha'')$. A row index \((\v j,\v\alpha)\) contributes if and only if $\sigma^{(\lambda)}_{\v j,\v j''} K^{a,(\ell)}_{\v j,\v j'} \neq 0$
and $W^a_{\v\alpha\mid\v\alpha',\v\alpha''}\neq 0$.
The number of possible spatial row labels \(\v j\) is therefore at most
\(\Gamma^a_{\lambda,\ell}\), by definition \eqref{eq:Gamma-def}. For each such \(\v j\), the number of possible output velocity labels \(\v\alpha\) is at most \(\dc(W^a)\). Thus
\begin{equation}
\label{eq:column-sparsity-K-a-lambda-ell}
\dc(\m K^{a,\lambda,\ell})
\le
\Gamma^a_{\lambda,\ell}\dc(W^a).
\end{equation}

Assume that the distance measure is the $L^{\infty}$--norm on the periodic lattice of indices. For any level $\lambda$ the indices $\v{j}$ and $\v{k}$ must lie in an admissible block, and therefore at least separated by one cluster of side $q_\lambda = q^{\ell_{\max}-\lambda}-1$ in this norm. Therefore,
\begin{equation}
    \rho(\v{j},\v{k}) \ge q_\lambda -1\quad \mbox{ at level $\lambda$}.
\end{equation}
The assumption of monotonicity implies that the max-entry norm of the summand is bounded by
\begin{equation}
\label{eq:max-entry-K-a-lambda-ell}
\|\m K^{a,\lambda,\ell}\|_{\max}
\le\bar\sigma_\lambda\,\bar K_{a,\ell}\,w_a \le \mc{E}_\sigma(q_\lambda-1)\mc{E}_{K,a}(q_\ell-1)\,w_a.
\end{equation}
Therefore, by the rectangular sparse block-encoding lemma, each
\(\m K^{a,\lambda,\ell}\) has a block-encoding with normalization
\begin{align}
\alpha_{a,\lambda,\ell}
\coloneqq &\;
\|\m K^{a,\lambda,\ell}\|_{\max}
\sqrt{
\dr(\m K^{a,\lambda,\ell})
\dc(\m K^{a,\lambda,\ell})
}
\nonumber\\
\le &\; 
\bar\sigma_\lambda\,
\bar K_{a,\ell}\,
w_a\,
\sqrt{
D_\lambda D_\ell\,
\Gamma^a_{\lambda,\ell}\,
\dr(W^a)\dc(W^a)
} \nonumber \\
\le &\; 
\mc{E}_\sigma(q_\lambda-1)\mc{E}_{K,a}(q_\ell-1)
w_a\,
\sqrt{
D_\lambda D_\ell\,
\Gamma^a_{\lambda,\ell}\,
\dr(W^a)\dc(W^a)
}.
\end{align}

Finally, we apply the weighted sparse-encoding construction / LCU construction, to the decomposition
\begin{equation}
    \m K
    =
    \sum_{a,\lambda,\ell}
    \m K^{a,\lambda,\ell}.
\end{equation}
Prepare the state
\begin{equation}
    \textsc{prep}\ket{0}
    =\sum_{a=1}^3 \sum_{\lambda=1}^{\ell_{\max}} \sum_{\ell=1}^{\ell_{\max}}
    \sqrt{\frac{\alpha_{a,\lambda,\ell}}{\alpha_{\m K}} }\ket{a,\lambda,\ell},
\end{equation}
where
\begin{equation}
    \alpha_{\m K}
    =
    \sum_{a,\lambda,\ell}
    \alpha_{a,\lambda,\ell}.
\end{equation}
The LCU block-encoding then block-encodes
\begin{equation}
    \sum_{a,\lambda,\ell}
    \m K^{a,\lambda,\ell}
    =
    \m K
\end{equation}
with normalization \(\alpha_{\m K}\), giving
\eqref{eq:alpha-two-kernel-HBE}. The error \(\epsilon\) is obtained by choosing the arithmetic precision and individual sparse block-encoding errors so that the total LCU error is at most \(\epsilon\). This proves the claim.
\end{proof}

\subsection{Applications of the hierarchical block-encoding theorem}
We now discuss some applications of the above theorem. We first unpack the scaling behaviour as a function of decay profiles, then specialize to a natural convolutional case that arises for quadratic nonlinear interactions, and then recover previous square kernel results.

\paragraph{Scaling as a function of power-law decays}
The scaling of $\alpha_{\m{K}}$ in terms of the parameter $N_x$ is of importance. Consider Theorem~\ref{thm:two-kernel-hbe}, and suppose that the velocity-space factors are independent of \(N_x\). 

Assume that kernels have the upper bound decay profiles
\begin{equation}   |\sigma_{\v{j},\v{k}}| \le \frac{C_\sigma}{(1+\rho(\v{j},\v{k}))^{p_\sigma}},\quad 
    |K^a_{\v{j},\v{k}}| \le \frac{C_{K,a}}{(1+\rho(\v{j},\v{k}))^{p_{K,a}}}
\end{equation}
where we have the periodic $L^\infty(\mathbb{N})$ norm
\begin{equation}
    \rho(\v{j},\v{k})= \min_{\v{n}\in \mathbb{Z}^3}\|\v{j}-\v{k} - \v{n}N_x\|_\infty.
\end{equation}
Note that Theorem~\ref{thm:informal-two-kernel-hbe} uses the Euclidean norm, however since $\|\v{j}\|_\infty \le \|\v{j}\| \le \sqrt{3} \|\v{j}\|_\infty$, in 3-d, the complexity analysis is unchanged if we use the $L^\infty(\mathbb{N})$ norm.

Putting this into the above theory gives
\begin{align}
\alpha_{\m K} &= \tilde{O} \left( \sum_{a=1}^3\sum_{\lambda=1}^{\ell_{\max}}\sum_{\ell=1}^{\ell_{\max}} (q_\lambda^{-p_{\sigma}})\,(q_{\ell}^{-p_{K,a}})\,\sqrt{D_\lambda D_\ell\,\Gamma^a_{\lambda,\ell}} \right) \nonumber \\
&= \tilde{O} \left( \sum_{a=1}^3\sum_{\lambda=1}^{\ell_{\max}}\sum_{\ell=1}^{\ell_{\max}} (q_\lambda^{3/2-p_{\sigma}})\,(q_{\ell}^{3/2-p_{K,a}})\,\sqrt{\Gamma^a_{\lambda,\ell}} \right).
\end{align}
In the absence of structure we can use the bound $\Gamma^a_{\lambda,\ell} \le \min \{D_\lambda,D_\ell \}$. Therefore,
\begin{align}
\alpha_{\m K} &= \tilde{O} \left( \sum_{a=1}^3\sum_{\lambda=1}^{\ell_{\max}}\sum_{\ell=1}^{\ell_{\max}} (q_\lambda^{3/2-p_{\sigma}})\,(q_{\ell}^{3/2-p_{K,a}})\,\min \{q_\lambda^{3/2}, q_\ell^{3/2} \}\right) \nonumber \\
 &= \tilde{O} \left( \sum_{a=1}^3\sum_{\lambda=1}^{\ell_{\max}}\sum_{\ell=1}^{\ell_{\max}} (q_\lambda^{3/2-p_{\sigma}})\,(q_{\ell}^{3/2-p_{K,a}})\,(\min \{q_\lambda, q_\ell \})^{3/2}\right) \nonumber \\
 &= \tilde{O} \left( \sum_{a=1}^3\sum_{x,y}  x^A y^{B_a}\,(\min \{x, y \})^{3/2}\right) ,
\end{align}
where $x = q_\lambda$, $y=q_\ell$, $A= 3/2-p_\sigma$, $B_a=3/2-p_{K,a}$, and the sum is over the values attained from the discrete level indices. We consider $x\le y$ and $x > y$ separately,
\begin{equation}
    \sum_{x,y}  x^A y^{B_a}\,(\min \{x, y \})^{3/2} = S_a= S_a^\le + S_a^>,
\end{equation}
where
\begin{align}
    S_a^\le &= \sum_y y^{B_a} \sum_{x\le y} x^{A+3/2} \nonumber \\
    &= O\left (\sum_y y^{B_a} y^{\max\{0,A+3/2\}} \right) = O\left (   (N_x)^{\max \{ 0, B_a+\max\{0,A+3/2\}\}} \right) \nonumber \\
        &= O\left (   (N_x)^{\max \{ 0, B_a, B_a+A+3/2\}} \right) \nonumber \\
        &= O\left (   (N_x)^{\max \{ 0, 3/2-p_{K,a}, 9/2-p_\sigma-p_{K,a}\}} \right) .
\end{align}
For the other sum, we have
\begin{align}
    S_a^> &= \sum_x x^{A} \sum_{y<x} y^{B_a+3/2} \nonumber \\
    &= O \left ( \sum_x   x^{A + \max \{0,B_a+3/2\}} \right ) \nonumber \\
     &= O \left ((N_x)^{\max \{0,A + \max \{0,B_a+3/2\} \}} \right ) \nonumber \\
     &= O \left ((N_x)^{\max \{0,A, A + B_a+3/2 \}} \right ) \nonumber \\
     &= O \left ((N_x)^{\max \{0,3/2-p_\sigma, 9/2 -p_\sigma - p_{K,a}  \}} \right ) .
\end{align}
Combining the two sum contributions we have that
\begin{equation}
    \alpha_{\m K}
    =
    \widetilde O\!\left(
    \sum_{a=1}^3
    N_x^{r_a}
    \right),
\end{equation}
where
\begin{equation}
    r_a
    :=
    \max\left\{
    0,\,
    \frac32-p_\sigma,\,
    \frac32-p_{K,a},\,
    \frac92-p_\sigma-p_{K,a}
    \right\}.
\end{equation}
Equivalently, for each component \(a\), the contribution is
polylogarithmic in \(N_x\) whenever
\begin{equation}
    p_\sigma\ge \frac32,
    \qquad
    p_{K,a}\ge \frac32,
    \qquad
    p_\sigma+p_{K,a}\ge \frac92.
\end{equation}
If all three inequalities are strict, then the scale factor is
\(O(1)\). If one or more inequalities is saturated, the scale factor is
polylogarithmic, with at most powers of \(\log N_x\). In particular, if both spatial kernels have the same decay exponent, $p_\sigma=p_{K,a}=p$, then 
\begin{equation}
    \alpha_{\m K}
    =
    \widetilde O\!\left(
    N_x^{\max\{0,\,\frac92-2p\}}
    \right).
\end{equation}
Thus
\begin{equation}
    \alpha_{\m K} = \begin{cases}
        O(1), \quad \mbox{ for } p >9/4 \\
        O(\operatorname{polylog}(N_x)), \quad \mbox{ for } p =9/4 \\
        \widetilde O\!\left(N_x^{\frac92-2p}\right), \quad \mbox{ for } p <9/4 .
    \end{cases}
\end{equation}
This implies that for Coulomb-like decay, $p=2$, that the scale factor has only a mild $\widetilde O\!\left(\sqrt{N_x} \right )$ growth.
\paragraph{HBE for convolutional interactions.}

We next consider nonlinear contributions that arise due to an electric field interaction that decays in space. Recall, for simplicity we work on the torus $\mathbb{T}_L^3$. These are described by a kernel of the form
\begin{equation}
\m{K}_{\val|\val',\val''}(\vx|\vx',\vx'')
=
\delta_L^3(\vx-\vx'')\v{E}(\vx-\vx')\cdot \v{W}_{\val|\val',\val''},
\end{equation}
where the term $\v{W}$ encodes velocity terms that couple to an electric field $\v{E} (\v{x})$, and the first delta function arises due to the quadratic interaction occurring at the same point in real-space. If we now discretize on a cubic lattice in real-space with points $(L/N_x)\v{j} $, then the discretized kernel matrix elements are given by 
\begin{equation}
\m{K}_{\v{j}, \v\alpha | \v{j}', \v\alpha' ; \v{j}'', \v\alpha''} = \delta_{\v{j}, \v{j}''} \sum_{a=1}^3 K^a_{\v{j},\v{j}'} W^a_{\val|\val',\val''}
\end{equation}
Within the context of Theorem~\ref{thm:two-kernel-hbe}, this corresponds to the case $\sigma_{\v{j},\v{j}''} = \delta_{\v{j}, \v{j}''}$ we have $\Gamma^a_{\lambda,\ell} = 1$, and $\bar{\sigma}_{\lambda} = \delta_{\lambda,1}$, and in fact we can replace $D_1$ with $1$ since the matrix is now trivial. A direct application of Theorem~\ref{thm:two-kernel-hbe} gives
\begin{equation}
\label{eq:alpha-K-delta}
\alpha_{\m K}
=
\sum_{a=1}^3
\sum_{\ell=1}^{\ell_{\max}}
\bar K_{a,\ell}\,
w_a\,
\sqrt{
D_\ell\,\dr(W^a)\dc(W^a)
}.
\end{equation}
Equivalently, using the decay envelope for the far-field levels,
\begin{align}
\alpha_{\m K}\le \sum_{a=1}^3 w_a\sqrt{\dr(W^a)\dc(W^a)}
\bigg[
&
 K_{a,1}\sqrt{27}+ K_{0,a}\sqrt{37\,2^{3(\ell_{\max}-2)}}
\,\mathcal E_{K,a}(2^{\ell_{\max}-2}-1)
\nonumber\\
&
+
K_{0,a}
\sum_{\ell=3}^{\ell_{\max}}
\sqrt{189\,2^{3(\ell_{\max}-\ell)}}
\,\mathcal E_{K,a}(2^{\ell_{\max}-\ell}-1)
\bigg].
\end{align}
If $\mc{E}_{K,a}(x) = O(x^{-p})$ then,
\begin{equation}
    \alpha_K = \begin{cases}
        O(1), \quad \mbox{ for }p >3/2 \\
        O(\log N_x), \quad \mbox{ for } p =3/2 \\
        O(N_x^{3/2-p}), \quad \mbox{ for } p <3/2.
    \end{cases}
\end{equation}
Note that, in contrast to the square kernel in the next section, the Coulomb decay rate $p=2$ gives rise to $\alpha_K = O(1)$ in the case of the nonlinear interaction. The origin of this is that the 2-point interaction gives rise to a very small intersection number, which in turn lowers the scaling.
\paragraph{HBE for a square kernel}
Another specialization is to the square kernel case where one has a square matrix $K_{\v{j},\v{j}'}$ that has a decay profile in $|\v{j}-\v{j}'|$. To obtain this we consider
\begin{align}
    \sigma_{\v{j},\v{j}''} &= \delta_{\v{j}'',\v{e_1}} \nonumber\\
    K^a &= K \delta_{a,1} \nonumber \\
    W^a & = \delta_{a,1}, \quad \mbox{ so that }|\mathcal{A}| =|\mathcal{A}'| =|\mathcal{A}''| =1.
\end{align}
This choice of $\sigma$ corresponds to effectively eliminating the $\v{j}''$ index, and we recover the pure, square kernel case in the $\v{j}, \v{j}'$ indices from the matrix $K$. Again, using Theorem~\ref{thm:two-kernel-hbe}, we obtain a block-encoding scale factor
\begin{equation}
    \alpha_{K} \le 27\,\|K^{(1)}\|_{\max} + 37\,2^{3(\ell_{\max}-2)} K_0\,\mathcal E\!\left(2^{\ell_{\max}-2}-1\right) + 189 \sum_{\ell=3}^{\ell_{\max}}
2^{3(\ell_{\max}-\ell)} K_0\,\mathcal E\!\left(2^{\ell_{\max} -\ell}-1\right).
\end{equation}
This gives the scale factor for an arbitrary dense square matrix, and if $\mc{E}(x) = O(x^{-p})$ then 
\begin{equation}
    \alpha_K = \begin{cases}
        O(1), \quad \mbox{ for }p >3 \\
        O(\log N_x), \quad \mbox{ for } p =3 \\
        O(N_x^{3-p}), \quad \mbox{ for } p <3.
    \end{cases}
\end{equation}
This reproduces the results of \cite{nguyen2022block}. In particular, for Coulomb $p=2$ we have $\alpha_{\m K} = O(N_x)$.

\subsection{Block-encoding the Lyapunov-transformed nonlinear term}
Recall that we wish to construct a block-encoding $\m{U}_{\tilde{\m{F}}_2}$ of the Lyapunov-transformed nonlinear term. We explained how this can be obtained from a block-encoding $\m{U}_{\m{K}}$ of the real-space data $\m{K}$ defined earlier, through controlled discrete Fourier transforms. Moreover, if $\m{U}_{\m{K}}$ has scale factor $\alpha_{\m{K}}$ then $\m{U}_{\tilde{\m{F}}_2}$ has scale factor
\begin{equation}
    \alpha_{\tilde{\m{F}}_2} = \Delta_x^{9/2} \alpha_{\m{K}}.
\end{equation}
It remains to show that an efficient block-encoding $\m{U}_{\m{K}}$ can be constructed, and to compute its scale factor $\alpha_{\m{K}}$.

Recall  that $\m{K} = \m{K}^{(c)} + \m{K}^{(n)}$ is composed of the term
\begin{equation}
   \m{K}^{(c)}_{\v{j}, \val |\v{j}', \val', \v{j}'',\val''}:= \frac{1}{\Delta^3_x}\delta_{\v{j}, \v{j}''} \mc{E}_{\v{j},\v{j}'}\cdot (\v{W}^{(1)}_{\val|\val',\val''} +\v{W}^{(2)}_{\val|\val',\val''}),
\end{equation}
plus the term with the weighting $\sigma'_{\v{j},\v{j}''}$,
\begin{equation}
   \m{K}^{(n)}_{\v{j}, \val |\v{j}', \val', \v{j}'',\val''} := \sigma'_{\v{j},\v{j}''} \mc{E}_{\v{j},\v{j}'}\cdot  \v{W}^{(2)}_{\val|\val',\val''}.
\end{equation}
Therefore, we shall use Theorem~\ref{thm:two-kernel-hbe} to obtain separate block-encodings $\m{U}_{\m{K}^{(c)}}$ for $\m{K}^{(c)}$ and $\m{U}_{\m{K}^{(n)}}$ for $\m{K}^{(n)}$, and then combine via LCU to obtain $\m{U}_{\m{K}}$. 
Recall from above that the non-diagonal elements of these matrices are
\begin{align}
    \mc{E}_{\v{j},\v{j}'}&=  \frac{1}{\pi}\int_0^1 \mathrm{d}s \left (\frac{\mu(s) \hat{\v{x}}}{\sqrt{s(1-s)}}
\frac{\ee^{-\mu(s)\Delta_x|\v{j} -\v{j}'|}}{4\pi \Delta_x|\v{j} -\v{j}'|} + \frac{\hat{\v{x}}}{\sqrt{s(1-s)}} \frac{\ee^{-\mu(s) \Delta_x | \v{j} - \v{j}'|}}{4\pi \Delta_x^2|\v{j}-\v{j}'|^2 } \right ) \nonumber \\
    \sigma_{\v{j},\v{j}''}&= \frac{1}{\pi} \int_0^1 \sqrt{\frac{s}{1-s}} \frac{\ee^{-\mu(s)\Delta_x|\v{j}-\v{j}''|}}{4\pi \Delta_x|\v{j}-\v{j}''|}\, \mathrm{d}s.
\end{align}

We first note some relevant points to explicitly block-encoding the matrix. Firstly, the constructions of the oracles for the entries are slightly more involved than a pure Coulomb construction. The form of the elements are explicitly known, and the integrals involved can be readily approximated via quadratures. For example, a quadrature for the Feynman identity can be obtained from the discrete sum in the following limit
\begin{equation}
    \frac{1}{\sqrt{AB}} =\lim_{N\rightarrow \infty} \frac{1}{N} \sum_{k=1}^N \frac{1}{A \sin^2 \theta_k + B \cos^2 \theta_k},
\end{equation}
with $\theta_k = (2k-1)\pi/(4N)$. And so the integral can be easily replaced with discrete sum approximation. Besides being very time-consuming, there is no obstacle to this being realized via coherent arithmetic, although a detailed analysis would need to be done to determine the exact resources and errors. Secondly, since for the Fourier-truncated system the total number of points is $(2\NF+1)^3$, we must embed these in a sufficiently large power of 2, and pad with zeros, to directly apply the previous theorem.

The case of $\m{K}^{(c)}$ corresponds to the convolutional, quadratic interaction case. Recall that
\begin{equation}
    \|\mc{E}^a\|_{\max} \le \frac{1}{4\pi} \left (\frac{\sqrt{1+\tau}}{ \Delta_x}
  +  \frac{1}{ \Delta_x^2 } \right ),
\end{equation}
for any $a=1,2,3$ with the max occurring on the diagonal entries. In terms of Theorem~\ref{thm:two-kernel-hbe}, factoring in the delta-function contributions gives that the largest value attained for the kernel is
\begin{equation}
  \frac{1}{4\pi}\frac{1}{\Delta^3_x} \left (\frac{\sqrt{1+\tau}}{ \Delta_x}
  +  \frac{1}{ \Delta_x^2 } \right ).
\end{equation}
The decay profile is exponential and so we obtain $O(1)$ contribution from the sum. Thus,
\begin{equation}
    \alpha_{\m{K}^{(c)}} =  \frac{1}{4\pi}\frac{1}{\Delta^3_x} \left (\frac{\sqrt{1+\tau}}{ \Delta_x}
  +  \frac{1}{ \Delta_x^2 } \right ) \times O(1) \times \sum_{a=1}^3\sqrt{s_{\mathrm{c}}(W) s_{\mathrm{r}}(W)}\|W^a\| = O(\Delta_x^{-5}).
\end{equation}
The case $\m{K}^{(n)}$ is handled similarly with Theorem~\ref{thm:two-kernel-hbe}, and since both kernels are rapidly decaying and $\|\sigma'\|_{\max} = 1/(8\pi \Delta_x)$, we see that 
\begin{equation}
    \alpha_{\m{K}^{(n)}} = O(\Delta_x^{-3}).
\end{equation}
Combining via LCU, and obtaining the block-encoding of $\tilde{\m{F}}_2$ via controlled Fourier transforms, we find that
\begin{equation}
    \alpha_{\tilde{\m{F}}_2} = \Delta_x^{9/2} O(\Delta_x^{-5} \sqrt{\NH}) = O(\sqrt{\NF \NH}),
\end{equation}
for the resulting unitary block-encoding of the nonlinear term $\tilde{\m{F}}_2$.

\newpage

\section{Additional proofs}

This appendix collects auxiliary estimates used in the main-text stability and observable-extraction arguments.

\subsection{Conservation of mass}
\label{subsec:mass_conservation}

\begin{proposition}[Preservation of zero perturbation mass]
\label{prop:zero_mass_preserved}
Let $h$ be a sufficiently regular solution of \eqref{eq:kinetich} on
$\TL^3\times\mathbb R^3\times [0,T]$, with periodic boundary conditions in $x$ and
sufficient decay in $v$. Then,
\begin{equation}
\label{eq:Mh_ODE}
\frac{\mathrm d}{\mathrm dt}M(t)
=
-\nu M(t).
\end{equation}
Consequently,
\begin{equation}
\label{eq:Mh_solution}
M(t)=\ee^{-\nu t}M(0).
\end{equation}
In particular, if $M(0)=0$, then $M(t)=0$ for all $t\in [0,T]$.
\end{proposition}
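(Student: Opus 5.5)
The plan is to integrate the kinetic equation \eqref{eq:kinetich} over phase space $\TL^3\times\R^3$ and show that every term except the Krook term integrates to zero. Differentiating \eqref{eq:Mh} under the integral sign (justified by the assumed regularity and decay) gives
\begin{equation}
\frac{\mathrm d M}{\mathrm dt}=\int_{\TL^3\times\R^3}\Big(-\vv\cdot\nabla_x h-\tfrac{\lvert q_e\rvert}{m}\v E\cdot\nabla_v f_0-\tfrac{\lvert q_e\rvert}{m}\v E\cdot\nabla_v h-\nu h\Big)\,\mathrm d^3x\,\mathrm d^3v .
\end{equation}
I will treat each of the first three terms in turn.

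For the streaming term, $\vv\cdot\nabla_x h=\nabla_x\cdot(\vv h)$ because $\vv$ is independent of $\vx$. Its $x$-integral vanishes by the divergence theorem on the torus, since there is no boundary under periodicity; equivalently, the $\vn=\v 0$ Fourier mode of a gradient is zero.

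For the two field terms, $\v E=\v E(\vx,t)$ is independent of $\vv$, so $\v E\cdot\nabla_v f_0=\nabla_v\cdot(\v E f_0)$ and $\v E\cdot\nabla_v h=\nabla_v\cdot(\v E h)$. For fixed $\vx$, the $v$-integral of each vanishes by integration by parts on $\R^3$. For $f_0$ this follows from Gaussian decay. For $h$ it uses the assumed decay of $h$ in $v$, specifically that $\v E h\to 0$ as $\lvert\vv\rvert\to\infty$ quickly enough for the boundary term on large spheres to vanish. We also need $\v E$ to be bounded so that Fubini lets us perform the $v$-integral first; this follows from $\phi$ solving the Helmholtz equation \eqref{eq:Helmholtz} with an $L^2$ source.

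What remains is $\dot M=-\nu M$, which is \eqref{eq:Mh_ODE}. This is a scalar linear ODE, so its unique solution is \eqref{eq:Mh_solution}, and $M(0)=0$ then forces $M\equiv 0$. The only step needing care is justifying the interchange of integration with differentiation and the vanishing of the velocity boundary terms. I will handle this by taking "sufficiently regular" to mean, for example, that $h,\nabla h$ are continuous and bounded by $C(1+\lvert\vv\rvert)^{-5}$ uniformly on $[0,T]$, which makes every integrand absolutely integrable.
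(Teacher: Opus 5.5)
Your argument is correct and is essentially the paper's proof. You integrate \eqref{eq:kinetich} over phase space, remove the streaming term by periodicity in $\vx$, and remove both field terms by integration by parts in $\vv$ (since $\v E$ is independent of $\vv$ and $f_0$, $h$ decay), leaving $\dot M=-\nu M$, and then use uniqueness for the scalar linear ODE. One small correction to your regularity bookkeeping: in three dimensions an $L^2$ source only gives $\v E\in H^1\subset L^6$, not boundedness, but Fubini only needs $\v E\in L^1(\TL^3)$ given your uniform-in-$\vx$ decay bound on $\nabla_v h$ (and your pointwise bound on $h$ makes the density bounded, which does yield a bounded field).
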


\begin{proof}
From \eqref{eq:Mh} and \eqref{eq:kinetich}, we compute
\begin{align}
\label{eq:ddtMh}
\frac{\mathrm d}{\mathrm dt}M(t)
=&\; 
\int_{\TL^3\times\mathbb R^3}
\partial_t h(\vx,\vv,t)\,\mathrm d^3x\,\mathrm d^3v
\nonumber\\
=&\; 
-\int_{\TL^3\times\mathbb R^3}
\vv\cdot\nabla_x h\,\mathrm d^3x\,\mathrm d^3v
-\frac{\lvert q_e\rvert}{m}
\int_{\TL^3\times\mathbb R^3}
\v E\cdot\nabla_v f_{0}\,\mathrm d^3x\,\mathrm d^3v
\nonumber\\
&\;
-\frac{\lvert q_e\rvert}{m}
\int_{\TL^3\times\mathbb R^3}
\v{E} \cdot\nabla_v h\,\mathrm d^3x\,\mathrm d^3v
-\nu
\int_{\TL^3\times\mathbb R^3}
h\,\mathrm d^3x\,\mathrm d^3v .
\end{align}
The first term in \eqref{eq:ddtMh} vanishes by periodicity:
\begin{equation}
\int_{\TL^3}\vv\cdot\nabla_x h\,\mathrm d^3x
=
\vv\cdot 
\int_{\TL^3}\nabla_x h\,\mathrm d^3x
=0.
\end{equation}
The second term in \eqref{eq:ddtMh} vanishes because $\v E(\vx,t)$ is independent of
$\vv$, while $f_{0}$ has sufficient decay in $\vv$:
\begin{equation}
\int_{\mathbb R^3}
\v E(\vx,t)\cdot\nabla_v f_{0}(\vv)\,\mathrm d^3v
=
\v E(\vx,t)\cdot
\int_{\mathbb R^3}\nabla_v f_0(\vv)\,\mathrm d^3v
=0.
\end{equation}
Similarly, the third term in \eqref{eq:ddtMh} vanishes,
\begin{equation}
\int_{\mathbb R^3}
\v E(\vx,t)\cdot\nabla_v h(\vx,\vv,t)\,\mathrm d^3v
=
\v E(\vx,t)\cdot
\int_{\mathbb R^3}\nabla_v h(\vx,\vv,t)\,\mathrm d^3v
=0,
\end{equation}
again by decay in $\vv$. Hence, \eqref{eq:ddtMh} becomes \eqref{eq:Mh_ODE}. \\

By existence and uniqueness for the scalar linear ODE \eqref{eq:Mh_ODE}, we obtain \eqref{eq:Mh_solution}. In particular, if $M(0)=0$, then the zero
function solves the same scalar initial-value problem, and uniqueness gives
$M(t)=0$ for all $t\geq 0$.
\end{proof}

\subsection{Quadratic Lyapunov functional}
\label{subsec:quadratic_Lyapunov}

We next identify the Lyapunov matrix \(\m P\) from the quadratic part of the truncated plasma free energy.

\begin{proposition}[Lyapunov matrix as the truncated quadratic free energy]
\label{prop:P_quadratic_free_energy}
Let $\mc F^{(2)}$ denote the quadratic part of the continuum free energy
in Eq.~\eqref{eq:freenergycontinuum}, written in the dimensionless variables
of Eq.~\eqref{eq:natural_units} and with $\phi$ constrained by
Eq.~\eqref{eq:phi_g_single_species}. Let $\Pi_{\NF,\NH}$ denote the
orthogonal projection onto the finite Fourier--Hermite subspace defined by
Eqs.~\eqref{eq:Hermite_g} and \eqref{eq:Truncation}. For a coefficient vector
$\m u$, let $g_{\NF,\NH}$ be the corresponding finite Fourier--Hermite function
and let $\phi_{\NF,\NH}$ be determined from $g_{\NF,\NH}$ by
Eq.~\eqref{eq:phi_g_single_species}. Define
\begin{equation}
\label{eq:F2_truncated_def}
\mc F^{(2)}_{\NF,\NH}(\m u)
\coloneqq
\mc F^{(2)}[g_{\NF,\NH},\phi_{\NF,\NH}].
\end{equation}
Then,
\begin{equation}
\label{eq:F2_truncated_equals_P_norm}
\mc F^{(2)}_{\NF,\NH}(\m u)
=
\frac12\lVert \m u\rVert_{2,\m P}^2.
\end{equation}
\end{proposition}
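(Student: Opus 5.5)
The plan is to compute $\mc F^{(2)}$ explicitly in dimensionless variables and then evaluate it on band-limited Fourier--Hermite data using Parseval in both $\vx$ and $\vv$.

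\emph{Relative entropy.} With $f=f_0(1+g)$ and $f_0=w$, we have
\[
S(f\|f_0)=\int f\log(f/f_0)-\int(f-f_0).
\]
The second term vanishes by zero mass (Proposition~\ref{prop:zero_mass_preserved}); if the convention omits it, the linear term $\int w g$ vanishes for the same reason. Expanding $(1+g)\log(1+g)=g+\tfrac12 g^2+O(g^3)$, the quadratic part is
\[
\tfrac12\int_{\TL^3\times\R^3} g^2\,w\,\mathrm d^3x\,\mathrm d^3v .
\]
The cubic remainder is absorbed into the $O(g^3)$ term of Eq.~\eqref{eq:freenergycontinuum}.

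\emph{Field energy.} This term is already quadratic: $\tfrac12\int(\lvert\nabla\phi\rvert^2+\tau\phi^2)$.

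\emph{Quadratic part of $g$.} Apply this to $g_{\NF,\NH}$. Expand $g_{\NF,\NH}=L^{-3/2}\sum_{\vn}\sum_{\val} g_{\vn,\val}H_{\val}(\vv)\ee^{i\v\xi_{\vn}\cdot\vx}$. Orthonormality of $H_{\val}$ in $L^2(w)$ and Parseval on $\TL^3$ give
\[
\tfrac12\int g^2 w=\tfrac12\sum\lvert g_{\vn,\val}\rvert^2 .
\]
For real $g$ one should either use $\lvert g\rvert^2$ or note the reality condition $g_{-\vn,\val}=g_{\vn,\val}^*$; I would state the result for $\lvert g\rvert^2$, which matches the Hermitian form $\m u^\dag\m P\m u$.

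\emph{Field term.} By Eq.~\eqref{eq:helmoltzgfourier} and the subsequent identity, $\kappa_{\vn}\hat\phi_{\vn}=g_{\vn,\v0}$. Parseval then gives
\[
\tfrac12\int(\lvert\nabla\phi\rvert^2+\tau\lvert\phi\rvert^2)=\tfrac12\sum_{\vn}\kappa_{\vn}\lvert\hat\phi_{\vn}\rvert^2=\tfrac12\sum_{\vn}\kappa_{\vn}^{-1}\lvert g_{\vn,\v0}\rvert^2 .
\]
The Helmholtz solve is diagonal in $\vn$, so $\phi_{\NF,\NH}$ stays band-limited. Adding the two contributions reproduces the last line of Eq.~\eqref{eq:Lyapunov_pullback}, i.e.\ $\tfrac12\m u^\dag\m P\m u$ with $\m P_{\vn}=\m I+\kappa_{\vn}^{-1}\Pi_{\v0}$.

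\emph{Main obstacle.} The hardest step is fixing normalization conventions. This means checking that the relative entropy is taken with respect to $w$ in natural units, including the $(2\pi)^{-3/2}$ factor, and that the $\lvert\cdot\rvert^2$ versus real-valued conventions agree, so that no stray factors of $L^3$ or $2$ appear. I would handle this by writing $\mc F^{(2)}$ explicitly in dimensionless form first, then pinning the constants against the Fourier normalization~\eqref{eq:Fourier_coeffs}.
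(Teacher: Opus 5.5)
Your proposal is correct and follows the paper's proof essentially step for step. You expand $w(1+g)\log(1+g)$ and drop the linear term. You then use Parseval in $\vx$, Hermite orthonormality in $L^2(w)$, and $\kappa_{\vn}\hat\phi_{\vn}=g_{\vn,\v{0}}$ to turn the field energy into $\tfrac12\sum_{\vn}\kappa_{\vn}^{-1}\lvert g_{\vn,\v{0}}\rvert^2$. One small remark: $\m u$ is an arbitrary coefficient vector (possibly with $g_{\v{0},\v{0}}\neq 0$), so you should not appeal to zero mass to remove $\int wg$. As in the paper, that term is excluded from $\mc F^{(2)}$ simply because it is linear, or it cancels exactly in your generalized-KL form.
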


\begin{proof}
We evaluate the quadratic free energy on a finite Fourier--Hermite
function $g_{\NF,\NH}$ and on the corresponding constrained field
$\phi_{\NF,\NH}$.

By Eqs.~\eqref{eq:kinetic_ansatz}, \eqref{eq:honf}, and \eqref{eq:w_def}, in
dimensionless variables we have $f=w(1+g)$ and $f_0=w$. Expanding the entropy
term gives
\begin{align}
S(f\|f_0)
&=
\int_{\TL^3\times\R^3}
w(\vv)(1+g(\vx,\vv))
\log(1+g(\vx,\vv))\,
\mathrm d^3x\,\mathrm d^3v
\nonumber\\
&=
\int_{\TL^3\times\R^3}
wg\,\mathrm d^3x\,\mathrm d^3v
+
\frac12
\int_{\TL^3\times\R^3}
w|g|^2\,\mathrm d^3x\,\mathrm d^3v
+
O(g^3).
\end{align}
The first term is linear in the perturbation and is therefore not part of
$\mc F^{(2)}$. Since the field term in
Eq.~\eqref{eq:freenergycontinuum} is already quadratic, we have
\begin{align}
\mc F^{(2)}[g,\phi]
=
\frac12
\int_{\TL^3\times\R^3}
|g(\vx,\vv)|^2w(\vv)\,
\mathrm d^3v\,\mathrm d^3x
+\frac12
\int_{\TL^3}
\left(
|\nabla_x\phi(\vx)|^2+\tau|\phi(\vx)|^2
\right)
\,\mathrm d^3x .
\label{eq:F2_continuum_explicit}
\end{align}

We now evaluate Eq.~\eqref{eq:F2_continuum_explicit} on
$g_{\NF,\NH}$. By Parseval's theorem in $x$ and orthonormality of the Hermite
basis in $L^2(\R^3,w(\vv)\mathrm d^3v)$,
\begin{equation}
\frac12
\int_{\TL^3\times\R^3}
|g_{\NF,\NH}(\vx,\vv)|^2w(\vv)\,
\mathrm d^3v\,\mathrm d^3x
=
\frac12
\sum_{\substack{\vn\in\mc{Z}_{\NF}^3\\ \val\in\mc{N}_{\NH}^3}}
|g_{\vn,\val}|^2 .
\end{equation}

It remains to rewrite the field contribution. Taking the Fourier transform of
Eq.~\eqref{eq:phi_g_single_species} gives
\begin{equation}
\kappa_{\vn}\hat\phi_{\vn}=g_{\vn,\v 0},
\end{equation}
with $\kappa_{\vn}$ defined in Eq.~\eqref{eq:kappa_n_def}. Therefore
$\hat\phi_{\vn}=0$ outside the Fourier truncation whenever
$g=g_{\NF,\NH}$. Applying Parseval's theorem again,
\begin{align}
\frac12
\int_{\TL^3}
\left(
|\nabla_x\phi_{\NF,\NH}|^2+\tau|\phi_{\NF,\NH}|^2
\right)\,\mathrm d^3x
=
\frac12
\sum_{\vn\in\mc{Z}_{\NF}^3}
\kappa_{\vn}|\hat\phi_{\vn}|^2=
\frac12
\sum_{\vn\in\mc{Z}_{\NF}^3}
\frac{1}{\kappa_{\vn}}|g_{\vn,\v 0}|^2 .
\end{align}
Combining the kinetic and field contributions gives
\begin{equation}
\mc F^{(2)}_{\NF,\NH}(\m u)
=
\frac12
\sum_{\substack{\vn\in\mc{Z}_{\NF}^3\\ \val\in\mc{N}_{\NH}^3}}
|g_{\vn,\val}|^2
+
\frac12
\sum_{\vn\in\mc{Z}_{\NF}^3}
\frac{1}{\kappa_{\vn}}|g_{\vn,\v 0}|^2 .
\end{equation}
By the definition of $\m P$, the right-hand side is exactly
$\frac12\lVert \m u\rVert_{2,\m P}^2$, which proves
Eq.~\eqref{eq:F2_truncated_equals_P_norm}.
\end{proof}

\section{Initial data analysis}
\label{appendix:initialdata}

This appendix evaluates the Lyapunov \(R_{\m P}\)-condition for the Fourier-mode and Gaussian initial data used in the main text.

\subsection{Proof of Proposition~\ref{prop:eta_avg_RP_necessary}}
\label{subsec:eta_avg_proof}

By Proposition~\ref{prop:P_quadratic_free_energy},
\begin{align}
\frac{1}{2}\lVert \m{u}_0\rVert_{2,\m P}^2
=&\;
\frac{1}{2}\int_{\TL^3}\int_{\mathbb{R}^3}
\lvert g(\vx,\vv,0)\rvert^2 w(\vv)\,\mathrm{d}^3v\,\mathrm{d}^3x+
\frac{1}{2}
\int_{\TL^3}
\big(
\lvert \nabla_{\vx}\phi(\vx,0)\rvert^2
+
\tau\lvert \phi(\vx,0)\rvert^2
\big)\,\mathrm{d}^3x
\nonumber\\
\geq&\;
\frac{\tau}{2}
\int_{\TL^3}
\lvert \phi(\vx,0)\rvert^2\,\mathrm{d}^3x
=
\frac{1}{2\tau}
\int_{\TL^3}
\lvert \varphi(\vx,0)\rvert^2\,\mathrm{d}^3x .
\end{align}
Hence,
\begin{equation}
\label{eq:u0_lower_bound_eta}
\lVert \m{u}_0\rVert_{2,\m P}^2
\geq
\frac{1}{\tau}
\int_{\TL^3}
\lvert \varphi(\vx,0)\rvert^2\,\mathrm{d}^3x .
\end{equation}
By the Cauchy--Schwarz inequality,
\begin{equation}
\varphi_{\mathrm{avg}}^2
=
\frac{1}{L^6}
\Bigg(
\int_{\TL^3}
\lvert \varphi(\vx,0)\rvert\,\mathrm{d}^3x
\Bigg)^2
\leq
\frac{1}{L^3}
\int_{\TL^3}
\lvert \varphi(\vx,0)\rvert^2\,\mathrm{d}^3x .
\end{equation}
Combining this with \eqref{eq:u0_lower_bound_eta}, we obtain
\begin{equation}
\label{eq:u0_lower_bound_eta_avg}
\lVert \m{u}_0\rVert_{2,\m P}
\geq
\frac{L^{3/2}}{\sqrt{\tau}}\varphi_{\mathrm{avg}} .
\end{equation}
Therefore,
\begin{equation}
R_{\m P}
=
\frac{\lVert \tilde{\m F}_2\rVert_2
      \lVert \m{u}_0\rVert_{2,\m{P}}}
     {\bar{\nu}}
\geq
\frac{
\lVert \tilde{\m F}_2\rVert_2 L^{3/2}
}{
\bar{\nu}\sqrt{\tau}
}
\varphi_{\mathrm{avg}} .
\end{equation}
Since \(R_{\m P}<1\), the result follows.

\subsection{Proof of Proposition~\ref{prop:Fourier_initial_data}}
\label{subsec:Fourier_proof}

Observe that the initial datum \eqref{eq:initialconditiontheorem} verifies \eqref{eq:eta_max} as $\cos(\v{\xi}_{\vn_0}\cdot\vx)=1$ is attained at $\v{x}=\v{0}\in \TL^3$. Moreover, the initial data \eqref{eq:initialconditiontheorem} satisfies \eqref{eq:phi_g_single_species} at $t=0$. Hence the initial data \eqref{eq:initialconditiontheorem} is consistent.

We compute the Fourier--Hermite coefficients of $g(\vx,\vv,0)$ as
\begin{equation}
g_{\vn,\val}(0)
=
\frac{\varphi_{\max}\kappa_{\vn_0}L^{3/2}}{2\tau}
\big(
\delta_{\vn,\vn_0}
+
\delta_{\vn,-\vn_0}
\big)
\delta_{\val,\v 0}.
\end{equation}
Therefore,
\begin{align}
\lVert \m{u}(0)\rVert_{2,\m{P}}^2
=
\sum_{\vn\in\mc{Z}_{\NF}^3}
\bigg(1+\frac{1}{\kappa_{\vn}}\bigg)
\lvert g_{\vn,\v 0}(0)\rvert^2
=2
\bigg(1+\frac{1}{\kappa_{\vn_0}}\bigg)
\bigg(
\frac{\varphi_{\max}\kappa_{\vn_0}L^{3/2}}{2\tau}
\bigg)^2
=
\frac{\varphi_{\max}^2}{\tau^2}
\frac{L^3}{2}
\kappa_{\vn_0}(1+\kappa_{\vn_0}).
\end{align}
This proves \eqref{eq:u0P_Fourier_mode}.

Inserting \eqref{eq:u0P_Fourier_mode} into \eqref{eq:RP_upper_bound}, we obtain
\begin{align}
R_{\m{P}}
\leq &\;
\frac{\sqrt{3}}{\bar{\nu}L^{3/2}}
\bigg(
\frac{1}{1+\tau}+\frac{13L^2}{2\pi^2}\NF
\bigg)^{1/2}
\big(1+\sqrt{3\NH}\big)
\frac{\varphi_{\max}}{\tau}
\bigg(
\frac{L^3}{2}
\kappa_{\vn_0}(1+\kappa_{\vn_0})
\bigg)^{1/2}
\nonumber\\
=&\;
\frac{\varphi_{\max}}{\tau\bar{\nu}}
\bigg(
\frac{3}{2}
\kappa_{\vn_0}(1+\kappa_{\vn_0})
\bigg)^{1/2}
\bigg(
\frac{1}{1+\tau}+\frac{13L^2}{2\pi^2}\NF
\bigg)^{1/2}
\big(1+\sqrt{3\NH}\big).
\end{align}
Hence the sufficient condition \eqref{eq:varphimax_Fourier_mode} implies $R_{\m{P}}<1$.

\subsection{Proof of Proposition~\ref{prop:Gaussian}}
\label{subsec:Gaussian_proof}

We first show that the initial data \eqref{eq:Gaussian_g0}--\eqref{eq:Gaussian_phi0} are consistent. 
Writing 
\begin{equation}\label{eq:g0_Fourier}
g(\vx,\vv,0)
=
\frac{C}{L^3}
\sum_{\vn\in\mathbb{Z}^3\setminus\{0\}}
\exp\bigg(
-\frac{\sigma^2}{2}\lvert\v{\xi}_{\vn}\rvert^2
\bigg)
\exp(i\v{\xi}_{\vn}\cdot\vx)
\end{equation}
and applying the Helmholtz operator to \eqref{eq:Gaussian_phi0}, we obtain
\begin{equation}\label{eq:phi_g_verification}
(-\nabla_{\vx}^2+\tau)\phi(\vx,0)
=
\frac{C}{L^3}
\sum_{\vn\in\mathbb{Z}^3\setminus\{0\}}
\exp\bigg(
-\frac{\sigma^2}{2}\lvert\v{\xi}_{\vn}\rvert^2
\bigg)
\exp(i\v{\xi}_{\vn}\cdot\vx).
\end{equation}
Comparing \eqref{eq:phi_g_verification} with \eqref{eq:g0_Fourier} gives the desired result. 

We next show that 
\begin{equation}\label{eq:varphi_max_sigma}
\varphi_{\max}=\tau \lvert C\rvert S_1(L,\sigma).
\end{equation}
Recall that
$\varphi=\tau\phi$. By the triangle inequality applied to
\eqref{eq:Gaussian_phi0}, and then using the definition of $S_1$ \eqref{eq:S1}, we have
\begin{equation}\label{eq:varphi_max_upper_bound}
\lvert \varphi(\vx,0)\rvert
\leq
\tau \lvert C\rvert S_1(L,\sigma).
\end{equation}
Moreover, the bound is saturated at $\vx=0$, 
\begin{equation}\label{eq:varphi_max_attained}
\lvert \varphi(0,0)\rvert
=
\tau \lvert C\rvert S_1(L,\sigma).
\end{equation}
Combining \eqref{eq:varphi_max_upper_bound} and
\eqref{eq:varphi_max_attained}, we obtain \eqref{eq:varphi_max_sigma}.

We are now prepared to prove the estimates \eqref{eq:Gaussian_u0P_exact} and \eqref{eq:Gaussian_RP}. From \eqref{eq:g0_Fourier}, and using the Fourier convention
\eqref{eq:Fourier_coeffs}, the Fourier--Hermite coefficients of the initial
datum are
\begin{equation}\label{eq:Gaussian_u0_coeffs}
g_{\vn,\val}(0)
=
\frac{C}{L^{3/2}}
\exp\bigg(
-\frac{\sigma^2}{2}\lvert\v\xi_{\vn}\rvert^2
\bigg)
\delta_{\val,\v{0}} \quad (\vn\neq \v{0}),
\end{equation}
The zero Fourier mode vanishes by \eqref{eq:Gaussian_g0}. Since the datum is
independent of $\vv$, all non-zero Hermite modes vanish.

Using \eqref{eq:Lyapunov_pullback} and \eqref{eq:Gaussian_u0_coeffs}, we get
\begin{equation}
\lVert \m u_0\rVert_{2,\m P}^2
=
\sum_{\vn\in \mc{Z}_{\NF}^3\setminus\{\v 0\}}
\bigg(
1+\frac{1}{\kappa_{\vn}}
\bigg)
\lvert g_{\vn,\v 0}(0)\rvert^2
=\lvert C\rvert^2 S_2(L,\sigma).
\end{equation}
Taking square roots proves \eqref{eq:Gaussian_u0P_exact}. Substituting \eqref{eq:Gaussian_u0P_exact} into
\eqref{eq:RP_upper_bound} gives
\begin{equation}\label{eq:Gaussian_RP_bound_exact}
R_{\m P}
\leq
\frac{
\sqrt{3}\lvert C\rvert
S_2(L,\sigma)^{1/2}}{
\bar{\nu}L^{3/2}
}
\bigg(
\frac{1}{1+\tau}
+
\frac{13L^2}{2\pi^2}\NF
\bigg)^{1/2}
\big(1+\sqrt{3\NH}\big)
.
\end{equation}
Therefore $R_{\m P}<1$ whenever the right-hand side of
\eqref{eq:Gaussian_RP_bound_exact} is strictly less than $1$. Using
\eqref{eq:varphi_max_sigma} to eliminate $\lvert C\rvert$ gives
\eqref{eq:Gaussian_RP}.

It remains to justify the asymptotic statements. Suppose that
$\sigma=cL$, with $c\in(0,1)$ fixed. Then
\begin{equation}
\label{eq:S1_sigma_cL_expanded}
S_1(L,cL)
=
\frac{1}{L^3}
\sum_{\vn\in\mathbb Z^3\setminus\{\v 0\}}
\frac{
\exp(-2\pi^2c^2|\vn|^2)
}{
4\pi^2|\vn|^2/L^2+\tau
}.
\end{equation}
The denominator in \eqref{eq:S1_sigma_cL_expanded}
is bounded below by $\tau$, while the Gaussian factor is summable on
$\mathbb Z^3$. Therefore
\begin{equation}
\label{eq:S1_sigma_cL_upper}
S_1(L,cL)=O(L^{-3}).
\end{equation}
Conversely, for $L\geq 1$, the contribution of the fixed nonzero mode
$\vn=(1,0,0)$ gives
\begin{equation}
\label{eq:S1_sigma_cL_lower}
S_1(L,cL)
\geq
\frac{1}{L^3}
\frac{
\exp(-2\pi^2c^2)
}{
4\pi^2+\tau
}.
\end{equation}
Combining \eqref{eq:S1_sigma_cL_upper} and
\eqref{eq:S1_sigma_cL_lower}, we obtain
\begin{equation}
\label{eq:S1_sigma_cL_theta}
S_1(L,cL)=\Theta(L^{-3}).
\end{equation}

Similarly,
\begin{equation}
\label{eq:S2_sigma_cL_expanded}
S_2(\NF,L,cL)
=
\frac{1}{L^3}
\sum_{\vn\in \mathcal{Z}_{\NF}^3\setminus\{\v 0\}}
\left(
1+\frac{1}{4\pi^2|\vn|^2/L^2+\tau}
\right)
\exp(-4\pi^2c^2|\vn|^2).
\end{equation}
The summand in \eqref{eq:S2_sigma_cL_expanded} is bounded above by
\begin{equation}
\label{eq:S2_sigma_cL_summand_upper}
\left(1+\frac{1}{\tau}\right)
\exp(-4\pi^2c^2|\vn|^2),
\end{equation}
which is summable uniformly in $L$ and $\NF$. Hence
\begin{equation}
\label{eq:S2_sigma_cL_upper}
S_2(\NF,L,cL)=O(L^{-3}).
\end{equation}
For $\NF\geq 1$ and $L\geq 1$, the mode $\vn=(1,0,0)$ gives
\begin{equation}
\label{eq:S2_sigma_cL_lower}
S_2(\NF,L,cL)
\geq
\frac{1}{L^3}
\left(
1+\frac{1}{4\pi^2+\tau}
\right)
\exp(-4\pi^2c^2).
\end{equation}
Therefore,
\begin{equation}
\label{eq:S2_sigma_cL_theta}
S_2(\NF,L,cL)=\Theta(L^{-3}),
\end{equation}
uniformly for $\NF\geq 1$.

It follows from \eqref{eq:S1_sigma_cL_theta} and
\eqref{eq:S2_sigma_cL_theta} that
\begin{equation}
\label{eq:S1_S2_ratio_sigma_cL}
\frac{L^{3/2}S_1(L,cL)}
{S_2(\NF,L,cL)^{1/2}}
=
O(1).
\end{equation}
Substituting \eqref{eq:S1_S2_ratio_sigma_cL} into
\eqref{eq:Gaussian_RP} gives
\begin{equation}
\label{eq:Gaussian_scaling_general_sigma_cL}
\varphi_{\max}
=
O\bigg(
\left(
\frac{1}{1+\tau}
+
\frac{13L^2}{2\pi^2}\NF
\right)^{-1/2}
\left(1+\sqrt{3\NH}\right)^{-1}
\bigg).
\end{equation}
For fixed $L$, the first factor in
\eqref{eq:Gaussian_scaling_general_sigma_cL} scales as $O(\NF^{-1/2})$,
while the second factor scales as $O(\NH^{-1/2})$. Thus
\begin{equation}
\label{eq:Gaussian_fixed_L_scaling}
\varphi_{\max}
=
O\!\left((\NF\NH)^{-1/2}\right),
\qquad
\NF,\NH\to\infty.
\end{equation}
Finally, under Debye-resolution scaling $\NF=\Theta(L)$, we have
\begin{equation}
\label{eq:Gaussian_Debye_resolution_factor}
\left(
\frac{1}{1+\tau}
+
\frac{13L^2}{2\pi^2}\NF
\right)^{-1/2}
=
O(L^{-3/2}).
\end{equation}
Substitution into \eqref{eq:Gaussian_scaling_general_sigma_cL} gives
\begin{equation}
\label{eq:Gaussian_Debye_resolution_scaling_strong}
\varphi_{\max}
=
O\!\left(L^{-3/2}\NH^{-1/2}\right).
\end{equation}
In particular, since $L^{-3/2}\NH^{-1/2}
=O\big((L\NH)^{-1/2}\big)$ for $L\geq 1$, this implies the stated bound
\begin{equation}
\label{eq:Gaussian_Debye_resolution_scaling}
\varphi_{\max}
=
O\!\left((L\NH)^{-1/2}\right).
\end{equation}

\subsection{Norm of the initial data}
\label{app:regularity}

\begin{proposition}[Weighted square integrability implies $\|\m{u}_0\|_2=O(1)$]
\label{prop:u0_O1_characterization}
For each pair $(\NF,\NH)\in\mathbb{N}^2$, let
\begin{equation}
\m{u}_0
\coloneqq
\big(g_{\vn,\v{\alpha}}(0)\big)_{\substack{
\vn\in \mc{Z}_{\NF}^3 \\
\v{\alpha}\in \mc{N}_{\NH}^3 }}.
\end{equation}
Assume that
\begin{equation}
\label{eq:square_integrability}
g(\cdot,\cdot,0)\in
L^2(\TL^3\times\mathbb{R}^3,\,
w(\vv)\,\mathrm{d}^3v\,\mathrm{d}^3x).
\end{equation}
Then,
\begin{equation}
\label{eq:uniform_boundedness}
\sup_{(\NF,\NH)\in\mathbb{N}^2}
\lVert\m{u}_0\rVert_2<\infty.
\end{equation}
and, moreover,
\begin{equation}
\label{eq:sup_identity_result}
\sup_{(\NF,\NH)\in\mathbb{N}^2}
\lVert\m{u}_0\rVert_2^2
=
\sum_{\substack{\vn\in\mathbb{Z}^3\\ \v{\alpha}\in\mathbb{N}_0^3}}
\lvert g_{\vn,\v{\alpha}}(0)\rvert^2
=
\int_{\TL^3}\int_{\mathbb{R}^3}
\lvert g(\vx,\vv,0)\rvert^2 w(\vv)\,\mathrm{d}^3v\,\mathrm{d}^3x.
\end{equation}
\end{proposition}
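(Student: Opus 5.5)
The plan is to identify the full (untruncated) Fourier--Hermite coefficient family as an orthonormal expansion of $g(\cdot,\cdot,0)$ in the Hilbert space $\mathcal{L}\coloneqq L^2(\TL^3\times\R^3,\,w(\vv)\,\mathrm{d}^3v\,\mathrm{d}^3x)$, and then read off both claims from Parseval's identity together with monotonicity of the truncated partial sums.

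\textbf{Step 1 (orthonormal basis).} I will verify that the functions $e_{\vn,\val}(\vx,\vv)\coloneqq L^{-3/2}\ee^{i\v\xi_{\vn}\cdot\vx}H_{\val}(\vv)$, with $\vn\in\mathbb{Z}^3$ and $\val\in\mathbb{N}_0^3$, form an orthonormal basis of $\mathcal{L}$. Orthonormality follows from orthonormality of the Fourier characters on $\TL^3$ and of the normalized Hermite tensor basis \eqref{eq:hermite_basis} in $L^2(\R^3,w\,\mathrm{d}^3v)$. For completeness, I will invoke the standard facts that the characters are complete in $L^2(\TL^3)$ and that the Hermite polynomials are complete in the Gaussian-weighted $L^2$ space (polynomials are dense for the Gaussian measure). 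The tensor product of two orthonormal bases is then an orthonormal basis of the product space $\mathcal{L}\cong L^2(\TL^3)\otimes L^2(\R^3,w)$.

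\textbf{Step 2 (identify coefficients).} I will check that $g_{\vn,\val}(0)=\langle e_{\vn,\val},g(\cdot,\cdot,0)\rangle_{\mathcal{L}}$, using the Fourier convention \eqref{eq:Fourier_coeffs} and the Hermite expansion \eqref{eq:Hermite_g}. This requires exchanging the $x$- and $v$-integrals, which Fubini justifies because $g\in\mathcal{L}$ and the basis functions are bounded in $x$ and square-integrable in $v$. Parseval's identity then gives the second equality in \eqref{eq:sup_identity_result}, together with the finiteness of the sum.

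\textbf{Step 3 (supremum).} The index sets $\mc{Z}_{\NF}^3\times\mc{N}_{\NH}^3$ increase with $\NF$ and with $\NH$, and their union is $\mathbb{Z}^3\times\mathbb{N}_0^3$. Hence $\lVert\m u_0\rVert_2^2$ is a partial sum of a nonnegative convergent series, bounded by the total. Along the diagonal $\NF=\NH\to\infty$, monotone convergence for series of nonnegative terms shows the partial sums reach the full sum. Therefore the supremum equals the full sum, which is finite, and this yields both \eqref{eq:uniform_boundedness} and the first equality in \eqref{eq:sup_identity_result}.

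The main obstacle is the completeness assertion in Step 1, and in particular the Hermite density in the weighted space. I would cite it as standard rather than reprove it. Completeness is in any case only needed for the equality with the integral; the uniform bound \eqref{eq:uniform_boundedness} already follows from Bessel's inequality using orthonormality alone.
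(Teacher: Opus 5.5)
Your proposal is correct and follows essentially the same route as the paper: Parseval's identity for the Fourier--Hermite expansion in the weighted space, then the observation that $\lVert\m u_0\rVert_2^2$ is a monotone partial sum of a convergent nonnegative series whose index sets exhaust $\mathbb{Z}^3\times\mathbb{N}_0^3$. You are more careful than the paper, which cites only orthonormality of the Hermite basis where Parseval actually needs completeness, and which leaves the Fubini identification of the coefficients implicit; your remark that Bessel's inequality alone gives the uniform bound is also correct.
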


\begin{proof}
By the unitarity of the Fourier transform on $\TL^3$ and the orthonormality of
the Hermite basis in $L^2(\mathbb{R}^3,w(\vv)\,\mathrm{d}^3v)$, \eqref{eq:square_integrability} and Parseval's
theorem give
\begin{equation}
\label{eq:parseval_u0}
\sum_{\substack{\vn\in\mathbb{Z}^3\\ \v{\alpha}\in\mathbb{N}_0^3}}
\lvert g_{\vn,\v{\alpha}}(0)\rvert^2
=
\int_{\TL^3}\int_{\mathbb{R}^3}
\lvert g(\vx,\vv,0)\rvert^2 w(\vv)\,\mathrm{d}^3v\,\mathrm{d}^3x<\infty.
\end{equation}
For each $(\NF,\NH)\in\mathbb N^2$,
\begin{equation}
\label{eq:partial_sum_bound}
\lVert\m{u}_0\rVert_2^2
=S(\NF,\NH) \leq
\sum_{\substack{\vn\in\mathbb{Z}^3\\\v{\alpha}\in\mathbb{N}_0^3}}
\lvert g_{\vn,\v{\alpha}}(0)\rvert^2
<\infty
\end{equation}
where
\begin{equation}
\label{eq:partial_sums}
S(\NF,\NH)\coloneqq \sum_{\substack{\vn\in \mc{Z}_{\NF}^3\\ \v{\alpha}\in \mc{N}_{\NH}^3}}
\lvert g_{\vn,\v{\alpha}}(0)\rvert^2.
\end{equation}
Hence \eqref{eq:uniform_boundedness} holds. Since the partial sums \eqref{eq:partial_sums} are
monotone increasing in $\NF$ and $\NH$, we also obtain
\begin{equation}
\label{eq:sup_identity}
\sup_{(\NF,\NH)\in\mathbb{N}^2}
\lVert\m{u}_0\rVert_2^2
=
\sum_{\substack{\vn\in\mathbb{Z}^3\\ \v{\alpha}\in\mathbb{N}_0^3}}
\lvert g_{\vn,\v{\alpha}}(0)\rvert^2.
\end{equation}
Combining \eqref{eq:parseval_u0} and \eqref{eq:sup_identity} gives
\eqref{eq:sup_identity_result}.
\end{proof}

\subsection{Sobolev regularity}

Let $g(\vx,\vv,t)$ be a solution of \eqref{eq:gnonlinearintro}, and let
$g_{\vn,\val}(t)$ denote its Fourier-Hermite coefficients. For any subset
$\Omega\subseteq \mathbb{Z}^3\times\mathbb{N}_0^3$, define the
truncation operator $\m T_\Omega$ through its Fourier-Hermite coefficients,
by
\begin{equation}
(\m{T}_{\Omega}g)_{\vn,\val}\coloneqq
\begin{cases}
g_{\vn,\val} & (\vn,\val)\in\Omega,\\
0 & (\vn,\val)\notin\Omega.
\end{cases}
\end{equation}
For the chosen Fourier and Hermite cutoffs $\NF,\NH$, 
the corresponding Fourier-Hermite truncation is
\begin{equation}
\m{T}_{\NF,\NH}\coloneqq \m{T}_{\mc{Z}_{\NF}^3\times\mc{N}_{\NH}^3}.
\end{equation}

We measure truncation error in the weighted $L^2$ space
\begin{equation}
\mc{X}
\coloneqq
L^2\big(
\mathbb{T}_L^3\times\mathbb{R}^3,
w(\vv)\,\mathrm{d}^3 v\,\mathrm{d}^3 x
\big),
\end{equation}
with norm
\begin{equation}
\lVert g(t)\rVert_{\mc X}^2
\coloneqq
\int_{\mathbb T_L^3}
\int_{\mathbb R^3}
\lvert g(\vx,\vv,t)\rvert^2
w(\vv)\,\mathrm d^3v\,\mathrm d^3x =\sum_{\substack{\vn\in\mathbb Z^3\\ \val\in\mathbb N_0^3}}
\lvert g_{\vn,\val}(t)\rvert^2,
\end{equation}
using Parseval's theorem and the orthonormality of the Hermite basis. The associated Fourier-Hermite truncation error is
\begin{equation}
E_{\NF,\NH}(t)
\coloneqq
\lVert(\m{I}-\m{T}_{\NF,\NH})g(t)\rVert_{\mc X}.
\end{equation}

We also define the Fourier-only and Hermite-only truncations
\begin{equation}
\m{T}^{\mathrm{F}}_{\NF}
\coloneqq
\m{T}_{\mc{Z}_{\NF}^3\times\mathbb{N}_0^3},
\qquad
\m{T}^{\mathrm{H}}_{\NH}
\coloneqq
\m{T}_{\mathbb Z^3\times\mc{N}_{\NH}^3}.
\end{equation}
Then, $\m{T}_{\NF,\NH}
=
\m{T}^{\mathrm{F}}_{\NF}\m{T}^{\mathrm{H}}_{\NH}
=
\m{T}^{\mathrm{H}}_{\NH}\m T^{\mathrm{F}}_{\NF}$. Since $\m{T}^{\mathrm F}_{\NF}$ and $\m T^{\mathrm{H}}_{\NH}$ are orthogonal projections on $\mc{X}$, we have
\begin{equation}
\m{I}-\m{T}_{\NF,\NH}
=
(\m{I}-\m{T}^{\mathrm F}_{\NF})
+
\m{T}^{\mathrm F}_{\NF}(\m{I}-\m{T}^{\mathrm H}_{\NH}),
\end{equation}
and therefore
\begin{equation}
E_{\NF,\NH}(t)
\leq
\lVert(\m I-\m T^{\mathrm{F}}_{\NF})g(t)\rVert_{\mc{X}}
+
\lVert(\m I-\m T^{\mathrm{H}}_{\NH})g(t)\rVert_{\mc X}.
\end{equation}

We now state the spectral regularity assumptions used to control the
truncation error. For $\sof>0$, define the Fourier-Sobolev coefficient
norm
\begin{equation}
\lVert g(t)\rVert_{\mathrm{F},\sof}^2
\coloneqq
\sum_{\substack{\vn\in\mathbb Z^3\\ \val\in\mathbb N_0^3}}
\big(1+\lvert\v\xi_{\vn}\rvert^2\big)^{\sof}
\lvert g_{\vn,\val}(t)\rvert^2 .
\end{equation}
For $\soh>0$, define the Hermite-Sobolev coefficient norm
\begin{equation}
\lVert g(t)\rVert_{\mathrm{H},\soh}^2
\coloneqq
\sum_{\substack{\vn\in\mathbb Z^3\\ \val\in\mathbb N_0^3}}
\big(1+\lVert\val\rVert_1\big)^{2\soh}
\lvert g_{\vn,\val}(t)\rvert^2.
\end{equation}

\begin{ass}[Uniform Fourier-Hermite regularity]
\label{ass:FH_regular}
Fix a simulation interval $[0,T]$. We assume that there exist exponents
$\sof,\soh>0$ such that
\begin{align}
\sup_{0\le t\le T}\lVert g(t)\rVert_{\mathrm{F},\sof}<\infty,\quad 
\sup_{0\le t\le T}\lVert g(t)\rVert_{\mathrm{H},\soh}<\infty.
\end{align}
\end{ass}

\begin{prop}[Fourier-Hermite truncation error]
\label{prop:FH_truncation_error}
Under Assumption~\ref{ass:FH_regular}, the truncation error satisfies
\begin{align}\label{eq:E_truncation_bound}
\sup_{t\in[0,T]}E_{\NF,\NH}(t) 
\leq &\; 
\bigg(1+\bigg(\frac{2\pi}{L}\bigg)^2(\NF+1)^2\bigg)^{-\sof/2}\sup_{t\in [0,T]}\lVert g(t)\rVert_{\mathrm{F},\sof}
+
(\NH+2)^{-\soh}\sup_{t\in[0,T]}\lVert g(t)\rVert_{\mathrm{H},\soh}.
\end{align}
\end{prop}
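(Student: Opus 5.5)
The plan is to bound each of the two terms in the splitting
\[
E_{\NF,\NH}(t)\le \lVert(\m I-\m T^{\mathrm F}_{\NF})g(t)\rVert_{\mc X}+\lVert(\m I-\m T^{\mathrm H}_{\NH})g(t)\rVert_{\mc X}
\]
(established just before the statement) separately. For each term we use the standard weighted-tail argument: on the complement of the retained index set the Sobolev weight is bounded below. We then take the supremum over $t\in[0,T]$ and invoke Assumption~\ref{ass:FH_regular} to make the right-hand side finite.

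\textbf{Fourier term.} Using Parseval in $\mc X$,
\[
\lVert(\m I-\m T^{\mathrm F}_{\NF})g(t)\rVert_{\mc X}^2=\sum_{\vn\notin\mc Z_{\NF}^3}\sum_{\val}\lvert g_{\vn,\val}(t)\rvert^2 .
\]
If $\vn\notin\mc Z_{\NF}^3$, then $\lVert\vn\rVert_\infty\ge \NF+1$. It follows that $\lvert\v\xi_{\vn}\rvert^2\ge (2\pi/L)^2\lVert\vn\rVert_\infty^2\ge (2\pi/L)^2(\NF+1)^2$. Consequently
\[
1\le \big(1+\lvert\v\xi_{\vn}\rvert^2\big)^{\sof}\Big(1+(2\pi/L)^2(\NF+1)^2\Big)^{-\sof}
\]
on the tail. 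Inserting this factor and extending the sum to all indices bounds the square by $(1+(2\pi/L)^2(\NF+1)^2)^{-\sof}\lVert g(t)\rVert_{\mathrm F,\sof}^2$. Taking square roots gives the first term of \eqref{eq:E_truncation_bound}.

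\textbf{Hermite term.} By the same argument, $\val\notin\mc N_{\NH}^3$ means some component satisfies $\alpha_a\ge \NH+1$. Hence $\lVert\val\rVert_1\ge \NH+1$, and so $1+\lVert\val\rVert_1\ge \NH+2$. This gives the pointwise inequality $1\le (1+\lVert\val\rVert_1)^{2\soh}(\NH+2)^{-2\soh}$ on the tail. Summing yields $\lVert(\m I-\m T^{\mathrm H}_{\NH})g(t)\rVert_{\mc X}\le (\NH+2)^{-\soh}\lVert g(t)\rVert_{\mathrm H,\soh}$.

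\textbf{Assembly.} Adding the two bounds and taking $\sup_{t\in[0,T]}$ gives \eqref{eq:E_truncation_bound}; we use that the supremum of a sum is at most the sum of the suprema. The prefactors are $t$-independent, so they pull out of the suprema.

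There is no real obstacle here. The only point needing care is the geometric fact that leaving the cube $\mc Z_{\NF}^3$ forces the Euclidean wavevector norm to be at least the $\infty$-norm bound, rather than a weaker $\ell^1$ or diagonal bound. A secondary point is that the decomposition of $\m I-\m T_{\NF,\NH}$ contains $\m T^{\mathrm F}_{\NF}$ in its second piece, which we drop since it is a contraction. Both are routine.
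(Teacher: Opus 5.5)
Your proposal is correct and follows the paper's proof essentially step for step. You split $E_{\NF,\NH}$ via the Fourier/Hermite tail decomposition and insert the Sobolev weights on each tail using $\lVert\vn\rVert_\infty\ge\NF+1$ (hence $\lvert\v\xi_{\vn}\rvert^2\ge(2\pi/L)^2(\NF+1)^2$) and $1+\lVert\val\rVert_1\ge\NH+2$, then take suprema over $t\in[0,T]$; your remark that $\m T^{\mathrm F}_{\NF}$ may be dropped from the second piece because it is a contraction is the same step the paper uses implicitly to reach the two-term bound.
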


\begin{proof}
By the tail decomposition established above,
\begin{equation}
\label{eq:E_triangle_bound}
E_{\NF,\NH}(t)
\leq
\lVert(\m I-\m T^{\mathrm F}_{\NF})g(t)\rVert_{\mc{X}}
+
\lVert(\m I-\m T^{\mathrm H}_{\NH})g(t)\rVert_{\mc{X}}.
\end{equation}

For the Fourier tail, if $\vn\notin \mc{Z}_{\NF}^3$, then
$\|\vn\|_\infty\ge \NF+1$, hence
\begin{equation}
\lvert\v\xi_{\vn}\rvert^2
=
\bigg(\frac{2\pi}{L}\bigg)^2\lvert\vn\rvert^2
\geq
\bigg(\frac{2\pi}{L}\bigg)^2(\NF+1)^2.
\end{equation}
Therefore,
\begin{equation}
\label{eq:Fourier_norm}
\lVert(\m I-\m T^{\mathrm F}_{\NF})g(t)\rVert_{\mc X}^2
=
\sum_{\substack{\vn\notin\mc{Z}_{\NF}^3\\ \val\in\mathbb N_0^3}}
\lvert g_{\vn,\val}(t)\rvert^2 \leq
\bigg(1+\bigg(\frac{2\pi}{L}\bigg)^2(\NF+1)^2\bigg)^{-\sof}
\|g(t)\|_{\mathrm{F},\sof}^2.
\end{equation}

Similarly, if \(\val\notin \mc{N}_{\NH}^3\), then
$\lVert\val\rVert_\infty\ge \NH+1$, and hence $1+\lVert\val\rVert_1\ge \NH+2$.
Thus,
\begin{equation}
\label{eq:Hermite_norm}
\lVert(\m I-\m T^{\mathrm H}_{\NH})g(t)\rVert_{\mc X}^2
=
\sum_{\substack{\vn\in\mathbb Z^3\\ \val\notin\mc{N}_{\NH}^3}}
\lvert g_{\vn,\val}(t)\rvert^2\leq
(\NH+2)^{-2\soh}
\lVert g(t)\rVert_{\mathrm{H},\soh}^2.
\end{equation}

Putting \eqref{eq:Fourier_norm} and \eqref{eq:Hermite_norm} into \eqref{eq:E_triangle_bound} and taking the supremum over $t\in[0,T]$ gives the result \eqref{eq:E_truncation_bound}.
 \end{proof}

\section{Quantum ODE solver}
\label{sec:quantumODEsolver}

This appendix gives the linear-system condition-number estimate used in the history-state generation step.

Let $\mathcal{H}_c$ denote the first clock register space, with basis $\{\ket{m}\}_{m=0}^{M-1}$. We define the subspace
\begin{equation}
\tilde{V}:= \mathcal{H}_c \otimes V \subseteq \mathcal{H}_c \otimes \mathcal{H}.
\end{equation}
 Since $\bar{\mathcal{A}}$ sends $V$ into $V$ and equals $\bar{\m{A}}$ on this space, it follows that $\m{C}\otimes g(\bar{\mathcal{A}})$ sends $\tilde{V}$ into itself for any operator $\m{C}$ sending $ \mathcal{H}_c $ to $ \mathcal{H}_c$ and any function $g$. Any linear combination of such operators also has the same property, and therefore we deduce that $\m{L}$ maps $\tilde{V}$ into $\tilde{V}$, and so does $\m{L}^{-1}$. If the data $\bar{\m{z}}^0$ is restricted to lie in $V$, then the linear system in Eq.~\eqref{eq:linearsystemembedded} gives the same solution (modulo padding with zeroes) as the linear system for $\m{L}_{\tilde{V}}$ ($\m{L}$ restricted to the $\tilde{V}$ subspace). But we have 
\begin{equation}
 \m{L}_{\tilde{V}} = \m{I} -\sum_{m=0}^{M-1} \ketbra{m+1}{m} \otimes \m{T}_k(\bar{\m{A}}\Delta t).
\end{equation}
So in the complexity analysis for the embedded linear system in Eq.~\eqref{eq:linearsystemembedded} we can use the condition number of the non-embedded  linear system, namely the condition number of $\m{L}$ restricted to $\tilde{V}$.

\begin{lemma}[Bound on $\|\m{L}^{-1}_{\tilde{V}}\|$]
\label{lemma:timeandconditionnumber}
We are given a linear ODE system $\dot{\bar{\m{z}}}(t) = \bar{\mathcal{A}} \bar{\m{z}}(t)$, with $\bar{\m{z}}(t) \in \mathcal{H}$, initial conditions $\bar{\m{z}}(t)=\bar{\m{z}}^0$, and $t\in [0,T]$. We assume a subspace $V \subseteq \mathcal{H}$ that is left invariant under $\bar{\mathcal{A}}$, that $\bar{\mathcal{A}}=\bar{\m{A}}$ when restricted to $V$, and $\mu(\bar{\m{A}}) <0$. We further assume that $\bar{\m{z}}^0\in V$. We consider the truncated Taylor series discretization in Eq.~\eqref{eq:recursive}, where $M$ is the total number of timesteps, $\Delta t$ the length of a single timestep and so $M= T/\Delta t$ (assumed to be integer for simplicity). The time discretization scheme defines a sequence of recursive relations that give a linear system of equations $\m{L} \bar{\m{y}} = \bar{\m{b}}$ as in Eq.~\eqref{eq:L}--\eqref{eq:Lextra}. Now fix
\begin{itemize}
    \item A timestep $\Delta t \leq 1/\alpha_{\bar{\m{A}}}$.
    \item A Taylor series truncation $k$ 
    \begin{align}
    (k+1)! \geq \frac{M \ee^3 \bar{z}_{\max}}{\epsilon_{\mathrm{TD}}},
\end{align}
where $\bar{z}_{\textrm{max}} \geq \sup_{t \in [0,T]} \| \bar{\m{z}}(t)\|$.
\end{itemize}
Then
\begin{align}
  \| \m{L}^{-1}_{\tilde{V}}\| \leq 1+ \bigg(1+ \frac{\epsilon_{\mathrm{TD}}}{\bar{z}_{\max}}\bigg) \frac{\ee^{\mu(\bar{\m{A}}) h} (1- \ee^{\mu(\bar{\m{A}}) h M})}{1- \ee^{\mu(\bar{\m{A}})  h}}
\end{align}

\end{lemma}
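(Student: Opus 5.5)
We bound $\|\m{L}^{-1}_{\tilde V}\|$ by solving the block lower-bidiagonal system explicitly. Write $\m{T}:=\m{T}_k(\bar{\m A}\Delta t)$ restricted to $V$ and $h:=\Delta t$. Since $\m{L}_{\tilde V}=\m I-\m S\otimes \m T$ with $\m S=\sum_m\ketbra{m+1}{m}$ nilpotent, the Neumann series terminates:
\begin{equation}
\m{L}^{-1}_{\tilde V}=\sum_{j=0}^{M}\m S^j\otimes \m T^j .
\end{equation}
The $j$-th term is a block shift by $j$ clock steps with blocks $\m T^j$, so its norm is at most $\|\m T^j\|$. The triangle inequality then gives
\begin{equation}
\|\m{L}^{-1}_{\tilde V}\|\le 1+\sum_{j=1}^{M}\|\m T^j\|.
\end{equation}
The leading $1$ in the claimed bound is the $j=0$ term. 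The remaining sum must be compared with the geometric series $\sum_{j=1}^M \ee^{\mu(\bar{\m A})hj}=\ee^{\mu h}(1-\ee^{\mu hM})/(1-\ee^{\mu h})$, which is exactly the stated expression.

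To obtain that comparison, split $\m T^j=\ee^{\bar{\m A}hj}+(\m T^j-\ee^{\bar{\m A}hj})$. The exact part satisfies $\|\ee^{\bar{\m A}hj}\|\le \ee^{\mu(\bar{\m A})hj}$ by the standard logarithmic-norm bound on the matrix exponential. For the remainder, telescope:
\begin{equation}
\m T^j-\ee^{\bar{\m A}hj}=\sum_{i=0}^{j-1}\m T^{i}\,(\m T-\ee^{\bar{\m A}h})\,\ee^{\bar{\m A}h(j-1-i)} .
\end{equation}
With $\|\bar{\m A}h\|\le1$, the Taylor remainder obeys $\|\m T-\ee^{\bar{\m A}h}\|\le \ee/(k+1)!$. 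A crude bound $\|\m T\|\le \ee^{\|\bar{\m A}\|h}\le \ee$ would give a factor $\ee^{j}$, which is too large, so we need a self-consistent induction instead. Assume $\|\m T^i\|\le(1+\epsilon_{\mathrm{TD}}/\bar z_{\max})\ee^{\mu hi}$ for $i<j$. Each of the $j\le M$ terms in the telescoping sum then contributes at most $(1+\epsilon_{\mathrm{TD}}/\bar z_{\max})\,\ee^{\mu h(j-1)}\,\ee/(k+1)!$. The hypothesis $(k+1)!\ge M\ee^3\bar z_{\max}/\epsilon_{\mathrm{TD}}$ makes the total remainder at most $(\epsilon_{\mathrm{TD}}/\bar z_{\max})\ee^{\mu h j}$. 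Here the spare factors of $\ee$ absorb $\ee^{-\mu h}\le\ee^{\|\bar{\m A}\|h}\le \ee$ and the prefactor $1+\epsilon_{\mathrm{TD}}/\bar z_{\max}\le 2$, assuming $\epsilon_{\mathrm{TD}}\le \bar z_{\max}$. This closes the induction and gives $\|\m T^j\|\le (1+\epsilon_{\mathrm{TD}}/\bar z_{\max})\ee^{\mu(\bar{\m A})hj}$. This is the same mechanism as Lemma~3 of Ref.~\cite{jennings2023cost}, applied at the operator level rather than to the single vector $\bar{\m z}^0$.

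Summing the geometric series over $j=1,\dots,M$ yields the claimed bound. The main obstacle is keeping the constants in the induction tight enough to recover the clean factor $(1+\epsilon_{\mathrm{TD}}/\bar z_{\max})$ without an extra $j$-dependent growth. If the direct induction leaks a factor, the fallback is to prove $\|\m T^j-\ee^{\bar{\m A}hj}\|\le j\,\ee^{2}\ee^{\mu h(j-1)}/(k+1)!$ via a discrete Grönwall argument, and then use $j\le M$ together with the $M\ee^3$ in the hypothesis on $k$. Finally, we note that $\mu(\bar{\m A})<0$ makes the bound $T$-independent, giving $O(1/(-\mu h))=O(\alpha_{\bar{\m A}}/(-\mu(\bar{\m A})))$ as used in Eq.~\eqref{eq:Linversebound}.
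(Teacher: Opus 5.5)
Your argument is correct, and it has the same skeleton as the paper's proof: the terminating Neumann series $\m{L}_{\tilde V}^{-1}=\sum_{j=0}^{M}\m{S}^j\otimes\m{T}^j$, the bound $\|\m{S}^j\|\le 1$ with the triangle inequality, and the final geometric sum.

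The one real difference is the per-power estimate $\|\m{T}_k(\bar{\m{A}}h)^j\|\le(1+\epsilon_{\mathrm{TD}}/\bar z_{\max})\,\ee^{\mu(\bar{\m{A}})hj}$. The paper does not prove it. It imports $\|\m{T}_k(\bar{\m{A}}h)^j\|\le\|\ee^{\bar{\m{A}}hj}\|(1+\epsilon_{\mathrm{TD}}/\bar z_{\max})$ from Lemma~5 of Ref.~\cite{jennings2023cost} and then applies the log-norm bound. You instead derive it by telescoping against $\ee^{\bar{\m{A}}h}$ and closing a strong induction. Comparing directly with $\ee^{\mu hj}$ is what makes the induction close, since $\ee^{\mu hi}\,\ee^{\mu h(j-1-i)}=\ee^{\mu h(j-1)}$ recombines exactly. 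So your version is self-contained and uses only what the final estimate actually needs.

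The price is your side condition $\epsilon_{\mathrm{TD}}\le\bar z_{\max}$; your constants in fact allow $\epsilon_{\mathrm{TD}}\le(\ee-1)\bar z_{\max}$. The statement does not display this condition, but it is not an artifact of your method: some such restriction is needed for the lemma to hold at all. Take the scalar case with:
\begin{itemize}
\item $\bar{\m{A}}h=-\delta+i\sqrt{1-\delta^2}$ and $0<\delta<1/2$;
\item $\alpha_{\bar{\m{A}}}=1/h$;
\item $k=1$ and $\epsilon_{\mathrm{TD}}=M\ee^3\bar z_{\max}/2$.
\end{itemize}
All hypotheses are met. Yet $|\m{T}|=\sqrt{2-2\delta}>1$, so the $(M,0)$ block alone gives $\|\m{L}_{\tilde V}^{-1}\|\ge(2-2\delta)^{M/2}$. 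The claimed right-hand side is at most $1+(1+M\ee^3/2)M$, which this exceeds for large $M$.

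So when $\epsilon_{\mathrm{TD}}$ is read as a tolerance small relative to $\bar z_{\max}$, as in the paper's applications, your added assumption is exactly the right one to make explicit.
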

\begin{proof} 
    The inverse of $\m{L}_{\tilde{V}}$ can be expanded as
\begin{align}
   \m{L}_{\tilde{V}}^{-1} = \m{I} + \sum_{j=1}^{M} \sum_{m=0}^{M-j} \ketbra{m+j}{m} \otimes (\m{T}_k(\bar{\m{A}} h))^j
\end{align}

Furthermore, from Lemma~5 in Ref.~\cite{jennings2023cost},
\begin{align}
    \| \m{T}_k(\bar{\m{A}}\Delta t)^m\| \leq \| \ee^{\bar{\m{A}}\Delta t m} \| \bigg(1+ \frac{\epsilon_{\mathrm{TD}}}{\bar{z}_{\max}}\bigg) 
\end{align}
Let us now use the negativity of the log-norm of $\bar{\m{A}}$: 
\begin{align}
   \| \ee^{\bar{\m{A}}\Delta tm}\| \leq \ee^{\mu(\bar{\m{A}}) m h}
\end{align}
and so
\begin{align}
    \| \m{T}_k(\bar{\m{A}}\Delta t)^m\| \leq  \ee^{\mu(\bar{\m{A}}) m h} \bigg(1+ \frac{\epsilon_{\mathrm{TD}}}{\bar{z}_{\textrm{max}}}\bigg) 
\end{align}
Replacing in the upper bound for $\|\m{L}^{-1}_{\tilde{V}}\|$ we get, using subadditivity, submultiplicativity, and the fact that (partial) shifts have norm bounded by $1$:
\begin{align}
   \|\m{L}_{\tilde{V}}^{-1}\| &\; \leq 1 + \sum_{j=1}^{M}  \left\|\sum_{m=0}^{M-j} \ketbra{m+j}{m}\right\|  \left\|  \m{T}_k(\bar{\m{A}} h))^j\right\| \leq 1 + \sum_{j=1}^{M}  \left\|  \m{T}_k(\bar{\m{A}} h))^j\right\| \nonumber  \\ &\; \leq 1+ \sum_{j=1}^M \ee^{\mu(\bar{\m{A}}) j h} \bigg(1+ \frac{\epsilon_{\mathrm{TD}}}{\bar{z}_{\max}}\bigg) = 1+ \bigg(1+ \frac{\epsilon_{\mathrm{TD}}}{\bar{z}_{\textrm{max}}}\bigg)\frac{\ee^{\mu(\bar{\m{A}}) h} (1- \ee^{\mu(\bar{\m{A}}) h M})}{1- \ee^{\mu(\bar{\m{A}})  h}}.
\end{align}

\end{proof}

\section{Quadrature error for time-averaged kinetic energy}
\label{app:quadrature-error}
Here we present technical details for the time-averaged quadrature approximation.

Recall that the Carleman ODE is given by
\begin{align}
   \frac{\mathrm{d}}{\mathrm{d}t}\bar{\m{z}} = \bar{\m{A}} \bar{\m{z}}, \quad \bar{\m{z}}(0)= ( \bar{\m{u}}_0, \bar{\m{u}}^{\otimes 2}_0, \dots, \bar{\m{u}}^{\otimes N_\mathrm{C}}_0)^{\mathsf{T}}.
\end{align}
This implies the solution over a time-increment $s$ is given by
\begin{equation}
    \bar{\m{z}} (t_m+s) = \ee^{\bar{\m{A}} s}\bar{\m{z}} (t_m).
\end{equation}
For any quantity $\mc{K}(t) := \mathrm{Re } \,\bar{\ell}^\dagger_{\mc{K}} \bar{\m{z}} (t)$, obtained as a linear functional of the vector $ \bar{\m{z}}$, we can consider the time-average of an interval, and write it as
\begin{equation}
    \int_t^{t+h} \mc{K}(s)\,\mathrm{d}s = \mathrm{Re } \,\bar{\ell}^\dagger_{\mc{K}}\Bigg(\int_0^{h}  \ee^{\bar{\m{A}} s}\,\mathrm{d}s\Bigg)\bar{\m{z}} (t).
\end{equation}
We ignore the Carleman truncation error here, since it is exponentially suppressed in the regime $R_{\m{P}}<1$. Assuming that $\|\bar{\m{A}} h\| \le \|\alpha_{\bar{\m{A}}}\Delta t\|<1$, we have
\begin{equation}
    \int_0^{h}  \ee^{\bar{\m{A}} s} \,\mathrm{d}s= h \sum_{q=0}^{\infty} \frac{ (\bar{\m{A}}\Delta t)^q}{(q+1)!},
\end{equation}
and so
\begin{align}
    \langle \mc{K} \rangle &:= \frac{1}{T} \int_0^T  \mc{K}(t) \,\mathrm{d}t\nonumber \\
    &= \frac{1}{T}\sum_{m=0}^{M-1} \int_{m\Delta t}^{(m+1)\Delta t}  \mc{K}(t) \,\mathrm{d}t\nonumber \\
    &= \frac{1}{T}\sum_{m=0}^{M-1} \mathrm{Re } \,\bar{\ell}^\dagger_{\mc{K}} \Delta t \sum_{q=0}^{\infty} \frac{ (\bar{\m{A}}\Delta t)^q}{(q+1)!} \bar{\m{z}}(m\Delta t).
\end{align}
We now construct a truncated Taylor series approximation, and define
\begin{equation}
    \m{R}_k(\bar{\m{A}}\Delta t) := \sum_{q=0}^{k} \frac{ (\bar{\m{A}}\Delta t)^q}{(q+1)!}.
\end{equation}
We define the approximation
\begin{align}
    \overline{\mc{K}}_k &:=\frac{\Delta t}{T}\sum_{m=0}^{M-1} \mathrm{Re } \,\bar{\ell}^\dagger_{\mc{K}} \m{R}_k(\bar{\m{A}}\Delta t) \bar{\m{z}}(m\Delta t) \nonumber \\
    &=\mathrm{Re } \,\bar{\ell}^\dagger_{\mc{K},k}\frac{1}{M} \sum_{m=0}^{M-1}   \bar{\m{z}}(m\Delta t),
\end{align}
where
\begin{equation}
    \bar{\ell}_{\mc{K},k}:= \m{R}_k(\bar{\m{A}}\Delta t)^\dagger \bar{\ell}_{\mc{K}}.
\end{equation}
Now the truncated series has an error that decreases factorially, and in particular, 
\begin{equation}
    \left \|\int_0^h   \ee^{\bar{\m{A}}s}\mathrm{d}s\, -h \m{R}_k(\bar{\m{A}}\Delta t) \right \| =\left \|h \sum_{q=k+1}^\infty \frac{(\bar{\m{A}}\Delta t)^q}{(q+1)!} \right \| \le h\left \| \sum_{q=k+1}^\infty \frac{1}{(q+1)!} \right \| \le  h \frac{\ee}{(k+2)!}.
\end{equation}
This implies that the error is given by
\begin{align}
    | \langle \mc{K} \rangle - \overline{\mc{K}}_k| & = \left | \,\bar{\ell}^\dagger_{\mc{K}}  \sum_{q=k+1}^{\infty} \frac{ (\bar{\m{A}}\Delta t)^q}{(q+1)!} \frac{1}{M}\sum_{m=0}^{M-1} \bar{\m{z}}(m\Delta t) \right| \nonumber \\
    & \le \|\bar{\ell}_{\mc{K}} \| \left \|\sum_{q=k+1}^{\infty} \frac{ (\bar{\m{A}}\Delta t)^q}{(q+1)!} \right \|  \left \| \frac{1}{M}\sum_{m=0}^{M-1} \bar{\m{z}}(m\Delta t) \right \| \nonumber \\
        & \le \frac{\ee}{(k+2)!} \|\bar{\ell}_{\mc{K}} \| \max_{t\in [0,T]}\left \| \bar{\m{z}}(t) \right \|\le  \frac{\ee}{(k+2)!} \|\bar{\ell}_{\mc{K}} \| \max_{t\in [0,T]}\left \| \bar{\m{z}}(t) \right \| .
\end{align}
Therefore, if $\|\bar{\ell}_{\mc{K}} \|$ and $\|\bar{\m{z}}(t) \| $ are bounded, then this implies the Taylor truncation scale must obey
\begin{align}
    (k+1)! \ge  \frac{1}{\epsilon} \ee \|\bar{\ell}_{\mc{K}} \| \max_{t\in [0,T]}\left \| \bar{\m{z}}(t) \right \|.
\end{align}
The dynamics is dissipative in the Lyapunov-frame, and so it suffices to take
\begin{equation}
    k = O \left ( \frac{\log (1/\epsilon )}{\log\log (1/\epsilon)}\right ),
\end{equation}
provided the step-size obeys $\Delta t \alpha_{\bar{\m{A}}}  <1$.

\subsection{Averaged kinetic energy moment}
\label{app:average-kinetic-energy}
Let us now see how to extract the spacetime averaged kinetic energy moment of the perturbation. The spatial averaging selects only the zero Fourier mode:
\begin{align}
\frac{1}{L^3}\int_{\mathbb{T}_L^3} g(\v{x},\v{v},t)\,\mathrm{d}^3 x
=&\;
\frac{1}{L^3}\int_{\mathbb{T}_L^3}
\frac{1}{L^{3/2}}
\sum_{\v{n}\in\mathbb{Z}^3}
\hat g_{\v{n}}(\v{v},t)\ee^{i\v{n}\cdot\v{x}}
\,\mathrm{d}^3 x
\nonumber\\
=&\;
\frac{1}{L^{3/2}}\hat g_{\v{0}}(\v{v},t).
\end{align}
Therefore
\begin{align}
\langle \mc{K}\rangle
=
\frac{1}{T L^{3/2}}
\int_0^T \mathrm{d}t
\int_{\mathbb{R}^3} 
\frac{|\v{v}|^2}{2}
w(\v{v})\hat g_{\v{0}}(\v{v},t)\,\mathrm{d}^3 v.
\end{align}

We now expand the zero Fourier mode in the normalized Hermite basis:
\begin{align}
\hat g_{\v{0}}(\v{v},t)
=
\sum_{\v{\alpha}\in\mathbb{N}_0^3}
g_{\v{0},\v{\alpha}}(t)
\tilde H_{\v{\alpha}}(\v{v}).
\end{align}
Since $v_a^2=\sqrt{2}\tilde H_2(v_a)+1$, we have
\begin{align}
\frac{|\v{v}|^2}{2}
=
\frac{3}{2}\tilde H_{\v{0}}(\v{v})
+
\frac{1}{\sqrt{2}}
\sum_{a=1}^3
\tilde H_{2\v{e}_a}(\v{v}).
\end{align}
Substituting this identity and using $g_{\v{0},\v{0}}(t)=0$, and orthogonality, we get
\begin{align}
\langle \mc{K}\rangle
=
\frac{1}{\sqrt{2}L^{3/2}T}
\int_0^T 
\sum_{a=1}^3
g_{\v{0},2\v{e}_a}(t)\,\mathrm{d}t.
\end{align}
This specifies the components of the vector $\ell_{\mc{K}}$.

\subsection{State preparation of $\ket{\bar{\ell}_{\mc{K},k}}$ and state overlap estimate}
\label{app:state-prep-ell}
We estimate the time averaged kinetic energy by overlapping the history state with a normalized state vector $\ket{\overline{\mc{K}}_k}$ that involves a factor $ \ket{\bar{\ell}_{\mc{K},k}} = \bar{\ell}_{\mc{K},k}/\|\bar{\ell}_{\mc{K},k}\|$.  We must therefore construct a state preparation routine for the normalized state $\ket{\bar{\ell}_{\mc{K},k}}$. Recall, that the vector $\bar{\ell}_{\mc{K},k}$ is obtained by acting $\m{R}_k(\bar{\m{A}}\Delta t)^\dagger $ on a constant, normalized, sparse vector. This can be done by the successful application of $\m{R}_k(\bar{\m{A}}\Delta t)^\dagger$ via a block-encoding, provided the scale factor of the block-encoding is not too large. 

We have the following result that returns a block-encoding $\m{U}_{\m{R}_k}$ for $\m{R}_k(\bar{\m A}\Delta t)$, and thus $\m{U}^\dagger_{\m{R}_k}$ provides the block-encoding of its adjoint.
\begin{lemma}[Block-encoding of $\m{R}_k(\bar{\m A}\Delta t)$]
\label{lem:block-encode-integrated-taylor}
Let \(\m{U}_{\bar{\m{A}}}\) be an
\((\omega_{\bar{\m{A}}},a_{\bar{\m{A}}},\epsilon_{\bar{\m{A}}})\)-block-encoding of
\(\bar{\m A}\). Fix \(h>0\), \(k\in\mathbb N_0\), and define
\begin{equation}
    \m{R}_k(\bar{\m A}\Delta t)
    :=
    \sum_{q=0}^k
    \frac{(\bar{\m A}\Delta t)^q}{(q+1)!}.
\end{equation}
Assume $\Delta t \alpha_{\bar{\m{A}}} <1$.
Then, \(\m{R}_k(\bar{\m A}\Delta t)\) admits an $(\alpha_{\m{R}},\,ka_{\bar{\m{A}}}+\lceil\log(k+1)\rceil,\,\epsilon_{\m{R}})$-block-encoding with scale factor $\alpha_{\m{R}} \le \ee-1 = O(1)$ and $\epsilon_{\m{R}}
    \le
    k\epsilon_{\bar{\m{A}}}(1+\epsilon_{\bar{\m{A}}})^{k-1}$, where \(\epsilon_{\rm prep}\) is the error in preparing the LCU coefficient state. The construction uses at most $k$ controlled calls to $\m{U}_{\bar{\m{A}}}$.
\end{lemma}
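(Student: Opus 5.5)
The construction is a standard LCU over the powers $q=0,\dots,k$. First we build block-encodings of the monomials $(\bar{\m A}\Delta t)^q$ by taking products of $q$ copies of $\m U_{\bar{\m A}}$. We compose the block-encodings sequentially, each copy with its own fresh $a_{\bar{\m A}}$-qubit ancilla register, so at most $k$ registers are needed in total. The standard product lemma then gives an $(\alpha_{\bar{\m A}}^q,\,q a_{\bar{\m A}},\,q\,\alpha_{\bar{\m A}}^{q-1}\epsilon_{\bar{\m A}}$-type$)$ block-encoding of $\bar{\m A}^q$. To avoid building $k+1$ separate circuits, we implement all of them with one controlled structure. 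Controlled on a $\lceil\log(k+1)\rceil$-qubit index register holding $\ket{q}$, we apply the $i$-th copy of $\m U_{\bar{\m A}}$ only when $i\le q$. Each of the $k$ copies is controlled once, which gives the claimed count of at most $k$ controlled calls and $k a_{\bar{\m A}}+\lceil\log(k+1)\rceil$ ancillas. When $q=0$ the block is the identity.

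Next we fix the LCU coefficients. Write $\m R_k(\bar{\m A}\Delta t)=\sum_q c_q\,(\bar{\m A}/\alpha_{\bar{\m A}})^q$ with $c_q=(\alpha_{\bar{\m A}}\Delta t)^q/(q+1)!\ge0$. PREP loads $\sum_q\sqrt{c_q/\alpha_{\m R}}\ket{q}$ with $\alpha_{\m R}=\sum_{q=0}^k c_q$. Since $\alpha_{\bar{\m A}}\Delta t<1$, we have $\alpha_{\m R}\le\sum_{q\ge0}1/(q+1)!=\ee-1$. PREP, then SELECT (the controlled product above), then $\mathrm{PREP}^\dagger$ yields the block-encoding with scale factor $\alpha_{\m R}$. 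If we prefer to keep $\alpha_{\m R}$ fixed at $\ee-1$, we can pad the coefficient state with a trivial zero term; either choice satisfies the bound stated in the lemma.

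The error analysis is the delicate step. We need to control $\|\sum_q c_q[(\m A_\epsilon/\alpha)^q-(\bar{\m A}/\alpha)^q]\|$, where $\m A_\epsilon$ is the matrix actually encoded and $\|\m A_\epsilon-\bar{\m A}\|\le\epsilon_{\bar{\m A}}$. A telescoping identity gives $\|X^q-Y^q\|\le q\,\|X-Y\|\max(\|X\|,\|Y\|)^{q-1}$. With $\|\bar{\m A}/\alpha\|\le1$ and $\|\m A_\epsilon/\alpha\|\le 1+\epsilon_{\bar{\m A}}$, where the error is measured in the normalized sense consistent with the definition of block-encoding used earlier, this gives a per-term error of $q\epsilon_{\bar{\m A}}(1+\epsilon_{\bar{\m A}})^{q-1}\le k\epsilon_{\bar{\m A}}(1+\epsilon_{\bar{\m A}})^{k-1}$. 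After weighting by $c_q/\alpha_{\m R}$, which sums to one over the normalized subnormalization, the total is bounded by the same quantity. The main obstacle is keeping the normalization conventions for $\epsilon$ consistent, that is, whether the error is defined relative to $\m M$ or $\m M/\alpha$. I would state the bound in the normalized convention and note that any PREP error $\epsilon_{\rm prep}$ adds linearly. The adjoint block-encoding $\m U_{\m R_k}^\dagger$ then follows immediately.
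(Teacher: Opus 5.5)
Your proposal is correct and follows essentially the same route as the paper: an LCU over $q=0,\dots,k$ whose SELECT applies the $r$th copy of $\m U_{\bar{\m A}}$ controlled on $q\ge r$, with PREP weights $(\alpha_{\bar{\m A}}\Delta t)^q/(q+1)!$ giving $\alpha_{\m R}\le \ee-1$. The error bound is likewise the same telescoping estimate $\|\widetilde{\m B}^q-\m B^q\|\le q\epsilon_{\bar{\m A}}(1+\epsilon_{\bar{\m A}})^{q-1}$, averaged over the positive LCU weights, and your explicit remarks on fresh ancilla registers per copy and on the normalized error convention make precise points that the paper leaves implicit.
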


\begin{proof}
We realise the block-encoding via LCU, weighted by the Taylor coefficients. For each \(q=0,\ldots,k\), let \(\m{V}_q\) denote the product of \(q\) copies of
\(\m{U}_{\bar{\m{A}}}\), with \(\m{V}_0=\m{I}\). In the exact case,
\begin{equation}
    \m{V}_q
    \quad\text{block-encodes}\quad
    \left(\frac{\bar{\m A}}{\alpha_{\bar{\m{A}}}}\right)^q,
\end{equation}
with unit scale factor. Prepare the LCU state
\begin{equation}
    \operatorname{PREP}_{\m{R}}|0\rangle
    =
    \sum_{q=0}^k
    \sqrt{\frac{\beta_q}{\alpha_{\m{R}}}}\,|q\rangle ,
\end{equation}
with
\begin{equation}
    \beta_q := \frac{(h \alpha_{\bar{\m{A}}})^q}{(q+1)!},
\end{equation}
and $\alpha_R = \sum_{q=0}^k \beta_q$. Now define
\begin{equation}
    \operatorname{SELECT}_{\m{R}}
    :=
    \sum_{q=0}^k |q\rangle\langle q|\otimes \m{V}_q .
\end{equation}
Equivalently, \(\operatorname{SELECT}_{\m{R}}\) may be implemented by applying the
\(r\)th copy of \(\m{U}_{\bar{\m{A}}}\) controlled on the condition \(q\geq r\), for
\(r=1,\ldots,k\). The standard LCU construction
\begin{equation}
    \m{U}_{\m{R}}
    :=
    (\operatorname{PREP}_{\m{R}}^\dagger\otimes \m{I})
    \operatorname{SELECT}_{\m{R}}
    (\operatorname{PREP}_{\m{R}}\otimes \m{I})
\end{equation}
has top-left block
\begin{equation}
    \frac1{\alpha_\m{R}}
    \sum_{q=0}^k
    \beta_q
    \left(\frac{\bar{\m A}}{\alpha_{\bar{\m{A}}}}\right)^q
    =
    \frac1{\alpha_\m{R}}
    \sum_{q=0}^k
    \frac{(\bar{\m A}\Delta t)^q}{(q+1)!}
    =
    \frac{\m{R}_k(\bar{\m A}\Delta t)}{\alpha_\m{R}}.
\end{equation}
This proves the exact block-encoding statement $\epsilon_{\bar{\m{A}}}=0$. For the approximate case,$\epsilon_{\bar{\m{A}}}>0$, let
\begin{equation}
    \m{B}:=\frac{\bar{\m A}}{\omega_{\bar{\m{A}}}},
    \qquad
    \widetilde{\m{B}}:=\m{B}+\m{E},
    \qquad
    \|\m{E}\|\leq\epsilon_{\bar{\m{A}}}.
\end{equation}
The \(q\)-fold product block-encodes \(\widetilde{\m{B}}^q\) rather than \(\m{B}^q\).
Using the telescoping identity
\begin{equation}
    \widetilde{\m{B}}^q-\m{B}^q
    =
    \sum_{r=0}^{q-1}
    \widetilde{\m{B}}^{\,q-1-r}(\widetilde{\m{B}}-\m{B})\m{B}^r
\end{equation}
and the bounds \(\|\m{B}\|\leq 1\), \(\|\widetilde{\m{B}}\|\leq 1+\epsilon_{\bar{\m{A}}}\),
we obtain
\begin{equation}
    \|\widetilde{\m{B}}^q-\m{B}^q\|
    \leq
    q\epsilon_{\bar{\m{A}}}(1+\epsilon_{\bar{\m{A}}})^{q-1}
    \leq
    k\epsilon_{\bar{\m{A}}}(1+\epsilon_{\bar{\m{A}}})^{k-1}.
\end{equation}
Averaging these errors with the positive LCU weights \(\beta_q/\alpha_{\m{R}}\)
gives the stated block-encoding error, up to the coefficient-state preparation
error \(\epsilon_{\rm prep}\). Finally, if \(h\omega_{\bar{\m{A}}}\leq 1\), then
\begin{equation}
    \alpha_R
    =
    \sum_{q=0}^k
    \frac{(h\omega_{\bar{\m{A}}})^q}{(q+1)!}
    \leq
    \sum_{q=0}^\infty
    \frac1{(q+1)!}
    =
    \ee-1.
\end{equation}
\end{proof}
This provides an approximate block-encoding of $\m{R}_k$, and we construct $\m{U}^\dagger_{\m{R}_k} \otimes \m{USP}(M)$, where the second factor is a uniform state preparation over the time registers at $s=0, 1,\dots, M-1$ for the left-endpoints of the intervals. Since $\alpha_R <\ee-1$ and we act on the state $\ket{\bar{\ell}_{\mc{K}}}\otimes \ket{0}$ the success probability of the block-encoding depends on the norm of $\m{R}_k$ on a particular state.

We now show that $\m{U}_{\m{R}_k}$ has success probability bounded away from zero for its realization on any quantum state. Note that
\begin{equation}
    \m{R}_k(\bar{\m A}\Delta t)=\m{I}+\sum_{q=1}^k \frac{(\bar{\m A}\Delta t)^q}{(q+1)!}.
\end{equation}
Hence
\begin{equation}
   \|\m{R}_k(\bar{\m A}\Delta t)-\m{I}\|_2 \leq \sum_{q=1}^k\frac{\|\bar{\m{A}}\Delta t\|^q}{(q+1)!} \leq \sum_{q=1}^{\infty} \frac1{(q+1)!}=\ee-2. 
\end{equation}
Thus, for any vector \(\ket{\psi}\),
\begin{equation}
    \|\m{R}_k(\bar{\m A}\Delta t)\ket{\psi}\|\geq \|\ket{\psi}\|-\|(\m{R}_k(\bar{\m A}\Delta t)-\m{I})\ket{\psi}\|\geq(3-\ee)\|\ket{\psi}\| = (3-\ee).
\end{equation}
This implies that with bounded probability we prepare $\ket{\overline{\mc{K}}_k}$. Moreover, since this probability is $\Omega(1)$ we can use fixed-point amplitude amplification~\cite{yoder2014fixed} to boost the state preparation success probability to $1-\delta$ using $O(\log(1/\delta))$ calls to the unitary block-encoding $\m{U}_{\m{R}_k}$.

\end{document}